\newtheorem{lemma}{Lemma}
\theoremstyle{definition}
\theoremstyle{definition}
\theoremstyle{definition}
\newtheorem{remark}{Remark}
\titlespacing*{\section}{0pt}{3pt}{2pt}
\titlespacing*{\subsection}{0pt}{3pt}{2pt}
\titlespacing*{\subsubsection}{0pt}{3pt}{2pt}
\definecolor{darkblue}{rgb}{0, 0, 0.5}
\begin{document}

\title{The Power of Tests for Detecting $p$-Hacking\thanks{Some parts of this paper are based on material in the working paper \citet{elliott2020detecting}, which was not included in the final published version \citep{elliott2022detecting}. This paper was previously circulated as ``(When) Can We Detect $p$-Hacking?'' We are grateful to the Editor (Peter Hull) and three referees for their comments. K.W.\ is also affiliated with CESifo. The usual disclaimer applies. All figures and Monte Carlo results reported in the paper can be replicated using the replication package located at \url{https://github.com/nvkudrin/pHackingPower}.}}

\author{Graham Elliott\thanks{Department of Economics, University of California San Diego, 9500 Gilman Dr. La Jolla, CA 92093; email: \url{grelliott@ucsd.edu}}  \quad \quad Nikolay Kudrin\thanks{Department  of Economics, Queen's University,  94 University Ave.\ Kingston, Ontario K7L 3N9; email: \url{n.kudrin@queensu.ca}}  \quad \quad Kaspar W\"uthrich\thanks{Department  of Economics,  University of Michigan, 238 Lorch Hall,
611 Tappan Ave., Ann Arbor, MI 48109-1220; email: \url{kasparwu@umich.edu}}}

\maketitle

\begin{abstract} 

A flourishing empirical literature investigates the prevalence of $p$-hacking based on the distribution of $p$-values across studies. Interpreting results in this literature requires a careful understanding of the power of methods for detecting $p$-hacking. We theoretically study the implications of likely forms of $p$-hacking on the distribution of $p$-values to understand the power of tests for detecting it. Power can be low and depends crucially on the $p$-hacking strategy and the distribution of true effects. Combined tests for upper bounds and monotonicity and tests for continuity of the $p$-curve tend to have the highest power for detecting $p$-hacking.
\bigskip

\noindent \textbf{JEL codes:} C12, C21, C22, C26, C52

\smallskip

\noindent \textbf{Keywords:} $p$-hacking, publication bias, $p$-curve, specification search, selecting instruments, variance bandwidth selection, cluster level selection

\end{abstract}

\newpage

\section{Introduction}

Researchers have a strong incentive to find and report significant results \citep[e.g.,][p.158]{imbens2021statistical}. 
\citet{simonsohn2014p} used the term ``$p$-hacking'' to encompass decisions made by researchers in conducting their work to improve the publication prospects of their results. Their work has generated a flourishing literature that examines empirically the distribution of $p$-values across studies (the ``$p$-curve'') to determine if $p$-hacking is prevalent or not.\footnote{See, e.g., \citet{masicampo2012peculiar,simonsohn2014p,lakens2015what,simonsohn2015better,head2015extent,ulrich2015p} for early applications and further discussions, \citet{ havranek2021does, brodeur2022we,malovana2022borrower, yang2022hedge, decker2023preregistration} for recent applications, and \citet[][]{christensen2018transparency} for a review.} 

Interpreting these empirical results and assessing the usefulness of existing tests for $p$-hacking requires knowing how powerful these tests are, and for this we need to understand how $p$-hacking impacts the $p$-curve.\footnote{We focus on the problem of detecting $p$-hacking based the $p$-curve and do not consider the popular Caliper tests \citep[e.g.,][]{gerber2008do,gerber2008publication,bruns2019reporting,vivalt2019specification,brodeur2020methods}. Caliper tests aim to detect $p$-hacking based on excess mass in the distribution of $z$-statistics right above significance cutoffs. However, these tests do not control size in general \citep{kudrin2022robust}.}  Theoretically and using Monte Carlo analyses, this paper examines the power of tests for detecting likely forms of $p$-hacking, such as selecting controls in regression analyses, selecting among instruments in instrumental variables (IV) regression, and variance estimator selection in regression analyses. A careful study of power is important because the implications of $p$-hacking on the $p$-curve are not clear, and the magnitude of power depends on precisely how the $p$-curve is affected, which in turn depends on the empirical problem and how the $p$-hacking is undertaken. 

We find that tests for $p$-hacking can have low power  (above size, but not by much) and even no power (lower than or equal to size) in some cases, so failure to reject is not strong evidence of a lack of $p$-hacking; that different approaches to $p$-hacking have different effects resulting in different tests having power to detect such $p$-hacking; and that power depends on how easy it is to obtain statistically significant results. To illustrate, consider our first example below, where a researcher is interested in the coefficient on a variable in a linear regression but has a choice over which controls to include in the regression.  

Multiple choices over controls results in the researcher having a choice over a number of $p$-values that can be reported. Selective reporting, or $p$-hacking, could include either reporting the smallest $p$-value (referred to as the ``minimum'' approach) or reporting results from a preferred specification if significant and if not then trying other specifications (referred to as the ``threshold'' approach). The minimum approach pushes the $p$-curve to the left (we get smaller $p$-values) without creating non-monotonicities or discontinuities, so tests that look for these have no power. The minimum approach can however violate bounds on the $p$-curve. Unfortunately, tests for this can have no or low power, even when more than half of the results are $p$-hacked.

The threshold approach creates discontinuities in the $p$-curve as well as violations of upper bounds, so tests for these features have power, and this approach is easier to detect. With half the researchers $p$-hacking in this way power is quite reasonable for the covariate selection problem. However, the threshold approach may not create humps in the $p$-curve, so tests for these may not have power. Publication bias can have similar impacts on the $p$-curve as $p$-hacking, and so contribute to rejections using the tests we examine; however, it can also make it more difficult to detect $p$-hacking.

If the true coefficient in the linear model to be tested for zero is very large, then $p$-values are likely small anyway, and without much motivation to $p$-hack, the impact of $p$-hacking is hard to detect. For hypotheses for which data is informative but the true effects are near popular significance cutoffs, the incentive to search amongst the controls is larger and so is the impact on the distribution of $p$-values, resulting in higher power of the tests we examine. 

The results in this paper have important implications for meta-analysts choosing tests for detecting $p$-hacking. First, for both $p$-hacking approaches, combined tests for violations of upper bounds and monotonicity, such as the CS2B test \citep{elliott2022detecting}, do best. Second, testing for violations of bounds on the $p$-curve is particularly important in settings where $p$-hacking is difficult to detect, for example, when researchers use the minimum approach. Third, tests for discontinuities in the $p$-curve, such as the discontinuity test of \citet{cattaneo2020simple}, can be useful complements to tests based on monotonicity or bounds since the threshold approach can yield pronounced discontinuities. Finally, the classical Binomial test \citep[e.g.,][]{simonsohn2014p,head2015extent} often has lower power, even in settings where other tests have high power for detecting $p$-hacking.

\section{Setup}
\label{sec: setup}

We consider empirical studies where individual researchers provide test results of a hypothesis by reporting a test statistic $T$ with distribution $F_{h}$, where $h \in \mathcal{H}$ indexes parameters of the distribution of $T$.\footnote{The setup and notation here follow \citet{elliott2022detecting}.} Researchers are testing the null hypothesis that $h \in \mathcal{H}_0$ against $h \in \mathcal{H}_1$ with $\mathcal{H}_0 \cap \mathcal{H}_1$ empty. Suppose the test rejects when $T>cv(p)$, where $cv(p)$ is the level $p$ critical value. For any individual study the researcher tests a hypothesis at a particular $h$ (the ``true'' effect).\footnote{Here, $h$ is the local version of the difference between the true coefficient and its value under the null hypothesis, where `local' means it is scaled by the standard error of the estimator.} In what follows, we denote the power function of the test as $\beta\left(p,h \right)=\Pr\left(T>cv(p)\mid h\right)$.

In practice researchers can use different specifications that yield a set of test statistics $\{T_1, T_2, T_3, \dots \}$. This generates a set of $p$-values $\{P_1, P_2, P_3, \dots \}$ that they could report. We assume the researchers have a preferred specification, and in the absence of $p$-hacking would report that $p$-value. If they choose to $p$-hack, the approach they take would then comprise a method of choosing which $p$-value $P_r$ to report, i.e., a selection function $P_r=d(P_1, P_2,P_3, \dots).$ The joint distribution over possible $p$-values will depend on the testing situation and the distribution of the data, including the value for $h$. It is not the case that researchers can select any $p$-value they desire; the available $p$-values will be drawn from this joint distribution (at least, unless they directly fabricate or manipulate data). This relates to \citet{simonsohn2020blog}'s distinction of ``slow $p$-hacking'' and ``fast $p$-hacking'': slow $p$-hacking corresponds to the case where $p$-values change little across analyses; fast $p$-hacking refers to settings where $p$-values change a lot across analyses. Slow $p$-hacking is then when the $p$-values are highly positively correlated, fast $p$-hacking when the $p$-values are less dependent. 

\subsection{The Null Distribution of $p$-Values}
In the absence of $p$-hacking, the researcher reports the $p$-value from the preferred specification.  \citet{elliott2022detecting} provided a theoretical characterization of the distribution of $p$-values across multiple studies in the absence of $p$-hacking for general distributions of true effects.\footnote{See, e.g., \citet{hung1997}, \citet{simonsohn2014p}, and \citet{ulrich2018some} for numerical and analytical examples of $p$-curves for specific tests and/or effect distributions.} Across researchers, there is a distribution $\Pi$ of true effects $h$, which is to say that different researchers testing different hypotheses examine different problems that have different true effects. The resulting CDF of $p$-values across all these studies is then  
\begin{eqnarray*}
G(p)=\int_\mathcal{H}\Pr\left(T>cv(p)\mid h\right)d\Pi(h)=\int_\mathcal{H}\beta\left(p,h\right) d\Pi(h).
\end{eqnarray*}

Under mild regularity assumptions (differentiability of the null and alternative distributions, boundedness and support assumptions; see \citet{elliott2022detecting} for details), we can write the $p$-curve (density of $p$-values) in the absence of $p$-hacking as 
\begin{equation*}
g(p)=\int_\mathcal{H}\frac{\partial \beta\left(p,h\right)}{\partial p} d\Pi(h). 
\end{equation*}

For the purposes of testing for $p$-hacking, the properties of $g$ describe the set of distributions contained in the null hypothesis of no $p$-hacking. Tests can be based on deviations from this set. \citet{elliott2022detecting} provide general sufficient conditions for when the $p$-curve is non-increasing, $g'\le 0$, and continuous when there is no $p$-hacking, allowing for tests of these properties of the distribution to be interpreted as tests of the null hypothesis of no $p$-hacking. These conditions hold for many possible distributions $F_h$ that arise in research, for example, normal, folded normal (relevant for two-sided tests), and $\chi^2$ distributions. 

When $T$ is (asymptotically) normally distributed (for example, tests on means or regression parameters when central limit theorems apply), \citet{elliott2022detecting} show that in addition to being non-increasing, the $p$-curves are completely monotonic (i.e., have derivatives of alternating signs so that $g''\ge 0$, $g'''\le 0$, etc.) and there are testable upper bounds on the $p$-curve and its derivatives. 

We study the power of tests for $p$-hacking that exploit different combinations of these testable restrictions. In our simulation study, we consider four different types of tests: (i) tests for non-increasingness of the $p$-curve, (ii) tests for continuity of the $p$-curve, (iii) tests for upper bounds on the $p$-curve and its derivatives, and (iv) tests for combinations of monotonicity restrictions and upper bounds. We describe the individual tests and how we implement them in detail in Section \ref{sec:tests} and Table \ref{tab:tests}.

\subsection{The Distribution of $p$-Values under $p$-Hacking}
\label{sec: DirectionsPower}

If researchers do $p$-hack, the distribution of the reported $p$-value will depend on the functional form of the selection function $d(\cdot)$ and the joint distribution of $\{P_1, P_2, P_3,\dots \}$ given $h$. To the extent that this differs from the set of distributions under the null, it is then possible that the resulting $p$-curve violates the properties listed above in one way or another, providing the opportunity to test for $p$-hacking. This means that the power of any particular test for detecting $p$-hacking will depend on the functional form of $d(\cdot)$ and the joint distribution of $\{P_1, P_2, P_3,\dots \}$, in the sense that some tests have power against the types of changes to the distribution caused by $p$-hacking but others might not. Understanding the power of tests for $p$-hacking will thus depend on the testing problem as well as how researchers $p$-hack. It is for this reason we provide analytical results in the next section for a variety of testing problems and approaches to $p$-hacking.  

We consider two general approaches to $p$-hacking. The first is the \emph{threshold approach}, where the researcher constructs a test from their preferred model resulting in a $p$-value $P_1$, accepting this test if $P_1$ is below a target value $\alpha$ (for example, $0.05$). If the $p$-value does not achieve this goal value, additional models are considered. This is representative of the ``intuitive'' approach to $p$-hacking that is discussed in much of the literature on testing for $p$-hacking, where humps or discontinuities in the $p$-curve around common critical levels are examined. This results in a choice of $d(\cdot)$ such that  
\begin{equation} \label{eq: Thresholding}
    P_r= P_1 1_{\{P_1\le \alpha\}} + \min\{P_1,P_2,...\}1_{\{P_1 > \alpha\}}.
\end{equation}
Other thresholding approaches could be considered, such as those in Section \ref{sec:mc}. Tests for humps or discontinuities will have some power for detecting $p$-hacking based on the threshold approach, as will tests for violations of the upper bounds on the $p$-curve. 

The second is the \emph{minimum approach}, where researchers take the smallest $p$-value from a set of models. Here the choice of $d(\cdot)$ results in 
\begin{equation} \label{eq: Minimum}
    P_r= \min\{P_1,P_2,...\}.
\end{equation}
Intuitively, we would expect that for this approach the distribution of $p$-values would shift to the left, be monotonically non-increasing, and there would be no expected hump in the distribution of $p$-values near commonly reported significance levels. Tests for non-increasingness and discontinuities will not have power. Instead the bounds derived in \citet{elliott2022detecting} may be violated. 

Ultimately, carefully considering the functions $d(\cdot)$ and distributions of $p$-values allows us to examine power against empirically relevant alternatives. 

\subsection{Impact of Publication Bias}
\label{sec: publication bias}
Our focus is on the power of testing for various types of $p$-hacking. However, empirical studies of $p$-hacking often focus on published $p$-values, which may be subject to publication bias \citep[see, e.g.,][and the references therein]{andrews2019identification}. Publication bias can impact the distribution of $p$-values in ways similar to $p$-hacking.

To see this effect, let $S=1$ if the paper is selected for publication and $S=0$ otherwise. By Bayes' Law, the $p$-curve conditional on publication, $g_{S=1}(p):=g(p\mid S=1)$, is
\begin{equation}
g_{S=1}(p)=\frac{\Pr(S=1\mid p)g^d(p)}{\Pr(S=1)}. \label{eq: g_S=1}
\end{equation}
Here $\Pr(S=1\mid p)$ is the publication probability given $p$-value $P=p$, and $g^d(p)$ refers to the potentially $p$-hacked distribution of $p$-values when there is no publication bias.

Without publication bias, the publication probability does not depend on $p$, $\Pr(S=1\mid p)=\Pr(S=1)$, so that $g_{S=1}=g^d$. With publication bias, $\Pr(S=1\mid p)$ depends on $p$ so that $\Pr(S=1\mid p)$ is not be equal to $\Pr(S=1)$ for some $p\in (0,1)$ and $g_{S=1}\ne g^d$. In this case, one can detect $p$-hacking and/or publication bias if $g_{S=1}$ violates the testable restrictions underlying the statistical tests, so we can regard the tests here as joint tests of the absence of both $p$-hacking and publication bias. It is not possible in general to distinguish $p$-hacking from publication bias without additional assumptions.

It is plausible to assume that papers with smaller $p$-values are more likely to get published so that $\Pr(S=1\mid p)$ is decreasing in $p$. In this case, $g_{S=1}$ is non-increasing in the absence of $p$-hacking. Selection through publication bias that favors smaller $p$-values can result in steeper $p$-curves violating the bounds derived under the null hypothesis of no $p$-hacking. Hence rejections of bounds tests may well be exacerbated by publication bias. Discontinuities in $\Pr(S=1\mid p)$ can generate discontinuities in the absence of $p$-hacking, generating power for discontinuity tests. We examine these effects via Monte Carlo analysis in Section \ref{sec:simulations_publication_bias}.

\section{Implications of $p$-Hacking}
\label{sec:theory}

For a power evaluation to be informative, relevant choices for the joint distribution of the $p$-values and methods for $p$-hacking need to be considered. We deal with each of these through the following choices:

\begin{enumerate}\setlength\itemsep{0em} 

\item For the distribution of $h$, we provide general analytical results for any distribution. For graphical and Monte Carlo purposes, we choose either point masses on a particular value of $h$ or $\hat\Pi$, a distribution calibrated to the data in \citet{brodeur2020methods}.\footnote{Specifically, we model the distribution of $h$ by the Gamma distribution and calibrate its parameters to the randomized controlled trial (RCT) subsample of the data collected by  \cite{brodeur2020methods}, which they use as a benchmark. More specifically, we parameterize the density of the Gamma distribution as $\frac{\beta^\alpha}{\Gamma(\alpha)}x^{\alpha-1}e^{-\beta x}$ and compute maximum likelihood estimates $\hat\alpha=0.8547$ and $\hat\beta=1.8691$.}

\item For the methods employed in $p$-hacking, corresponding to the choice of the function $d(\cdot)$, we consider the two basic approaches to $p$-hacking discussed in Section \ref{sec: DirectionsPower}, the threshold and minimum approaches.

\end{enumerate}

We study the shape of the distribution of $p$-values under four arguably prevalent empirical problems with the potential for $p$-hacking. We focus on selecting control variables, selecting instruments, and variance bandwidth selection in the main text. We discuss selecting across datasets, which formally is a special case of the examples in the main text, in Appendix \ref{app:selecting_across_datasets}. The ability to $p$-hack depends on the distribution of the $p$-values conditional on $h$, and the four examples allow us to study power in these relevant testing situations.\footnote{While we focus on the impact of $p$-hacking on the shape of the $p$-curve and the power of tests for detecting $p$-hacking, explicit models of $p$-hacking are also useful in other contexts. For example, \citet{mccloskey2023critical} use a model of $p$-hacking to construct critical values that are robust to $p$-hacking.} 

For analytical tractability, we focus on the case where researchers use one-sided tests in this section, for a limited number of options for $p$-hacking. Appendix \ref{app:p_curves_two_sided} provides analogous numerical results for two-sided tests. The analytical results provide a clear understanding of the opportunities for tests to have power and what types of situations the tests will have power in. In the simulation study in Section \ref{sec:mc}, we consider generalizations of these analytical examples for two-sided tests, and we also show results for one-sided tests in Appendix \ref{app:additional_simulation_results}.

In addition to analyzing the effects of $p$-hacking on the shape of the $p$-curve, we study its implications for the bias of the estimates and size distortions of the tests reported by researchers engaged in $p$-hacking. We present these results in Appendix \ref{app:theory_bias_size_distortions}. Appendix \ref{app:detailed_derivations} presents the derivations underlying all analytical results.

\subsection{Selecting Control Variables in Linear Regression}
\label{sec:specification_search_regression}

Linear regression has been suggested to be particularly prone to $p$-hacking \citep[e.g.,][]{hendry1980econometrics,leamer1983lets,bruns2016p,bruns2017metaregression}. Researchers usually have available a number of control variables that could be included in a regression along with the variable of interest. Selection of various configurations for the linear model allows multiple chances to obtain a small $p$-value, perhaps below a threshold such as $0.05$. The theoretical results in this section yield a careful understanding of the shape of the $p$-curve when researchers engage in this type of $p$-hacking. 

We construct a stylized model and consider the two approaches to $p$-hacking discussed above in order to provide analytical results that capture the impact of $p$-hacking. Suppose the researchers estimate the impact of a scalar regressor $X_i$ on an outcome $Y_i$. The data are generated as $Y_i = X_i\beta+U_i$, $ i=1,\dots,N,$ where $U_i\overset{iid}\sim\mathcal{N}(0,1)$. For simplicity, we assume that $X_i$ is non-stochastic. The researchers test the hypothesis $H_0:\beta=0$ against $H_1:\beta>0$.

In addition to $X_i$, the researchers have access to two additional non-stochastic control variables, $Z_{1i}$ and $Z_{2i}$.\footnote{For simplicity, we consider a setting where $Z_{1i}$ and $Z_{2i}$ do not enter the true model so that their omission does not lead to omitted variable biases \citep[unlike, e.g., in][]{bruns2016p}. It is straightforward to generalize our results to settings where $Z_{1i}$ and $Z_{2i}$ enter the model: $Y_i = X_i\beta_1+Z_{1i}\beta_2 +Z_{2i}\beta_3 + U_i$.} We assume that $(X_i,Z_{1i},Z_{2i})$ are scale normalized so that $N^{-1}\sum_{i=1}^{N}X^2_i = N^{-1}\sum_{i=1}^{N}Z^2_{1i}=N^{-1}\sum_{i=1}^{N}Z^2_{2i}=1$. To simplify the exposition, we further assume that $N^{-1}\sum_{i=1}^{N}Z_{1i}Z_{2i}=\gamma^2$ and that $N^{-1}\sum_{i=1}^{N}X_iZ_{1i}= N^{-1}\sum_{i=1}^{N}X_iZ_{2i}= \gamma$, where $|\gamma|\in (0, 1)$.\footnote{We omit $\gamma = 0$, i.e., adding control variables that are uncorrelated with $X_i$, because in this case the $t$-statistics and thus $p$-values for each regression are equivalent and hence there is no opportunity for $p$-hacking of this form.} These assumptions are not essential for our analysis and could be relaxed at the expense of a more complicated notation. We let $h := \sqrt{N}\beta \sqrt{1-\gamma^2}$. 

In terms of $p$-hacking, researchers regress $Y_i$ on $X_i$ and $Z_{1i}$ and compute the resulting $p$-value, $P_1$. An additional $p$-value, $P_2$, arises from regressing $Y_i$ on $X_i$ and $Z_{2i}$ instead of $Z_{1i}$. The reported $p$-value $P_r$ is obtained from equation \eqref{eq: Thresholding} under the threshold approach and equation \eqref{eq: Minimum} under the minimum approach. Each approach results in different distributions of $p$-values, and, consequently, tests for $p$-hacking will have different power properties. 

In Appendix \ref{app:derivation_example1}, we show that for the threshold approach the resulting $p$-curve is 
\begin{eqnarray*}
g_1^{t}(p) =\int_\mathcal{H}\exp\left(hz_0(p)-\frac{h^2}{2}\right)\Upsilon^t_1(p; \alpha, h, \rho)d\Pi(h),
\end{eqnarray*}
where $\rho = 1-\gamma^2$, $z_h(p) = \Phi^{-1}(1-p) - h$, $\Phi$ is the standard normal cumulative distribution function (CDF), and 
\begin{equation*}
  \Upsilon^t_1(p;\alpha, h, \rho)=\begin{cases}
    1+ \Phi\left(\frac{z_{h}(\alpha) - \rho z_{h}(p)}{\sqrt{1-\rho^2}}\right), & \text{if $p\le \alpha$},\\
    2\Phi\left(z_h(p)\sqrt{\frac{1-\rho}{1+\rho}}\right), & \text{if $p> \alpha$}.
  \end{cases}
\end{equation*}
In interpreting this result, note that when there is no $p$-hacking, then $\Upsilon_1(p; \alpha, h, \rho)=1$. It follows from the properties of $\Phi$ that the threshold $p$-curve lies above the curve without $p$-hacking for $p \le \alpha$. We can also see that, since $\Phi\left(\frac{z_{h}(\alpha) - \rho z_{h}(p)}{\sqrt{1-\rho^2}}\right)$ is decreasing in $h$,  for larger $h$ the difference between the threshold $p$-curve and the curve without $p$-hacking becomes smaller. This is intuitive since for a larger $h$, the need to $p$-hack diminishes as most of the studies find an effect without resorting to manipulation. Note that a larger $h$, ceteris paribus, corresponds to more ``evidential value'' in the terminology of \citet{simonsohn2014p}.

The distribution of $p$-values for the minimum approach is equal to 
\begin{eqnarray*}
g_1^{m}(p) =2\int_\mathcal{H}\exp\left(hz_0(p)-\frac{h^2}{2}\right)\Phi\left(z_h(p)\sqrt{\frac{1-\rho}{1+\rho}}\right)d\Pi(h).
\end{eqnarray*}
For $p$-hacking of this form, the entire distribution of $p$-values is shifted to the left. For some values of $p$ less than 0.5, the curve lies above the no-$p$-hacking curve. This distribution is monotonically decreasing for all $\Pi$, so does not have a hump and remains continuous. Because of this, only the tests based on upper bounds and higher-order monotonicity have any potential for detecting $p$-hacking. If $\Pi$ is a point mass distribution, there is a range over which $g_1^{m}(p)$ exceeds the upper bound $\exp(z_0(p)^2/2)$ derived in \citet{elliott2022detecting}, the upper end (largest $p$) of which is at $p=1-\Phi(h)$.

Panels (a) and (b) of Figure \ref{fig:p_curves_example1} show the theoretical $p$-curves for various $h$ and $\gamma$.\footnote{Note that the results depend on $\gamma$ via $\rho=1-\gamma^2$. Therefore, the results do not depend on the sign of $\gamma$, and we only show results for positive values of $\gamma$.} In terms of violating the condition that the $p$-curve is monotonically decreasing, violations for the threshold case can occur but only for $h$ small enough. For $p<\alpha$, the derivative is
\begin{eqnarray*}
{g^{t}_1}'(p) = \int_\mathcal{H}\frac{\phi(z_h(p))\left[\frac{\rho}{\sqrt{1-\rho^2}}\phi\left(\frac{z_h(\alpha) - \rho z_h(p)}{\sqrt{1-\rho^2}}\right) - h\left(1+\Phi\left(\frac{z_{h}(\alpha) - \rho z_{h}(p)}{\sqrt{1-\rho^2}}\right)\right)\right]}{\phi^2(z_0(p))}d\Pi(h),
\end{eqnarray*}
where $\phi$ is the standard normal probability density function (PDF). Note that $\rho$ is always positive and, when all nulls are true (i.e., when $\Pi$ assigns probability one to $h=0$), ${g^t}'(p)$ is positive for all $p\in (0,\alpha)$.\footnote{For $p>\alpha$, ${g^{t}_1}'(p)$ is negative and equal to ${g^t_1}'(p)=-\int_{\mathcal{H}}\frac{\phi(z_h(p))}{\phi^2(z_0(p))}\left(h+ \sqrt{\frac{1-\rho}{1+\rho}}\phi\left(z_h(p)\sqrt{\frac{1-\rho}{1+\rho}}\right)\right)d\Pi(h).$} This can be seen for the dashed line in the Panel (a) of Figure \ref{fig:p_curves_example1}. However, at $h=1$, this effect no longer holds, and the $p$-curve is downward sloping. From Panel (b) in Figure \ref{fig:p_curves_example1}, we see that violations of monotonicity are larger for smaller $\gamma$. When $\gamma=0.1$, the $p$-curve even becomes bimodal (one mode at 0 and one mode at 0.05). 

Figure \ref{fig:p_curves_example1} indicates that the threshold approach to $p$-hacking implies a discontinuity at $\alpha$. The size of the discontinuity is larger for larger $h$ and remains for each $\gamma$, although how that translates to power of tests for discontinuity also depends on the shape of the rest of the curve. We examine this in Monte Carlo experiments in Section \ref{sec:mc}. \textbf{[FIGURE \ref{fig:p_curves_example1} HERE.]}

Panel (c) of Figure \ref{fig:p_curves_example1} compares the threshold and minimum approach to $p$-hacking. Results are presented for $h=1$ and $\gamma=0.5$, with respect to the bounds under no $p$-hacking. We also report the no-$p$-hacking distribution. Simply taking the minimum $p$-value as a method of $p$-hacking results in a curve that remains downward sloping and has no discontinuity --- tests for these two features will have no power against such $p$-hacking. But as Panel (c) of Figure \ref{fig:p_curves_example1} shows, the upper bounds on the $p$-curve are violated for both methods of $p$-hacking. The violation in the thresholding case is pronounced. By contrast, the violation in the minimum case is barely visible, so that the power of tests for detecting this type of $p$-hacking based on violations of the upper bounds will be low, except in large samples.

\subsection{Selecting amongst Instruments in IV Regression}
\label{sec: iv_example}

Suppose that the researchers use an IV regression to estimate the causal effect of a scalar regressor $X_i$ on an outcome $Y_i$. The data are generated as
\begin{align*}
Y_i &= X_i\beta+U_i, \\%\quad i=1,\dots,N,\\
X_i &= Z_{1i}\gamma_1 + Z_{2i}\gamma_2 + V_{i},
\end{align*}
where $(U_i,V_i)'\overset{iid}\sim\mathcal{N}(0,\Omega)$ with $\Omega_{12}\ne 0$. The instruments are generated as $Z_i\overset{iid}\sim \mathcal{N}(0, I_2)$ and independent of $(U_i, V_i)$. The researchers test the hypothesis 
$H_0:\beta=0$ against $H_1:\beta>0.$
To simplify the exposition, suppose that ${\Omega_{11}=\Omega_{22}=1}$ and $\gamma_1=\gamma_2=\gamma$. We let $h := \sqrt{N}\beta|\gamma|$.

For $p$-hacking, researchers start by using all the available information (both instruments) to compute $P_1$ but can also generate $p$-values by only using the first instrument to compute $P_2$ or only the second instrument for $P_3$. The threshold and minimum choices for $P_r$ follow from equations \eqref{eq: Thresholding} and \eqref{eq: Minimum}.\footnote{In practice, it is likely that researchers also consider the first stage $F$-statistic when selecting instruments \citep[e.g.,][]{andrews2019weak,brodeur2020methods}. We explore this in our Monte Carlo simulations.} 

For the threshold approach, the $p$-curve (see Appendix \ref{app:derivation_example2} for derivations) is 
\begin{equation*}
    g_2 ^t(p) = \int_{\mathcal{H}}\exp\left(hz_0(p) - \frac{h^2}{2}\right)\Upsilon^t_2(p; \alpha, h)d\Pi(h),
\end{equation*}
where 
\begin{equation*}
  \Upsilon^t_2(p;\alpha, h)=\begin{cases}
    \frac{\phi(z_{\sqrt{2}h}(p))}{\phi(z_h(p))} + 2\Phi(D_h(\alpha)-z_h(p)), & \text{if $0<p\le\alpha$},\\
    \frac{\phi(z_{\sqrt{2}h}(p))}{\phi(z_h(p))}\zeta(p) + 2\Phi(D_h(p)-z_h(p)), & \text{if $\alpha<p\le1/2$},\\
     2\Phi(z_h(p)), & \text{if $1/2<p<1$},\\
  \end{cases}
\end{equation*}
and $\zeta(p) = 1 - 2\Phi((1-\sqrt{2})z_0(p))$ and $D_h(p)=\sqrt{2}z_0(p)-2h$.

In Figure \ref{fig:p_curves_example2}, the $p$-curves for $h\in \{0,1,2\}$ are shown for the threshold approach in Panel (a). As in the covariate selection example, it is only for small values of $h$ that we see upward sloping curves and a hump below size. For $h=1$ and $h=2$, no such violation of non-increasingness occurs, and tests aimed at detecting such a violation will have no power. The reason is similar to that of the covariate selection problem --- when $h$ becomes larger many tests reject anyway, so whilst there is still a possibility to $p$-hack the actual rejections overwhelm the ``hump'' building of the $p$-hacking. For all $h$, there is still a discontinuity in the $p$-curve arising from $p$-hacking, so tests for a discontinuity at $\alpha$ will still have power.  \textbf{[FIGURE \ref{fig:p_curves_example2} HERE]}

For the minimum approach in equation \eqref{eq: Minimum}, the distribution of $p$-values is equal to 
\begin{equation*}
    g^m_2(p) = \int_{\mathcal{H}}\exp\left(hz_0(p) - \frac{h^2}{2}\right){\Upsilon}^m_2(p; \alpha, h)d\Pi(h),
\end{equation*}
where 
\begin{equation*}
  {\Upsilon}^m_2(p;\alpha, h)=\begin{cases}
    \frac{\phi(z_{\sqrt{2}h}(p))}{\phi(z_h(p))}\zeta(p) + 2\Phi(D_h(p)-z_h(p)), & \text{if $0<p\le1/2$},\\
     2\Phi(z_h(p)), & \text{if $1/2<p<1$}.\\
  \end{cases}
\end{equation*}

Panel (b) of Figure \ref{fig:p_curves_example2} displays the $p$-curves for $h\in \{0,1,2\}$ for the minimum approach. There is no hump, as expected, and all the curves are non-increasing. Only tests based on upper bounds for and higher-order monotonicity of the $p$-curve have the possibility of rejecting the null hypothesis of no $p$-hacking in this situation. The upper bound is violated around the 0.05 threshold for $h = 0$ and at lower $p$-values for larger $h$.

Panel (c) in Figure \ref{fig:p_curves_example2} shows the comparable figure for the IV problem as Panel (c) of Figure \ref{fig:p_curves_example1} shows for the covariate selection example. The results are qualitatively similar across these examples, although quantitatively the $p$-hacked curves in the IV problem are closer to and more likely to exceed  the bounds than in the covariates problem.

Overall, as with the case of covariate selection, both the relevant tests for $p$-hacking and their power will depend strongly on the range of $h$ relevant to the studies underlying the data employed for the tests.

\subsection{Variance Bandwidth Selection}
\label{sec:variance_selection_analytical}

In time series regression, sums of random variables such as means or regression coefficients are standardized by an estimate of the spectral density of the relevant series at frequency zero. A number of estimators exist; the most popular in practice is a nonparametric estimator that takes a weighted average of covariances of the data. With this method, researchers are confronted with a choice of the bandwidth for estimation. Different bandwidth choices allow for multiple chances at constructing $p$-values, hence allowing for the potential for $p$-hacking.

To examine this analytically, consider the model $Y_{t}=\beta+U_{t}$, $t=1,\dots,N$, where we assume that $U_{t}\overset{iid}\sim \mathcal{N}(0,1)$. Researchers consider two statistics for testing $H_0:\beta=0$ against $H_1:\beta>0$. First, the usual $t$-statistic, $T_1=\sqrt{N}\bar{Y}_{N}$, which generates a $p$-value $P_1$. Secondly, $T_2=(\sqrt{N}\bar{Y}_{N})/\hat\omega$, which generates a $p$-value $P_2$, where $\bar{Y}_{N}=N^{-1} \sum_{t=1}^N Y_t$, $\hat\omega^2:={\omega}^2(\hat\rho):=1+2\kappa \hat{\rho}$ and $\hat{\rho} = (N-1)^{-1} \sum_{t=2}^N\hat{U}_t \hat{U}_{t-1}$. Here ${\kappa}$ is the weight in the spectral density estimator. For example, in the 
\citet{newey1987simple} estimator with one lag, $\kappa=1/2$. We again consider both the threshold approach to $p$-hacking \eqref{eq: Thresholding} as well as the minimum approach \eqref{eq: Minimum}.\footnote{If $\hat\rho$ is such that $\hat\omega^2$ is negative, the researcher always reports the initial result.}

In Appendix \ref{app:derivation_example4}, we show that the distribution of $p$-values has the form 
\begin{eqnarray*}
g_3^{t}(p) =\int_\mathcal{H}\exp\left(h z_0(p)-\frac{h^2}{2}\right)\Upsilon^t_3(p; \alpha, h, \kappa)d\Pi(h),
\end{eqnarray*}
with $\Upsilon_3(p; \alpha, h, \kappa)$ taking different forms over different parts of the support of the distribution. Define $l(p)=(2\kappa)^{-1} \left( \left(\frac{z_0(\alpha)}{z_0(p)} \right)^2-1 \right)$ and let $H_N$ and $\eta_N$ be the CDF and PDF of $\hat\rho$, respectively. Then we have
\begin{equation*}
  \Upsilon^t_3= \begin{cases}
    1+\frac{1}{\phi(z_h(p))}\int_{-(2\kappa)^{-1}}^{l(p)}\omega(r)\phi(z_0(p)\omega(r)-h)\eta_N(r)dr, & \text{if $0<p\le\alpha$},\\ 
       1-H_N(0)+H_N(-(2\kappa)^{-1})+\frac{1}{\phi(z_h(p))}\int_{-(2\kappa)^{-1}}^{0}\omega(r)\phi(z_0(p)\omega(r)-h)\eta_N(r)dr, & \text{if $\alpha<p\le 1/2$},\\
    H_N(0)+\frac{1}{\phi(z_h(p))}\int_{0}^{\infty}\omega(r)\phi(z_0(p)\omega(r)-h)\eta_N(r)dr, & \text{if $1/2<p<1$}.
    \end{cases}
\end{equation*}

Panel (a) in Figure \ref{fig:pcurves_lag} presents the $p$-curves for the thresholding case. Notice that, unlike the earlier examples, thresholding creates the intuitive hump in the $p$-curve at the chosen size (here $0.05$) for all of the values for $h$. Thus tests that attempt to find such humps may have power. Discontinuities at the chosen size and violations of the upper bounds also occur. \textbf{[FIGURE \ref{fig:pcurves_lag} HERE]}

When the minimum over the two $p$-values is chosen, the $p$-curve is given by 
\begin{eqnarray*}
g_3^{m}(p) =\int_\mathcal{H}\exp\left(h z_0(p)-\frac{h^2}{2}\right){\Upsilon}^m_3(p; \alpha, h, \kappa)d\Pi(h),
\end{eqnarray*}
where 
\begin{equation*}
  \Upsilon^m_3= \begin{cases}
       1-H_N(0)+H_N(-(2\kappa)^{-1})+\frac{1}{\phi(z_h(p))}\int_{-(2\kappa)^{-1}}^{0}\omega(r)\phi(z_0(p)\omega(r)-h)\eta_N(r)dr, & \text{if $0<p\le 1/2$},\\
    H_N(0)+\frac{1}{\phi(z_h(p))}\int_{0}^{\infty}\omega(r)\phi(z_0(p)\omega(r)-h)\eta_N(r)dr, & \text{if $1/2<p<1$}.
    \end{cases}
\end{equation*}

Panel (b) in Figure \ref{fig:pcurves_lag} presents the $p$-curves for the minimum case. When $p$-hacking works through taking the minimum $p$-value, the impact is to move the distributions towards the left, making the $p$-curves fall more steeply. Of interest is what happens at $p=0.5$, where taking the minimum (this effect is also apparent in the thresholding case) results in a discontinuity. The reason for this is that choices over the denominator of the $t$-statistic used to test the hypothesis cannot change the sign of the $t$-test. Within each side, the effect is to push the distribution to the left, so this results in a (small) discontinuity at $p=0.5$. This effect will extend to all methods where $p$-hacking is based on searching over different choices of variance-covariance matrices --- for example, different choices in estimators, different choices in the number of clusters (as we consider in the Monte Carlo simulations), etc. Panel (b) of Figure \ref{fig:pcurves_lag} shows that for $h=1,2$, the bound is not reached, and any discontinuity at $p=0.5$ is very small. For $h=0$, the bound is slightly below the $p$-curve after the discontinuity.

\section{Statistical Tests for $p$-Hacking}
\label{sec:tests}

In this section, we discuss several statistical tests for the null hypothesis of no $p$-hacking based on a sample of $n$ $p$-values, $\{P_i\}_{i=1}^n$.

\smallskip
\noindent \textbf{Histogram-based Tests for Combinations of Restrictions.}
Histogram-based tests \citep{elliott2022detecting} provide a flexible framework for constructing tests for different combinations of testable restrictions. Let $0=x_0<x_1<\cdots< x_J=1$ be an equidistant partition of $[0,1]$ and define the population proportions $\pi_j = \int_{x_{j-1}}^{x_j}g(p)dp$, $j=1,\dots, J.$
The main idea of histogram-based tests is to express the testable implications of $p$-hacking in terms of restrictions on the population proportions $(\pi_{1},\dots,\pi_J)$. For instance, non-increasingness of the $p$-curve implies that $\pi_{j} - \pi_{j-1}\le 0$ for $j=2,\dots,J$. More generally, \citet{elliott2022detecting} show that $K$-monotonicity (i.e., derivatives with alternating signs up to order $K$) and upper bounds on the $p$-curve and its derivatives can be expressed as $H_0:~ A\pi_{-J}\le b$,
for a matrix $A$ and vector $b$, where $\pi_{-J} := (\pi_1,\dots, \pi_{J-1})'$.\footnote{Here we incorporate the adding up constraint $\sum_{j=1}^J\pi_j=1$ into the definition of $A$ and $b$ and express the testable implications in terms of the ``core moments'' $(\pi_1,\dots, \pi_{J-1})$ instead of $(\pi_1,\dots, \pi_{J})$.}

To test this hypothesis, we estimate $\pi_{-J}$ by the vector of sample proportions $\hat\pi_{-J}$. The estimator $\hat\pi_{-J}$ is asymptotically normal with mean $\pi_{-J}$ so that the testing problem can be recast as the problem of testing affine inequalities about the mean of a multivariate normal distribution \citep[e.g.,][]{kudo1963,wolak1987,cox2022simple}. Following \citet{elliott2022detecting}, we use the conditional chi-squared test of \citet{cox2022simple}, which is easy to implement and remains computationally tractable when $J$ is moderate or large.

\smallskip
\noindent \textbf{Tests for Non-Increasingness of the $p$-Curve.} 
A popular test for non-increasingness of the $p$-curve is the Binomial test \citep[e.g.,][]{simonsohn2014p,head2015extent}, where researchers compare the number of $p$-values in two adjacent bins right below significance cutoffs. Under the null of no $p$-hacking, the fraction of $p$-values in the bin closer to the cutoff should be weakly smaller than the fraction in the bin farther away. Implementation is typically based on an exact Binomial test. Binomial tests are ``local'' tests that ignore information about the shape of the $p$-curve farther away from the cutoff, which often leads to no or low power in our simulations.\footnote{A ``global'' alternative is Fisher's test \citep[e.g.,][]{simonsohn2014p}. We do not report results for Fisher's test since we found that this test has essentially no power for detecting the types of $p$-hacking we consider.}

In addition to the classical Binomial test, we consider tests based on the least concave majorant (LCM) \citep{elliott2022detecting}.\footnote{LCM tests have been successfully applied in many different contexts \citep[e.g.,][]{carolan2005,beare2015nonparametric,fang2019refinements}.} LCM tests are based on the observation that non-increasingness of $g$ implies that the CDF $G$ is concave. Concavity can be assessed by comparing the empirical CDF of $p$-values, $\hat{G}$, to its LCM $\mathcal{M}\hat{G}$, where $\mathcal{M}$ is the LCM operator. We choose the test statistic $\sqrt{n}\|\hat{G}-\mathcal{M}\hat{G}\|_\infty$. The uniform distribution is least favorable for this test \citep{kulikov2008distribution,beare2021least}, and critical values can be obtained via simulations.

\smallskip
\noindent \textbf{Tests for Continuity of the $p$-Curve.} 
Continuity of the $p$-curve at pre-specified cutoffs $p=\alpha$ can be assessed using standard density discontinuity tests \citep[e.g.,][]{mccrary2008manipulation,cattaneo2020simple}. Following \citet{elliott2022detecting}, we use the approach by \citet{cattaneo2020simple} with the automatic bandwidth selection implemented in the \texttt{R}-package \texttt{rddensity} \citep{cattaneo2021rddensity}. 

\section{Monte Carlo Simulations}
\label{sec:mc}

In this section, we investigate the finite sample properties of the tests in Section \ref{sec:tests} using a Monte Carlo simulation study based on generalizations of the analytical examples of $p$-hacking in Section \ref{sec:theory}. 

\subsection{Generalized $p$-Hacking Examples}
In all examples, researchers test a hypothesis about a scalar parameter $\beta$:
\begin{equation}
H_0:\beta=0\qquad  \text{against}\qquad H_{1}:\beta\ne 0.
\label{eq:H0_simul}
\end{equation}
The results for one-sided tests of  $H_0:\beta=0$ against $H_{1}:\beta> 0$ are similar. See Figure \ref{fig:power_cov_K3}.

Researchers may $p$-hack their initial results by exploring additional model specifications or estimators and report a different result of their choice. Specifically, we consider the two general approaches to $p$-hacking discussed in Sections \ref{sec: setup} and \ref{sec:theory}: the threshold and the minimum approach. In what follows, we discuss the generalized examples of $p$-hacking in more detail.

\subsubsection{Selecting Control Variables in Linear Regression}
\label{sec:covariate_selection_MC}

Researchers have access to a random sample with $N=200$ observations generated as $Y_i=X_{i}\beta+u_i$,
where $X_i\sim \mathcal{N}(0,1)$ and $u_i\sim \mathcal{N}(0,1)$ are independent of each other. There are $K$ additional control variables, $Z_i:=(Z_{1i},\dots, Z_{Ki})'$,  generated as $Z_{ki}= \gamma_kX_{i}+\sqrt{1-\gamma_k^2} \epsilon_{Z_k,i}$ for $k=1,\dots,K$, where $\epsilon_{Z_k,i} \sim \mathcal{N}(0,1)$ and $\gamma_k\sim U[-0.8, 0.8]$. 
We set $\beta=h/\sqrt{N}$ with $h\in \{0,1,2\}$, and we also show results for $h\sim \hat\Pi$, where $\hat\Pi$ is the Gamma distribution fitted to the RCT subsample of the \cite{brodeur2020methods} data.\footnote{The normalization $\beta=h/\sqrt{N}$ is motivated by $h$ being a local effect in our notation and the asymptotic variance of the OLS estimator from a regression of $Y_i$ on $X_i$ being equal to $1$ in this simple model.}

Researchers use one of two threshold approaches or a minimum approach to $p$-hacking. 

\noindent \textit{Threshold approach (general-to-specific):} Researchers regress $Y_i$ on $X_i$ and $Z_i$, test \eqref{eq:H0_simul}, and obtain $P$. If $P\le 0.05$, they report it. If $P>0.05$, they regress $Y_i$ on $X_i$, trying all $(K-1)\times 1$ subvectors of $Z_i$. They report the smallest $p$-value if it is smaller than $0.05$. If it is larger than $0.05$, they continue and explore all $(K-2)\times 1$ subvectors of $Z_i$ etc. If all results are insignificant, they report the smallest $p$-value. 

\noindent\textit{Threshold approach (specific-to-general):} Researchers start by regressing $Y_i$ on $X_i$ only, test \eqref{eq:H0_simul}, and obtain $p$-value $P$. If $P\le 0.05$, they report it. If $P>0.05$, they regress $Y_i$ on $X_i$, trying every component of $Z_i$ as control variable and select the result with the smallest $p$-value. If the smallest $p$-value is larger than $0.05$, they continue and explore all $2\times 1$ subvectors of $Z_i$ etc. If all results are insignificant, they report the smallest $p$-value.

\noindent \textit{Minimum approach:} Researchers run regressions of $Y_i$ on $X_i$ and each possible configuration of covariates $Z_i$ and report the minimum $p$-value.

Figure \ref{fig:hists_covar_selection} shows the null distributions ($p$-curves without $p$-hacking), including estimates of the average power of the underlying studies, and the alternative distributions ($p$-curves with $p$-hacking).\footnote{To generate these distributions, we run the algorithm one million times and collect $p$-hacked and non-$p$-hacked results.} The threshold approach leads to a discontinuity in the $p$-curve and may lead to non-increasing $p$-curves and humps below significance thresholds. By contrast, the minimum approach generally leads to continuous and non-increasing $p$-curves. The distribution of $h$ is an important determinant of the shape of the $p$-curve. It affects the joint distribution of $p$-values under both $p$-hacking approaches and the probability that the researchers find significant results earlier under the threshold approaches. 
Finally, as expected, the violations of the testable restrictions are more pronounced for large $K$ (i.e., when researchers have many degrees of freedom).

\subsubsection{Selecting amongst Instruments in IV Regression}
\label{sec:iv_selection_MC}
Researchers have access to a random sample with $N=200$ observations generated as
\begin{equation*}
\begin{split}
Y_i&=X_{i}\beta+U_i \\
X_{i} &= Z_i'\pi + V_i
\end{split}~~, 
\qquad \begin{pmatrix}U_i\\V_i \end{pmatrix}\sim 
\mathcal{N}\left( \begin{pmatrix} 0 \\ 0 \end{pmatrix},\begin{pmatrix}1 & 0.5 \\ 0.5 & 1 \end{pmatrix}\right).
\end{equation*}
The instruments $Z_i:=(Z_{1i},\dots, Z_{Ki})'$ are generated as $Z_{ki}= \gamma_k\xi_{i}+\sqrt{1-\gamma_k^2} \epsilon_{Z_k,i}$ for $k=1,\dots,K$, where
 $\xi_i\sim \mathcal{N}(0,1)$, $\epsilon_{Z_k,i} \sim \mathcal{N}(0,1)$, and $\gamma_k\sim U[-0.8, 0.8]$. Here $\xi_i, \epsilon_{Z_k, i}$, and $\gamma_k$ are independent for all $k$. Also, $\pi_k\overset{iid}\sim U[1, 3]$ for $k=1,\dots, K$. We set $\beta=h/(3 \sqrt{N})$ with $h\in \{0,1,2\}$ and also show results for $h\sim \hat\Pi$.\footnote{We additionally scale $h$ by 3 to make the average power of studies more comparable to the other examples.}

Researchers use either a threshold or a minimum approach to $p$-hacking.

\noindent \textit{Threshold approach:} Researchers estimate the model using all instruments $Z_i$, test \eqref{eq:H0_simul}, and obtain the $p$-value $P$. If $P\le 0.05$, they report it. If $P>0.05$, they try all $(K-1)\times 1$ subvectors of $Z_i$ as instruments and select the result corresponding to the smallest $p$-value. If the smallest $p$-value is larger than $0.05$, they explore all $(K-2)\times 1$ subvectors of $Z_i$ etc. If all results are insignificant, they report the smallest $p$-value. 

\noindent \textit{Minimum approach:} The researchers run IV regressions of $Y_i$ on $X_i$ using each  possible configuration of instruments and report the minimum $p$-value.

Figures \ref{fig:hists_IV3} and \ref{fig:hists_IV5} display the null distributions, including estimates of the average power of the underlying studies, and the $p$-hacked distributions.\footnote{Unlike in the covariate selection example, we do not show results for $K=7$ since there is a very high concentration of $p$-values at zero in this case.} We also show these distributions for a scenario where the researchers screen out specifications with first-stage $F$-statistics below 10. In this case, the researchers ignore such specifications while doing thresholding or minimum type searches described above. As with covariate selection, the threshold approach yields discontinuous $p$-curves and may lead to non-increasingness and humps, whereas reporting the minimum $p$-value leads to continuous and decreasing $p$-curves. The distribution of $h$ and $K$ are important determinants of the shape of the $p$-curve.

\subsubsection{Standard Error Selection: Lag Length and Clustering}
\label{sec:se_selection_MC}

We consider two different types of standard error selection: lag length selection as in Section \ref{sec:variance_selection_analytical} and selecting the level of clustering, given the prevalence of clustered standard errors in empirical research \cite[e.g.,][]{cameron2015practitioner,mackinnon2023cluster}.

\smallskip
\noindent \textbf{Lag length selection.} Researchers have access to a random sample with $N=200$ observations from $Y_t=X_t\beta+U_t$, where $X_t\sim \mathcal{N}(0,1)$ and $U_t\sim \mathcal{N}(0,1)$ are independent. We set $\beta=h/\sqrt{N}$ with $h\in \{0,1,2\}$ and also show results for $h\sim \hat\Pi$.

Researchers use either a threshold or a minimum approach to $p$-hacking.

\noindent\textit{Threshold approach.} Researchers first regress $Y_t$ on $X_t$ and calculate the standard error using the classical Newey-West estimator with the number of lags selected using the Bayesian Information Criterion (they only choose up to $4$ lags). They then use a $t$-test to test \eqref{eq:H0_simul} and calculate the $p$-value $P$. If $P\le 0.05$, they report it. If $P>0.05$, they try the Newey-West estimator with one extra lag. If the result is not significant, they try two extra lags etc. If all results are insignificant, they report the smallest $p$-value. 

\noindent \textit{Minimum approach.}  Researchers regress $Y_t$ on $X_t$, calculate the standard error using Newey-West with $0$ to $4$ lags and report the minimum $p$-value.

The null distributions, including estimates of the average power of the underlying studies, and the $p$-hacked distributions are displayed in Figure \ref{fig:hists_var}. The threshold approach induces a sharp spike right below 0.05. This is because $p$-hacking via lag selection does not lead to huge improvements in terms of $p$-value. 

\smallskip
\noindent \textbf{Cluster level selection.} Consider the same model as for the lag length selection. The researchers calculate cluster-robust standard errors from grouping data into 20, 40, 50, and  100 clusters. They also calculate standard errors without clustering (or equivalently with 200 clusters). They then report either the minimum $p$-value across clustering levels (\textit{minimum approach}) or the first significant $p$-value found while searching through clustering levels starting from 20 clusters (\textit{threshold approach}). The null distributions, including estimates of the average power of the underlying studies, and the $p$-hacked distributions are displayed in Figure \ref{fig:hists_clust}.

\subsection{Simulations}

\subsubsection{Setup}\label{sec:simulations_setup}
 We model the distribution of reported $p$-values as a mixture, $g^{o}(p) = \tau \cdot g^d(p)+(1-\tau)\cdot g^{np}(p)$. Here, $g^d$ is the distribution under the different $p$-hacking approaches described above; $g^{np}$ is the distribution in the absence of $p$-hacking (i.e., the distribution of the $p$-value from the preferred specification). The parameter $\tau\in [0,1]$ captures the fraction of researchers who engage in $p$-hacking.

To generate the data, we first simulate the $p$-hacking algorithms one million times to obtain samples corresponding to $g^d$ and $g^{np}$. Then, to construct samples in every Monte Carlo iteration, we draw $n=5000$ $p$-values with replacement from a mixture of those samples. Appendix \ref{app:sample_size} presents results for smaller sample sizes. Following \citet{elliott2022detecting}, we apply the tests to the subinterval $(0, 0.15]$. Therefore, the effective sample size depends on the $p$-hacking strategy, the distribution of $h$, and the fraction of $p$-hackers $\tau$.\footnote{Note that the data-generating processes in Sections~\ref{sec:covariate_selection_MC}--\ref{sec:se_selection_MC} imply that, for any $p > \alpha$, the number of observations in the interval $(0, p]$ is identical under both $p$-hacking approaches for covariate, IV, and cluster selection.}

We compare the finite sample performance of the tests described in Section \ref{sec:tests}. See Table \ref{tab:tests} for more details.\footnote{For CS1, CSUB, and CS2B, the optimization routine fails to converge for some realizations of the data due to nearly singular covariance matrix estimates. We count these cases as non-rejections of the null in our Monte Carlo simulations.}  The simulations are implemented using  \texttt{MATLAB} \citep{MATLAB:2023} and \texttt{R} \citep{R2023}. \textbf{[TABLE \ref{tab:tests} HERE]}

\subsubsection{Power Curves} 
\label{sec:power_curves}
In this section, we present figures showing how power varies with the fraction of $p$-hackers, $\tau$, referred to as power curves, for the different data generating processes (DGPs). For covariate and instrument selection, we focus on the results for $K=3$ in the main text and present the results for larger values of $K$ in Appendix \ref{app:additional_simulation_results}. We focus on two-sided tests in the main text. Figure \ref{fig:power_cov_K3} in Appendix \ref{app:additional_simulation_results} presents the results for covariate selection with one-sided tests. The nominal level is  5\%. All results are based on 5000 simulation draws. Figures \ref{fig:power_cov_combined}--\ref{fig:power_se_combined} present the results. \textbf{[FIGURES \ref{fig:power_cov_combined}--\ref{fig:power_se_combined} HERE.]}

The power for detecting $p$-hacking crucially depends on whether the researchers use a thresholding or a minimum approach to $p$-hacking, the econometric method, the fraction of $p$-hackers, $\tau$, and the distribution of $h$. When researchers $p$-hack using a threshold approach, the $p$-curves are discontinuous at the threshold, may violate the upper bounds, and may be non-monotonic. Thus, tests exploiting these testable restrictions may have power when the fraction of $p$-hackers is large enough. 

CS2B, which exploits monotonicity restrictions and bounds, has the highest power overall. Among the tests that exploit monotonicity of the entire $p$-curve, CS1 typically exhibits higher power than LCM. LCM can exhibit non-monotonic power curves because the test statistic converges to zero in probability for strictly decreasing $p$-curves \citep{beare2015nonparametric}. 

The Binomial test often exhibits no or low power. The reason is that the $p$-hacking approaches we consider do not lead to isolated humps or spikes near $0.05$, even if researchers use a threshold $p$-hacking approach. There is one notable exception. When researchers engage in lag length selection, $p$-hacking based on the threshold approach can yield isolated humps right below the cutoff. By construction, the Binomial test is well-suited for detecting this type of $p$-hacking and is among the most powerful tests in this case. Our results for the Binomial test demonstrate the inherent disadvantage of using tests that only exploit testable implications locally. Such tests only have power against very specific forms of $p$-hacking, which limits their usefulness in practice. 

Discontinuity tests are a useful complement to tests based on monotonicity and upper bounds because $p$-hacking based on threshold approaches often yields pronounced discontinuities. These tests are particularly powerful for detecting $p$-hacking based on lag length selection, which leads to spikes and pronounced discontinuities at $0.05$, as discussed above.

When researchers always report the minimum $p$-value, the power of the tests is much lower than when they use a threshold approach. The minimum approach to $p$-hacking does not lead to violations of monotonicity and continuity over $p\in (0,0.15]$. Therefore, by construction, tests based on these restrictions have no power, irrespective of the fraction of researchers who are $p$-hacking. 

The minimum approach may yield violations of the upper bounds. The range over which the upper bounds are violated and the extent of these violation depend on the distribution of $h$ and the econometric method used. The simulations show power only for the tests based on upper bounds (CSUB and CS2B) and for covariate and IV selection when a sufficiently large fraction of researchers $p$-hacks. It is noteworthy that no test has power under the minimum approach when $h\sim \hat\Pi$. This is because $\hat\Pi$ puts mass on large values of $h$, which as explained below, can lead to no or low power.

Under the minimum approach, the power curves of CSUB and CS2B are quite similar, suggesting that the power of CS2B comes mainly from using upper bounds. This finding demonstrates the importance of exploiting upper bounds in addition to monotonicity and continuity restrictions in practice.

The relationship between the power of the tests and the value of $h$ need not be monotonic. The value of $h$ affects the shape of the $p$-curve in complicated ways, so that the power can be non-monotonic depending on the setting, testable restrictions, and specific test. Under the minimum approach, large values of $h$ lead to $p$-values close to zero, where the upper bounds are more difficult to violate. This is the reason why tests based on those upper bounds have very low or no power for $h=2$ (and $h\sim \hat\Pi$).

Finally, the results in Appendix \ref{app:additional_simulation_results} show that the larger $K$ (i.e., the more degrees of freedom the researchers have when $p$-hacking)  the higher the power of CSUB and CS2B.

Overall, the tests' ability to detect $p$-hacking is highly context-specific and can be low in some cases or lacking entirely. This is because $p$-hacking may not lead to violations of the testable restrictions used by the statistical tests for $p$-hacking. Moreover, even if $p$-hacking leads to violations of the testable restrictions, these violations may be small and can thus only be detected based on large samples of $p$-values. Regarding the choice of testable restrictions, the simulations demonstrate the importance of exploiting upper bounds in addition to monotonicity and continuity for constructing powerful tests against plausible $p$-hacking alternatives. 

\begin{remark}[Relationship between Power and Costs of $p$-Hacking] In Appendix \ref{app:power_bias}, we study the relationship between the power of the tests and the costs of $p$-hacking, measured by the relative bias of the underlying estimates. We find that the relationship between the relative bias and power is hump-shaped for the most powerful tests, with the maximum power being achieved between 10\% and 20\% relative bias under thresholding and between 30\% and 35\% relative bias under the minimum approach. See Appendix \ref{app:power_bias} for a more detailed discussion. \qed
\end{remark}

\subsubsection{The Impact of Publication Bias}
\label{sec:simulations_publication_bias}

Here we investigate the impact of publication bias on the power of the tests for testing the joint null hypothesis of no $p$-hacking and no publication bias. We generate a sample of $n=5000$ $p$-values as in Section \ref{sec:simulations_setup} and keep each $p$-value, $P_i$, with probability $\Pr(S=1\mid P_i)$. 

We consider two types of publication bias that differ with respect to how $\Pr(S=1\mid p)$ varies with $p$: sharp publication bias and smooth publication bias. Under sharp publication bias, $\Pr(S=1\mid p)$ is a step function: $\Pr(S=1\mid p)=1_{\{p\le 0.05\}}+0.1\times 1_{\{p> 0.05\}}$. Hence, significant results are 10 times more likely to be published than insignificant ones. Under smooth publication bias, we set $\Pr(S=1\mid p)=\exp(-A\cdot p)$, where we choose $A=8.45$ to make results comparable across both types of publication bias.\footnote{When $A=8.45$, the ratio between $\int_{0}^{0.05}\Pr(S=1\mid p)dp$ and $\int_{0.05}^{1}\Pr(S=1\mid p)dp$ is the same for both types of publication bias.}

Table \ref{tab:publication_bias_1} shows the power of the tests for three levels of $p$-hacking ($\tau \in \{0,0.5,1\}$) with no publication bias, sharp publication bias, and smooth publication bias for covariate selection with $K=3$ and $h=0$. We show results for $h\in \{1,2\}$ and $h\sim \hat{\Pi}$ in Appendix \ref{app:additional_simulation_results}. The impact of publication bias on power depends on the testable restrictions that the tests exploit. The CSUB and CS2B tests have high power for detecting publication bias in the absence of $p$-hacking (when $\tau=0$), and publication bias substantially increases their power for rejecting the joint null when $p$-hacking alone is difficult to detect. This is expected since both forms of publication bias favor small $p$-values, which leads to $p$-curves that are more likely to violate the upper bounds, as discussed in Section \ref{sec: publication bias}. \textbf{[TABLE \ref{tab:publication_bias_1} HERE.]}

For the tests based on monotonicity of the entire $p$-curve (CS1 and LCM), the results depend on the type of publication bias. Sharp publication bias tends to increase power, whereas smooth publication bias can substantially lower power. Due to the local nature of the Binomial test, sharp publication bias does not increase its power. This again demonstrates the advantages of using ``global'' tests.

Sharp publication bias accentuates existing discontinuities and leads to discontinuities in otherwise smooth $p$-curves and thus increases the power of the discontinuity test. By contrast, smooth publication bias can decrease the power of the discontinuity test.

Overall, our results suggest that the presence of publication bias leads to higher power for tests based on upper bounds, especially when $p$-hacking is difficult to detect. For the other tests, whether publication bias increases or decreases the power depends on the exact nature of the publication bias: sharp publication bias tends to increase power, whereas smooth publication bias can decrease power.

\section{Empirical illustration}
\label{sec:illustration}

In this section, we illustrate how our results can help guide the interpretation of empirical results from tests for $p$-hacking by reanalyzing the data in \citet{brodeur2020methods}.\footnote{The data are form \citep{brodeur2022data}. \citet{kudrin2022robust,kudrin2024jmp} analyze some of the subsamples in \citet{brodeur2020methods} using overlapping sets of tests. The differences between our results and theirs based on the same subsamples and tests are due to differences between V1 and V2 of the replication package.} \citeauthor{brodeur2020methods} study how $p$-hacking and publication bias vary by causal inference methods (difference-in-differences, RCT, regression discontinuity design, IV), strength of the instrument (whether the first-stage $F$-statistic is below or above 30), journal rank, and over time. We apply the tests for $p$-hacking in Table \ref{tab:tests} to 13 different subsamples (incl.\ the overall sample) analyzed by \citet{brodeur2020methods}.\footnote{We do not consider the comparison between working and published papers in Section IV.B because the data on working papers only contain the $t$-statistics, so that we cannot apply our de-rounding strategy.} See Table \ref{tab:application} for a list of the different subsamples. 

An important practical issue is that the test results reported in research papers are typically rounded \citep[e.g.,][]{elliott2022detecting,kranz2022methods}. Failure to de-round leads the tests to overreject. The reason is that rounding induces violations of the testable restrictions, such as discontinuities and non-monotonicities. The issue is particularly pronounced for the Binomial and the discontinuity test because it induces a mass point at $t=2$ ($p=0.046$) \citep{elliott2022detecting,kranz2022methods}. For example, in the full sample, there are 258 observations with $t=2$, making up more than 37\% of the 693 observations with $p\in [0.04,0.05]$ used by the Binomial test. 

To document and correct for the impact of rounding, we present results based on the rounded original data in Panel (a) of Table \ref{tab:application} and based on de-rounded data in Panel (b) of Table \ref{tab:application}. The de-rounded data are obtained by adding uniformly distributed noise from the interval $[-0.5, 0.5] \cdot 10^{dp}$ to each statistic (coefficient estimate, standard error, $t$-statistic, or $p$-value) recorded with $dp$ decimal places.\footnote{This de-rounding strategy is similar to \cite{brodeur2016}. De-rounding the $p$-values preserves the monotonicity of the $p$-curve \citep{elliott2020detecting} but could lead to violations of higher-order monotonicity that induce some power for tests targeting these restrictions. \citet{kranz2022methods} propose an improved de-rounding method designed for Caliper tests. We leave an extension of their method to the tests we analyze here for future research.} To mitigate the impact of the added randomness from de-rounding, we present the results in terms of the average rejection rates over 1,000 draws of the de-rounding. In addition to reporting the results for each subsample and the overall sample, we also compute an overall rejection rate across all samples to get an estimate of the ``empirical power'' of the tests.\footnote{The dataset contains multiple tests per paper. We therefore adjust for clustering for the tests for which cluster-robust versions are available (i.e., CS2B, CSUB, and CS1).} \textbf{[TABLE \ref{tab:application} HERE.]}

The results based on the rounded data (Panel (a)) suggest that there is substantial heterogeneity across the different subsamples. For example, all tests except CS1 reject at the 5\% level in the DID subsample, whereas no test rejects in the $F\ge 30$ subsample. 
Overall, the Binomial test rejects in 77\% of the subsamples, followed by CS2B, which rejects in 54\% of the subsamples, whereas the other tests reject in less than 50\% of the subsamples. De-rounding drastically changes the results and empirical conclusions. The overall rejection rates drop substantially for all tests. The overall rejection rate of the Binomial test drops from 75\% to 0\% after de-rounding, and only CSUB and CS2B have overall rejection rates larger than 10\%.

The results in this paper can help explain these findings and clarify their interpretation. First, test results based on rounded data are misleading because the tests overreject. This is because the tests are often not able to distinguish a mass point at $p=0.046$ due to rounding only from $p$-hacking based on thresholding. As we show in this paper, many tests have some power to detect this type of $p$-hacking, which helps explain the rejections based on the rounded data. Second, once we focus on the de-rounded data, there is only limited empirical evidence for $p$-hacking: the average rejection rates are relatively low, even for the best tests. However, given the limited ability of tests to detect $p$-hacking we document in this paper, such findings do not imply that there is only limited $p$-hacking. Finally, there are substantial differences between the results of the different tests. These results are broadly consistent with our simulation results. For example, the most powerful tests (CS2B and CSUB) have the highest average rejection rates after de-rounding.

\section{Conclusion}
\label{sec:conclusion}

Concerns about $p$-hacking have motivated a fast-growing literature testing for $p$-hacking based on $p$-curves. Interpreting empirical work based on these tests requires a careful understanding of their ability to detect $p$-hacking. We examine how well existing tests are able to detect $p$-hacking in practice. Threshold approaches to $p$-hacking (where a predetermined significance level is targeted) result in $p$-curves that typically have discontinuities, $p$-curves that exceed upper bounds under no $p$-hacking, and less often violations of monotonicity restrictions. Many tests have some power to find such $p$-hacking, and the best tests are those exploiting both monotonicity and upper bounds and those based on testing for discontinuities. $p$-Hacking based on reporting the minimum $p$-value does not result in $p$-curves exhibiting discontinuities or monotonicity violations. While tests based on bound violations have some power, $p$-hacking based on mininum approaches is generally much harder to detect than $p$-hacking based on thresholding. The presence of publication bias (in addition to $p$-hacking) often increases but can also decrease the power of the tests for rejecting the joint null hypothesis of no $p$-hacking and no publication bias.

Overall, we find that tests for $p$-hacking can have low power in many settings and even no power in some cases. Therefore, failure to reject using tests for $p$-hacking based on $p$-values does not really indicate the absence of $p$-hacking. 
The lack of power we document has already motivated the development of new tests \citep[e.g.,][]{kudrin2024jmp}, and we believe that this is a promising area for future research.

In this paper, we focus on tests of the null of no $p$-hacking (and no publication bias). From an empirical perspective, testing this null hypothesis is a useful starting point, and rejecting it can then help motivate further analyses and attempts to correct for selective reporting.\footnote{We view the role of this null hypothesis as similar to the role of sharp nulls in experiments. See, for example, the discussion in Chapter 5.11 of \citet{imbens2015causal}.} These tests can also be useful to evaluate the impact of interventions to curb $p$-hacking and publication bias, such as mandating pre-registration, editorial statements \citep{blancoperez2020publication}, or data-sharing policies \citep{brodeur2024phacking}. From an econometric perspective, the advantage of focusing on the null of no $p$-hacking is that it allows for developing tests that control size absent any restrictions on how researchers $p$-hack. Methods that go beyond testing for the existence of $p$-hacking typically rely on specific models of $p$-hacking \citep{mccloskey2023critical} or additional assumptions \citep{andrews2019identification,brodeur2020methods}. A key finding of this paper is that even the power of the best tests can be low, despite the null of no $p$-hacking being strong and potentially unrealistic in some applications. This suggests that considering less stringent nulls could lead to tests with even lower power, absent strong additional assumptions.\footnote{An example would be the null of there being less than a certain fraction of $p$-hacking \citep{kudrin2024jmp}.}

We end with two final remarks. First, there are important practical issues when testing $p$-hacking that our theoretical and Monte Carlo results do not directly speak to. Examples include how to optimally deal with rounding, how to define and select tests of interest, and how to ensure quality control when collecting large samples of $p$-values \citep[e.g.,][]{simonsohn2015blog}.\footnote{In Appendix \ref{app:sample_selection}, we show that mistakes in selecting the $p$-values that researchers target when $p$-hacking can substantially lower the power of tests for detecting $p$-hacking.} Second, we examine situations where the model is correctly specified, so estimators are consistent for their true values. For poorly specified models, for example, the omission of important variables that leads to omitted variables (confounding) effects, it is possible to generate a larger variation in $p$-values. Such problems with empirical studies are well understood and perhaps best found through theory and replication than meta-studies. 

\bibliographystyle{apalike}
\bibliography{references}

\newpage

\section*{Figures}

\begin{figure}[H]
     \begin{center}
                   \caption{$p$-Curves from covariate selection} 
              \label{fig:p_curves_example1}
        \includegraphics[width=0.325\textwidth]{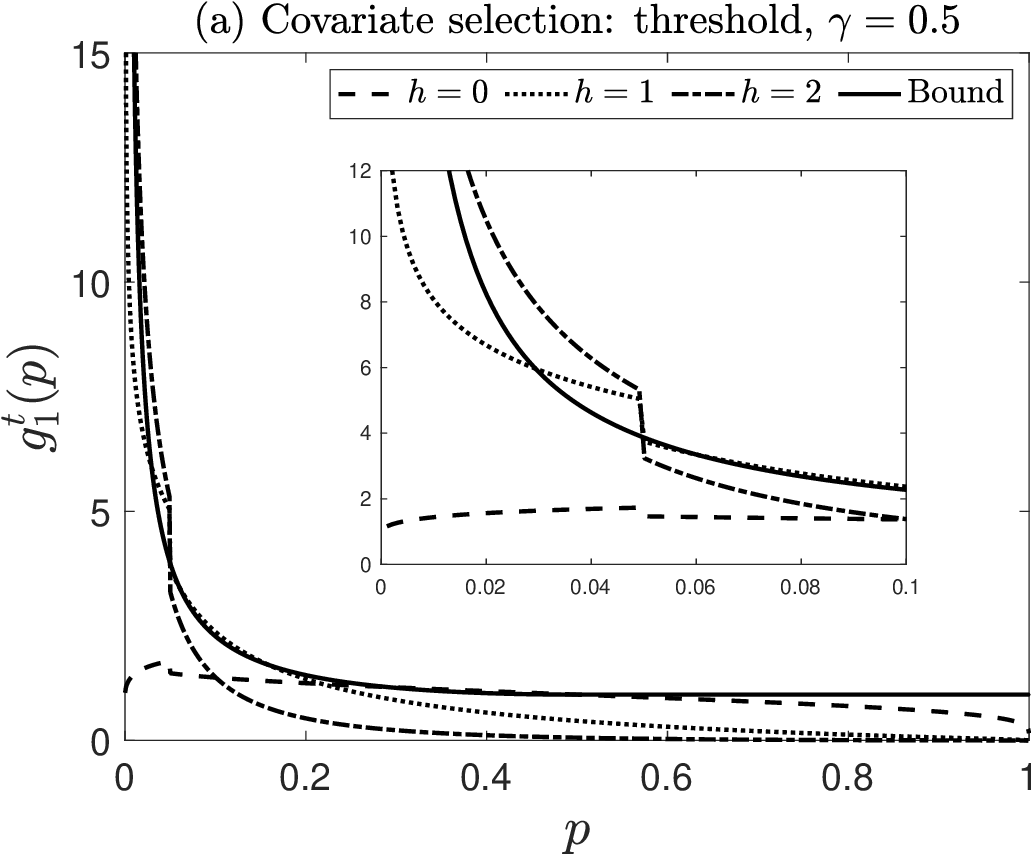}
\includegraphics[width=0.325\textwidth]{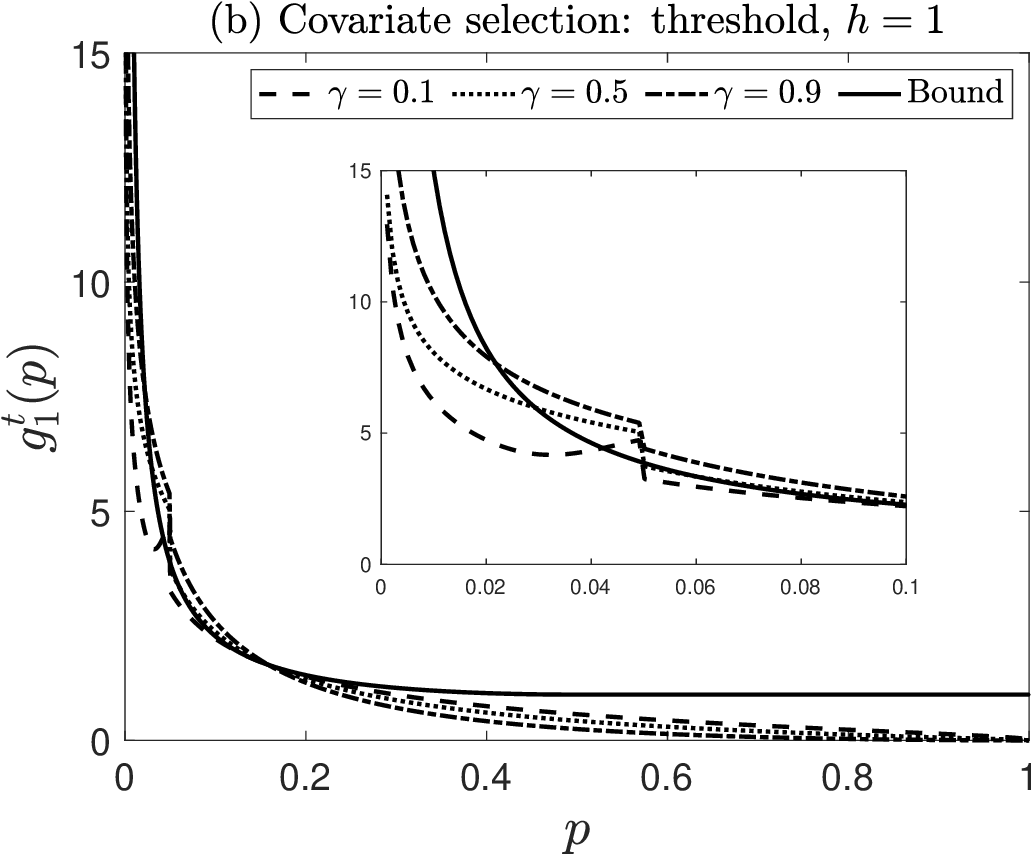}
 \includegraphics[width=0.325\textwidth]{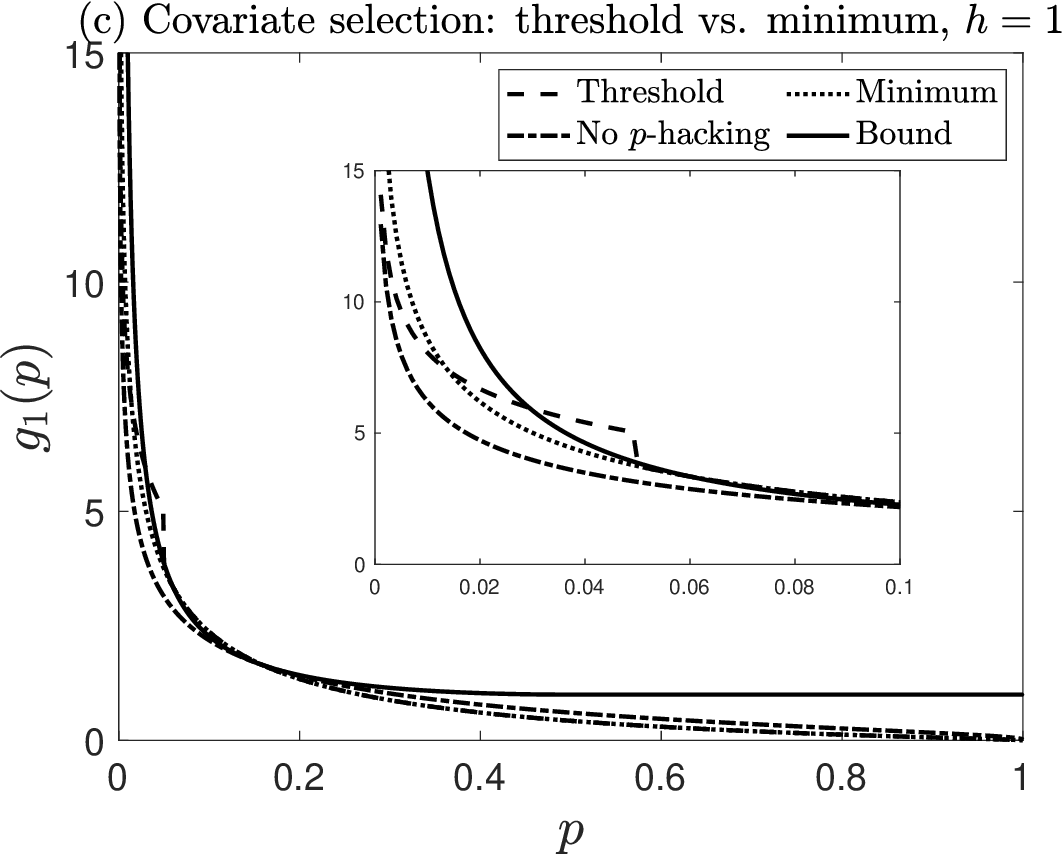}

\end{center}

\vspace*{-3mm}
\doublespacing
\textit{Notes:} Figure shows the $p$-curves from $p$-hacking based on the simple model of covariate selection described in this section and the thresholding and minimum approach. In Panel (a), we set $\gamma=0.5$ and vary $h$. In Panel (b), we set $h=1$ and vary $\gamma$. Panel (c) compares the threshold and minimum approaches.

\end{figure}

\newpage

\begin{figure}[H]
              \caption{$p$-Curves from IV selection} 
              \label{fig:p_curves_example2}
     \begin{center}
        \includegraphics[width=0.325\textwidth]{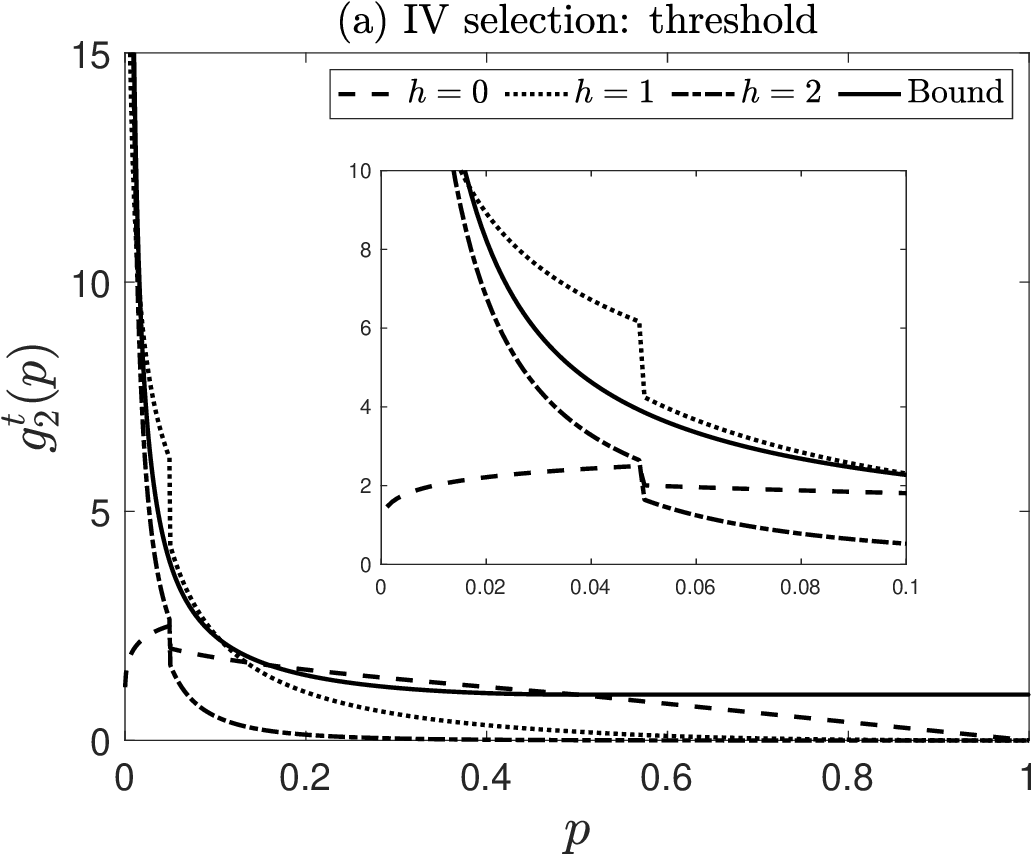}
\includegraphics[width=0.325\textwidth]{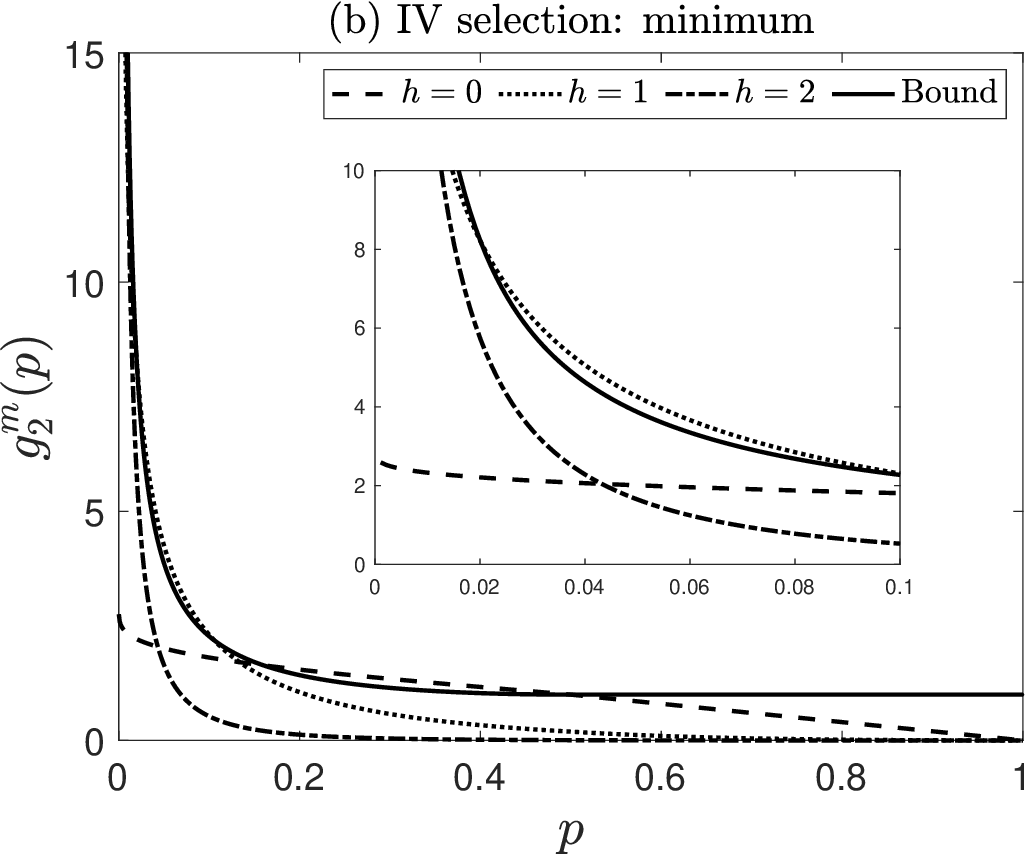}
\includegraphics[width=0.325\textwidth]{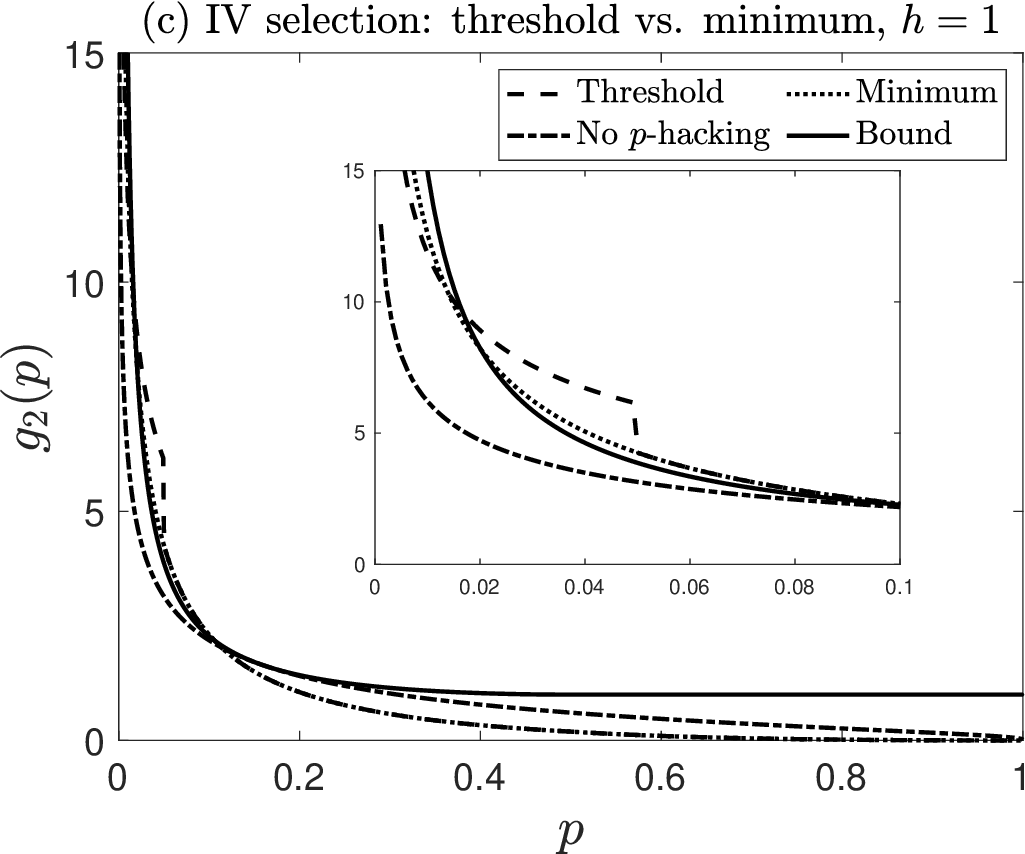}

\end{center}

\vspace*{-3mm}
\doublespacing
\textit{Notes:} Figure shows the $p$-curves from $p$-hacking based on the simple model of IV selection described in this section. Panels (a) and (b) show the results for different values of $h$. Panel (c) compares the threshold and minimum approach for $h=1$.

\end{figure}

\newpage

\begin{figure}[H]
     \begin{center}
\caption{$p$-Curves from lag length selection}              \label{fig:pcurves_lag}
 \includegraphics[width=0.43\textwidth]{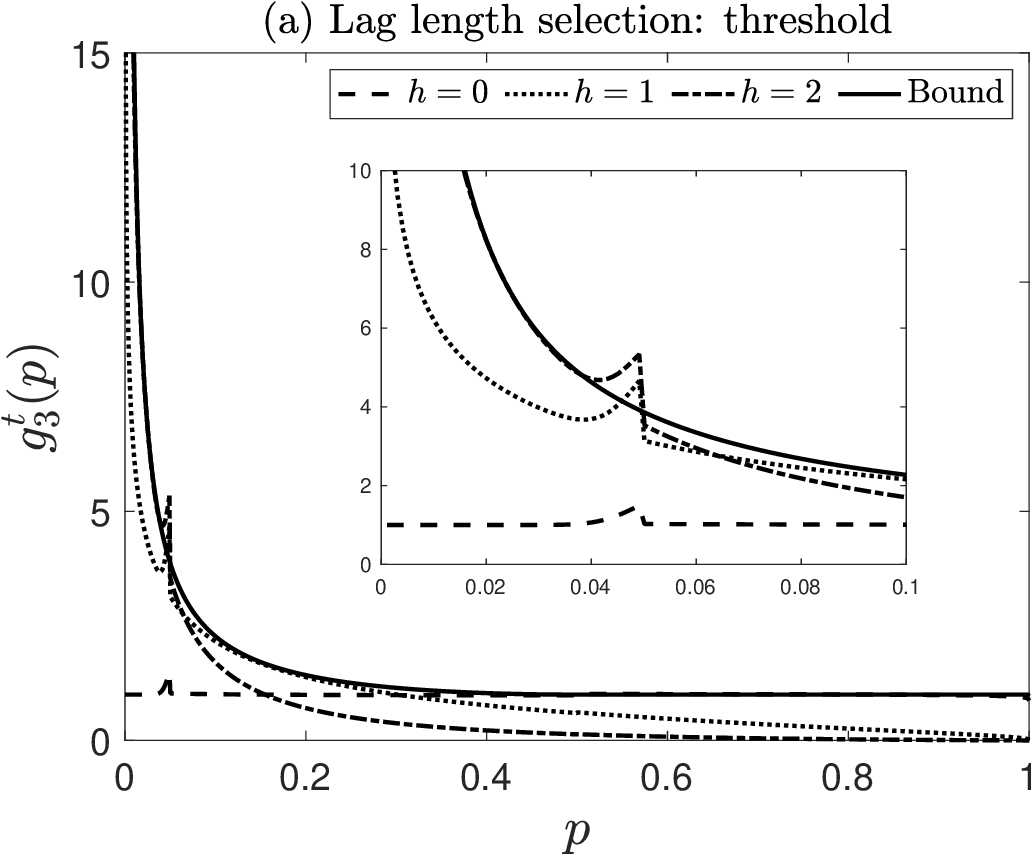}
\includegraphics[width=0.43\textwidth]{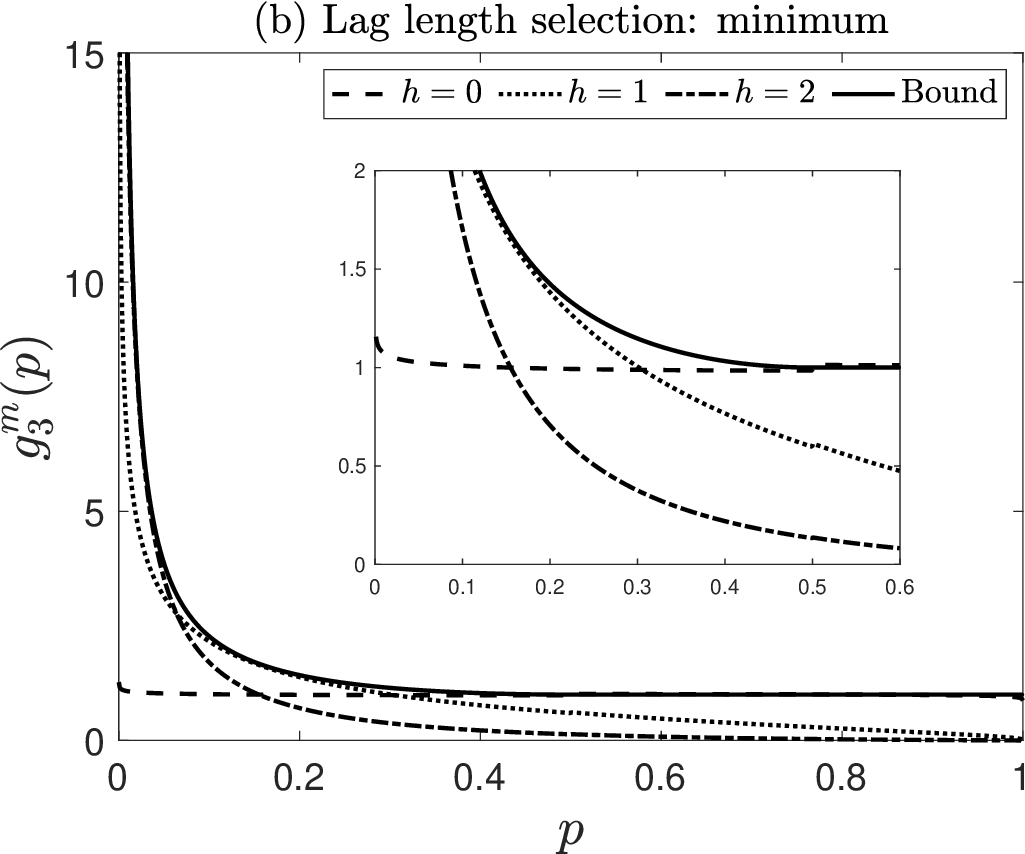}
\end{center}

\vspace*{-3mm}

\doublespacing

\textit{Notes:} Figures show the $p$-curves from lag length selection with $N=200$ and $\kappa=1/2$. Panel (a) shows the $p$-curves based on the threshold approach. Panel (b) shows the $p$-curves based on the minimum approach.

\end{figure}

\newpage

\begin{figure}[H]
\caption{Power curves for covariate selection with $K=3$}
\label{fig:power_cov_combined}

\vspace{-5mm}

\begin{center}

\begin{subfigure}[b]{\textwidth} 
\caption{General-to-specific} \label{fig:power_cov_combined_a}
\centering
\textbf{\small Thresholding}

\smallskip

\includegraphics[width=0.24\textwidth]{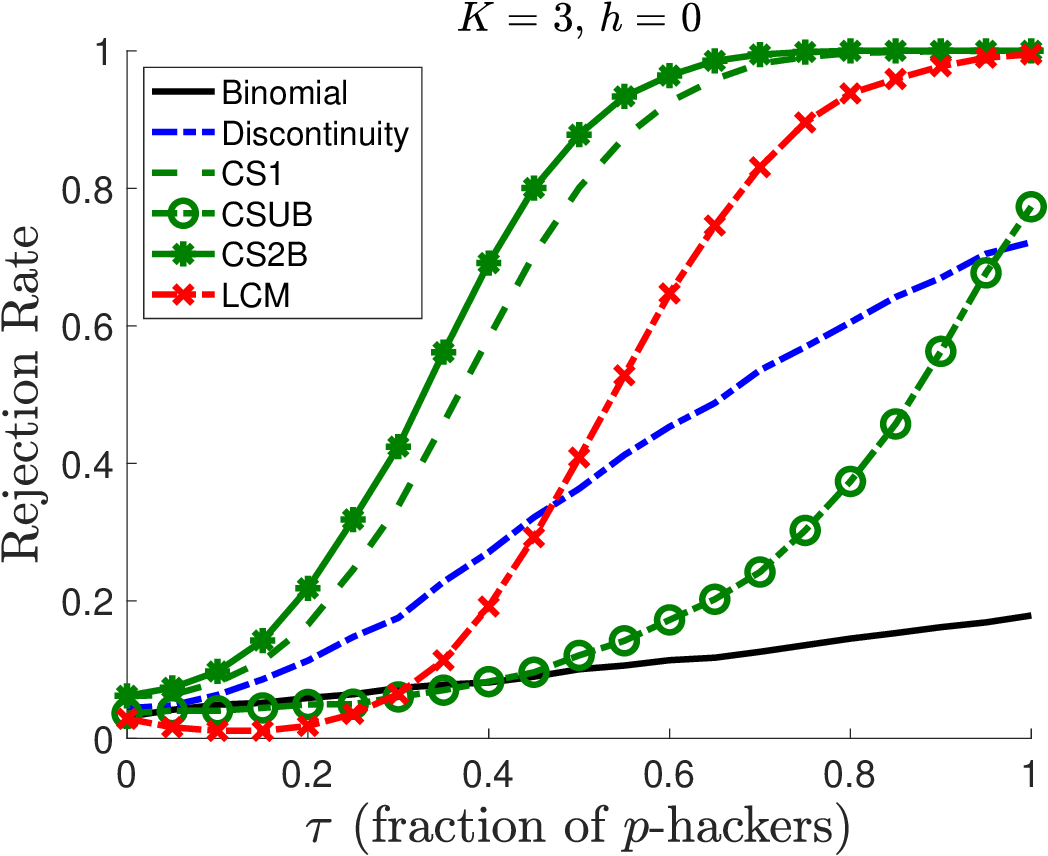}
\includegraphics[width=0.24\textwidth]{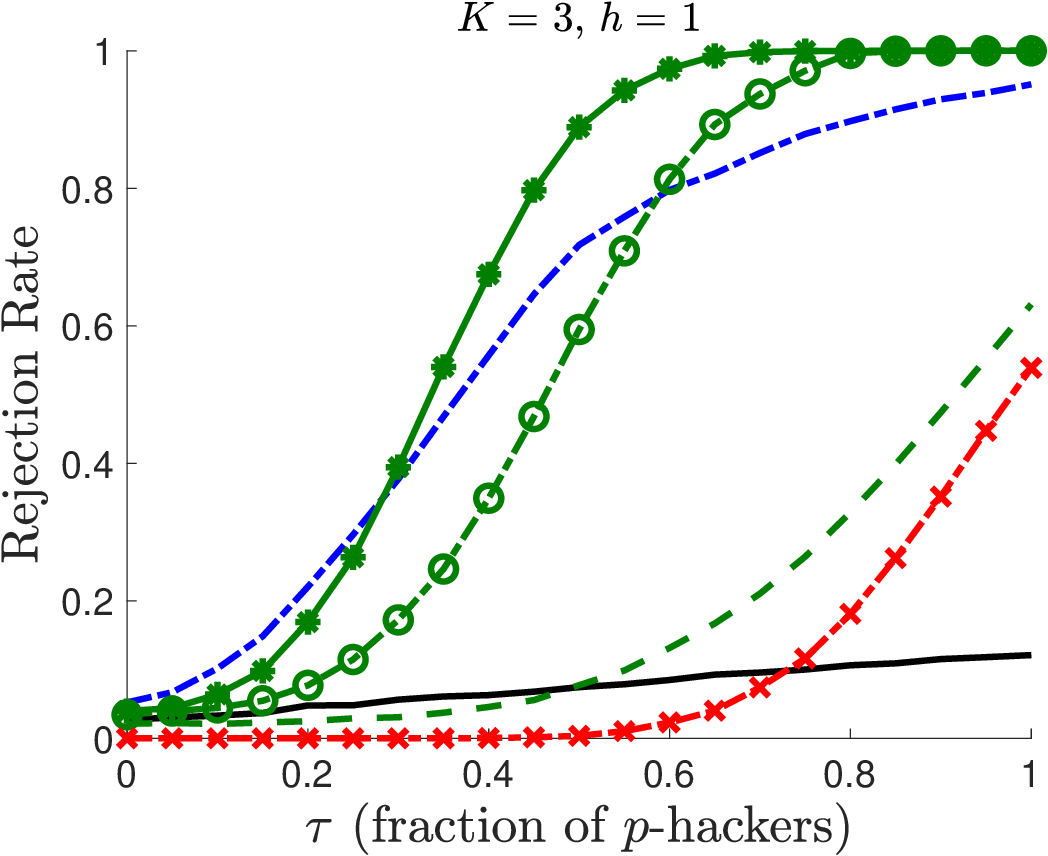}
\includegraphics[width=0.24\textwidth]{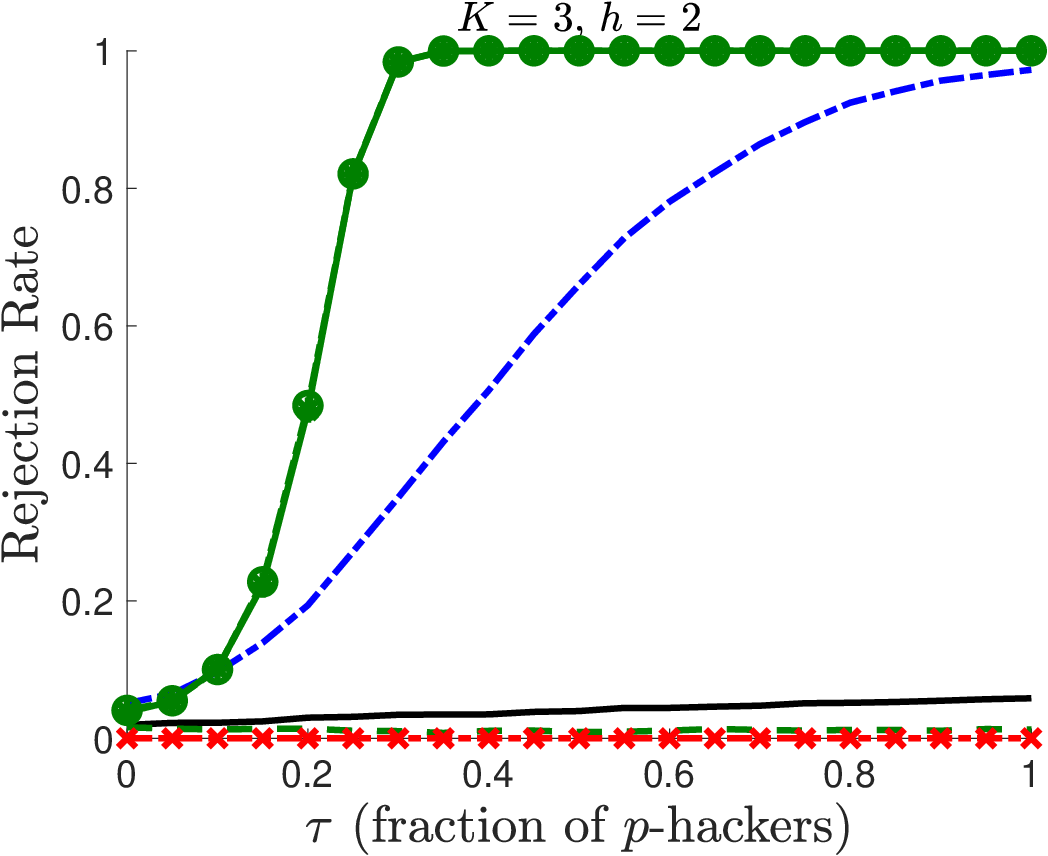}
\includegraphics[width=0.24\textwidth]{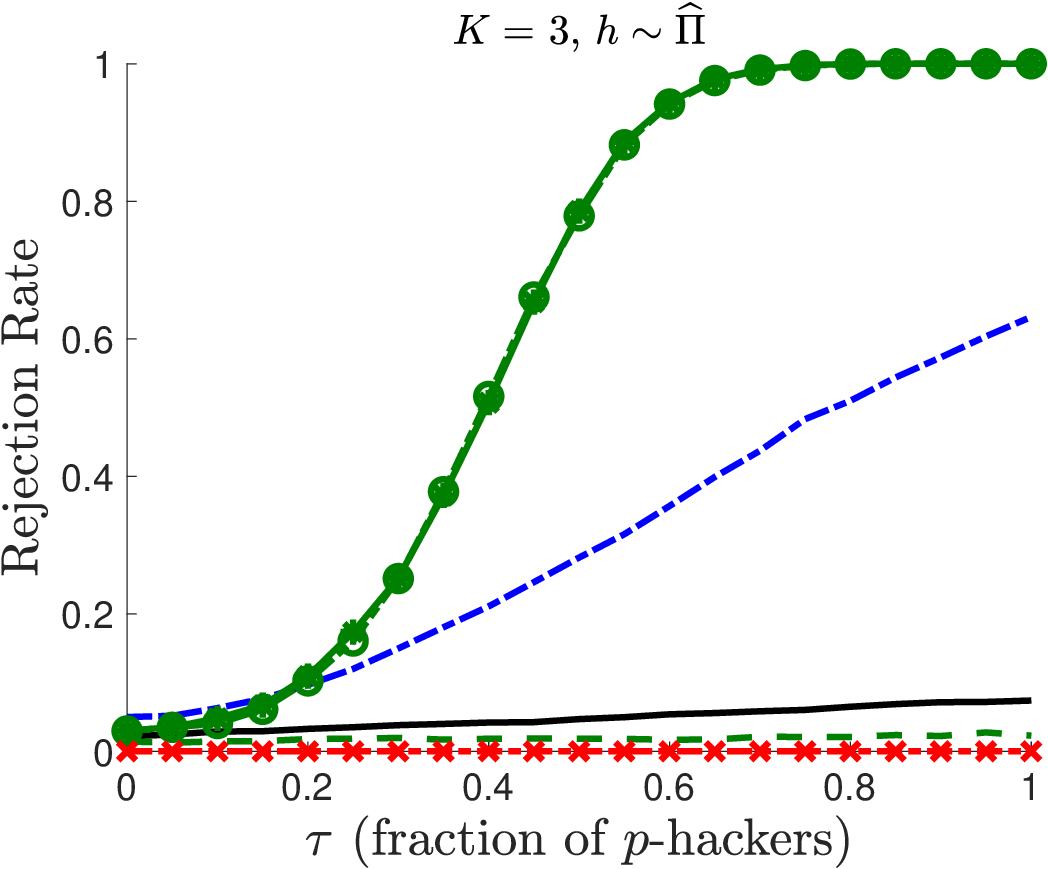}

\textbf{\small Minimum}

\smallskip

\includegraphics[width=0.24\textwidth]{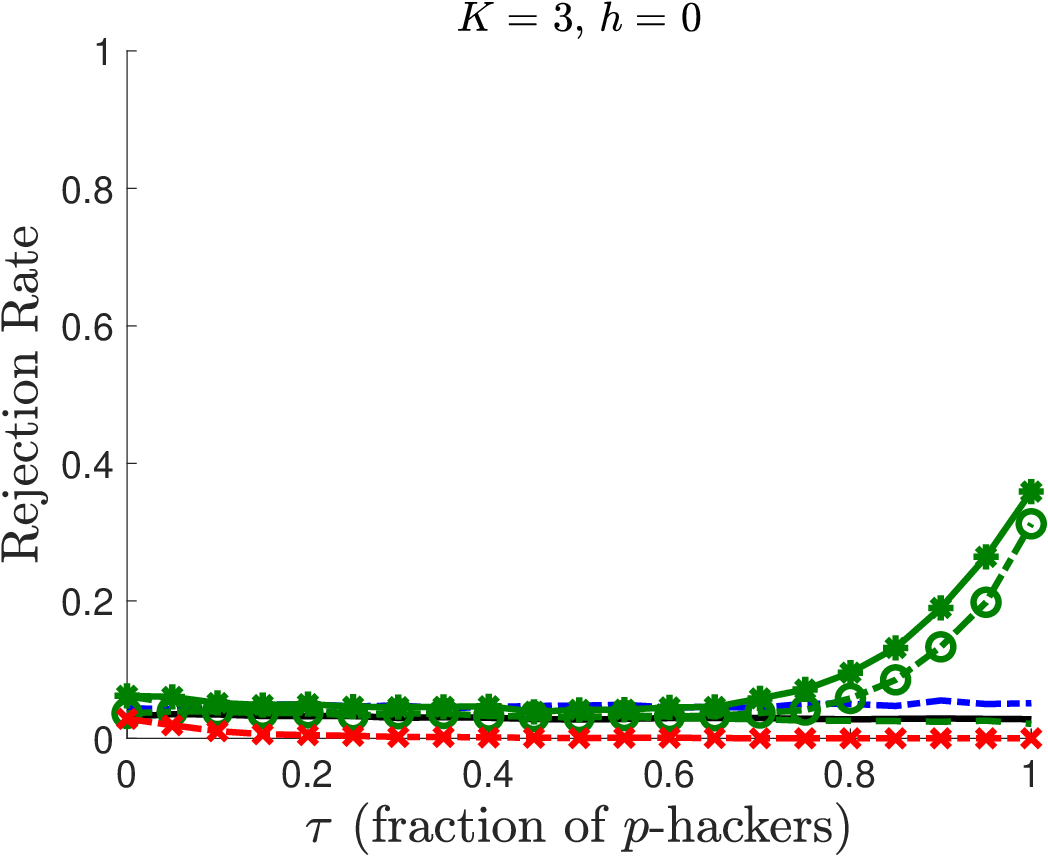}
\includegraphics[width=0.24\textwidth]{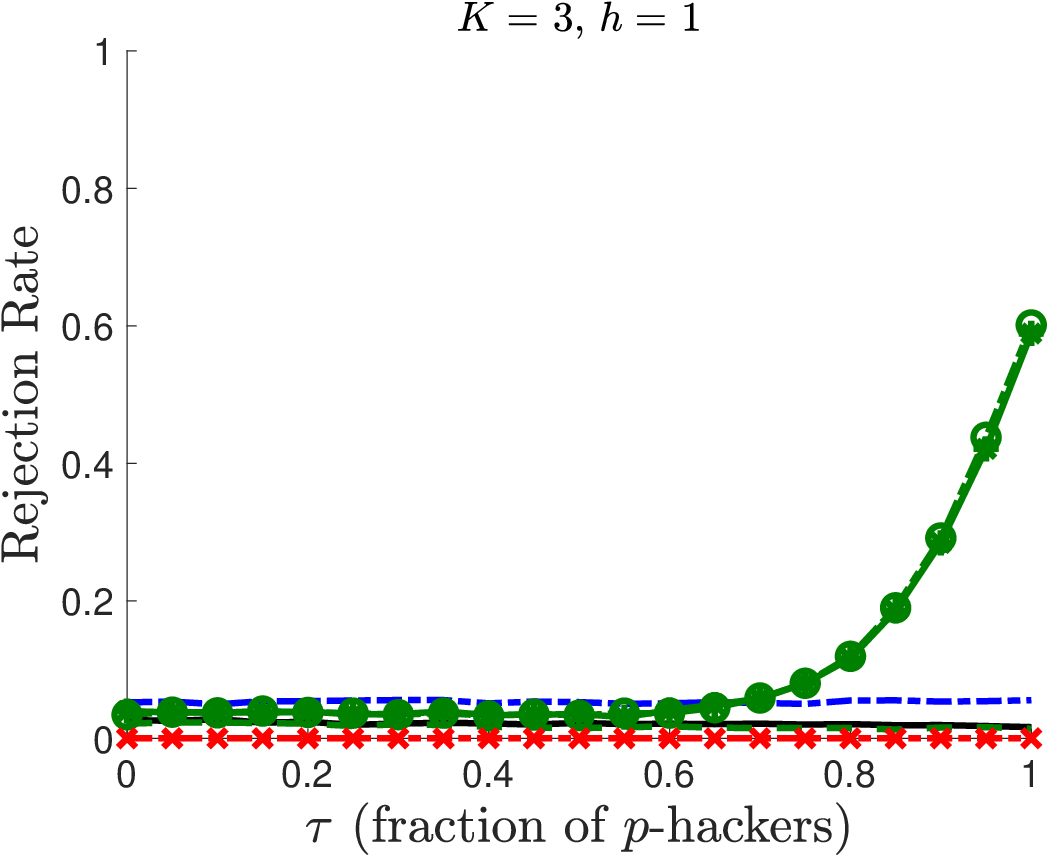}
\includegraphics[width=0.24\textwidth]{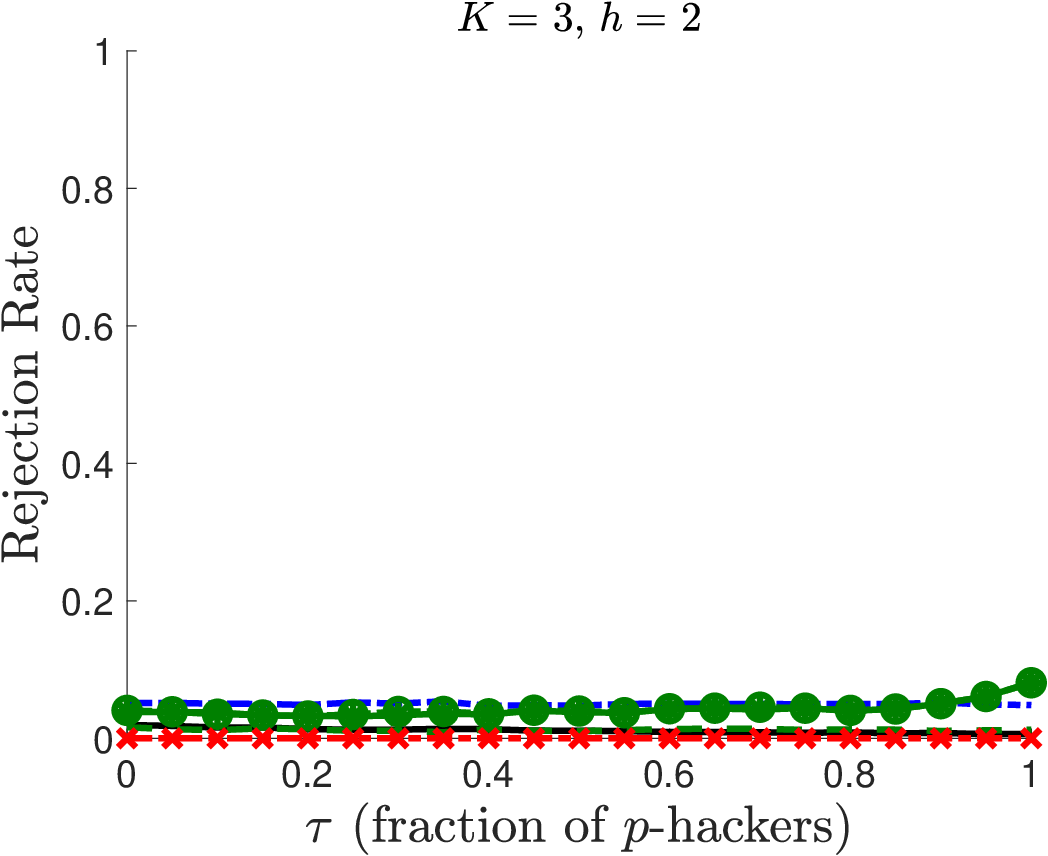}
\includegraphics[width=0.24\textwidth]{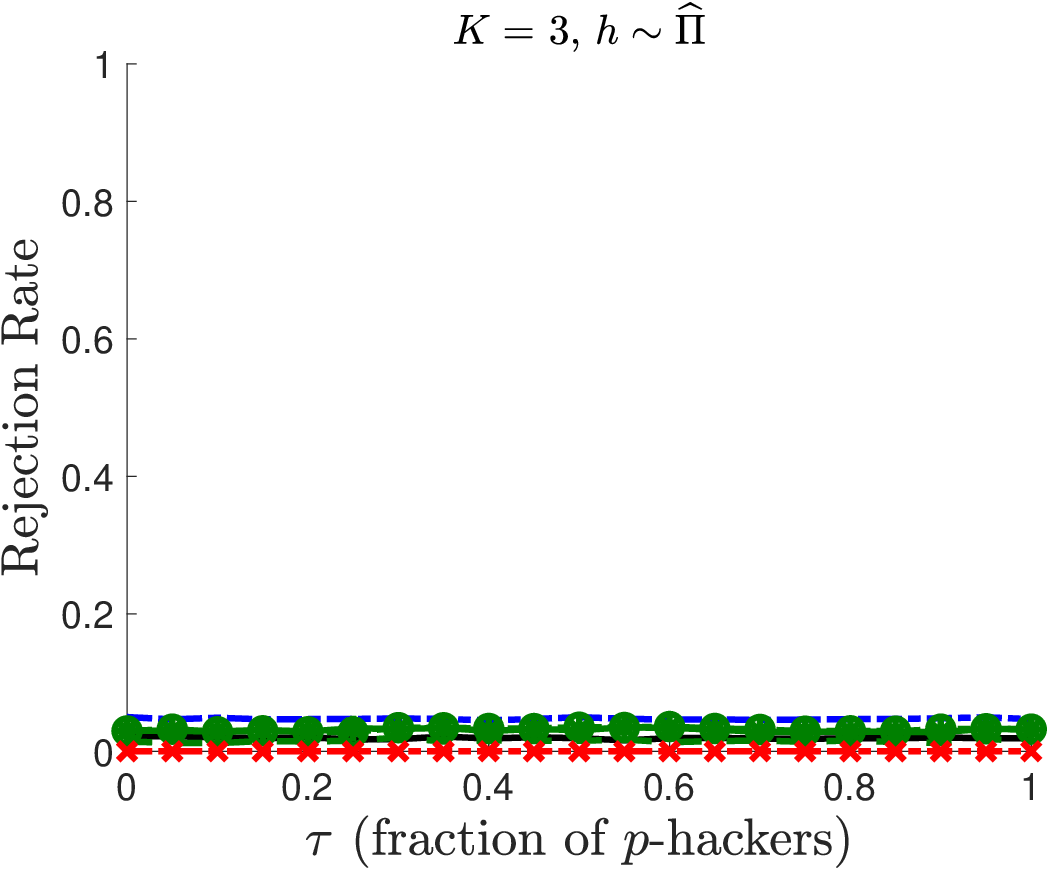}

\end{subfigure}

\vspace{3mm}

\begin{subfigure}[b]{\textwidth}
\caption{Specific-to-general}
\label{fig:power_cov_combined_b}
\centering
\textbf{\small Thresholding}

\smallskip

\includegraphics[width=0.24\textwidth]{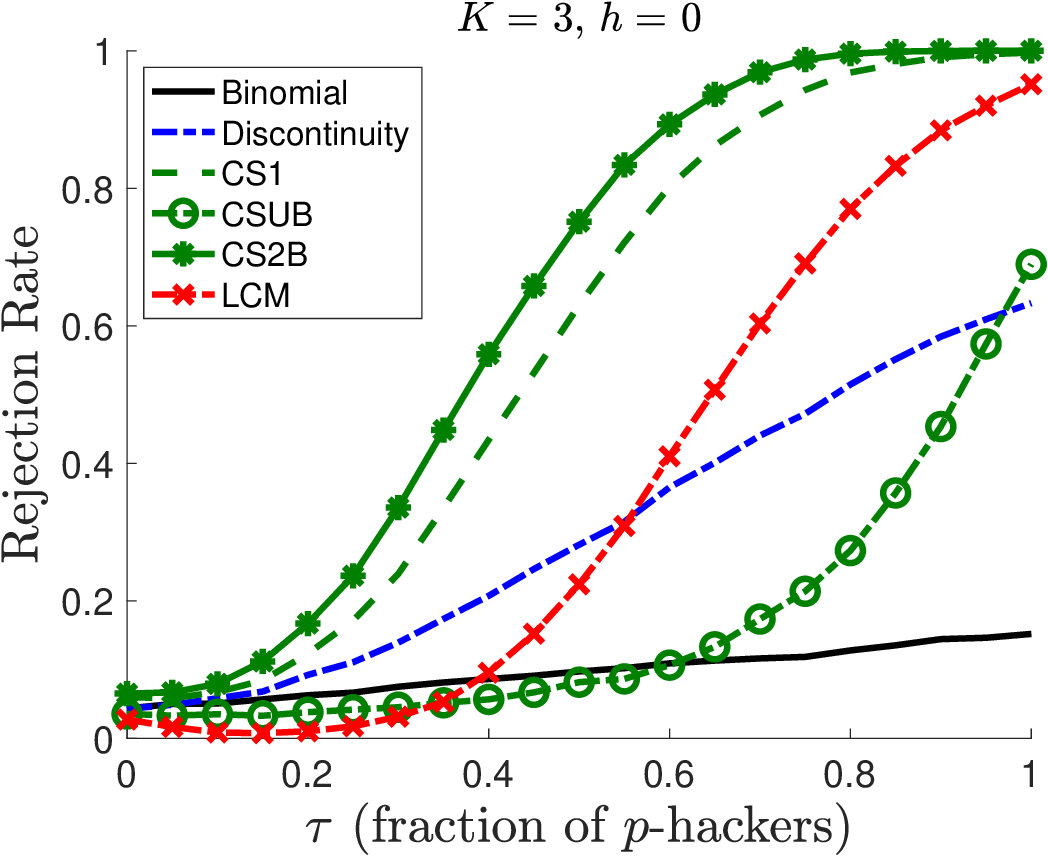}
\includegraphics[width=0.24\textwidth]{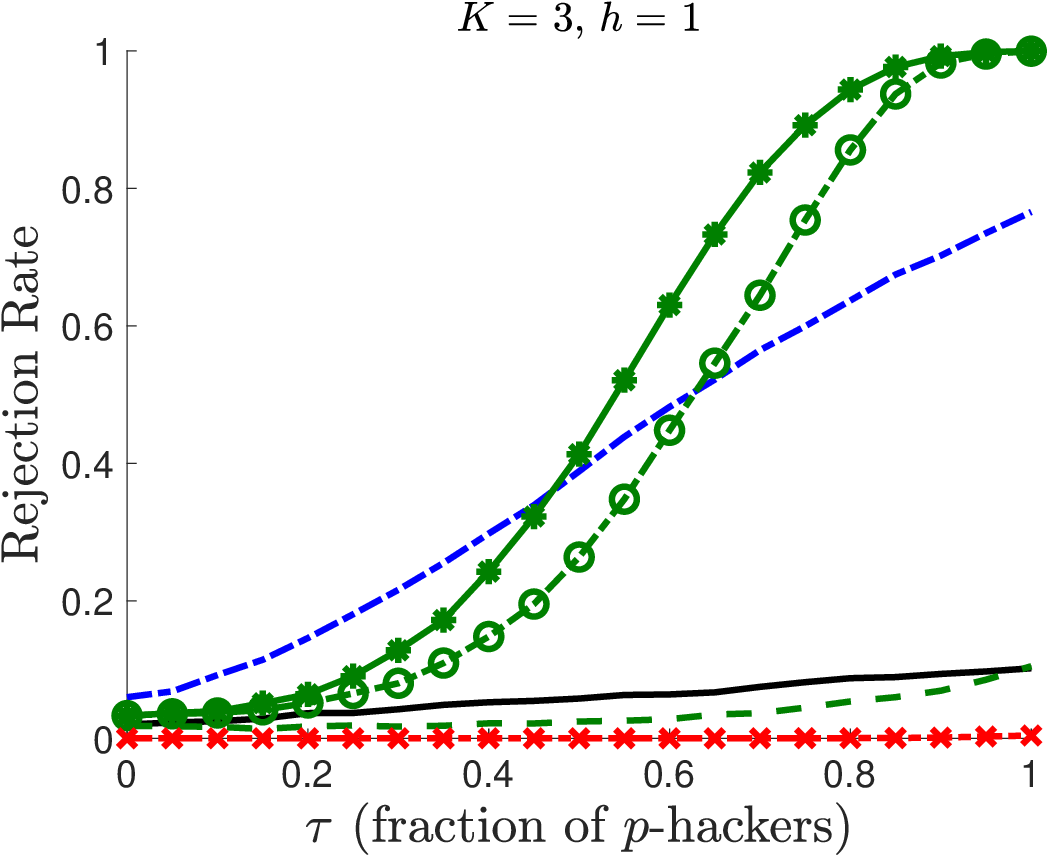}
\includegraphics[width=0.24\textwidth]{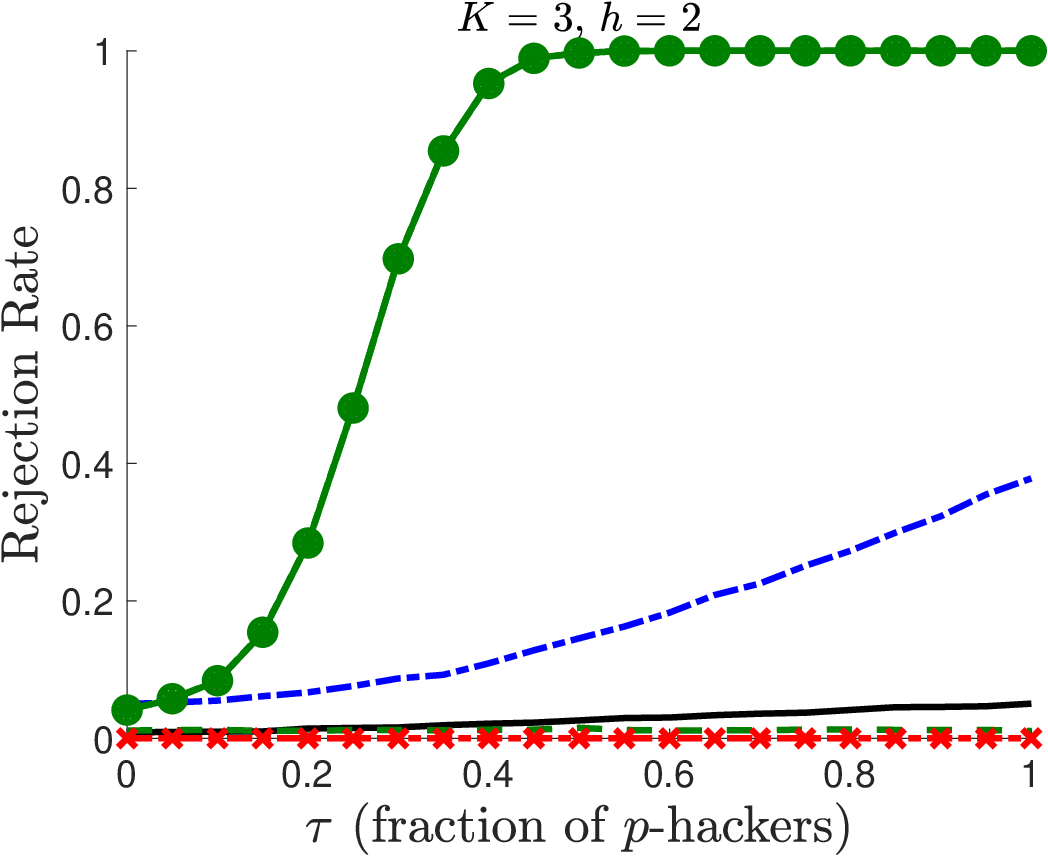}
\includegraphics[width=0.24\textwidth]{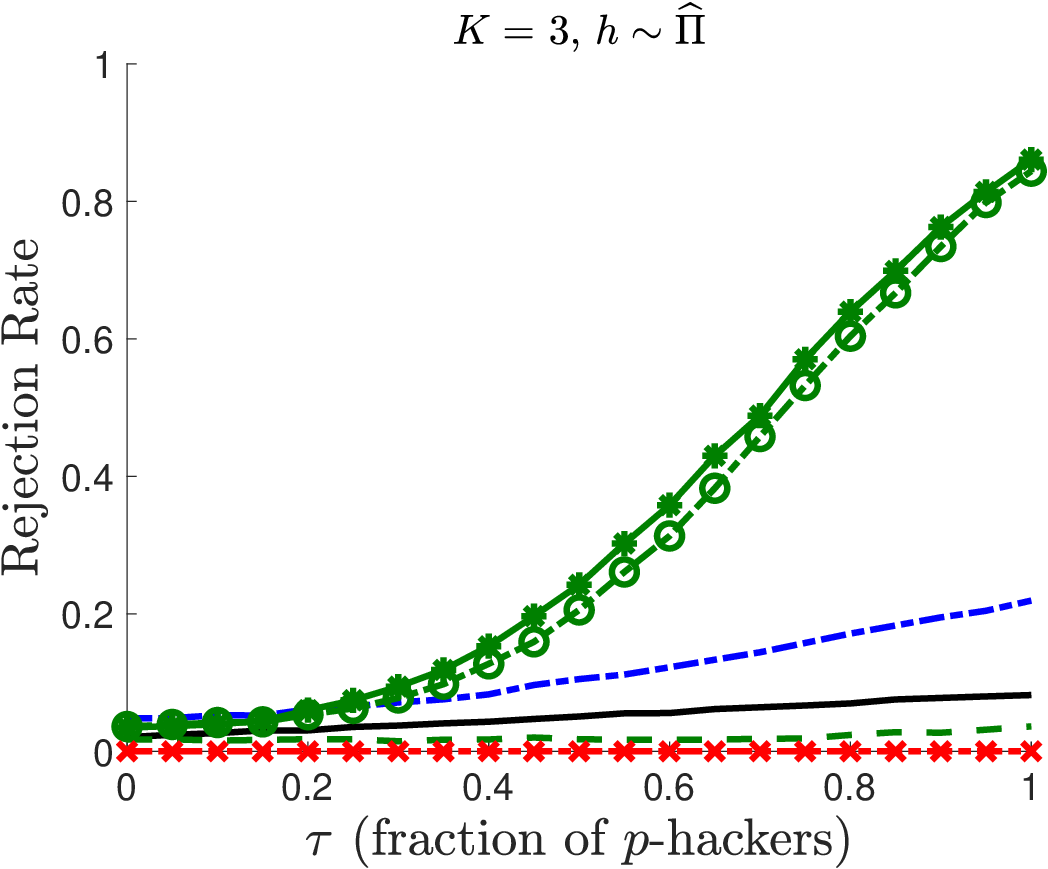}

\textbf{\small Minimum}

\smallskip

\includegraphics[width=0.24\textwidth]{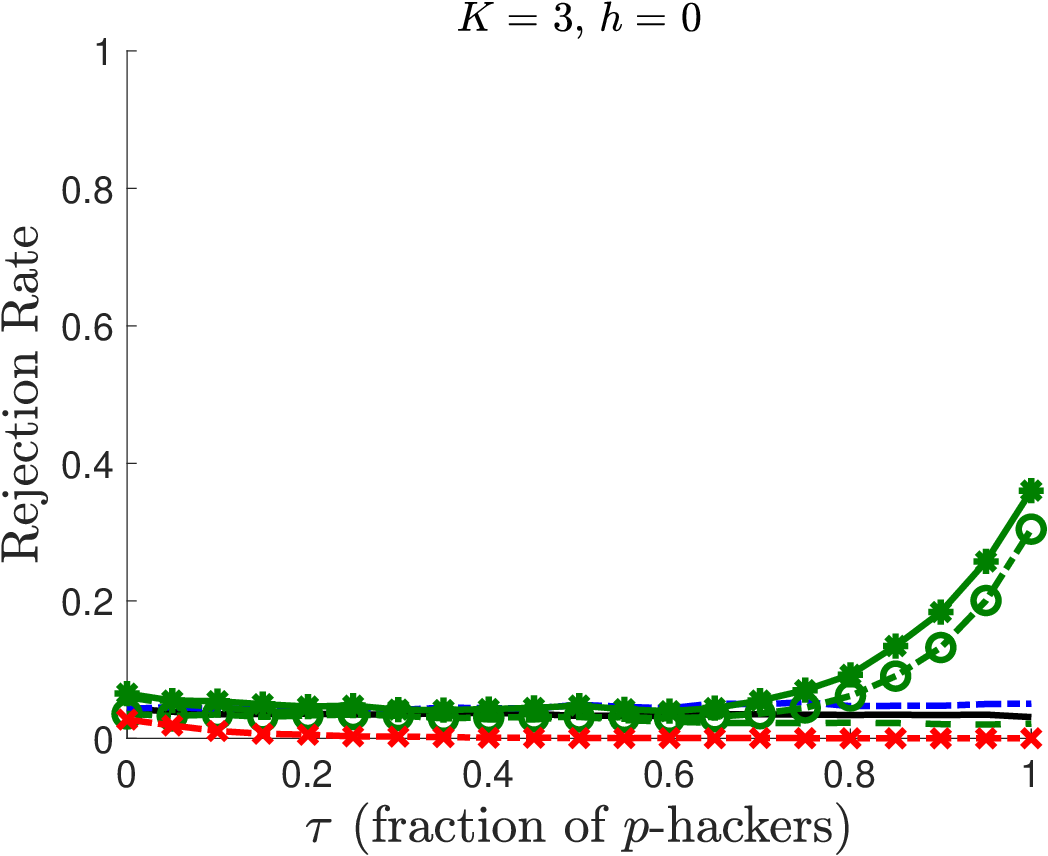}
\includegraphics[width=0.24\textwidth]{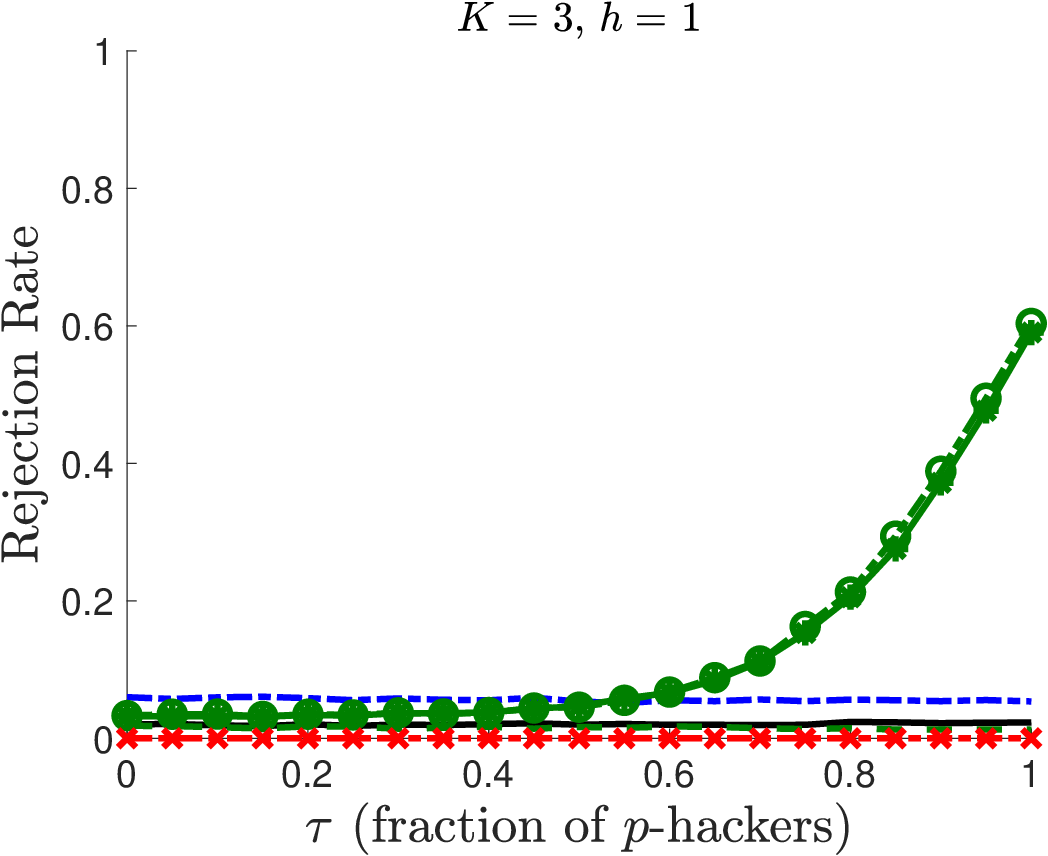}
\includegraphics[width=0.24\textwidth]{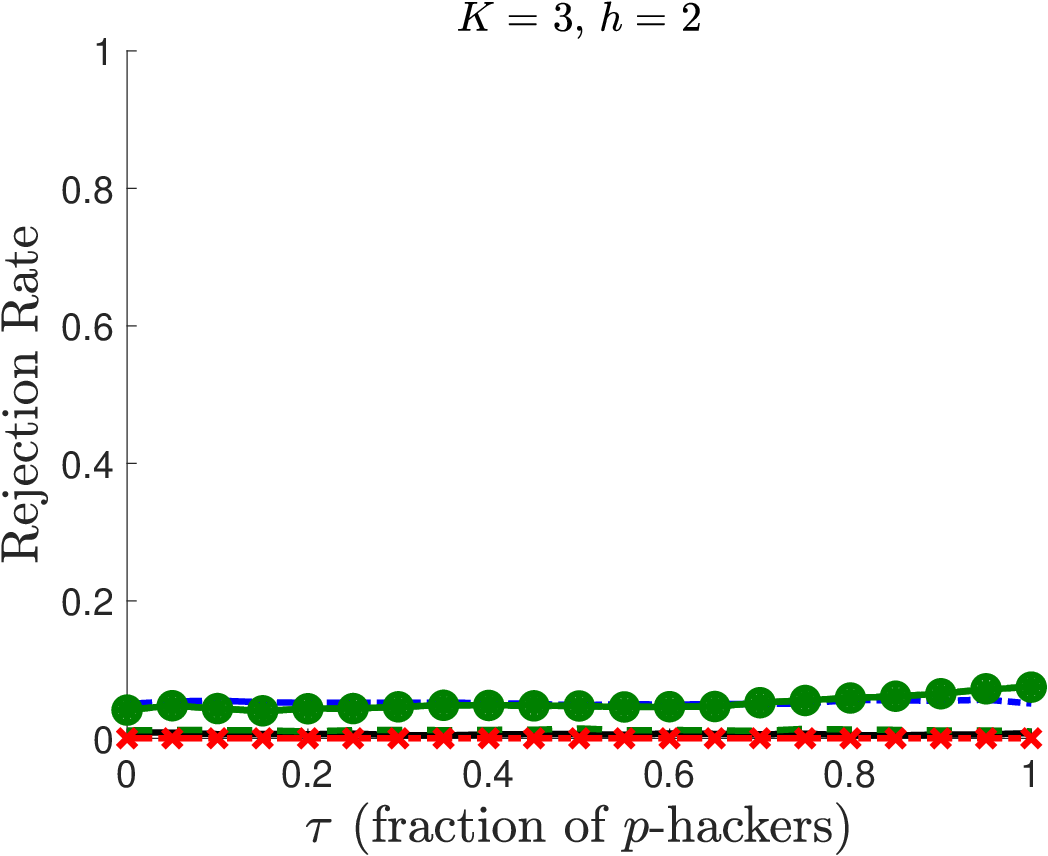}
\includegraphics[width=0.24\textwidth]{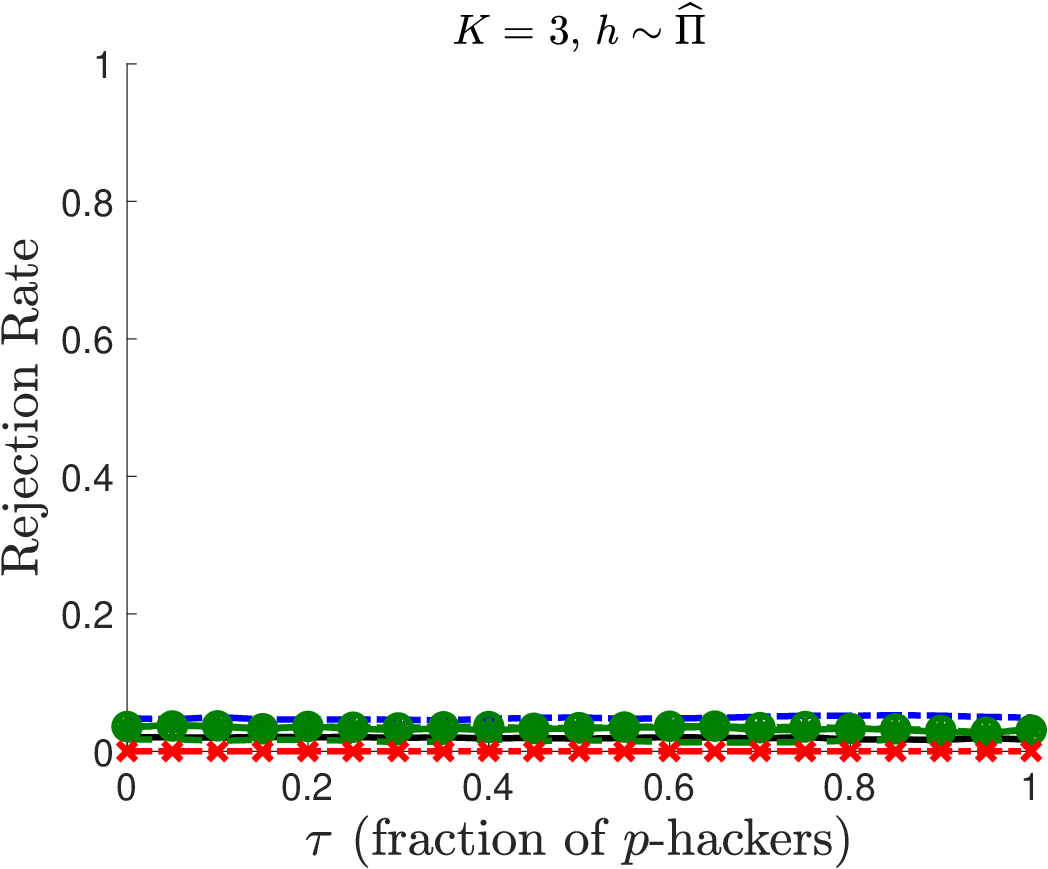}

\end{subfigure}
\end{center}
\vspace{-2mm}

\doublespacing
 \textit{Notes:} Figures show rejection rates of the tests in Table \ref{tab:tests} as a function of $\tau$ for the threshold and minimum approach. The simulation design is described in Sections \ref{sec:covariate_selection_MC} and \ref{sec:simulations_setup}. Figures \ref{fig:power_cov_combined_a} and \ref{fig:power_cov_combined_b} show the results for the general-to-specific and specific-to-general approach to $p$-hacking with two-sided tests, respectively. The results are based on 5,000 simulation repetitions.
\end{figure}

\newpage

\begin{figure}[H]
\caption{Power curves for IV selection with $K=3$}
\label{fig:power_iv_combined}

\vspace{-5mm}

\begin{center}

\begin{subfigure}[b]{\textwidth} 
\caption{All specifications} \label{fig:power_iv_combined_a}
\centering
\textbf{\small Thresholding}

\smallskip
\includegraphics[width=0.24\textwidth]{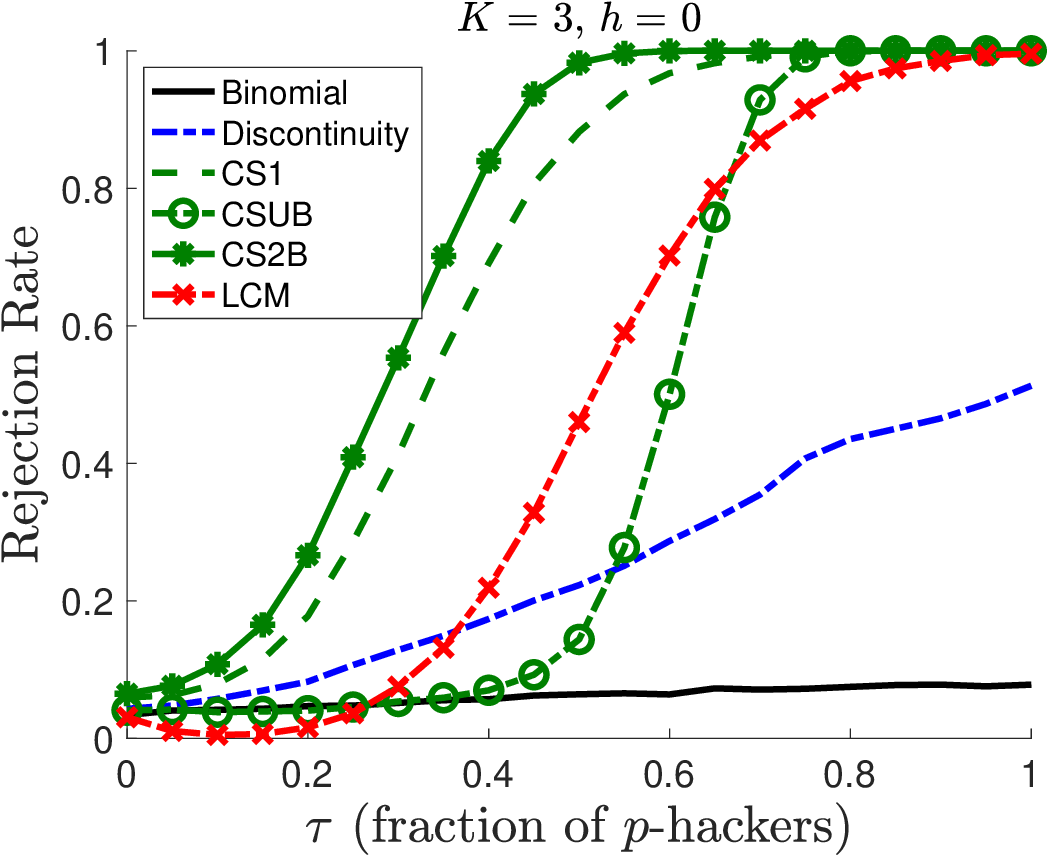}
\includegraphics[width=0.24\textwidth]{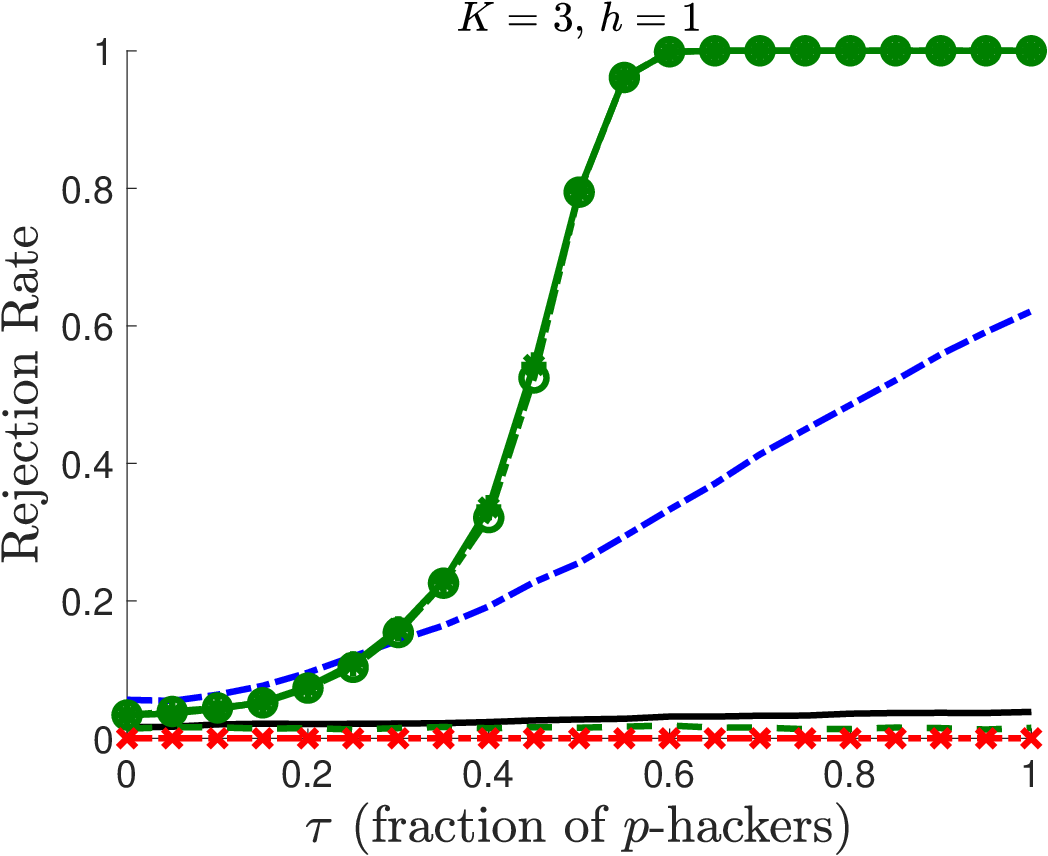}
\includegraphics[width=0.24\textwidth]{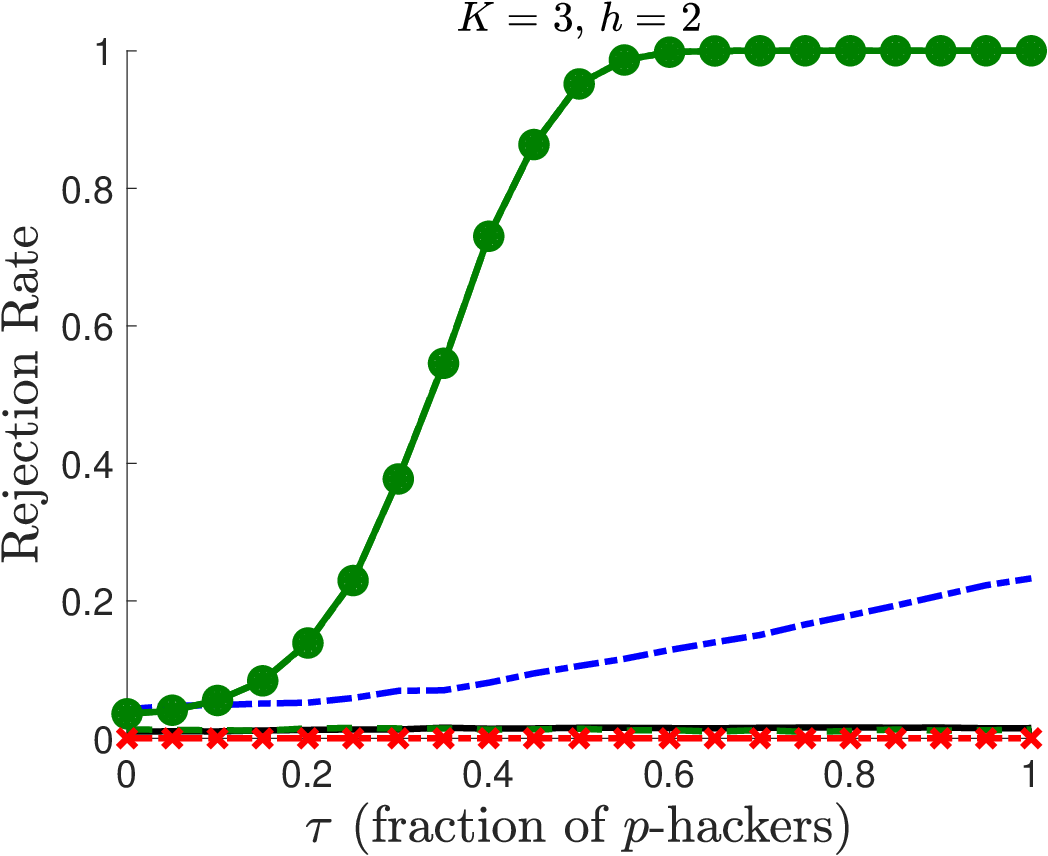}
\includegraphics[width=0.24\textwidth]{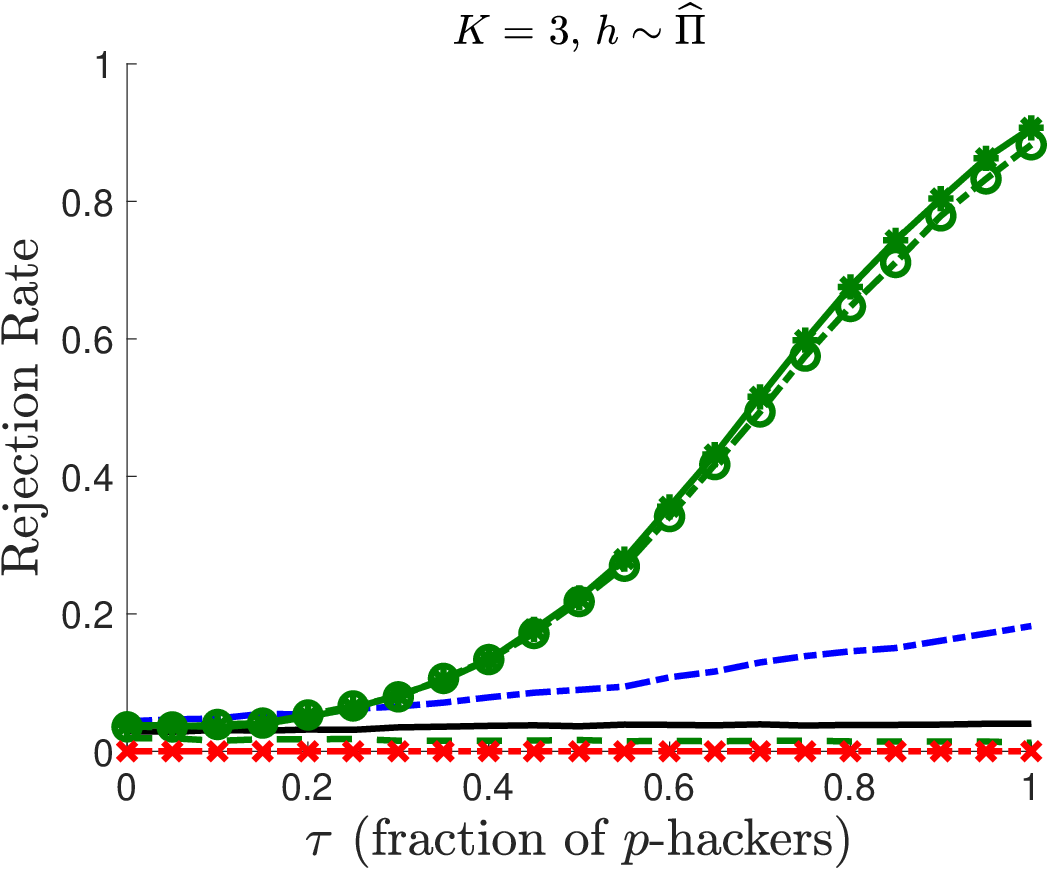}

\textbf{\small Minimum}

\smallskip
\includegraphics[width=0.24\textwidth]{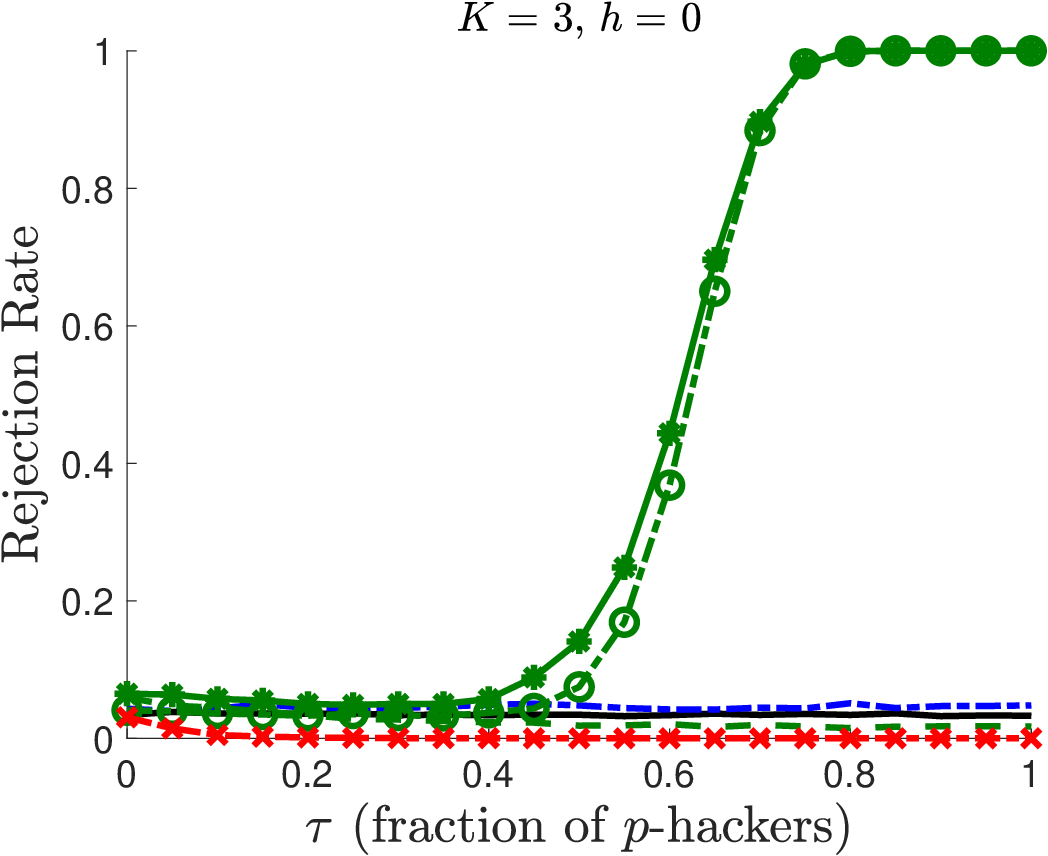}
\includegraphics[width=0.24\textwidth]{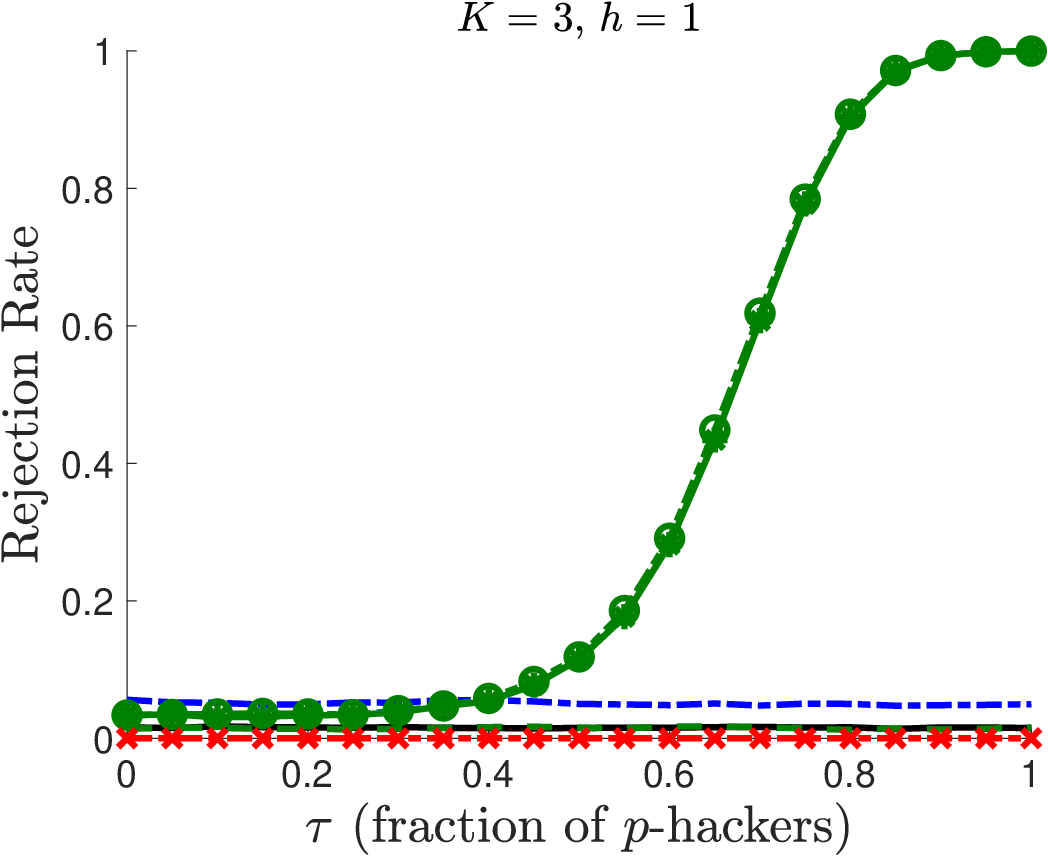}
\includegraphics[width=0.24\textwidth]{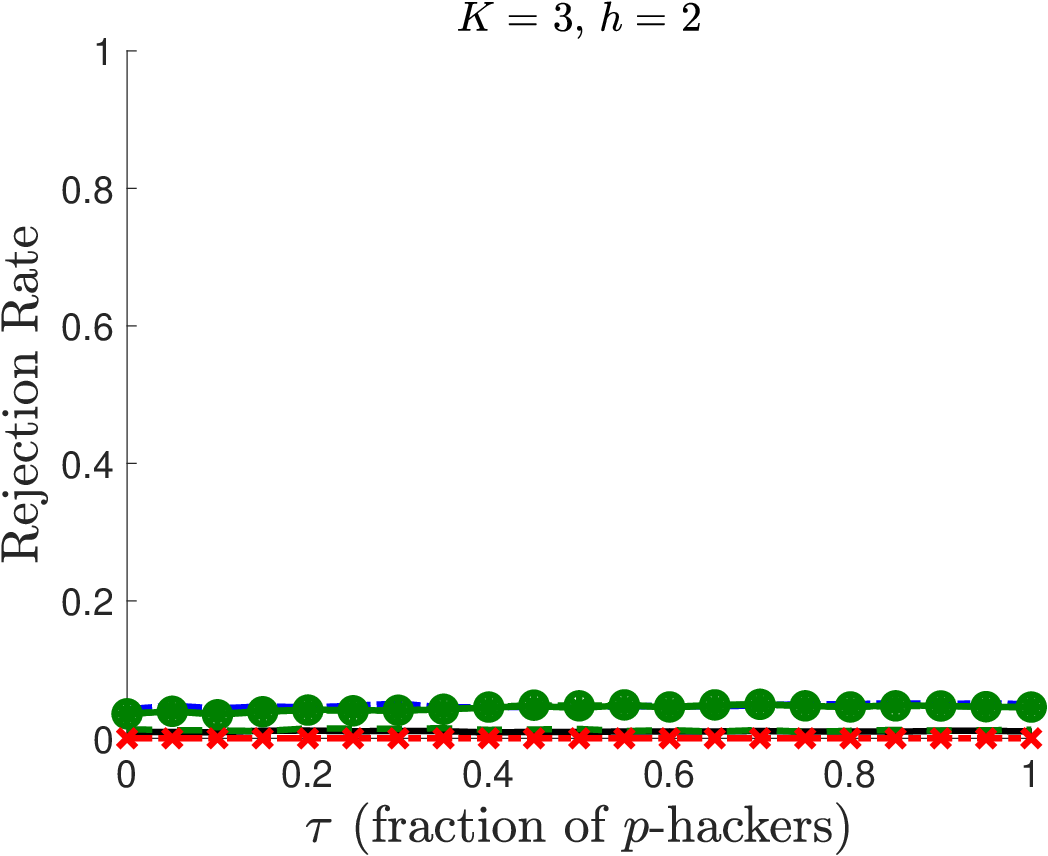}
\includegraphics[width=0.24\textwidth]{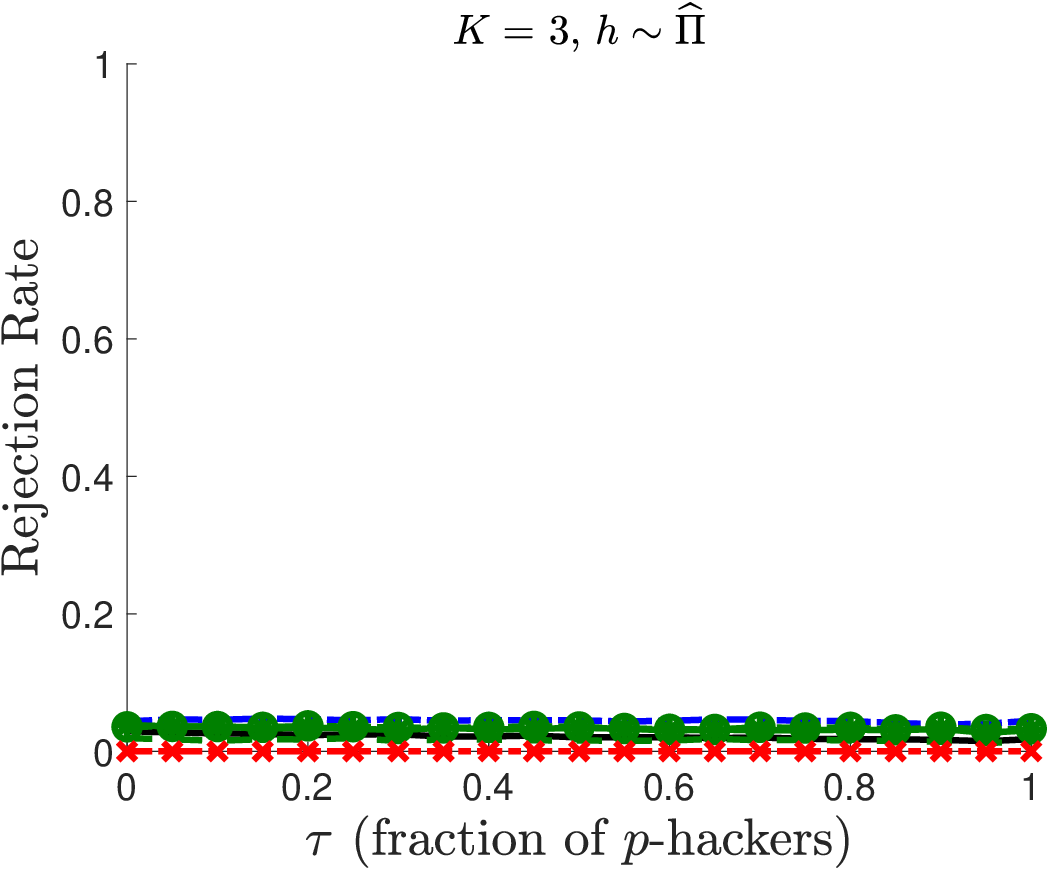}

\end{subfigure}

\vspace{3mm}

\begin{subfigure}[b]{\textwidth}
\caption{Specifications with $F>10$}
\label{fig:power_iv_combined_b}
\centering
\textbf{\small Thresholding}

\smallskip
\includegraphics[width=0.24\textwidth]{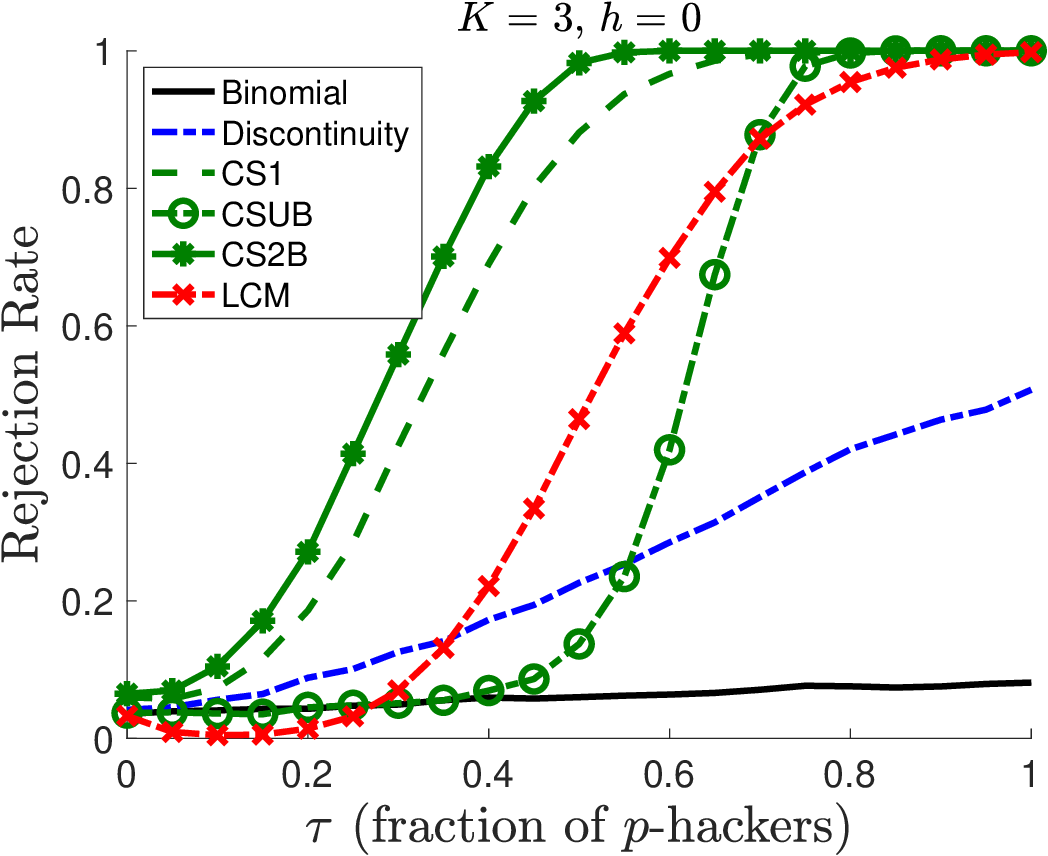}
\includegraphics[width=0.24\textwidth]{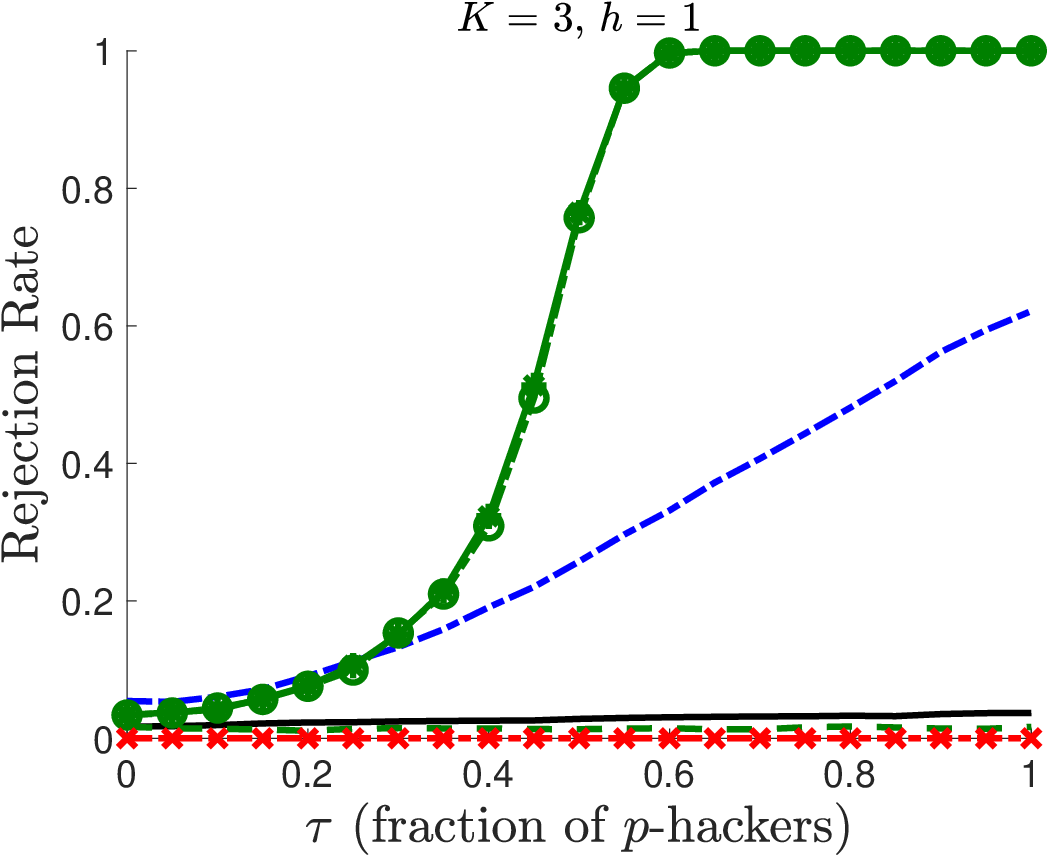}
\includegraphics[width=0.24\textwidth]{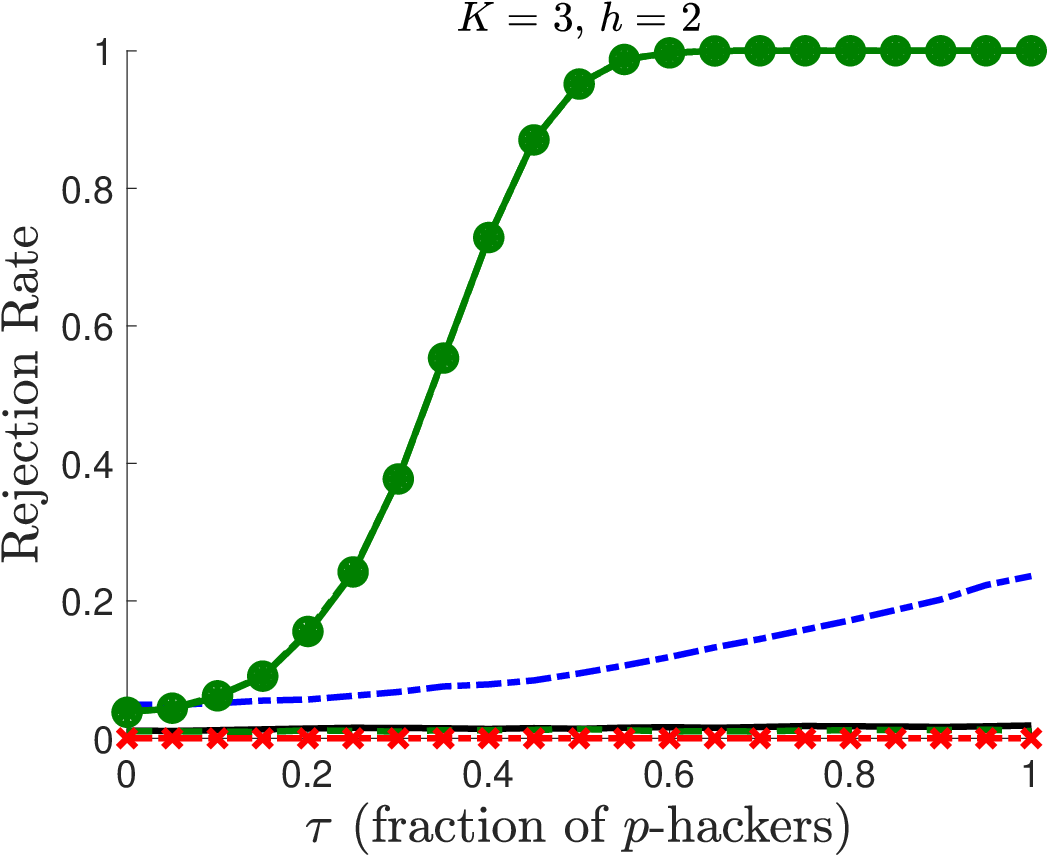}
\includegraphics[width=0.24\textwidth]{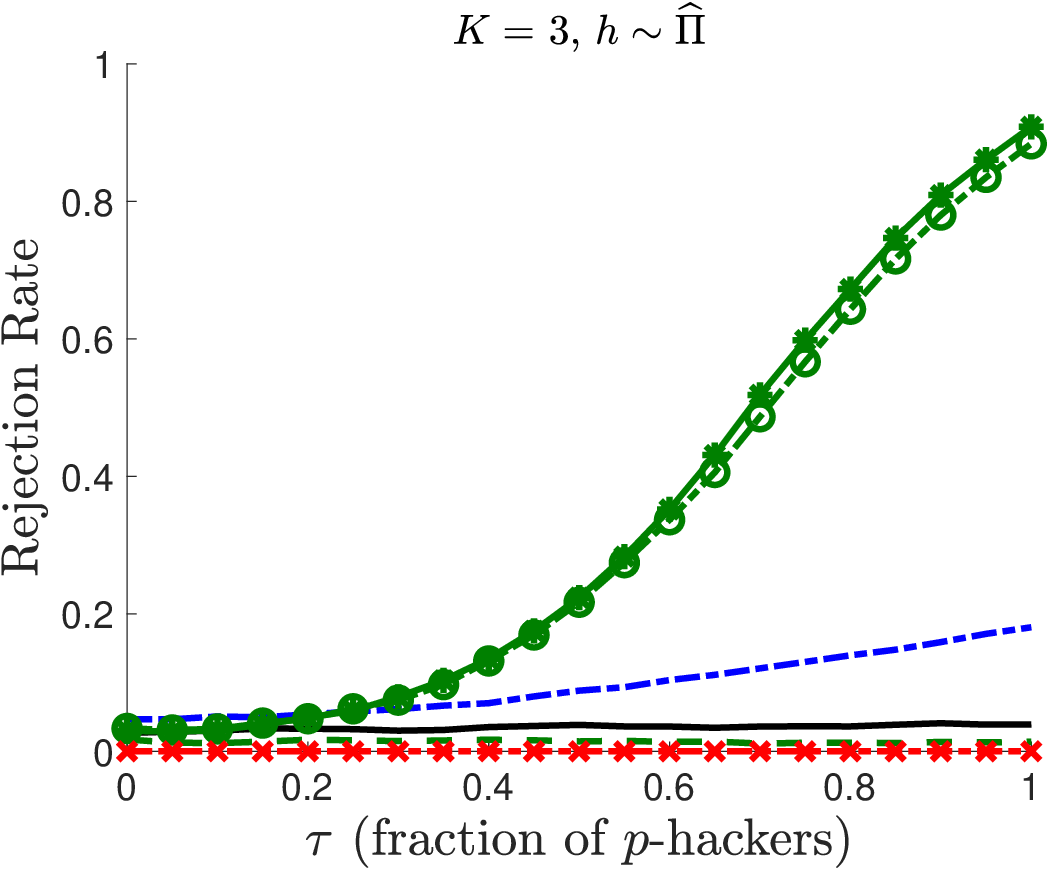}

\textbf{\small Minimum}

\smallskip

\includegraphics[width=0.24\textwidth]{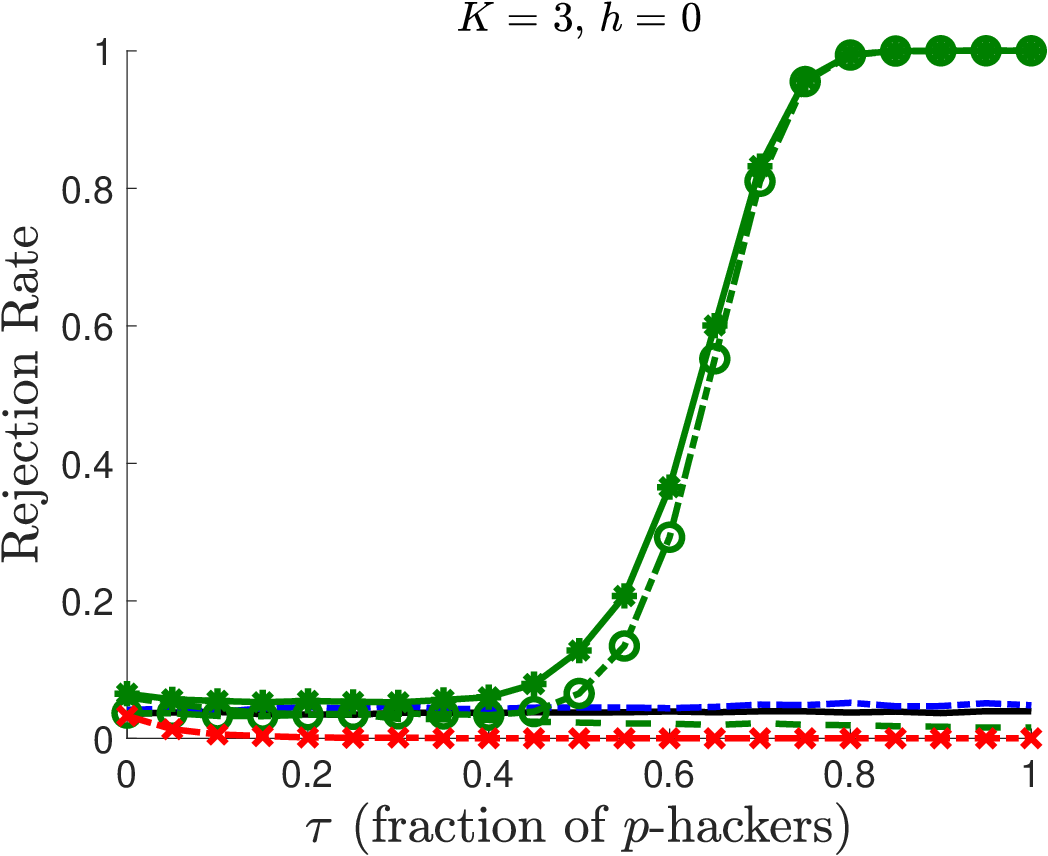}
\includegraphics[width=0.24\textwidth]{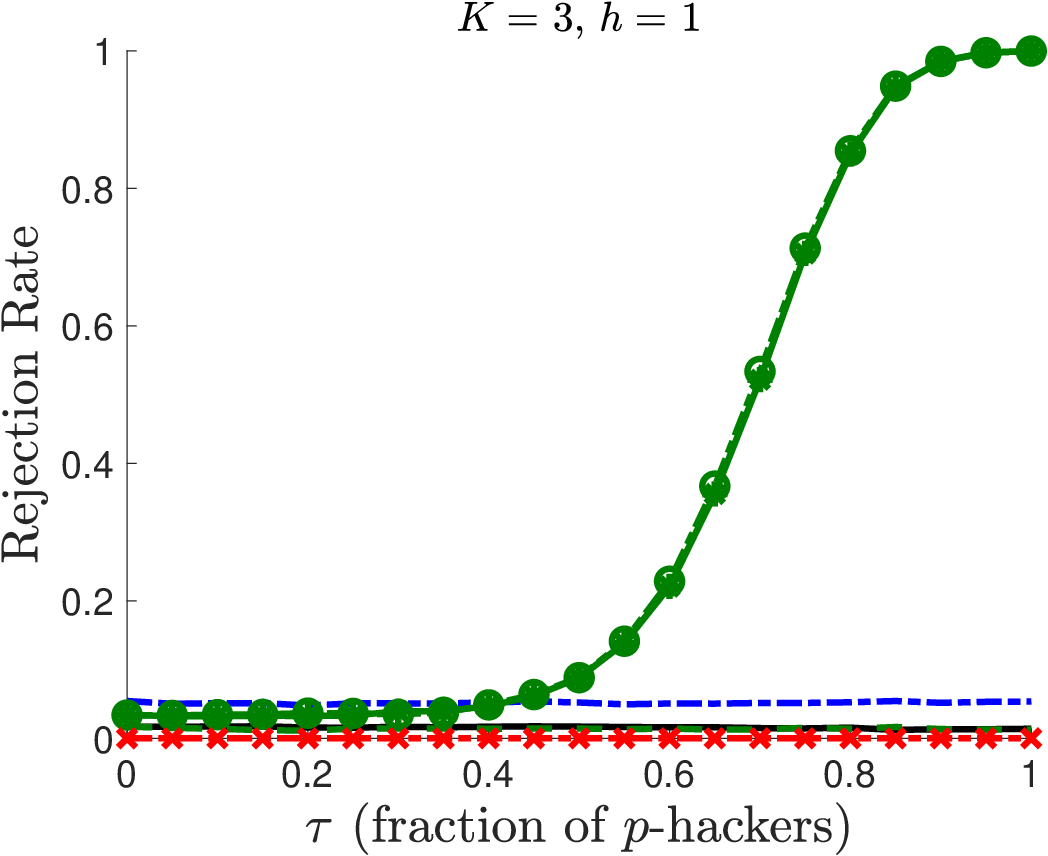}
\includegraphics[width=0.24\textwidth]{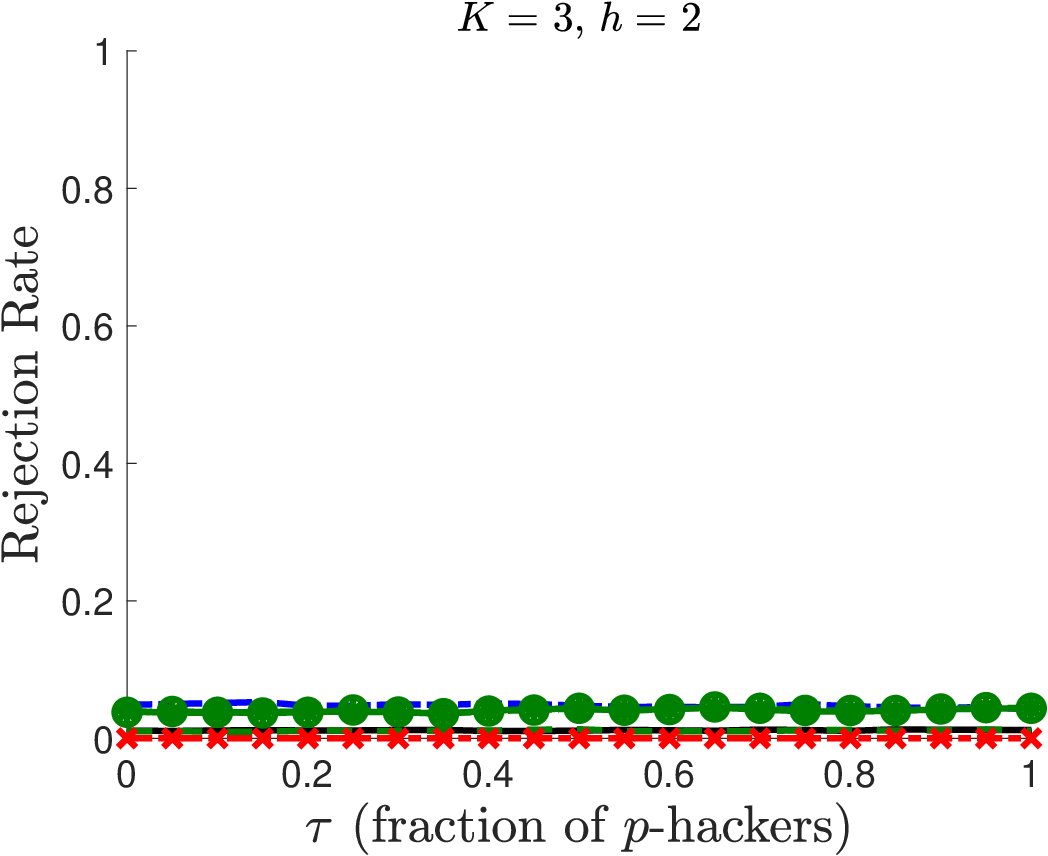}
\includegraphics[width=0.24\textwidth]{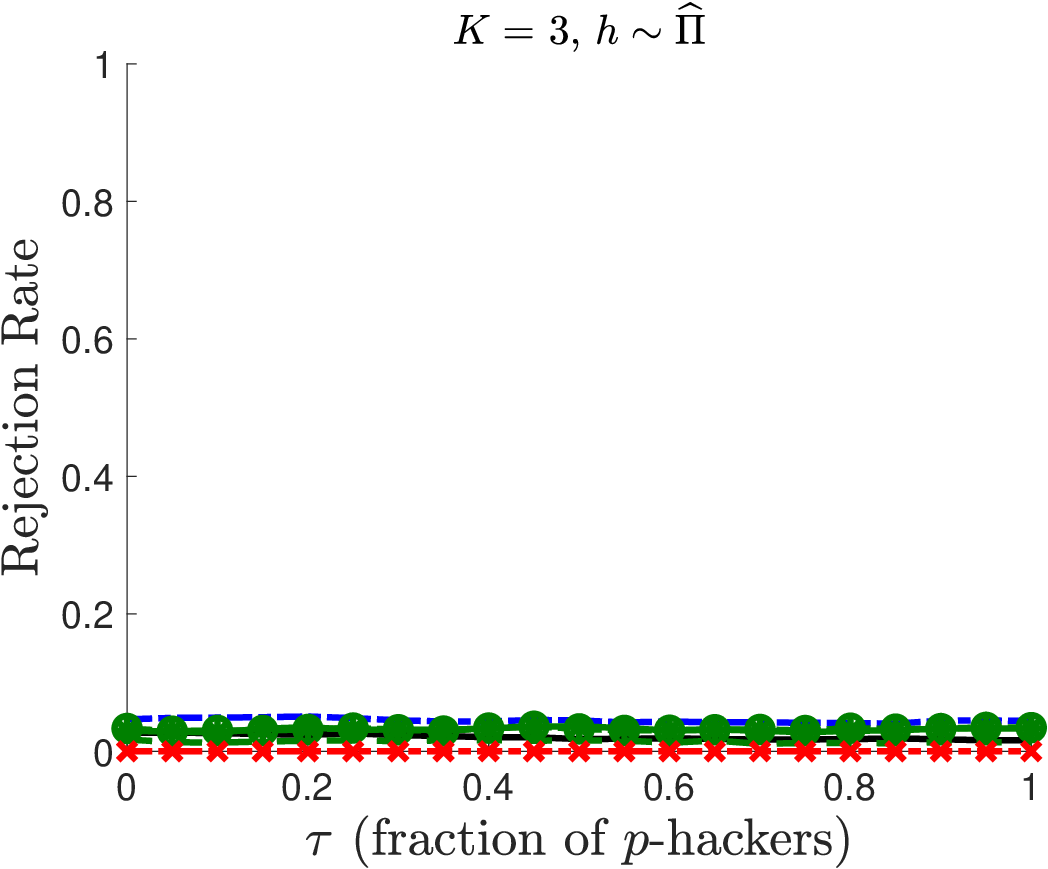}

\end{subfigure}
\end{center}
\vspace{-2mm}

\doublespacing

\textit{Notes:} Figures show rejection rates of the tests in Table \ref{tab:tests} as a function of $\tau$ for the threshold and minimum approach. The simulation design is described in Sections \ref{sec:iv_selection_MC} and \ref{sec:simulations_setup}. Figure \ref{fig:power_iv_combined_a} shows the results for all specifications. Figure \ref{fig:power_iv_combined_b} shows the results for specifications with $F>10$. The results are based on 5,000 simulation repetitions.
\end{figure}

\newpage

\begin{figure}[H]
\caption{Power curves for standard error selection}
\label{fig:power_se_combined}

\vspace{-5mm}

\begin{center}

\begin{subfigure}[b]{\textwidth} 
\caption{Lag length selection} \label{fig:power_se_combined_a}
\centering
\textbf{\small Thresholding}

\smallskip

\includegraphics[width=0.24\textwidth]{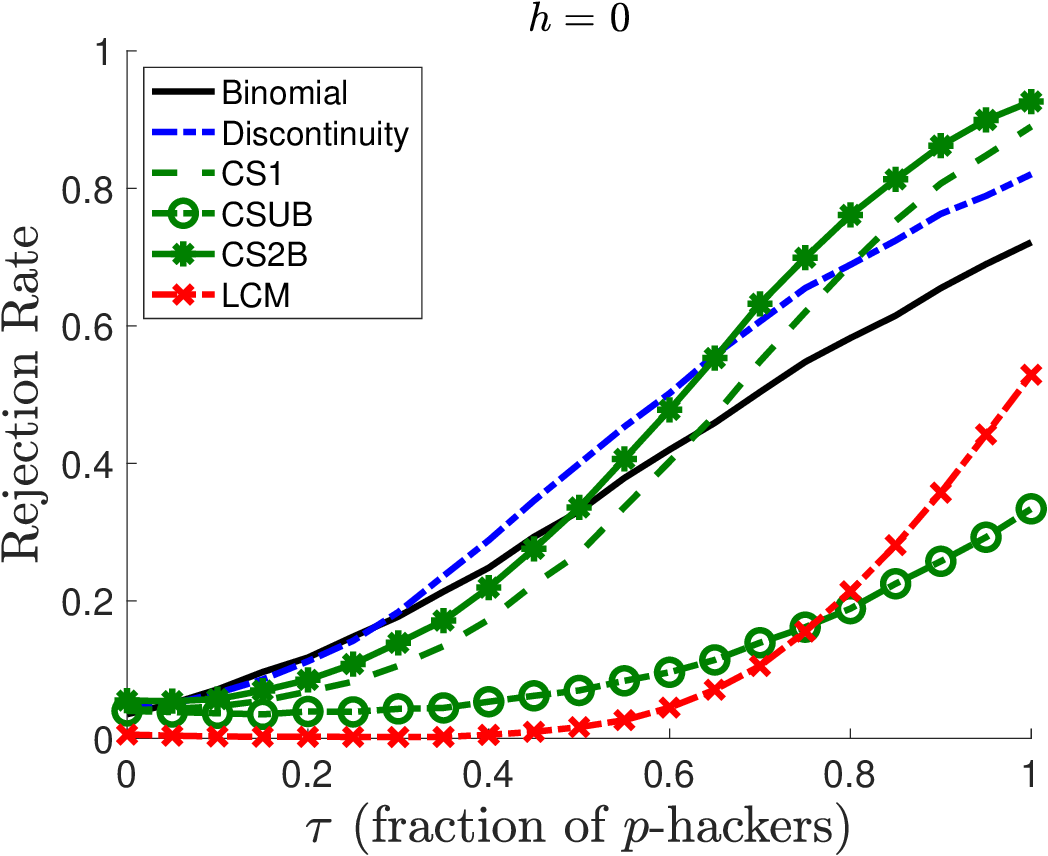}
\includegraphics[width=0.24\textwidth]{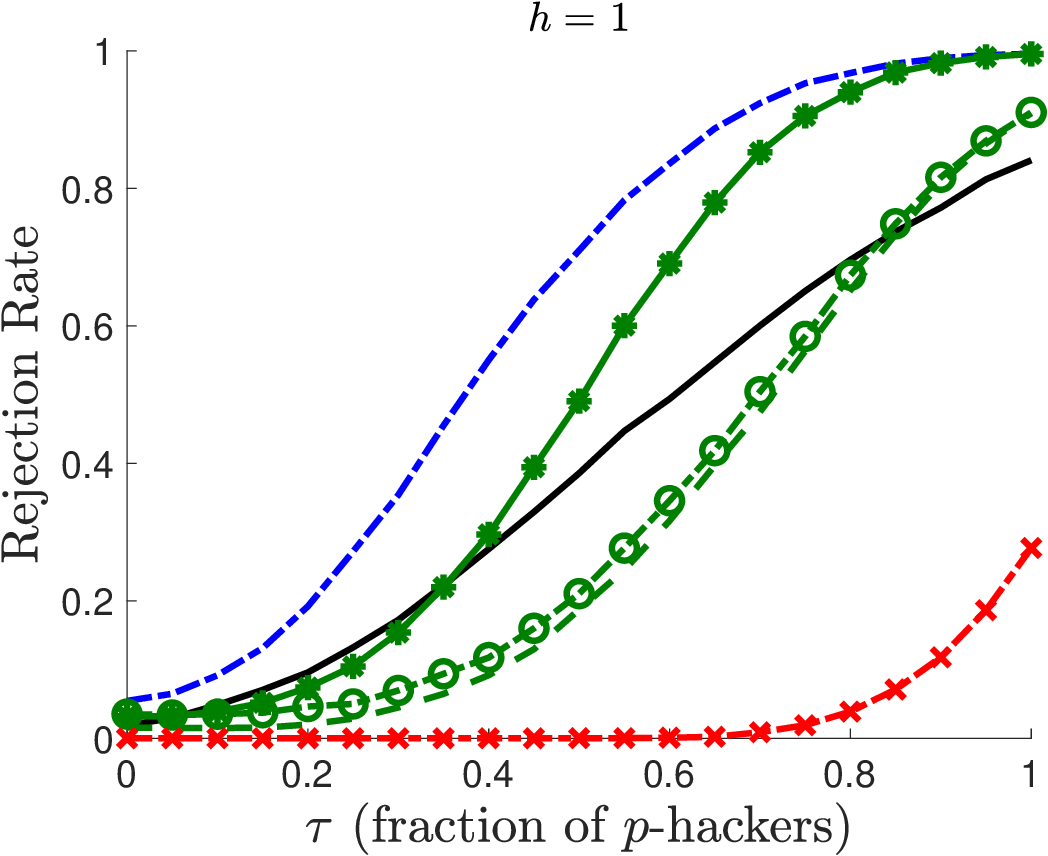}
\includegraphics[width=0.24\textwidth]{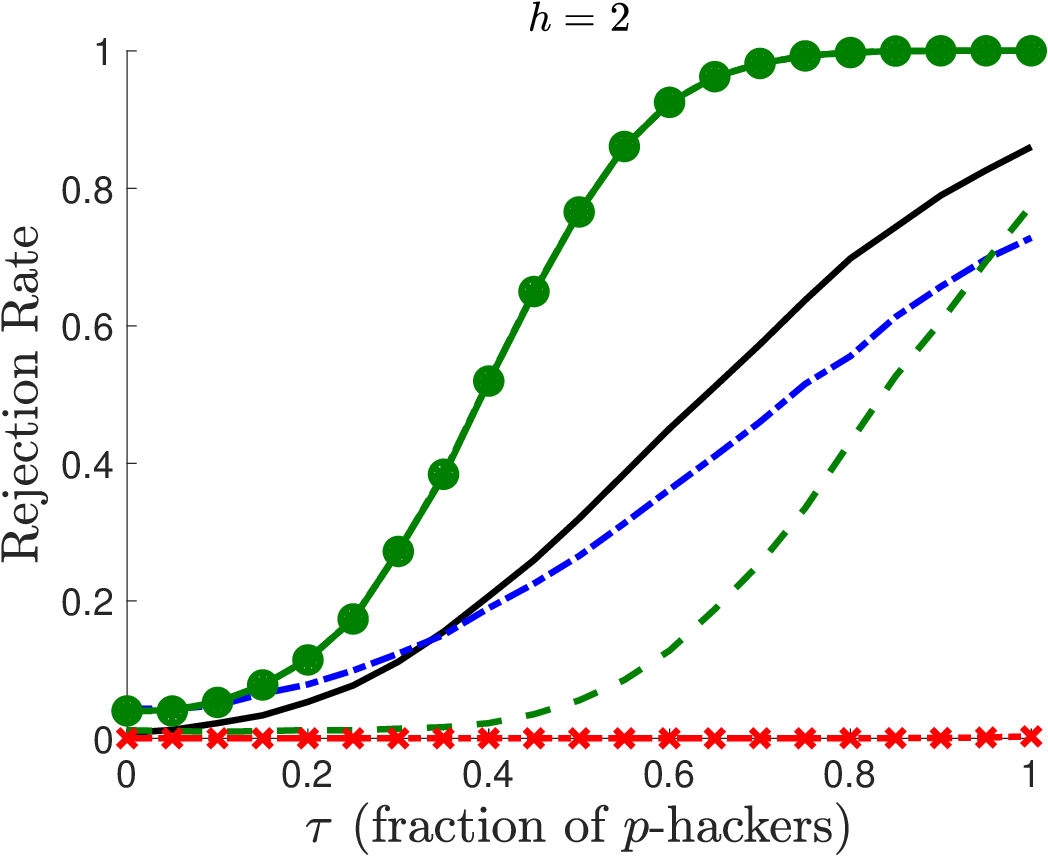}
\includegraphics[width=0.24\textwidth]{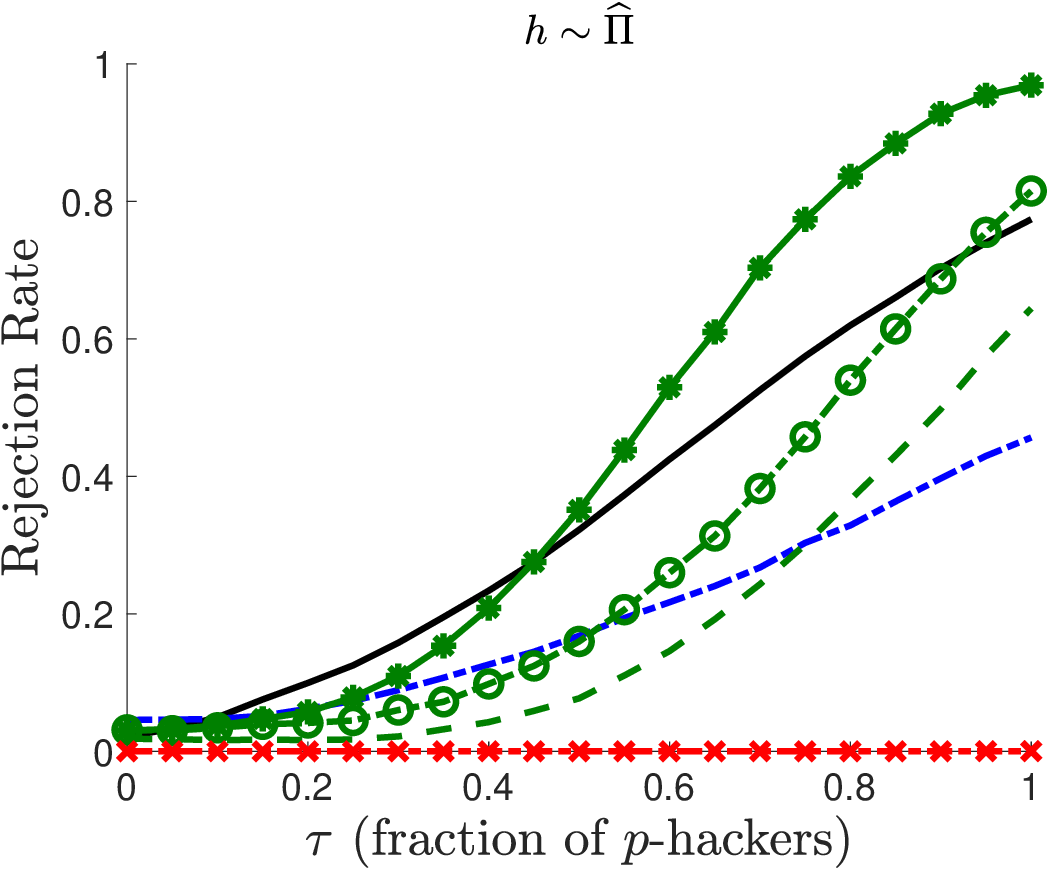}

\textbf{\small Minimum}

\smallskip
\includegraphics[width=0.24\textwidth]{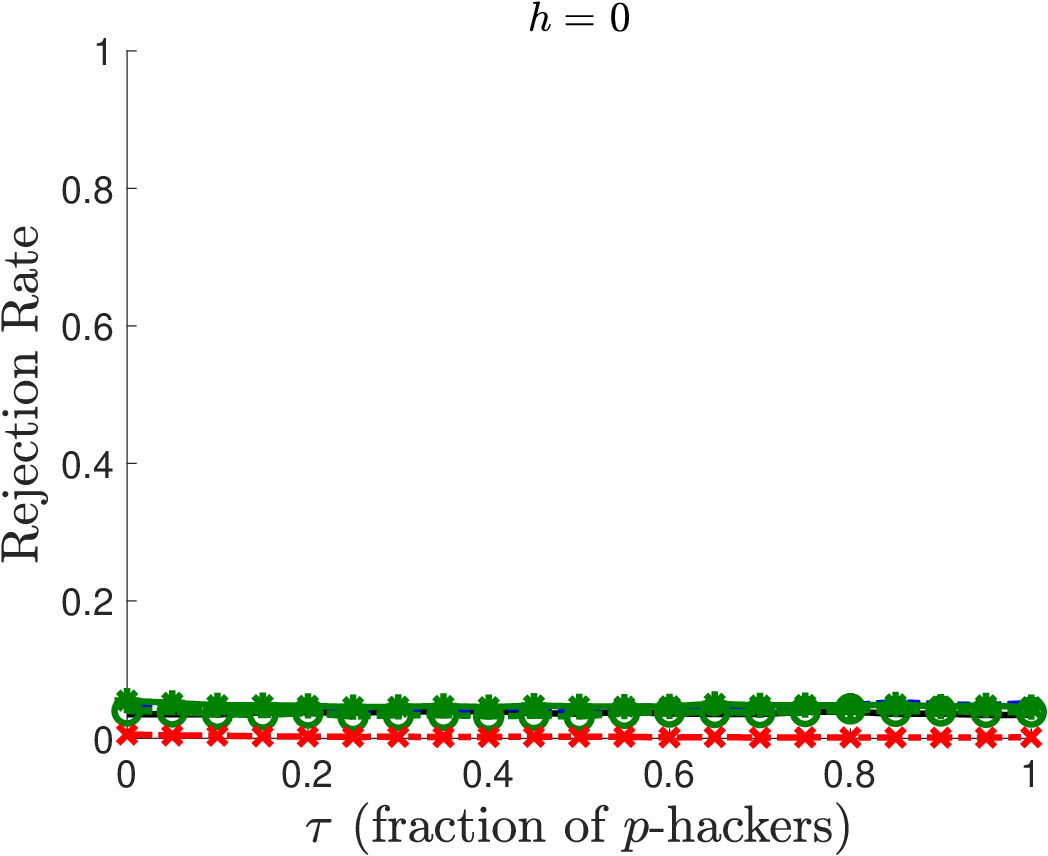}
\includegraphics[width=0.24\textwidth]{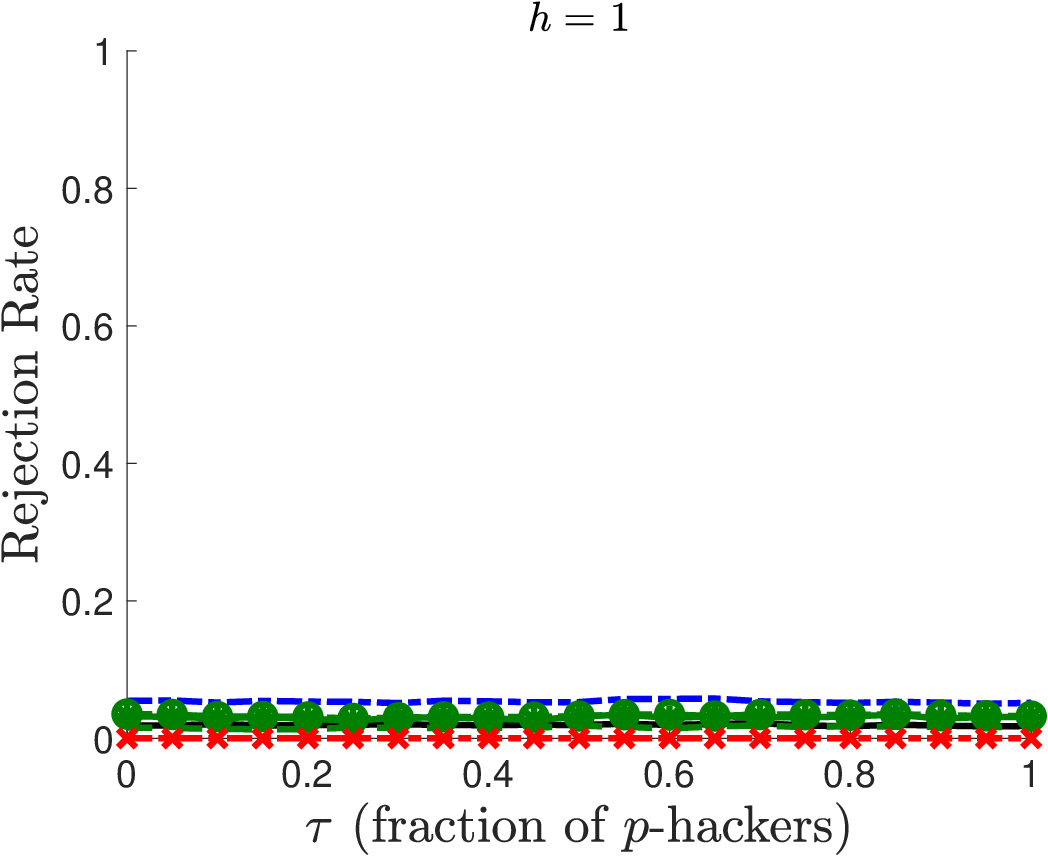}
\includegraphics[width=0.24\textwidth]{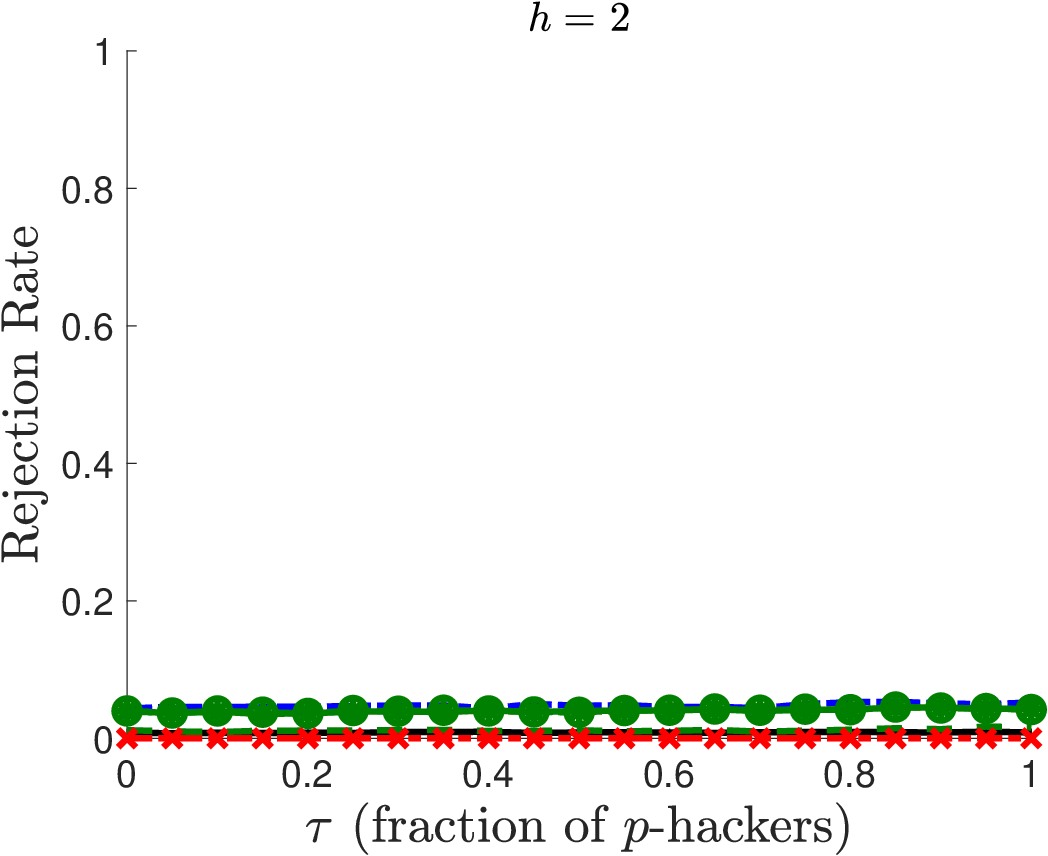}
\includegraphics[width=0.24\textwidth]{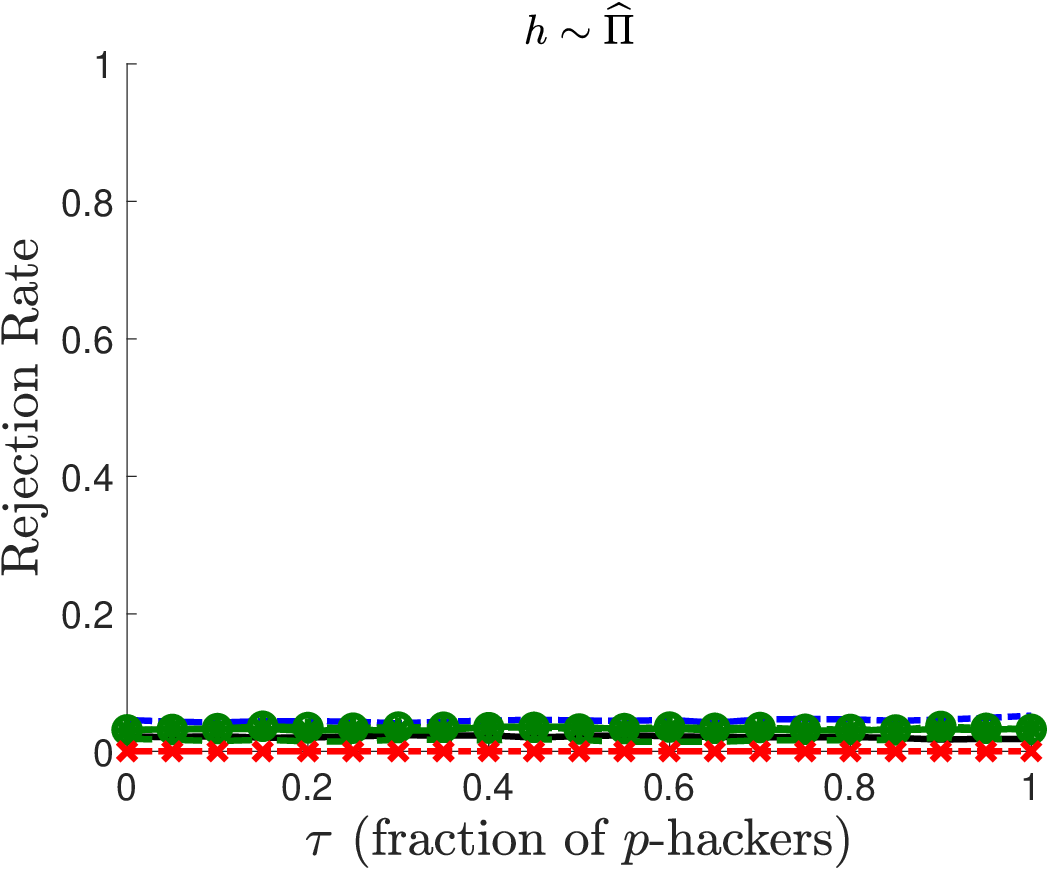}

\end{subfigure}

\vspace{3mm}

\begin{subfigure}[b]{\textwidth}
\caption{Cluster level selection}
\label{fig:power_se_combined_b}
\centering
\textbf{\small Thresholding}

\smallskip

\includegraphics[width=0.24\textwidth]{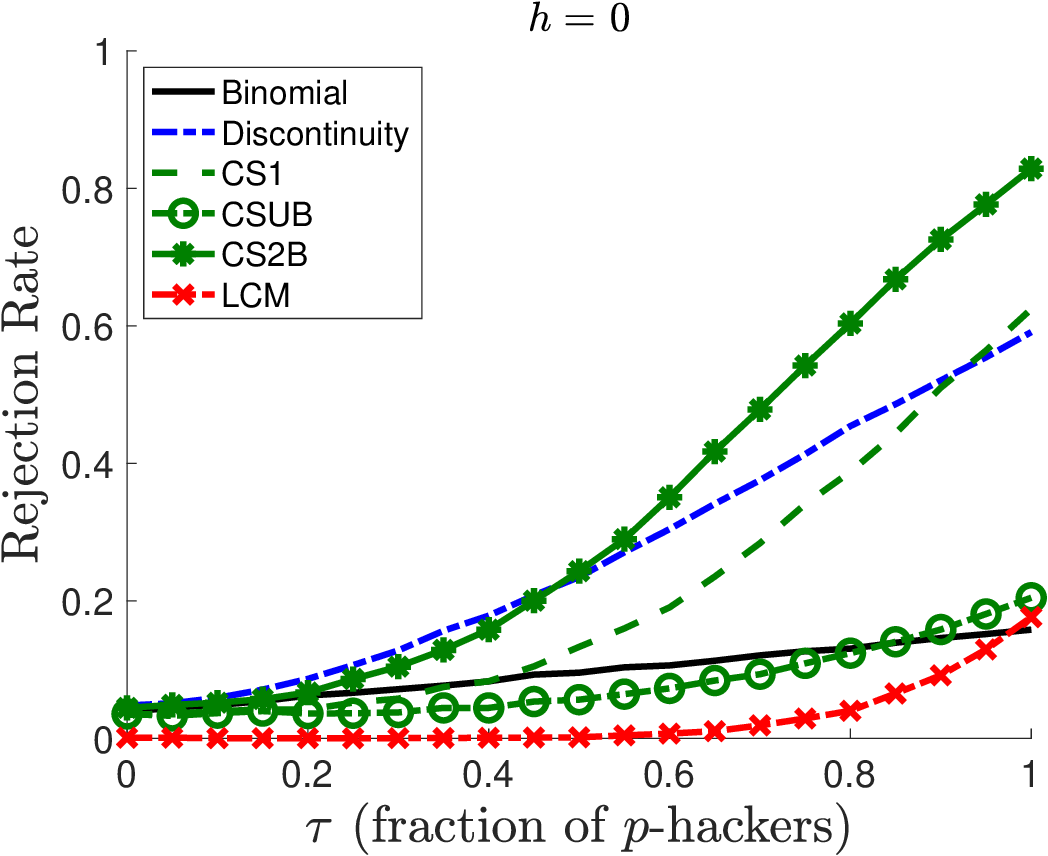}
\includegraphics[width=0.24\textwidth]{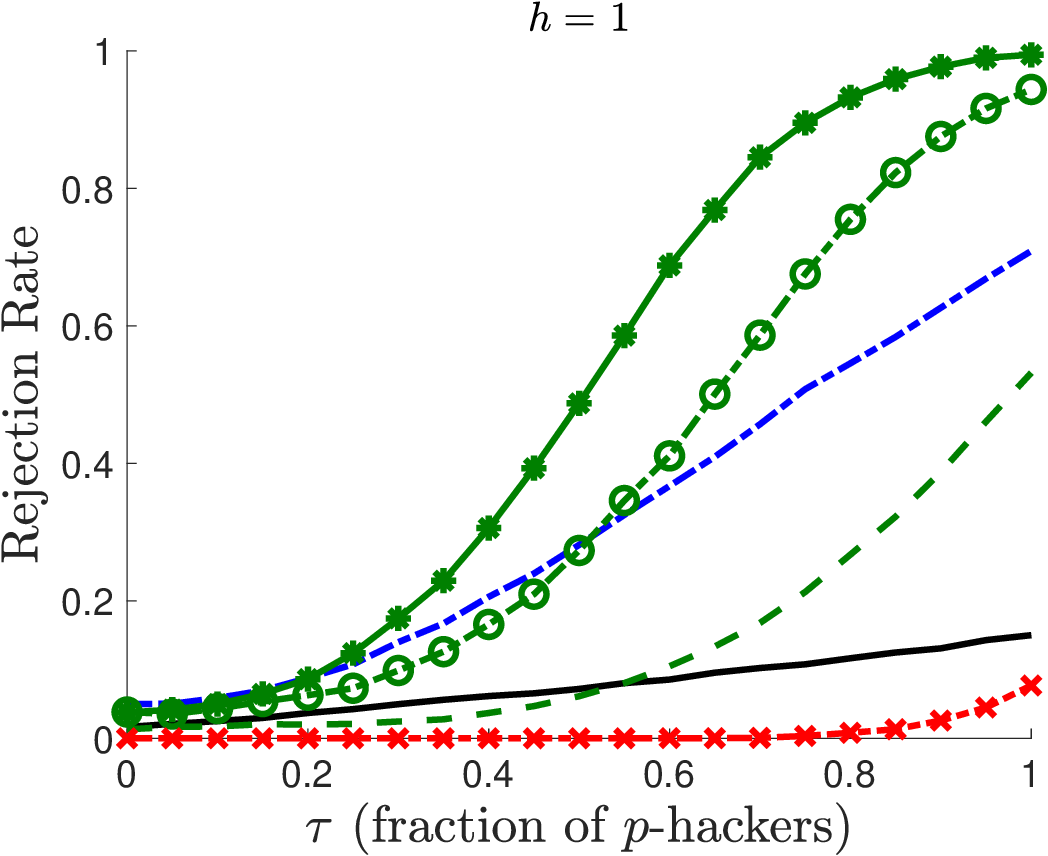}
\includegraphics[width=0.24\textwidth]{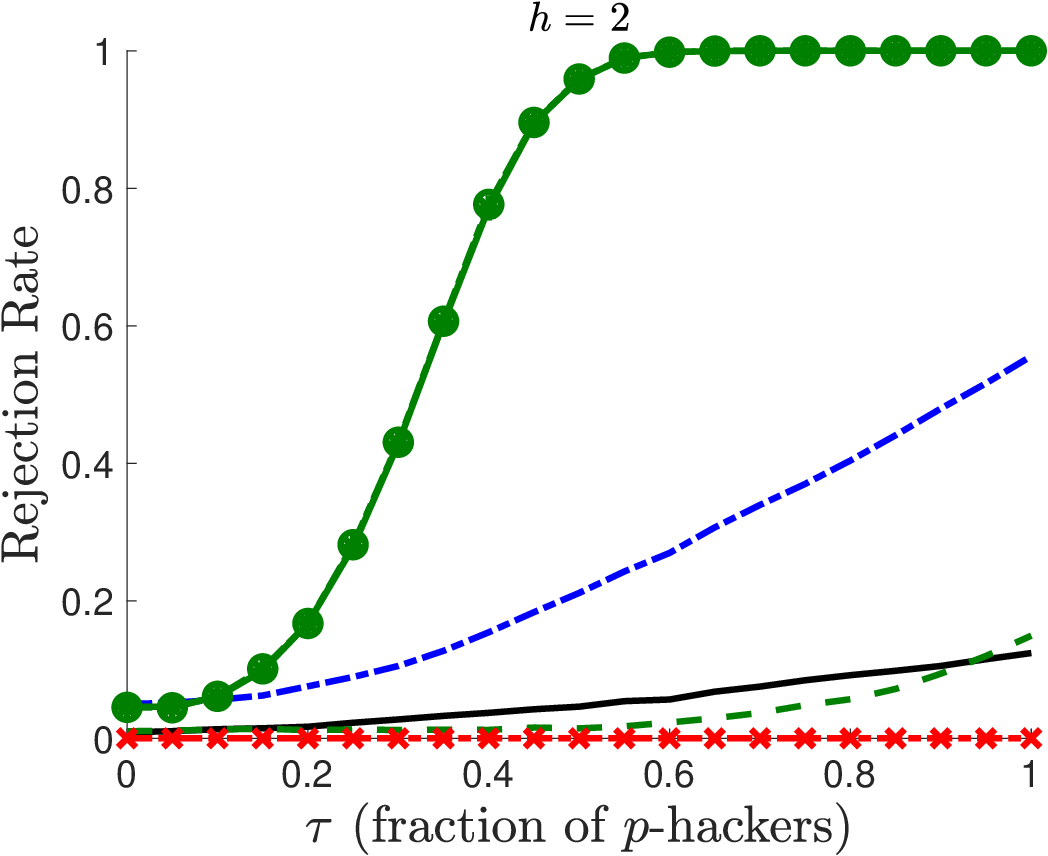}
\includegraphics[width=0.24\textwidth]{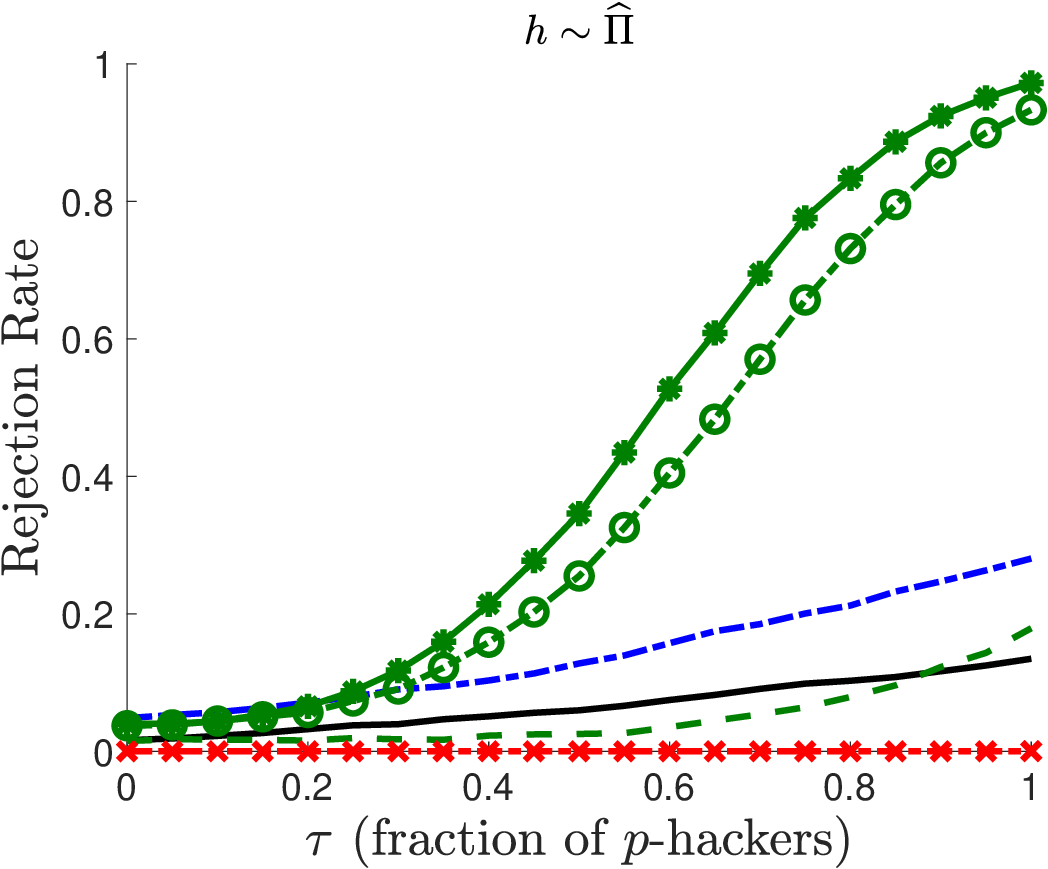}

\textbf{\small Minimum}

\smallskip

\includegraphics[width=0.24\textwidth]{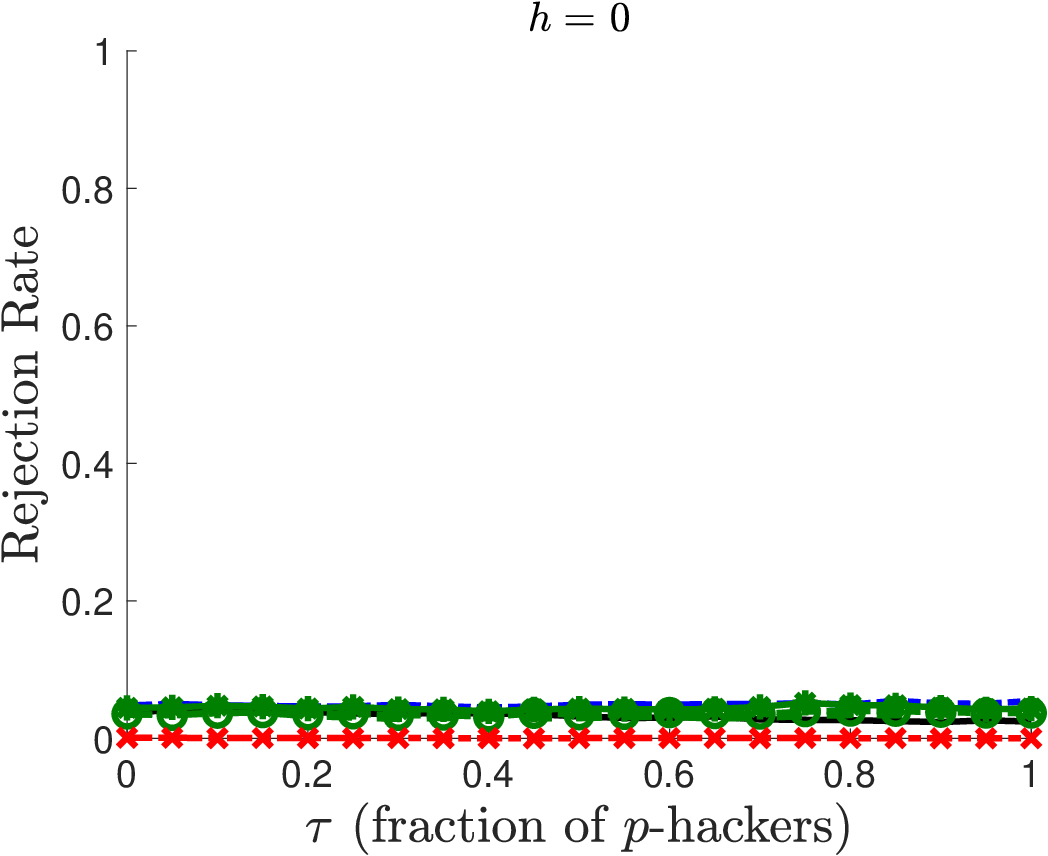}
\includegraphics[width=0.24\textwidth]{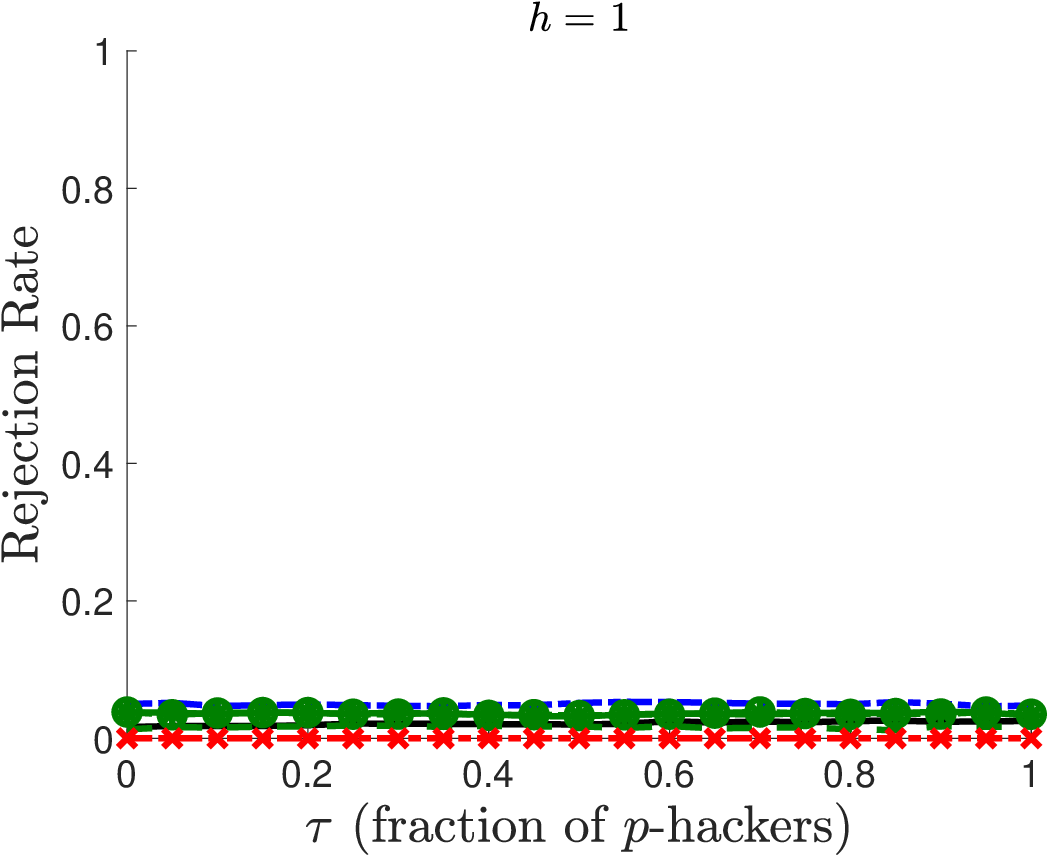}
\includegraphics[width=0.24\textwidth]{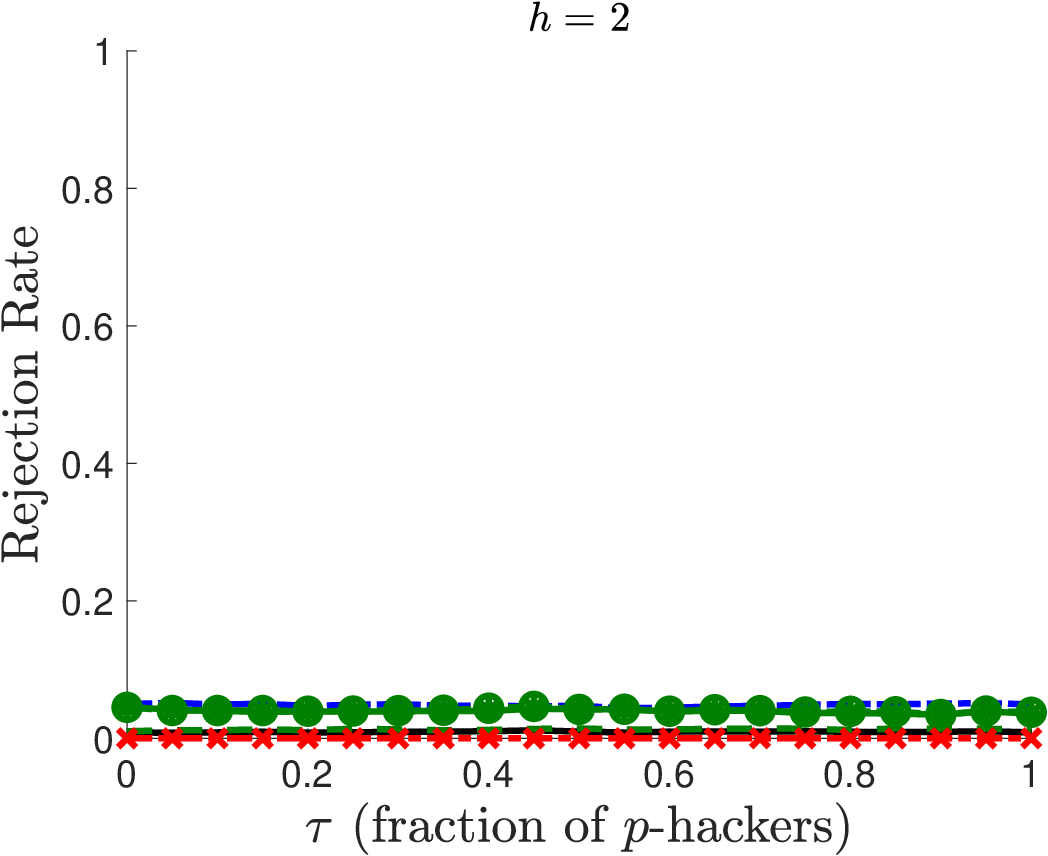}
\includegraphics[width=0.24\textwidth]{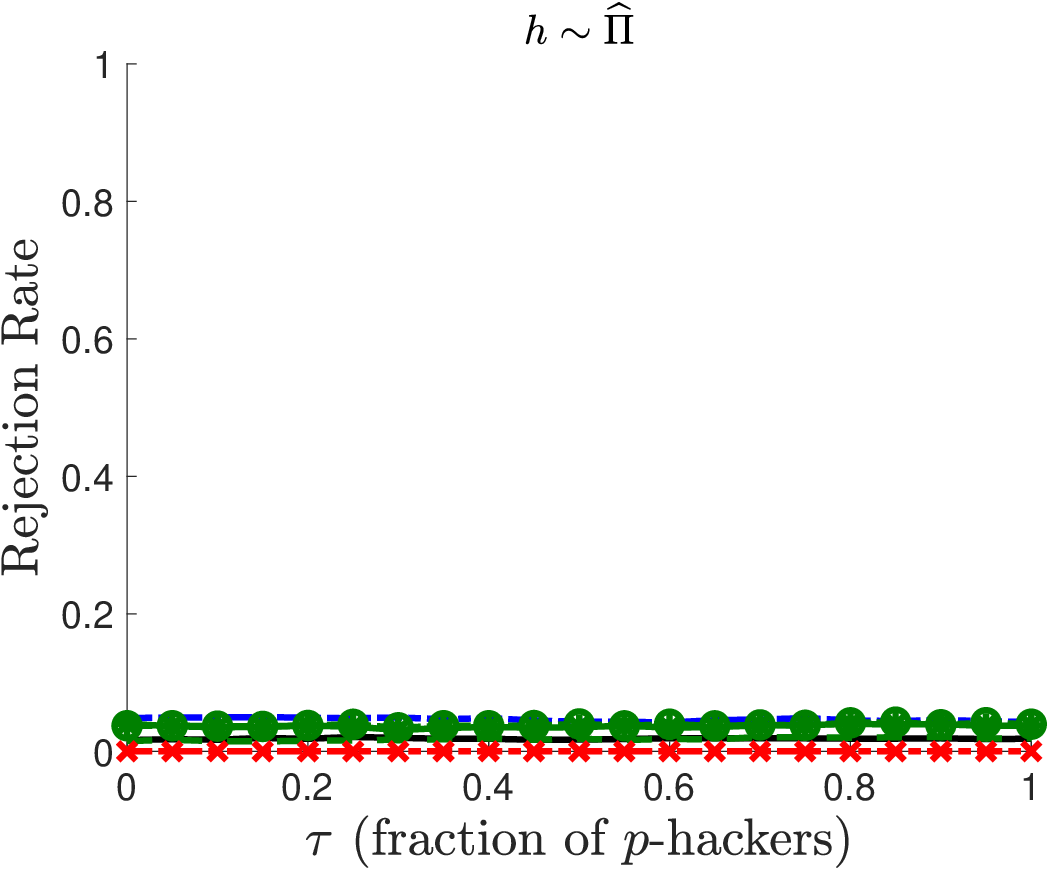}

\end{subfigure}
\end{center}
\vspace{-2mm}

\doublespacing

\textit{Notes:} Figures show rejection rates of the tests in Table \ref{tab:tests} as a function of $\tau$ for the threshold and minimum approach. The simulation design is described in Sections \ref{sec:se_selection_MC} and \ref{sec:simulations_setup}. Figure \ref{fig:power_se_combined_a} and Figure \ref{fig:power_se_combined_b} show the results from lag length selection and cluster level selection, respectively. The results are based on 5,000 simulation repetitions.
\end{figure}

\newpage

\section*{Tables}

 \begin{table}[H]

\centering
\caption{Tests for $p$-hacking}

\small

\onehalfspacing

\begin{tabular}{l l}
\toprule
\midrule

\multicolumn{2}{c}{\textbf{Testable restriction:} non-increasingness  of $p$-curve}\\
\cmidrule(l{5pt}r{5pt}){1-2}
CS1 & Histogram-based test based on \citet{cox2022simple} with $J=15$\\
LCM & LCM test \\ 
Binomial & Binomial test with bins $[0.040,0.045)$ and $[0.045,0.050]$ \\ 
\midrule
\multicolumn{2}{c}{\textbf{Testable restriction:} continuity of $p$-curve}\\
\cmidrule(l{5pt}r{5pt}){1-2}
Discontinuity & Density discontinuity test  \citep{cattaneo2021rddensity}\\
\midrule
\multicolumn{2}{c}{\textbf{Testable restriction:} upper bounds on $p$-curve, 1st, and 2nd derivative}\\
\cmidrule(l{5pt}r{5pt}){1-2}
CSUB  & Histogram-based test based on \citet{cox2022simple} with $J=15$\\
\midrule
\multicolumn{2}{c}{\textbf{Testable restriction:} 2-monotonicity and upper bounds on $p$-curve, 1st, and 2nd derivative}\\
\cmidrule(l{5pt}r{5pt}){1-2}
CS2B &  Histogram-based test based on \citet{cox2022simple} with $J=15$\\ 

\midrule
\bottomrule
%\multicolumn{3}{p{5cm}}{\scriptsize{\it Notes:} }
\end{tabular}
\label{tab:tests}
\end{table}
\normalsize

\newpage
\begin{table}[H]
\begin{center}
\caption{The effect of publication bias}
\label{tab:publication_bias_1}
\onehalfspacing
\footnotesize
\begin{tabular}{lccccccccccccccccccccccc}
\toprule
 & \multicolumn{18}{c}{Test} \\ \cline{2-19}
 & \multicolumn{3}{c}{Binomial} & \multicolumn{3}{c}{Discontinuity} & \multicolumn{3}{c}{CS1} & \multicolumn{3}{c}{CSUB} & \multicolumn{3}{c}{CS2B} & \multicolumn{3}{c}{LCM} \\
Frac. of $p$-hackers & 0 & 0.5 & 1 & 0 & 0.5 & 1 & 0 & 0.5 & 1 & 0 & 0.5 & 1 & 0 & 0.5 & 1 & 0 & 0.5 & 1 \\ \hline
\textit{} & \multicolumn{18}{c}{Thresholding} \\ \cline{2-19}
No Pub Bias & 0.03 & 0.10 & 0.18 & 0.04 & 0.36 & 0.72 & 0.06 & 0.80 & 1.00 & 0.04 & 0.12 & 0.77 & 0.06 & 0.88 & 1.00 & 0.03 & 0.41 & 0.99 \\
Sharp Pub Bias & 0.03 & 0.10 & 0.18 & 0.92 & 1.00 & 1.00 & 0.13 & 0.83 & 1.00 & 0.99 & 1.00 & 1.00 & 0.99 & 1.00 & 1.00 & 0.01 & 0.93 & 1.00 \\
Smooth Pub Bias & 0.02 & 0.05 & 0.09 & 0.04 & 0.22 & 0.48 & 0.02 & 0.12 & 0.79 & 1.00 & 1.00 & 1.00 & 1.00 & 1.00 & 1.00 & 0.00 & 0.03 & 0.58 \\
\textit{} & \multicolumn{18}{c}{Minimum} \\ \cline{2-19}
No Pub Bias & 0.03 & 0.03 & 0.03 & 0.04 & 0.05 & 0.05 & 0.06 & 0.03 & 0.02 & 0.04 & 0.03 & 0.31 & 0.06 & 0.04 & 0.36 & 0.03 & 0.00 & 0.00 \\
Sharp Pub Bias & 0.03 & 0.03 & 0.03 & 0.92 & 0.95 & 0.96 & 0.13 & 0.07 & 0.06 & 0.99 & 1.00 & 1.00 & 0.99 & 1.00 & 1.00 & 0.01 & 0.00 & 0.00 \\
Smooth Pub Bias & 0.02 & 0.02 & 0.02 & 0.04 & 0.04 & 0.05 & 0.02 & 0.01 & 0.01 & 1.00 & 1.00 & 1.00 & 1.00 & 1.00 & 1.00 & 0.00 & 0.00 & 0.00 \\
\bottomrule
\end{tabular}
\end{center}
\doublespacing\textit{Notes:} Table shows the impact of publication bias on the power of the tests when $p$-hacking is based on covariate selection with $K=3$ (general-to-specific, two-sided tests) and $h=0$. The results are based on 5,000 simulation repetitions.
\end{table}

\newpage

\begin{table}[H]
\footnotesize
\caption{Results empirical application}
\label{tab:application}

\onehalfspacing
\begin{tabular}{lcccccccccccccc}
\toprule

\multicolumn{1}{c}{\multirow{2}{*}{Test}} & \multicolumn{13}{c}{Subsample} & \multicolumn{1}{c}{\multirow{2}{*}{Overall RR}} \\
\cline{2-14}
\multicolumn{1}{c}{} &DID &RCT &RDD &IV &F<30 &F$\geq$30 &Top 5 &!Top 5 &2015 &2018 &AJQ &SW &All& \multicolumn{1}{c}{} \\
\midrule
& \multicolumn{14}{c}{(a) Original (rounded) data: $p$-values} \\
\cline{2-14}
LocBin     & 0.000 & 0.000 & 0.002 & 0.324 & 0.791 & 0.997 & 0.003 & 0.000 & 0.000 & 0.004 & 0.002 & 0.000 & 0.000 & 0.769 \\
Discont.   & 0.000 & 0.019 & 0.008 & 0.951 & 0.500 & 0.753 & 0.000 & 0.270 & 0.757 & 0.172 & 0.005 & 0.679 & 0.962 & 0.385 \\
CS1        & 0.076 & 0.098 & 0.293 & 0.931 & 0.347 & 0.474 & 0.932 & 0.045 & 0.003 & 0.143 & 0.981 & 0.020 & 0.046 & 0.308 \\
CSUB       & 0.025 & 0.167 & 0.165 & 0.077 & 0.073 & 0.448 & 0.453 & 0.000 & 0.007 & 0.002 & 0.353 & 0.027 & 0.000 & 0.462 \\
CS2B       & 0.010 & 0.149 & 0.107 & 0.077 & 0.024 & 0.376 & 0.659 & 0.000 & 0.003 & 0.001 & 0.230 & 0.018 & 0.000 & 0.538 \\
LCM        & 0.007 & 0.130 & 0.807 & 1.000 & 1.000 & 1.000 & 0.671 & 0.001 & 0.004 & 0.192 & 0.530 & 0.049 & 0.000 & 0.385 \\
\midrule
& \multicolumn{14}{c}{(b) Average rejection rates across 1,000 deroundings, 5\% significance level} \\
\cline{2-14}
LocBin     & 0.000 & 0.000 & 0.000 & 0.000 & 0.000 & 0.000 & 0.000 & 0.000 & 0.000 & 0.000 & 0.000 & 0.000 & 0.000 & 0.000 \\
Discont.   & 0.066 & 0.045 & 0.028 & 0.005 & 0.002 & 0.100 & 0.026 & 0.016 & 0.013 & 0.009 & 0.023 & 0.021 & 0.015 & 0.028 \\
CS1        & 0.010 & 0.053 & 0.043 & 0.086 & 0.007 & 0.222 & 0.007 & 0.031 & 0.008 & 0.201 & 0.005 & 0.001 & 0.046 & 0.055 \\
CSUB       & 0.071 & 0.080 & 0.081 & 0.267 & 0.271 & 0.217 & 0.025 & 0.103 & 0.011 & 0.351 & 0.030 & 0.002 & 0.137 & 0.127 \\
CS2B       & 0.054 & 0.056 & 0.097 & 0.221 & 0.248 & 0.187 & 0.021 & 0.111 & 0.012 & 0.415 & 0.032 & 0.000 & 0.133 & 0.122 \\
LCM        & 0.000 & 0.000 & 0.000 & 0.000 & 0.000 & 0.000 & 0.000 & 0.000 & 0.000 & 0.000 & 0.000 & 0.000 & 0.000 & 0.000 \\
\midrule
Obs & 5853 & 7569 & 3148 & 5170 & 679 & 760 & 3954 & 17786 & 11211 & 10529 & 3341 & 17772 & 21740 & \\
\bottomrule
\end{tabular}

\vspace{3mm}

\normalsize
\doublespacing
\textit{Notes:} The table reports $p$-values and rejection rates for tests for $p$-hacking applied to various subsets of \cite{brodeur2016data} and \cite{brodeur2022data} datasets. ``DID'', ``RCT'', ``RDD'', and ``IV'' represent subsets of the data corresponding to results obtain from difference-in-differences analysis, randomized control trials, regression discontinuity design analysis, and instrumental variables estimation, respectively. ``F$<$30'' and ``F$\geq$30'' are subsamples of IV studies for which the first-stage F-statistics are available and exceed or do not exceed 30, respectively. ``Top 5'' corresponds to the subsample of results from top-5 economics journals and ``!Top 5'' corresponds to the rest of the sample. Columns ``2015'' and ``2018'' represent papers published in years 2015 and 2018, respectively. Column ``AJQ'' represents paper published in \textit{The American Economic Review}, \textit{The Journal of Political Economy}, or \textit{The Quarterly Journal of Economics} during the sample period (years 2015 and 2018). ``SW'' is a set of finding published in the same three journals and collected by \cite{brodeur2016data} for the 2005--2011 period. ``All'' is the full sample of observations collected by \cite{brodeur2022data}. ``Overall RR'' is the fraction of rejections at the 5\% level across all subsamples  (rounded data) and the average rejection rate across all subsamples and de-rounding draws (de-rounded data).
\end{table}

\newpage

\begin{center}
{\LARGE Online Appendix to `The Power of Tests for Detecting $p$-Hacking'}
\end{center}

\appendix

\setcounter{page}{1}

\section*{Table of Contents}
\linespread{1}
\setlength{\parskip}{0in}

\startcontents[sections]
\printcontents[sections]{l}{1}{\setcounter{tocdepth}{2}}

\linespread{1.25}
\setlength{\parskip}{0.05in}

\newpage

\section{Selecting across Datasets}
\label{app:selecting_across_datasets}
Consider a setting where a researcher conducts a finite number of $K>1$ independent tests over which they can choose the best results. In each case, the researcher uses a $t$-test to test their hypothesis, with test statistic $T_i \sim \mathcal{N}(h,1)$. We assume that the true local effect $h$ is the same across datasets. This gives the researcher $K$ possible $p$-values to consider, enabling the possibility of $p$-hacking. For example, a researcher conducting experiments with students, as is common in experimental economics, could have several independent sets of students on which to test a hypothesis. As with the other examples, researchers could simply search over all datasets and report the smallest $p$-value or engage in a strategy of searching for a low $p$-value. 

Let $K=2$ and consider a search where first the researchers construct a dataset for their study and compute a $p$-value $P_1$ for their hypothesis on this dataset. $P_2$ follows from constructing a new dataset. 

For the threshold approach (where the reported $p$-value follows from equation \eqref{eq: Thresholding}), the $p$-curve is given by 
\begin{eqnarray*}
g_4^{t}(p) =\int_\mathcal{H}\exp\left(hz_0(p)-\frac{h^2}{2}\right)\Upsilon^t_4(p; \alpha, h)d\Pi(h),
\end{eqnarray*}
where  
\begin{equation*}
  \Upsilon^t_4(p;\alpha, h)=\begin{cases}
    1+ \Phi\left(z_{h}(\alpha) \right), & \text{if $p\le \alpha$},\\
    2\Phi\left(z_h(p) \right), & \text{if $p> \alpha$}.
  \end{cases}
\end{equation*}

This is a special case of the results in Section \ref{sec:specification_search_regression} where $\rho=0$ because of the independence assumption across datasets. If the $t$-statistics were correlated through dependence between the datasets, then setting $\rho$ equal to that correlation and using the results in Section \ref{sec:specification_search_regression} would yield the correct distribution of $p$-values. 

Figure \ref{fig:p_curves_example3} shows $p$-curves for $h\in \{0,1,2\}$. For all values of $h$, no upward sloping $p$-curves are induced over any range of $p$. So for this type of $p$-hacking, even with thresholds such as in this example, tests that look for $p$-hacking through a lack of monotonically downward sloping $p$-curves will not have power. This method does suggest that tests for discontinuities in the distribution will have power, but likely only if studies have moderately large $h$'s, that is, $h$'s sufficient to induce a pronounced discontinuity (e.g., $h=1$ or $h=2$) while generating a non-negligible amount of insignificant results to $p$-hack. 

\begin{figure}[H]
    \begin{center}
    \caption{$p$-Curves from dataset selection based on the threshold approach} 
    \label{fig:p_curves_example3}
         \includegraphics[width=0.43\textwidth]{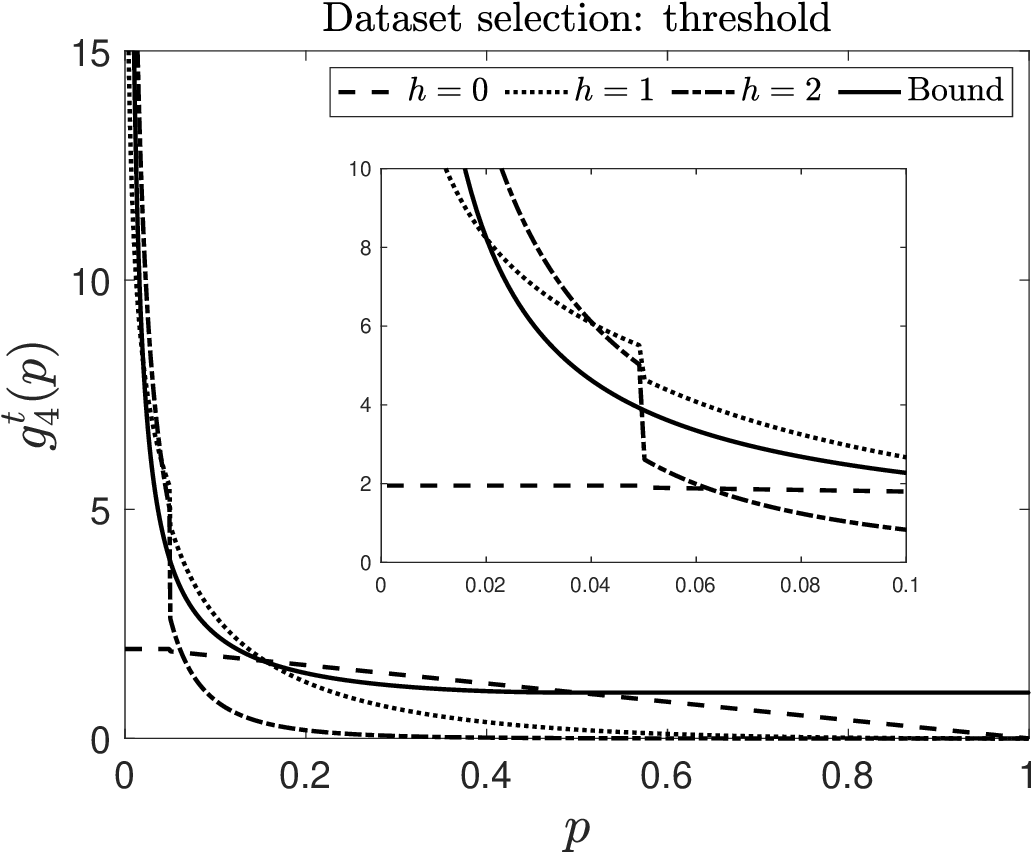}
\end{center}

\vspace{-2mm}

\footnotesize{\textit{Notes:} Figure shows the $p$-curves from dataset selection based on the threshold approach with $\gamma=0.5$}
              
\end{figure}

We also consider the minimum approach (Equation \eqref{eq: Minimum}) with the $p$-values from across all datasets or subsamples \citep[e.g.,][]{ulrich2015p,elliott2022detecting}. For general $K$, the $p$-curve is given by 
\begin{equation}
g_4^{m}(p;K) = K\int_\mathcal{H}\exp\left(hz_0(p)-\frac{h^2}{2}\right)\Phi(z_h(p))^{K-1}d\Pi(h).\label{eq:generalized_um}
\end{equation}

\begin{figure}[H]

\begin{center}
              \caption{$p$-Curves from dataset selection based on the minimum approach} 
              \label{fig:p_curves_example3bnd}
     
 \includegraphics[width=0.43\textwidth]{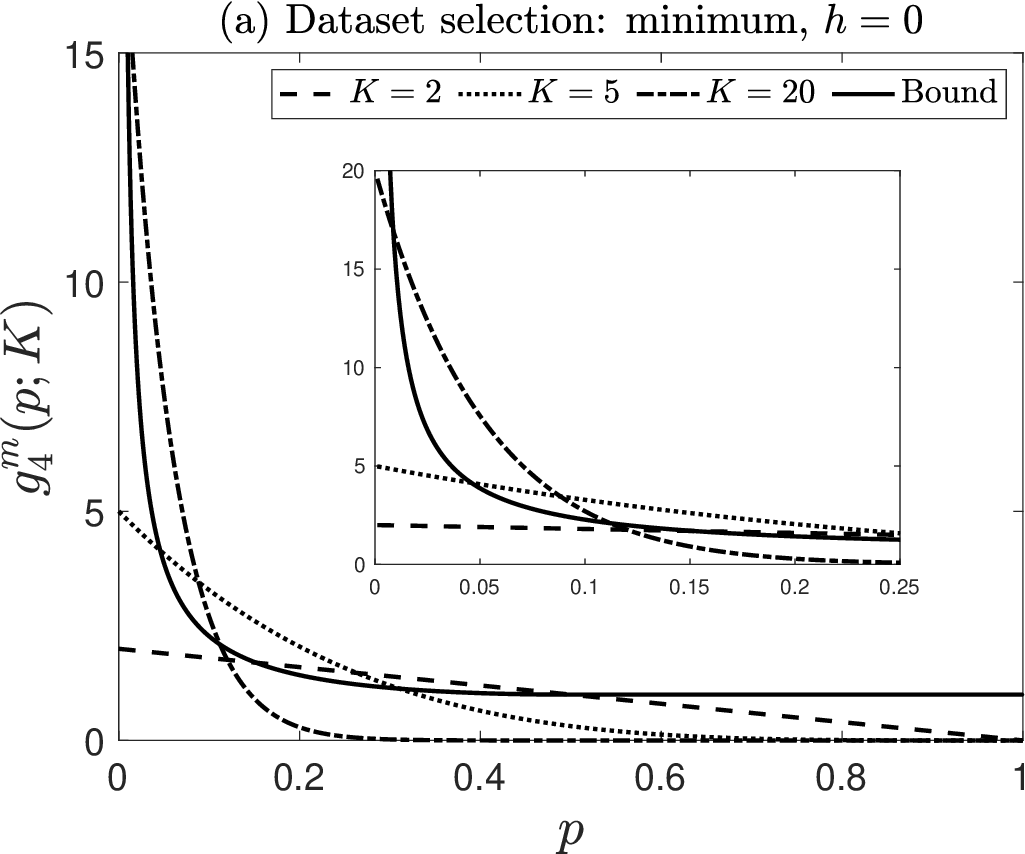}
\includegraphics[width=0.43\textwidth]{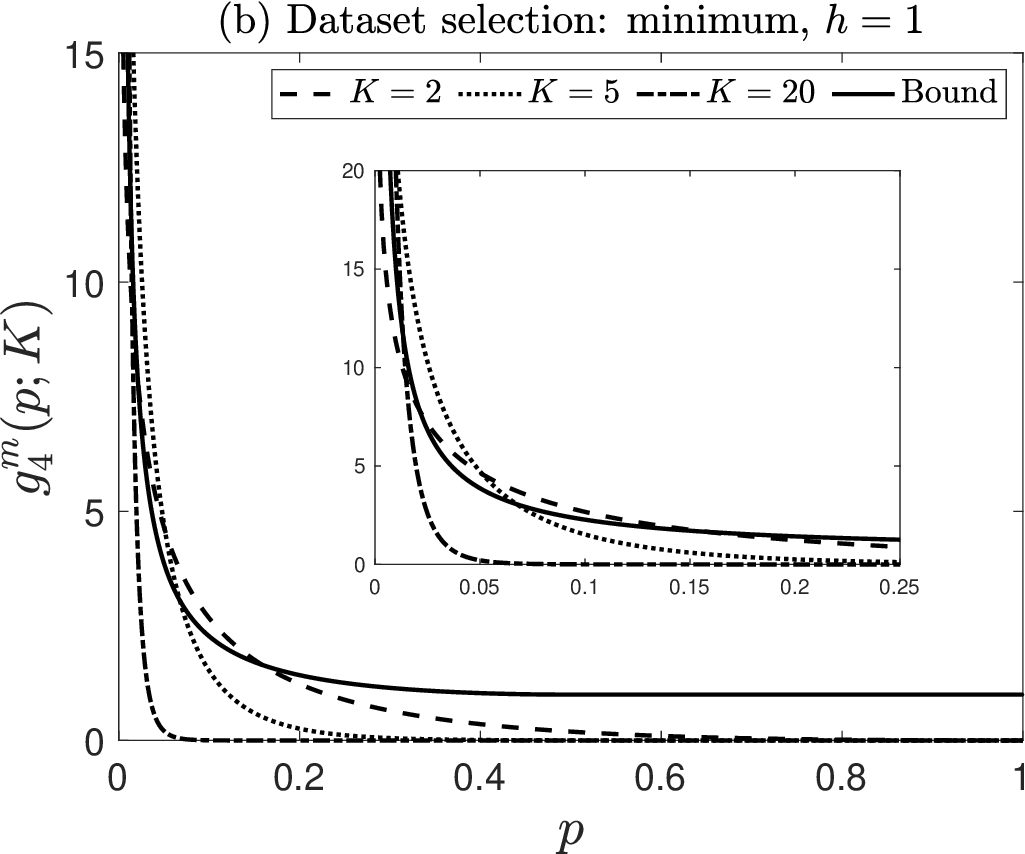}

\end{center}

\vspace{-2mm}

\footnotesize{\textit{Notes:} Figures show the $p$-curves from dataset selection based on the minimum approach. Panel (a): $h=0$. Panel (b): $h=1$.}

\end{figure}

The $p$-curve under $p$-hacking, $g^m_4$, is non-increasing and completely monotone for $p\le 0.5$ \citep{elliott2022detecting}. This can be seen in Figure  \ref{fig:p_curves_example3bnd} where for various $K$ and $h$ each of the curves are decreasing. Tests for violations of monotonicity will have no power. Similarly, tests for discontinuities will also not have power. Figure  \ref{fig:p_curves_example3bnd} also shows (solid line) the bounds under the null hypothesis of no $p$-hacking. Clearly, each of the curves violates the bounds for some range of $p$; see also Figure 2 in \citet{elliott2022detecting}. 
 
Alternatively, the researcher could consider the threshold strategy of first using both datasets, choosing to report this $p$-value if it is below a threshold and, otherwise, choosing the best of the available $p$-values. For $K=2$, this gives three potential $p$-values to choose between. For many such testing problems (for example, testing a regression coefficient in a linear regression), $T_k \sim \mathcal{N}(h,1)$, $k=1,2$, approximately so that the $t$-statistic from the combined samples is $T_{12} \simeq (T_1+T_2)/\sqrt{2}$. This is precisely the same setup asymptotically as in the IV case presented above, so those results apply directly to this problem. As such, we refer to the discussion there rather than re-present the results.

\section{Theoretical Analysis of Bias and Size Distortions due to $p$-Hacking}
\label{app:theory_bias_size_distortions}
There are two important implications of $p$-hacking. The first is that the empirical size (i.e., the size under $p$-hacking) is larger than the nominal size. The second implication is that the reported estimates will be larger in magnitude and hence biased. Here we study the bias and size distortions theoretically in the context of the analytical examples in Section \ref{sec:theory}. We consider both the bias $E[\hat\beta_r - \beta]$, where $\hat\beta_r$ is the estimate reported by the researchers, and the more easily interpretable relative bias, that is, the bias scaled by the true coefficient of interest $\beta$. To keep the exposition short, we only provide theoretical formulas for the relative bias. The corresponding formulas for the bias can be obtained by multiplying the relative bias by $\beta$.

Appendix \ref{app:detailed_derivations} presents the analytical derivations underlying the results. We do not present results for the case where researchers select across datasets since this is a special case of selecting control variables in linear regression.

\subsection{Selecting Control Variables in Linear Regression}

The magnitude of size distortions follows from the derived CDF for the $p$-hacked curve evaluated at $h=0$. The size distortion is the same for both the thresholding case and the situation where the researcher simply reports the minimum $p$-value, since in either case, if there is a rejection at the desired size, each method of $p$-hacking will use it. Empirical size for any nominal size is given by \begin{equation*}
  G_0(\alpha)= 1 - \Phi_2(z_0(\alpha), z_0(\alpha); \rho),
\end{equation*}
where $\Phi_2(\cdot, \cdot;\rho)$ is the CDF of the bivariate normal distribution with standard marginals and correlation $\rho$. Figure \ref{fig:Covsize} shows the difference between empirical and nominal size. Panel (a) shows, for nominal size $\alpha = 0.05$, how the empirical size varies with $\gamma$. For small $\gamma$, the tests are highly correlated ($\rho$ is close to one), leaving little room for effective $p$-hacking, and hence there is only a small effect on size. As $\gamma$ becomes larger, so does the size distortion as it moves towards having an empirical size double that of the nominal size. Panel (b) shows, for three choices of $\gamma$ ($\gamma \in\{ 0.1,0.5,0.9\}$), how the empirical size exceeds nominal size. The lower line is nominal size; the empirical size is larger for each value of $\gamma$. Essentially, the result is somewhat uniform over this empirical size range, with size coming close to double empirical size for the largest value of $\gamma$.   

\begin{figure}[H]
\begin{center}
              \caption{Size distortions under $p$-hacking}
              \label{fig:Covsize}
\includegraphics[width=0.43\textwidth]{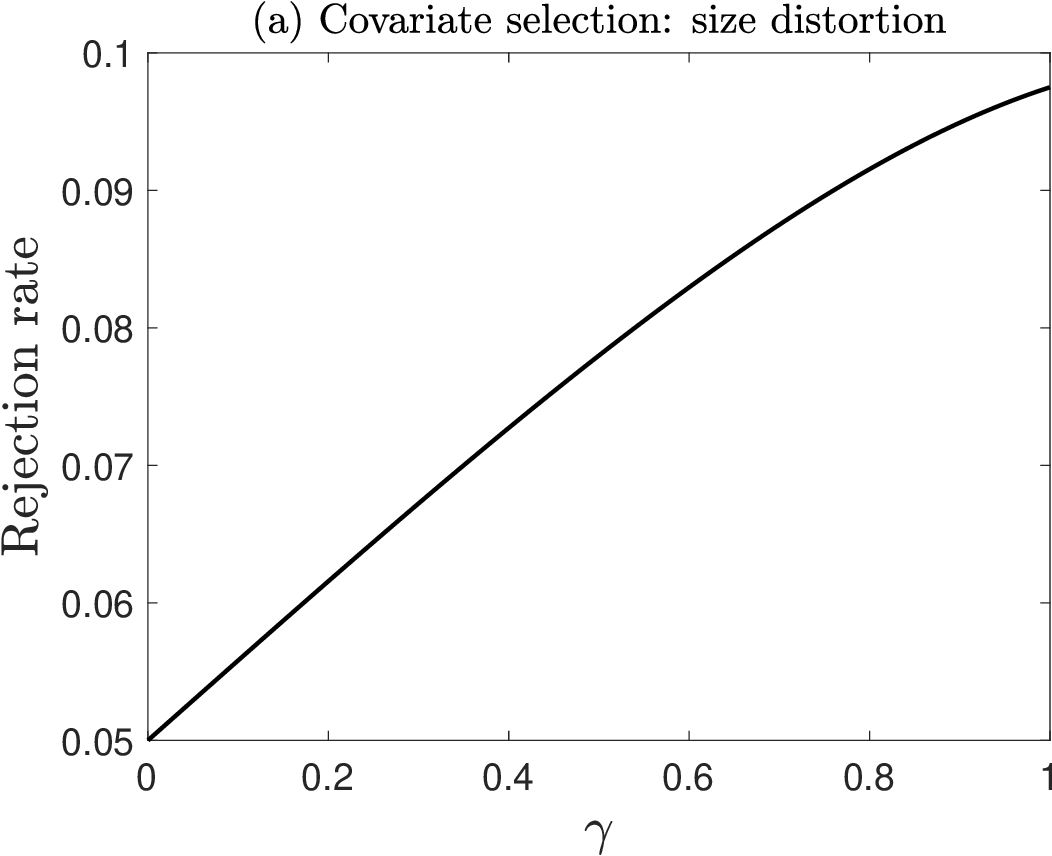}
\includegraphics[width=0.45\textwidth]{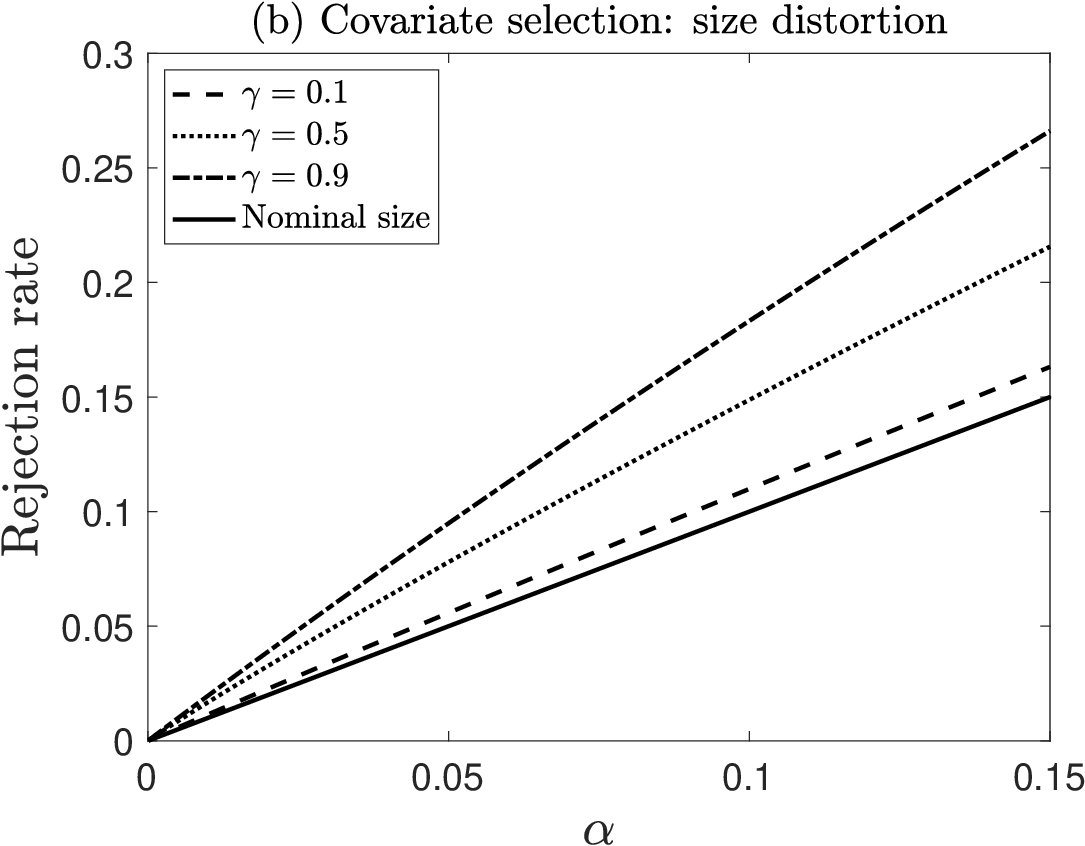}

\end{center}

\vspace{-2mm}

\footnotesize{\textit{Notes:} Figures show the rejection rates under $p$-hacking. Panel (a): rejection rate as a function of $\gamma$ for $\alpha=0.05$ and $h=0$. Panel (b): rejection rate as a function of $\alpha$ for $\gamma\in \{0.1,0.5,0.9\}$ and $h=0$.}

\end{figure}

Selectively choosing larger $t$-statistics results in selectively choosing larger estimated effects.
%The bias for the threshold case is given by 
The relative biases for the threshold and minimum approaches, $\mathcal{B}_{1}^{t}$ and $\mathcal{B}_{1}^{m}$, respectively, are given by
\begin{eqnarray*}
&\mathcal{B}_{1}^{t} =& \mathcal{B}_{1}^{m}\Phi\left(\sqrt{\frac{2}{1+\rho}}z_h(\alpha)\right) +h^{-1}(1-\rho)\phi(z_h(\alpha))\Phi\left(-\sqrt{\frac{1-\rho}{1+\rho}}z_h(\alpha)\right),\\
&\mathcal{B}_{1}^{m} =& h^{-1}\sqrt{2(1-\rho)}\phi(0).
\end{eqnarray*}

The bias as a function of $h$ can be seen in Figure \ref{fig:CovBias}(a). For the thresholding case, most $p$-hacking occurs when $h$ is small. As a consequence, the bias is larger for small $h$. A larger $\gamma$ means a smaller $\rho$, and hence draws of the estimate and the $p$-value are less correlated, allowing for larger impacts. For the minimum approach, the bias does not depend on $h$ and is larger than that for the threshold approach. The reason is that the minimum approach always chooses the largest effect since in our simple setting the standard errors are the same in both regressions. 

Figure \ref{fig:CovBias}(b) shows the relative bias as a function of $h$. Unlike for the bias, the relative bias decreases in $h$ for the minimum approach. While the graphs of the relative bias as a function of $h$ have different shapes than those for the bias, the qualitative conclusions regarding the impact of $\gamma$ remain the same.

\begin{figure}[H]
     \begin{center}              
     \caption{Bias from covariate selection} 
     \label{fig:CovBias}

     \includegraphics[width=0.43\textwidth]{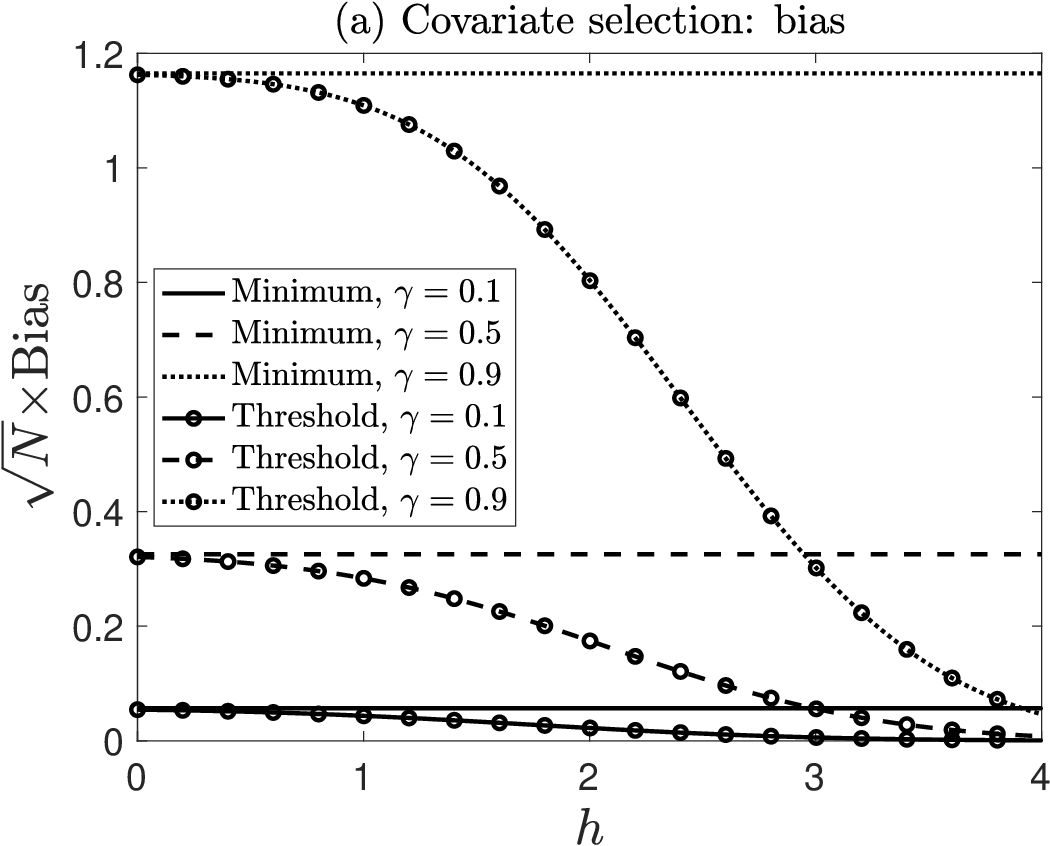}
     \includegraphics[width=0.43\textwidth]{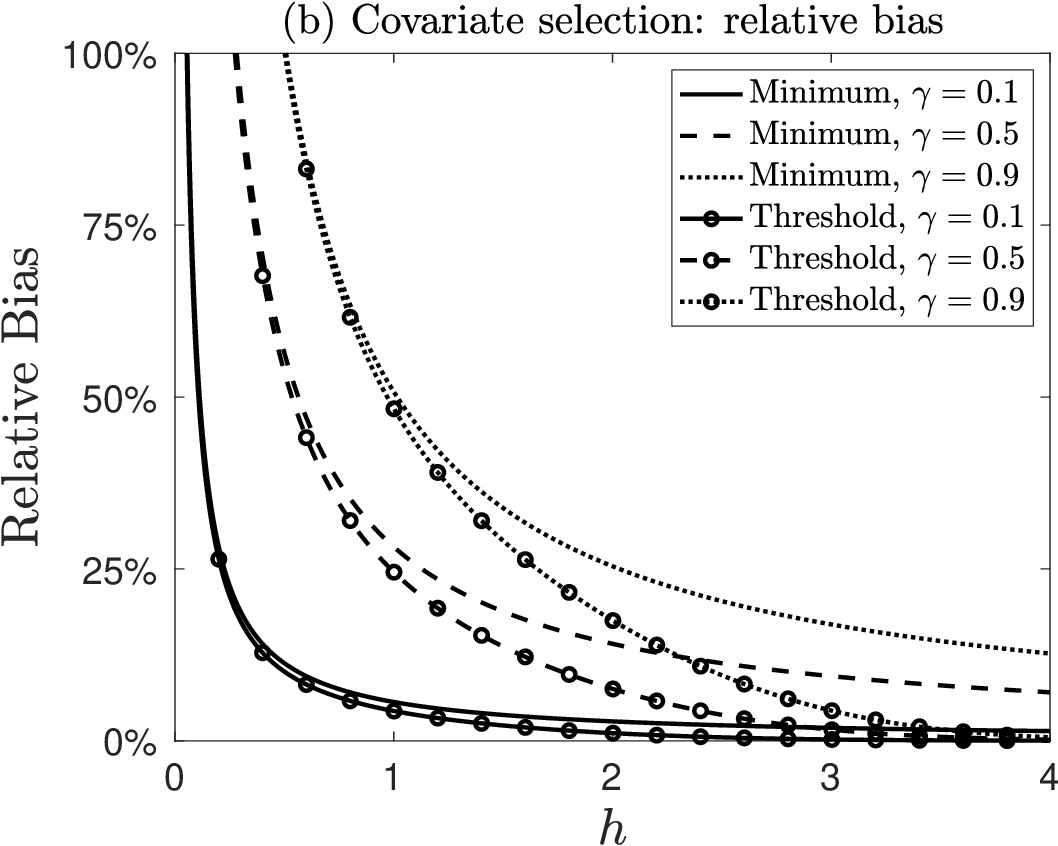}

    \end{center}
\vspace{-2mm}

\footnotesize{\textit{Notes:} Figure shows the bias from covariate selection for $\gamma\in\{0.1, 0.5, 0.9\}$ and $\alpha=0.05$. Panel (a): bias. Panel (b): relative bias.}
    
\end{figure}

\subsection{Selecting amongst Instruments in IV Regression}
Empirical size for any nominal size $\alpha$ is given by 
\begin{equation*}
  G_0(\alpha)= 1 - \Phi(z_0(\alpha))\Phi((\sqrt{2}-1)z_0(\alpha))-\int_{(\sqrt{2}-1)z_0(\alpha)}^{z_0(\alpha)} \phi(x) \Phi(\sqrt{2}z_0(\alpha)-x)dx.
\end{equation*}
The expression is the same for both the threshold approach and taking the minimum, for the same reason as in the case of covariate selection. The magnitude of the size distortion is given in Figure \ref{fig:IVsize}(a). Empirical size is essentially double the nominal size, with the $p$-hacked size at 11\% when nominal size is 5\%.  

\begin{figure}[H]

\begin{center}
\caption{Size distortions and relative bias under $p$-hacking} 
\label{fig:IVsize}

        \includegraphics[width=0.43\textwidth]{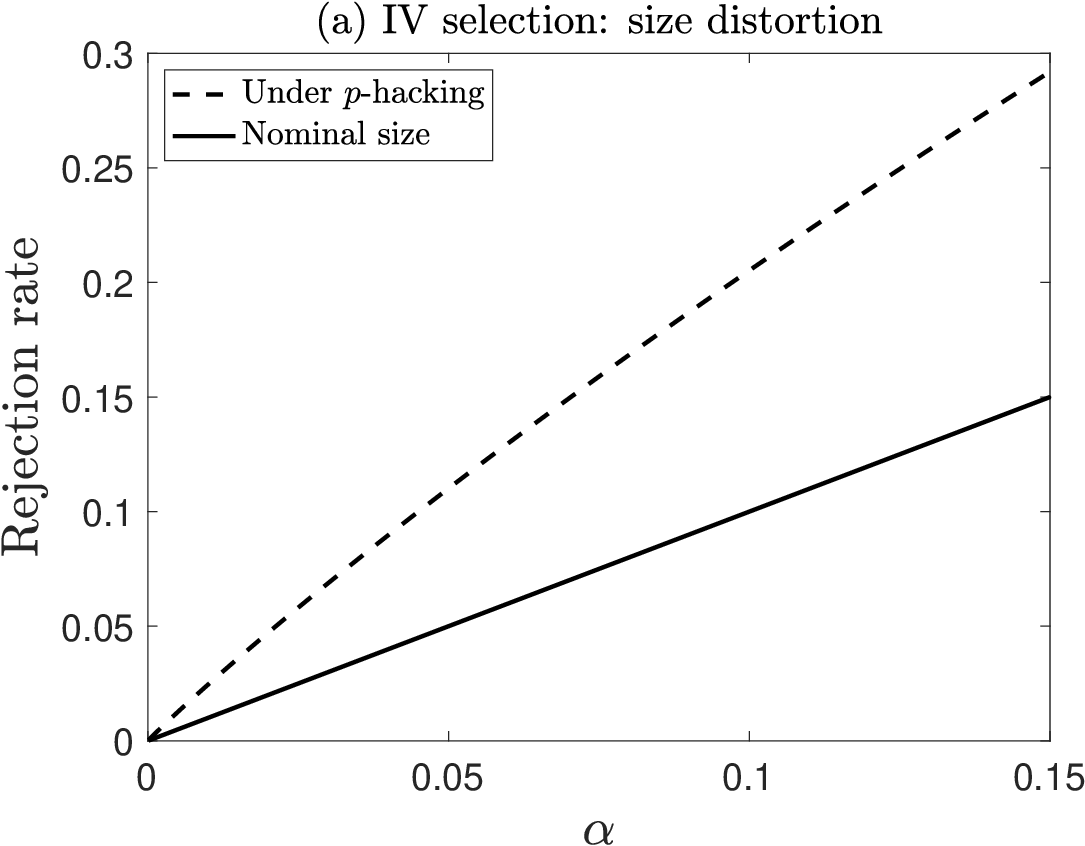}
\includegraphics[width=0.43\textwidth]{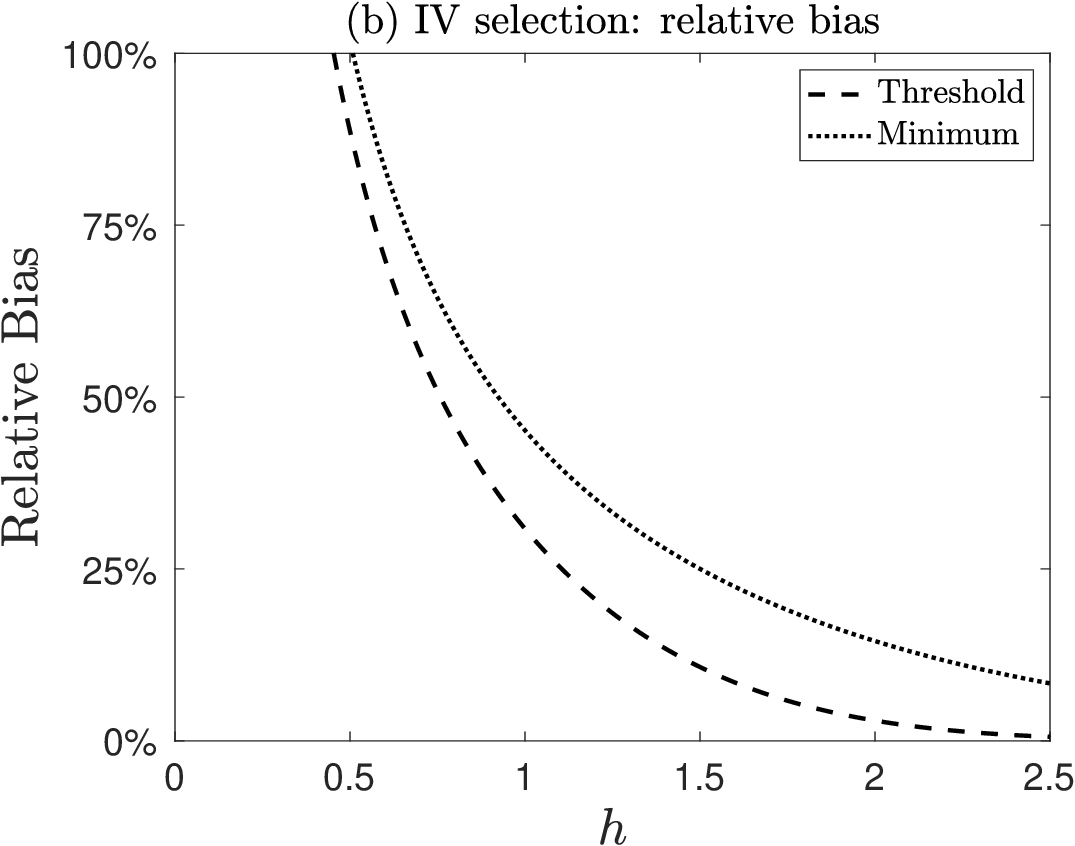}

\end{center}

\vspace{-2mm}

\footnotesize{\textit{Notes:} Figures show the rejection rate under $p$-hacking. Panel (a): rejection rate as a function of $\alpha$ for $h=0$. Panel (b): relative bias from $p$-hacking for different values of $h>0$, $\alpha=0.05$.}

\end{figure}

In terms of the bias induced by $p$-hacking, distributions over $h$ will induce distributions over the biases since the bias for any study depends on the true model. We report here the bias for different $h$ rather than choose a (non-degenerate) distribution. For the threshold and minimum approaches with $\alpha\le 1/2$, the (first-order) relative biases, $\mathcal{B}_2^t$ and $\mathcal{B}_2^m$, respectively,\footnote{Since the first moments of the IV estimators do not exist in just-identified cases \citep{kinal1980existence}, we define $\mathcal{B}_2^j$ to be the mean of the asymptotic distribution of $(\hat\beta^j_r-\beta)/\beta$, where $\hat\beta^j_r$ is the $p$-hacked estimate and $j\in\{m,t\}$.} are
\begin{eqnarray*}
  &\mathcal{B}_2^t =& \mathcal{B}_2^m -{\frac{h^{-1}}{\sqrt{2-\sqrt{2}}}}\phi\left(\sqrt{\frac{\sqrt{2}-1}{\sqrt{2}}}h\right)\Phi\left(\frac{h}{\sqrt{2-\sqrt{2}}}-\sqrt{4-2\sqrt{2}}z_0(\alpha) \right),\\
  &\mathcal{B}_2^m =& {\frac{h^{-1}}{\sqrt{2-\sqrt{2}}}}\phi\left(\sqrt{\frac{\sqrt{2}-1}{\sqrt{2}}}h\right)\Phi\left(\frac{h}{\sqrt{2-\sqrt{2}}}\right)+h^{-1}\sqrt{2} \phi(0) (1-\Phi(\sqrt{2}h)).
\end{eqnarray*}

Figure \ref{fig:IVsize}(b) shows the relative bias as a function of $h$ for both approaches to $p$-hacking. The relative bias decreases in $h$, as in the covariate selection example. It is higher for the minimum approach than for the threshold approach since taking the minimum results in the researcher being better able to find a smaller $p$-value.

\subsection{Variance Bandwidth Selection}
Variance bandwidth selection does not induce bias. We therefore focus on size distortions.
For size at $h=0$ and $p=\alpha$ we have 
\begin{eqnarray*}
G_0(\alpha) &=& \alpha+(1-\alpha)(H_N(0)-H_N(-(2\kappa)^{-1}))-\int_{-(2\kappa)^{-1}}^{0}\Phi(z_0 (\alpha)\omega(r)-h) \eta_N(r)dr 
\end{eqnarray*}
Figure \ref{fig:p_curves_example4size} shows that the size distortions through this example of $p$-hacking are quite modest. The reason is that for a reasonable sample size, the estimated first-order autocorrelation is very close to zero. Thus the estimated standard errors when an additional lag is included are very close to one, meaning that the two $t$-statistics are quite similar and highly correlated. This means that there is not much room to have size distortions due to this $p$-hacking. 
\begin{figure}[H]
     \begin{center}
     \caption[]{Size distortions under $p$-hacking}              \label{fig:p_curves_example4size}
\includegraphics[width=0.43\textwidth]{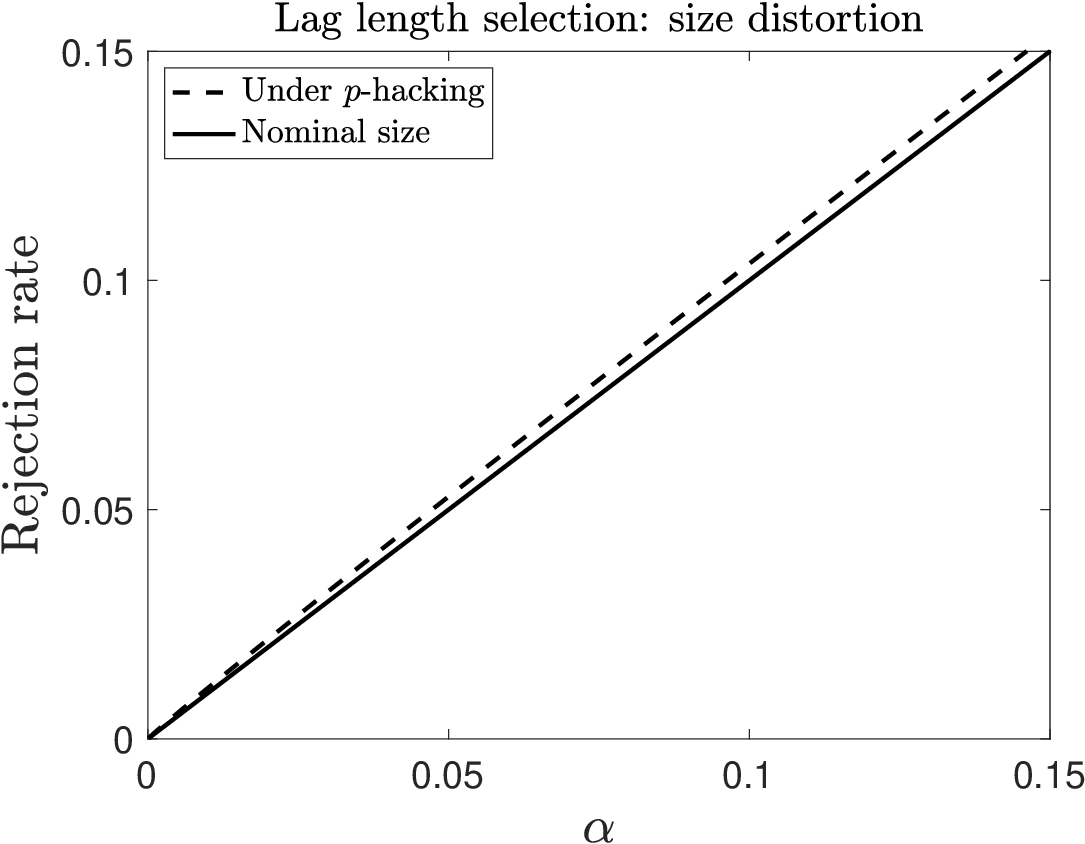} 
\end{center}

\footnotesize{\textit{Notes:} Figure shows rejection rate under $p$-hacking from lag length selection with $N=200$, $\kappa=1/2$, and $h=0$.}

\end{figure}

\section{Detailed Derivations Section \ref{sec:theory} and Appendix \ref{app:theory_bias_size_distortions}} 
\label{app:detailed_derivations}
\subsection{Selecting Control Variables in Linear Regression}
\label{app:derivation_example1}

\subsubsection{$p$-Curve under $p$-Hacking}

We denote by $\hat{\sigma}_j$ the standard error of the estimator of $\beta$ when using $Z_j$ as the control variable ($j=1,2$). Under our assumptions, because the variance of $U$ is known, we have
$$\hat{\sigma}^2_j=\frac{1}{1-\gamma^2},\quad j=1,2.$$
Therefore, the $t$-statistic for testing $H_0: \beta=0$ is distributed as follows
$$T_j = \frac{\sqrt{N}\hat{\beta}_j}{\hat{\sigma}_j}\overset{d}{=}{h} + \frac{W_{xu} - \gamma W_{z_ju}}{\sqrt{1-\gamma^2}},\quad  j=1,2,$$
where  
$$\begin{pmatrix}W_{xu}\\ W_{z_1u}\\W_{z_2u} \end{pmatrix}\sim \mathcal{N}\left(\begin{pmatrix}0\\ 0\\0 \end{pmatrix}, \begin{pmatrix}1&\gamma&\gamma\\ \gamma&1& \gamma^2\\ \gamma&\gamma^2&1 \end{pmatrix}\right).$$
Thus, conditional on $h$, 
$$\begin{pmatrix}T_{1}\\T_{2} \end{pmatrix}\sim \mathcal{N}\left(\begin{pmatrix}h\\ h \end{pmatrix}, \begin{pmatrix}1&\rho\\  \rho&1 \end{pmatrix}\right),$$
where the correlation is $\rho = 1 - \gamma^2$. As the control variables and $X_i$ become more correlated (larger $\gamma$), $\rho$ becomes smaller. 

The CDF of $P_r$ on $(0,1)$ for the threshold case is
\begin{eqnarray*}
G^t_h(p) &=& \Pr(P_r\le p)\\
&=& \Pr(P_1\le p\mid P_1\le\alpha)\Pr(P_1\le\alpha) \\
&&+ \Pr(\min\{P_1,P_2\}\le p\mid P_1>\alpha)\Pr(P_1>\alpha)\\
&=&\Pr(P_1\le \min\{p, \alpha\}) + (1-\Pr(P_1>p, P_2>p \mid P_1>\alpha))\Pr(P_1>\alpha)\\ 
&=&\Pr(T_1\ge z_0(\min\{p, \alpha\}))+\Pr(T_1<z_0(\alpha)) - \Pr(T_1<z_0(\max\{p,\alpha\}), T_2< z_0(p))\\
&=& 1 - \Phi(z_h(\min\{p, \alpha\}))+\Phi(z_h(\alpha)) - \int_{-\infty}^{z_h(p)}\int_{-\infty}^{z_h(\max\{p,\alpha\})}f(x,y; \rho)dxdy,
\end{eqnarray*}
where $f(x,y;\rho) = \frac{1}{2\pi \sqrt{1-\rho^2}}\exp\{-\frac{x^2-2\rho xy +y^2}{2(1-\rho^2)}\}$.

For $p\in (0, \alpha)$, differentiating $G^t_h(p)$ with respect to $p$ yields: 
\begin{eqnarray*}
\frac{dG^t_h(p)}{dp} &=& \frac{dz_h(p)}{dp}\left[- \phi(z_h(p)) - \int_{-\infty}^{z_h(\alpha)}f(z_h(p),y; \rho)dy\right]\\
&=&\frac{ \phi(z_h(p))\left[1 +  \Phi\left(\frac{z_h(\alpha) - \rho z_h(p)}{\sqrt{1-\rho^2}}\right)\right]}{\phi(z_0(p))}.
\end{eqnarray*}
For $p\in (\alpha, 1)$, the derivative is
\begin{eqnarray*}
\frac{dG^t_h(p)}{dp} &=& \frac{2\phi(z_h(p))\Phi\left(\frac{z_h(p) - \rho z_h(p)}{\sqrt{1-\rho^2}}\right)}{\phi(z_0(p))}.
\end{eqnarray*}
It follows that the PDF of $p$-values is
$$g^t_1(p) = \int_{\mathcal{H}}\frac{dG^t_h(p)}{dp}d\Pi(h) =\int_{\mathcal{H}}\frac{\phi(z_h(p))\Upsilon^t_1(p;\alpha, h, \rho)}{\phi(z_0(p))}d\Pi(h),$$
where $\Upsilon^t_1(p;\alpha, h, \rho)=1_{\{p\le\alpha\}}\left[1 +  \Phi\left(\frac{z_h(\alpha) - \rho z_h(p)}{\sqrt{1-\rho^2}}\right)\right]+1_{\{p>\alpha\}}2\Phi\left(\frac{z_h(p) - \rho z_h(p)}{\sqrt{1-\rho^2}}\right)$. The final expression follows because $\phi(z_h(p))/\phi(z_0(p))=\exp\left(hz_0(p) - \frac{h^2}{2}\right)$.

For the case when the researchers report the minimum of two $p$-values, $P_r = \min\{P_1, P_2\}$, we have
\begin{eqnarray*}
G^m_h(p) &=& \Pr(P_r\le p)\\
&=& \Pr(P_1\le p, P_1\le P_2) + \Pr(P_2\le p, P_2< P_1) \\
&=&\Pr(T_1\ge z_0(p), T_1\ge T_2) + \Pr(T_2\ge z_0(p), T_2> T_1)\\
&=&2 \Pr(\xi_1\ge z_h(p), \xi_1\ge \xi_2)\\ 
&=&2\int_{-\infty}^{z_h(p)}\int_{z_h(p)}^{\infty}f(x,y;\rho)dxdy + 2\int_{z_h(p)}^{\infty}\int_y^{\infty}f(x,y;\rho)dxdy,
\end{eqnarray*}
where $\xi_j=T_j-h$, $j=1,2$.

The derivative of $G^m_h(p)$ with respect to $p$ is
\begin{eqnarray*}
\frac{dG^m_h(p)}{dp} &=& 2\frac{dz_h(p)}{dp}\left[\int_{z_h(p)}^{\infty}f(x,z_h(p);\rho)dx - \int_{-\infty}^{z_h(p)}f(z_h(p),y;\rho)dy - \int_{z_h(p)}^{\infty}f(x,z_h(p);\rho)dx \right]\\
&=&2\frac{\phi(z_h(p))}{\phi(z_0(p))}\Phi\left(z_h(p)\sqrt{\frac{1-\rho}{1+\rho}}\right).
\end{eqnarray*}
Therefore, the PDF of $p$-values is 
\begin{equation*}
g_1^{m}(p) =2\int_\mathcal{H}\exp\left(hz_0(p)-\frac{h^2}{2}\right)\Phi\left(z_h(p)\sqrt{\frac{1-\rho}{1+\rho}}\right)d\Pi(h).
\end{equation*}

\subsubsection{Bias of the $p$-Hacked Estimator}
Fix $h$ for now. We have $\hat\beta_r^t=\hat{\beta_1}+(\hat{\beta_2}-\hat{\beta_1})1_{\{T_2>T_1,T_1<z_0(\alpha)\}}$. The bias in the $p$-hacked estimator is given by 
\begin{eqnarray*}
E\hat\beta_r^t-\beta &=& E \left[ \hat{\beta_1}-\beta+(\hat{\beta_2}-\hat{\beta_1})1_{\{T_2>T_1,T_1<z_0(\alpha)\}} \right]\\
&=& E \left[ (\hat{\beta_2}-\hat{\beta_1})1_{\{T_2>T_1,T_1<z_0(\alpha)\}} \right]\\
&=&\frac{1}{\sqrt{N}\sqrt{\rho}}E \left[ (\xi_{2}-\xi_{1})1_{\{\xi_2>\xi_1,\xi_1<z_h(\alpha)\}} \right]\\
&=& \frac{1}{\sqrt{N}\sqrt{\rho}}E \left[ V1_{\{V>0,\xi_1<z_h(\alpha)\}} \right],
\end{eqnarray*}
where $V=\xi_2-\xi_1 \sim \mathcal{N}(0,2(1-\rho))$ and $E[V\xi_1]=-(1-\rho)$. 
Now
\begin{eqnarray*}
E \left[ V1_{\{V>0_1,\xi_1<z_h(\alpha)\}}\right] &=& \int_{0}^{\infty} \int_{-\infty}^{z_h(\alpha)} vf_{V,\xi_1}(v,x)dxdv\\
&=& \int_{0}^{\infty} \int_{-\infty}^{z_h(\alpha)} vf_{\xi_1|V}(x|v)f_V(v)dxdv\\
&=&\int_{0}^{\infty}  v f_V(v) \left( \int_{-\infty}^{z_h(\alpha)}f_{\xi_1|V}(x|v)dx \right) dv,
\end{eqnarray*}
and 
\begin{eqnarray*}
\int_{-\infty}^{z_h(\alpha)}f_{\xi_1|V}(x|v)dx &=& \Pr \left( \xi_1 <z_h(\alpha) \mid V=v\right)\\
&=& \Pr \left( \frac{\xi_1+v/2}{\sqrt{\frac{1+\rho}{2}}} < \frac{z_h(\alpha)+v/2}{\sqrt{\frac{1+\rho}{2}}}  \right)\\
&=& \Phi \left( \frac{z_h(\alpha)+v/2}{\sqrt{\frac{1+\rho}{2}}} \right).
\end{eqnarray*}
So now we have 
$$ E\hat\beta_r^t-\beta = \frac{1}{\sqrt{N}\sqrt{\rho}}\int_{0}^{\infty}  v \Phi \left( \frac{z_h(\alpha)+v/2}{\sqrt{\frac{1+\rho}{2}}} \right)  f_V(v)  dv, $$
and the final expression follows by direct integration.

For the minimum approach, $\hat\beta^m_r=\hat{\beta_1}+(\hat{\beta_2}-\hat{\beta_1})1_{\{T_2>T_1\}}$. The bias in the $p$-hacked estimator is given by 
\begin{eqnarray*}
E\hat\beta_r^m-\beta &=& E \left[ \hat{\beta_1}-\beta +(\hat{\beta_2}-\hat{\beta_1})1_{\{T_2>T_1\}} \right]\\
&=& \frac{1}{\sqrt{N}\sqrt{\rho}}E \left( V1_{\{V>0\}} \right).
\end{eqnarray*}
Now $ E \left[ V1_{\{V>0\}}\right] = \sqrt{2(1-\rho)} \phi(0) $ so 
$$ E\hat\beta^m_r-\beta = \frac{1}{\sqrt{N}} \sqrt{\frac{2(1-\rho)}{\rho}} \phi(0). $$
It follows that $E\hat\beta^t_r \le E\hat\beta^m_r$ because 
\begin{eqnarray*}
EV 1_{\{V>0, \xi_1 < z_h(\alpha)\}}   &=& \int_0^{\infty} vf_V(v) \left( \int_{-\infty}^{z_h(\alpha)} f_{\xi_1|V}(x|v)dx \right) dv\\
&\le& \int_0^{\infty} vf_V(v)dv.  
\end{eqnarray*}

The expressions for $\mathcal{B}^{t}_{1}$ and $\mathcal{B}^{m}_{1}$ can be obtained by dividing the biases by $\beta$.

\subsection{Selecting amongst Instruments in IV Regression}
\label{app:derivation_example2}

\subsubsection{$p$-Curve under $p$-Hacking}

Since $Z_1$ and $Z_2$ are assumed to be uncorrelated, the IV estimator with 2 instruments is
$$\hat{\beta}_{12} = \beta +\left[\frac{(\sum X_iZ_{1i})^2}{\sum Z_{1i}^2}+ \frac{(\sum X_iZ_{2i})^2}{\sum Z_{2i}^2}+o_P(1)\right]^{-1}\left[\frac{\sum X_iZ_{1i}\sum U_iZ_{1i}}{\sum Z_{1i}^2}+\frac{\sum X_iZ_{2i}\sum U_iZ_{2i}}{\sum Z_{2i}^2}\right]$$
with asymptotic variance ${1}/{2\gamma^2}$. Therefore, the $t$-statistic is
$$T_{12} = \sqrt{N}\hat\beta_{12}\sqrt{2}|\gamma|\overset{d}\to \sqrt{2}h + \frac{W_1+W_2}{\sqrt{2}},$$
where $(W_1, W_2)'\sim \mathcal{N}(0,I_2)$. %The asymptotic variance is ${1}/{2\gamma^2}$ and the t-statistic is
%$$T_{12}\to^d\sqrt{2}h+\frac{W_1+W_2}{\sqrt{2}}.$$
With one instrument,
$$\hat\beta_j = \beta+\frac{\sum Z_{ji}U_i}{\sum X_iZ_{ji}}, \quad j=1,2,$$
and the asymptotic variance is ${1}/{\gamma^2}$. Moreover, we have that
$$T_j=\sqrt{N}\hat\beta_j|\gamma|\overset{d}\to h + W_j.$$
Note that $T_{12}$ is asymptotically equivalent to $\frac{T_1+T_2}{\sqrt{2}}$. Define $P_1$ as the $p$-value using $T_{12}$, and $P_2$ and $P_3$ arise from using $T_1$ and $T_2$ respectively.

For now, we fix $h$. Define $z_h(p) = z_0(p) - {h}$ and $D_h(p) = \sqrt{2}z_{\sqrt{2}h}(p)$, where $z_0(p) = \Phi^{-1}(1-p)$. The (asymptotic) CDF of $P_r$ on $(0, 1/2]$ is 
\begin{eqnarray*}
G^t_h(p) &=& \Pr(P_r\le p)\\
&=& \Pr(P_1 \le p\mid P_1\le\alpha)\Pr(P_1\le\alpha) \\
&&+ \Pr(\min\{P_1,P_2,P_3\}\le p\mid P_1>\alpha)\Pr(P_1>\alpha)\\
&=&\Pr(P_{1}\le \min\{p,\alpha\}) + \Pr(P_{1}>\alpha)\\
&& - \Pr(P_1>p, P_2>p, P_3>p|P_{1}>\alpha)\Pr(P_{1}>\alpha)\\
&=&\Pr(T_{12}\ge z_0(\min\{p,\alpha\})) + \Pr(T_{12}<z_0(\alpha))\\
&&- \Pr(T_1< z_0(p), T_2< z_0(p), T_{12}<z_0(\max\{p,\alpha\}))\\
%&=& (1 - \Phi(z_h(p)))(1 - \Phi(z_h(\alpha))) + P(W_1< z_h(\alpha), W_2\ge z_h(p))\\
&=& 1 - \Phi(z_{\sqrt{2}h}(\min\{p,\alpha\})) + \Phi(z_{\sqrt{2}h}(\alpha))- \Phi(z_h(p))\Phi(D_h(\max\{p,\alpha\})-z_h(p))\\
&& - \int_{D_h(\max\{p,\alpha\}) - z_h(p)}^{z_h(p)}\phi(x)\Phi(D_h(\max\{p,\alpha\})-x)dx.
\end{eqnarray*}
The last equality follows because for $p\le 1/2$ we have $2z_h(p)>D_h(\max\{p,\alpha\})$, $\Pr(T_1< z_0(p), T_2< z_0(p), T_{12}<z_0(\max\{p,\alpha\}))=\Pr(W_1< z_h(p), W_2<z_h(p), W_1+W_2< D_h(\max\{p,\alpha\}))$ and 
\begin{eqnarray*}
 \Pr(W_1< a, W_2<a, W_1+W_2< b)&=&  \int_{-\infty}^{a}\int_{-\infty}^{b-a}\phi(x)\phi(y)dxdy\\
 &&+\int_{b-a}^{a}\int_{-\infty}^{b - x}\phi(x)\phi(y)dydx\\
 &=&\Phi(a)\Phi(b-a)+\int_{b - a}^{a}\phi(x)\Phi(b-x)dx
 \end{eqnarray*}
for $2a>b$.

The derivative of $G^t_h(p)$ with respect to $p$ on $(0,\alpha)$ is 
\begin{eqnarray*}
\frac{dG^t_h(p)}{dp} &=& \frac{\phi(z_{\sqrt{2}h}(p)) + 2C(z_h(p), D_h(\alpha))}{\phi(z_0(p))},\quad p\in (0, \alpha),
\end{eqnarray*}
where $C(x,y) := \phi(x)\Phi(y-x)$.

For $p\in(\alpha, 1/2)$ the derivative is
\begin{equation*}
    \frac{dG^t_h(p)}{dp} = \frac{\phi(z_{\sqrt{2}h}(p))(1-2\Phi((1-\sqrt{2})z_0(p)))+2C(z_h(p), D_h(p)) }{\phi(z_0(p))}, \quad p\in (\alpha, 1/2).
\end{equation*}

For $p>1/2$, we have $2z_h(p)<D_h(\max\{p, \alpha\})$, and similar arguments yield 
\begin{eqnarray*}
G^t_h(p) &=& 1 - \Pr(W_1< z_h(p), W_2<z_h(p), W_1+W_2< D_h(\max\{p, \alpha\}))\\
&=& 1 - \int_{-\infty}^{z_h(p)}\int_{-\infty}^{z_h(p)}\phi(x)\phi(y)dxdy\\
&=&1 - \Phi^2(z_h(p))
\end{eqnarray*}
and
\begin{equation*}
    \frac{dG^t(p)}{dp}=\frac{2\phi(z_h(p)\Phi(z_h(p))}{\phi(z_0(p)}, \quad p>1/2.
\end{equation*}
Since $g^t(p) = \int_{\mathcal{H}}\frac{G^t_h(p)}{dp}d\Pi(h)$, we have
\begin{equation*}
    g^t_2(p) = \int_{\mathcal{H}}\exp\left(hz_0(p) - \frac{h^2}{2}\right)\Upsilon^t_2(p; \alpha, h)d\Pi(h),
\end{equation*}
where $\zeta(p) = 1 - 2\Phi((1-\sqrt{2})z_0(p))$, $cv_1(p)=z_0(p)$ and

\begin{equation*}
  \Upsilon^t_2(p; \alpha, h)=\begin{cases}
    \frac{\phi(z_{\sqrt{2}h}(p))}{\phi(z_h(p))} + 2\Phi(D_h(\alpha)-z_h(p)), & \text{if $0<p\le\alpha$},\\
    \frac{\phi(z_{\sqrt{2}h}(p))}{\phi(z_h(p))}\zeta(p) + 2\Phi(D_h(p)-z_h(p)), & \text{if $\alpha<p\le1/2$},\\
     2\Phi(z_h(p)), & \text{if $1/2<p<1$}.\\
  \end{cases}
\end{equation*}

The $p$-curve for the minimum approach arises as a corollary of the above results.

\subsubsection{Bias of the $p$-Hacked Estimator}

For the bias, consider the estimator for the causal effect given by the threshold approach. The $p$-hacked estimator is given by 
\begin{equation*}
    \hat{\beta}^t_r = \hat{\beta}_{12}  + (\hat{\beta_1}-\hat{\beta}_{12})1_{\{A_N\}} + (\hat{\beta}_1-\hat{\beta}_{12})1_{\{B_N\}},
\end{equation*}
where we define sets $A_N=\{T_{12}<z_0(\alpha),T_1>T_{12},T_1>T_2\}$ and $B_N=\{T_{12}<z_0(\alpha),T_2>T_{12},T_2>T_1\}$. Using the same standard 2SLS results used above to generate the approximate distributions of the $t$-statistics and by the continuous mapping theorem, 
$$\sqrt{N}|\gamma|(\hat\beta^t_r-\beta) \overset{d}\to \xi^t:= \frac{W_1+W_2}{2} + \frac{W_1-W_2}{2}1_{\{A\}} + \frac{W_2-W_1}{2}1_{\{B\}},$$
where $A = \{\sqrt{2}h+\frac{W_1+W_2}{\sqrt{2}}<z_0(\alpha), h+W_1> \sqrt{2}h+\frac{W_1+W_2}{\sqrt{2}}, W_1>W_2\}$ and $B = \{\sqrt{2}h+\frac{W_1+W_2}{\sqrt{2}}<z_0(\alpha), h+W_2> \sqrt{2}h+\frac{W_1+W_2}{\sqrt{2}}, W_2>W_1\}$. 
By the symmetry of the problem,
$$E[\xi^t]= E[(W_1-W_2)1_{\{A\}}]=E[W_11_{\{A\}}]-E[W_21_{\{A\}}].$$
To compute the expectations, we need to calculate $P(A|W_1)$ and $P(A|W_2)$. Note that
\begin{equation*}
  A=\begin{cases}
    \sqrt{2}h+\frac{W_1+W_2}{\sqrt{2}}<z_0(\alpha) & \\
    h+W_1> \sqrt{2}h+\frac{W_1+W_2}{\sqrt{2}} & \\
    W_1>W_2 & 
  \end{cases}
  \Leftrightarrow
    \begin{cases}
   W_2<\sqrt{2}z_0(\alpha)-2h-W_1 & \\
    W_2< (\sqrt{2}-1)(W_1-\sqrt{2}h)& \\
    W_2<W_1 & 
  \end{cases}
    \Leftrightarrow
    \begin{cases}
   W_1<\sqrt{2}z_0(\alpha)-2h-W_2 & \\
    W_1> \sqrt{2}h+\frac{W_2}{\sqrt{2}-1}& \\
    W_1>W_2
  \end{cases}
  \eqno{(*)}
\end{equation*}

From equation $(*)$, we have
$$\Pr(A|W_1)=\Phi(\min\{W_1, \sqrt{2}z_0(\alpha)-2h-W_1, (\sqrt{2}-1)(W_1-\sqrt{2}h)\}),$$
where
\begin{equation*}
  \min\{W_1, \sqrt{2}z_0(\alpha)-2h-W_1, (\sqrt{2}-1)(W_1-\sqrt{2}h)\}=\begin{cases}
    W_1, \text{ if $W_1<-h$,} & \\
    \sqrt{2}z_0(\alpha)-2h-W_1, \text{ if $W_1>z_0(\alpha)-h$,}& \\
    (\sqrt{2}-1)(W_1-\sqrt{2}h), \text{ if $W_1\in (-h, z_0(\alpha)-h)$.} & 
  \end{cases}
 \end{equation*}
Also, the last system of inequalities in equation $(*)$ is equivalent to 
  $$W_1\in  \begin{cases}
   \left(W_2, \sqrt{2}z_0(\alpha)-2h-W_2\right), \text{ if $W_2<-h$,} & \\
    \left(\sqrt{2}h+\frac{W_2}{\sqrt{2}-1}, \sqrt{2}z_0(\alpha)-2h-W_2\right), \text{ if $W_2\ge -h$.}& 
  \end{cases}$$
Note that $\left(\sqrt{2}h+\frac{W_2}{\sqrt{2}-1}, \sqrt{2}z_0(\alpha)-2h-W_2\right)$ is non-empty when $W_2\le (\sqrt{2}-1)z_0(\alpha)-h$, and $ \left(W_2, \sqrt{2}z_0(\alpha)-2h-W_2\right)$ is non-empty for all values of $W_2<-h$. Therefore,
\begin{eqnarray*}
\Pr(A|W_2)&=&\Pr\left(W_1\in \left(W_2, \sqrt{2}z_0(\alpha)-2h-W_2\right)|W_2\right) \cdot1_{\left\{W_2<-h\right\}}\\
&&+\Pr\left(W_1\in \left(\sqrt{2}h+\frac{W_2}{\sqrt{2}-1}, \sqrt{2}z_0(\alpha)-2h-W_2\right)|W_2\right) \cdot1_{\left\{W_2\ge-h\right\}}\\
&=&\left[\Phi\left(\sqrt{2}z_0(\alpha)-2h-W_2\right) - \Phi(W_2)\right]\cdot1_{\left\{W_2<-h\right\}}\\
&&+\left[\Phi\left(\sqrt{2}z_0(\alpha)-2h-W_2\right) - \Phi\left(\sqrt{2}h+\frac{W_2}{\sqrt{2}-1}\right)\right]\cdot 1_{\left\{-h\le W_2\le(\sqrt{2}-1)z_0(\alpha)-h   \right\}}\\
&=&\Phi\left(\sqrt{2}z_0(\alpha)-2h-W_2\right) \cdot 1_{\left\{W_2\le(\sqrt{2}-1)z_0(\alpha)-h   \right\}} -\Phi(W_2)\cdot1_{\left\{W_2<-h\right\}} \\
&&- \Phi\left(\sqrt{2}h+\frac{W_2}{\sqrt{2}-1}\right)\cdot 1_{\left\{-h\le W_2\le(\sqrt{2}-1)z_0(\alpha)-h   \right\}}
\end{eqnarray*}

To finish the calculation of expectations, we will need to calculate several integrals of the form 
$$\int_{L}^{U}w\phi(w)\Phi(aw+b)dw.$$

The following result is therefore useful.
\begin{lemma}\label{lem:auxil}
\begin{eqnarray*}
\int_{L}^{U}w\phi(w)\Phi(aw+b)dw &=& \Phi(aL+b)\phi(L)-\Phi(aU+b)\phi(U)+\frac{a}{\sqrt{1+a^2}}\phi\left(\frac{b}{\sqrt{1+a^2}}\right)\\&&\times\left[\Phi\left(\sqrt{1+a^2}U+\frac{ab}{\sqrt{1+a^2}}\right)-\Phi\left(\sqrt{1+a^2}L+\frac{ab}{\sqrt{1+a^2}}\right)\right]
\end{eqnarray*}
\end{lemma}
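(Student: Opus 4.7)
The plan is to prove the identity by a single integration by parts, using the fact that $w\phi(w)=-\phi'(w)$, followed by a completing-the-square step to evaluate the residual integral as a rescaled normal density.

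First I would rewrite the integrand via $w\phi(w)=-\phi'(w)$ and integrate by parts, taking $u=\Phi(aw+b)$ and $dv=\phi'(w)\,dw$. Since $u'=a\phi(aw+b)$, this gives
\begin{equation*}
\int_{L}^{U}w\phi(w)\Phi(aw+b)\,dw
= \Phi(aL+b)\phi(L)-\Phi(aU+b)\phi(U)+a\int_{L}^{U}\phi(w)\phi(aw+b)\,dw,
\end{equation*}
which already produces the two boundary terms in the claim. It remains to show that the last integral equals
\begin{equation*}
\frac{\phi(b/\sqrt{1+a^2})}{\sqrt{1+a^2}}\left[\Phi\!\left(\sqrt{1+a^2}\,U+\tfrac{ab}{\sqrt{1+a^2}}\right)-\Phi\!\left(\sqrt{1+a^2}\,L+\tfrac{ab}{\sqrt{1+a^2}}\right)\right].
\end{equation*}

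For the remaining step I would multiply out the exponents in $\phi(w)\phi(aw+b)$ and complete the square in $w$. Writing $c=\sqrt{1+a^2}$, the combined exponent $-\tfrac12[w^2+(aw+b)^2]$ becomes $-\tfrac{c^2}{2}\bigl(w+ab/c^2\bigr)^2-\tfrac{b^2}{2c^2}$, so that $\phi(w)\phi(aw+b)=\phi(b/c)\,\phi(cw+ab/c)$. Then the substitution $u=cw+ab/c$ (so $du=c\,dw$) turns $\int_L^U \phi(w)\phi(aw+b)\,dw$ into $c^{-1}\phi(b/c)\bigl[\Phi(cU+ab/c)-\Phi(cL+ab/c)\bigr]$. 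Plugging this back yields exactly the stated formula.

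The only place requiring care is the completing-the-square computation and tracking the normalizing constant $1/c$ that arises from the change of variables; everything else is a routine application of integration by parts and the fact that $\Phi'=\phi$. I do not anticipate any genuine obstacle.
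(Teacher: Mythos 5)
Your proposal is correct and follows essentially the same route as the paper: integrate by parts using $w\phi(w)=-\phi'(w)$ to produce the two boundary terms plus $a\int_L^U\phi(w)\phi(aw+b)\,dw$, then evaluate that Gaussian product via completing the square (the identity $\phi(w)\phi(aw+b)=\phi\bigl(b/\sqrt{1+a^2}\bigr)\phi\bigl(\sqrt{1+a^2}\,w+ab/\sqrt{1+a^2}\bigr)$) and a change of variables contributing the $1/\sqrt{1+a^2}$ factor. Your completing-the-square computation and the tracking of the normalizing constant are both accurate, so no gap remains.
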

\begin{proof}
\begin{eqnarray*}
\int_{L}^{U}w\phi(w)\Phi(aw+b)dw &=& -\int_{L}^{U}\Phi(aw+b)d\phi(w) \\
&=&\Phi(aL+b)\phi(L)-\Phi(aU+b)\phi(U) + \int_{L}^{U}\phi(w)d\Phi(aw+b)\\
&=&\Phi(aL+b)\phi(L)-\Phi(aU+b)\phi(U) + a\int_{L}^{U}\phi(w)\phi(aw+b)dw\\
&=&\Phi(aL+b)\phi(L)-\Phi(aU+b)\phi(U) + a\phi\left(\frac{b}{\sqrt{1+a^2}}\right)J,
\end{eqnarray*}
where
\begin{eqnarray*}J:&=&\int_{L}^{U}\phi\left(\sqrt{1+a^2}w+\frac{ab}{\sqrt{1+a^2}}\right)\\ &=& \frac{1}{\sqrt{1+a^2}}\left[\Phi\left(\sqrt{1+a^2}U+\frac{ab}{\sqrt{1+a^2}}\right)-\Phi\left(\sqrt{1+a^2}L+\frac{ab}{\sqrt{1+a^2}}\right)\right].
\end{eqnarray*}
\end{proof}

Finally, consider
\begin{eqnarray*}
E[W_11_{\{A\}}]&=&E[W_1P(A|W_1)]\\
&=&E[W_1\Phi(\min\{W_1, \sqrt{2}z_0(\alpha)-2h-W_1, (\sqrt{2}-1)(W_1-\sqrt{2}h)\})] \\%\text{ (by eq. (1))}\\
&=&\int_{-\infty}^{-h}w\phi(w)\Phi(w)dw\\
&&+\int_{-h}^{z_0(\alpha)-h}w\phi(w)\Phi((\sqrt{2}-1)w - \sqrt{2}(\sqrt{2}-1)h)dw\\% \text{ (by eq. (2))}\\
&&+\int_{z_0(\alpha)-h}^{+\infty}w\phi(w)\Phi(\sqrt{2}z_0(\alpha)-2h-w)dw\\
&=&-(1-\Phi(h))\phi(h)+\frac{1}{\sqrt{2}}\phi(0)(1-\Phi(\sqrt{2}h))\\
&&+(1-\Phi(h))\phi(h) - \Phi((\sqrt{2}-1)z_0(\alpha)-h)\phi(z_0(\alpha)-h)\\
&&+{\frac{\sqrt{2}-1}{\sqrt{4-2\sqrt{2}}}}\phi\left({\frac{\sqrt{2}-1}{\sqrt{2-\sqrt{2}}}}h\right)\left[\Phi\left(\frac{h}{\sqrt{2-\sqrt{2}}}\right) -\Phi\left(\frac{h}{\sqrt{2-\sqrt{2}}}-\sqrt{4-2\sqrt{2}}z_0(\alpha) \right)\right]\\
&&+\Phi((\sqrt{2}-1)z_0(\alpha)-h)\phi(z_0(\alpha)-h)-\frac{1}{\sqrt{2}}\phi(z_0(\alpha)-\sqrt{2}h)[1 - \Phi((\sqrt{2}-1)z_0(\alpha))]\\
&=&\frac{1}{\sqrt{2}}\phi(0)(1-\Phi(\sqrt{2}h)) -\frac{1}{\sqrt{2}}\phi(z_0(\alpha)-\sqrt{2}h)[1 - \Phi((\sqrt{2}-1)z_0(\alpha))]\\
&&+{\frac{\sqrt{2}-1}{\sqrt{4-2\sqrt{2}}}}\phi\left({\frac{\sqrt{2}-1}{\sqrt{2-\sqrt{2}}}}h\right)\left[\Phi\left(\frac{h}{\sqrt{2-\sqrt{2}}}\right) -\Phi\left(\frac{h}{\sqrt{2-\sqrt{2}}}-\sqrt{4-2\sqrt{2}}z_0(\alpha) \right)\right],
\end{eqnarray*}
where the fourth equality follows from the direct application of Lemma \ref{lem:auxil}. Moreover, we have that
\begin{eqnarray*}
E[W_21_{\{A\}}]&=&E[W_2\Pr(A|W_2)]\\
&=&E\left[W_2\Phi\left(\sqrt{2}z_0(\alpha)-2h-W_2\right) \cdot 1_{\left\{W_2\le(\sqrt{2}-1)z_0(\alpha)-h \right\}}\right]\\
&&-E\left[W_2\Phi(W_2)\cdot1_{\left\{W_2<-h\right\}}\right]\\
&&-E\left[W_2\Phi\left(\sqrt{2}h+\frac{W_2}{\sqrt{2}-1}\right)\cdot 1_{\left\{-h\le W_2\le(\sqrt{2}-1)z_0(\alpha)-h   \right\}}\right]\\
&=&\int_{-\infty}^{(\sqrt{2}-1)z_0(\alpha)-h}w\phi(w)\Phi\left(\sqrt{2}z_0(\alpha)-2h-w\right)dw\\
&&-\int_{-\infty}^{-h}w\phi(w)\Phi(w)dw\\
&&-\int_{-h}^{(\sqrt{2}-1)z_0(\alpha)-h}w\phi(w)\Phi\left(\sqrt{2}h+\frac{w}{\sqrt{2}-1}\right)dw\\
&=& -\Phi(z_0(\alpha)-h)\phi((\sqrt{2}-1)z_0(\alpha)-h)-\frac{1}{\sqrt{2}}\phi(z_0(\alpha)-\sqrt{2}h)(1-\Phi((\sqrt{2}-1)z_0(\alpha)))\\
&&+\left(1 - \Phi\left(h\right)\right)\phi\left(h\right)-\frac{1}{\sqrt{2}}\phi(0)\left(1-\Phi\left(\sqrt{2}h\right)\right)\\
&&-(1-\Phi\left(h\right))\phi\left(h\right)+\Phi(z_0(\alpha)-h)\phi((\sqrt{2}-1)z_0(\alpha)-h)\\
&&-\frac{1}{\sqrt{4-2\sqrt{2}}}\phi\left(\sqrt{\frac{\sqrt{2}-1}{\sqrt{2}}}h\right)\left[\Phi\left(\frac{h}{\sqrt{2-\sqrt{2}}}\right) -\Phi\left(\frac{h}{\sqrt{2-\sqrt{2}}}-\sqrt{4-2\sqrt{2}}z_0(\alpha) \right)\right]\\
&=& -\frac{1}{\sqrt{2}}\phi(z_0(\alpha)-\sqrt{2}h)(1-\Phi((\sqrt{2}-1)z_0(\alpha)))-\frac{1}{\sqrt{2}}\phi(0)\left(1-\Phi\left(\sqrt{2}h\right)\right)\\
&&-\frac{1}{\sqrt{4-2\sqrt{2}}}\phi\left(\sqrt{\frac{\sqrt{2}-1}{\sqrt{2}}}h\right)\left[\Phi\left(\frac{h}{\sqrt{2-\sqrt{2}}}\right) -\Phi\left(\frac{h}{\sqrt{2-\sqrt{2}}}-\sqrt{4-2\sqrt{2}}z_0(\alpha) \right)\right].
\end{eqnarray*}
Combining these results gives us the first-order bias of $\hat\beta^t_r$ as $|\gamma|^{-1}E[\xi^t]$, where
\begin{eqnarray*}
E[\xi^t] &=& {\frac{1}{\sqrt{2-\sqrt{2}}}}\phi\left(\sqrt{\frac{\sqrt{2}-1}{\sqrt{2}}}h\right)\left(\Phi\left(\frac{h}{\sqrt{2-\sqrt{2}}}\right) -\Phi\left(\frac{h}{\sqrt{2-\sqrt{2}}}-\sqrt{4-2\sqrt{2}}z_0(\alpha) \right)\right)\\
&&+\sqrt{2}\phi(0)\left(1-\Phi\left(\sqrt{2}h\right)\right).
\end{eqnarray*}

Finally, for the minimum approach, the $p$-hacked estimator is given by 
\begin{equation*}
    \hat{\beta}^m_r = \hat{\beta}_{12}  + (\hat{\beta_1}-\hat{\beta}_{12})1_{\{A_N\}} + (\hat{\beta}_1-\hat{\beta}_{12})1_{\{B_N\}},
\end{equation*}
where now we define sets $A_N=\{T_{1}>T_{12},T_1>T_{2}\}$ and $B_N=\{T_2>T_{12},T_2>T_1\}$. 

Thus
$$\sqrt{N}|\gamma|(\hat\beta^m_r-\beta) \overset{d}\to \xi^m:= \frac{W_1+W_2}{2} + \frac{W_1-W_2}{2}1_{\{A\}} + \frac{W_2-W_1}{2}1_{\{B\}},$$
where $A = \{W_1>\max\{W_2, \sqrt{2}h+W_2/(\sqrt{2}-1)\}\}$ and $B = \{W_2>\max\{W_1, \sqrt{2}h+W_1/(\sqrt{2}-1)\}\}$.

By the symmetry of the problem,
$$E[\xi^m] = E[(W_1-W_2)1_{\{A\}}]=E[W_11_{\{A\}}]-E[W_21_{\{A\}}].$$

Note that $\Pr(A|W_2) = 1 - \Phi(\max\{W_2, \sqrt{2}h+W_2/(\sqrt{2}-1)\})$ and $\Pr(A|W_1) = \Phi(\min\{W_1, W_1(\sqrt{2}-1)-\sqrt{2}(\sqrt{2}-1)h\})$. Therefore,
\begin{eqnarray*}
E[W_11_{\{A\}}]&=&E[W_1\Pr(A|W_1)]\\
&=&E\left[W_1\Phi(\min\{W_1, W_1(\sqrt{2}-1)-\sqrt{2}(\sqrt{2}-1)h\})\right]\\
&=&\frac{1}{\sqrt{2}}\phi(0)(1-\Phi(\sqrt{2}h))+\frac{\sqrt{2}-1}{\sqrt{4-2\sqrt{2}}}\phi\left(\sqrt{\frac{\sqrt{2}-1}{\sqrt{2}}}h\right)\Phi\left(\frac{h}{\sqrt{2-\sqrt{2}}}\right)
\end{eqnarray*}
and
\begin{eqnarray*}
E[W_21_{\{A\}}]&=&E[W_2\Pr(A|W_2)]\\
&=&-E\left[W_2\Phi(\max\{W_2, \sqrt{2}h+W_2/(\sqrt{2}-1)\})\right]\\
&=&-\frac{1}{\sqrt{2}}\phi(0)(1-\Phi(\sqrt{2}h))-\frac{1}{\sqrt{4-2\sqrt{2}}}\phi\left(\sqrt{\frac{\sqrt{2}-1}{\sqrt{2}}}h\right)\Phi\left(\frac{h}{\sqrt{2-\sqrt{2}}}\right)
\end{eqnarray*}
Putting these together gives the asymptotic bias of $\hat\beta^m_r$ as $|\gamma|^{-1}E[\xi^m]$, where
$$E[\xi^m]=\frac{1}{\sqrt{2-\sqrt{2}}}\phi\left(\sqrt{\frac{\sqrt{2}-1}{\sqrt{2}}}h\right)\Phi\left(\frac{h}{\sqrt{2-\sqrt{2}}}\right) +\sqrt{2}\phi(0)(1-\Phi(\sqrt{2}h)).$$

The expressions for $\mathcal{B}^{t}_{2}$ and $\mathcal{B}^{m}_{2}$ can be obtained by dividing the biases by $\beta$.

\subsection{Selecting across Datasets}
\label{app:derivation_example3}

With the assumption that the datasets are independent, we have that $(T_1,\dots,T_K)' \sim \mathcal{N}(h \iota_K , I_K)$, where $\iota_K$ is a $K\times1$ vector of ones. The assumption that each dataset tests for the same effect mathematically appears as $h$ being the mean for all $t$-statistics.  For $K=2$, the definition for $P_r$ in this example is the same as that in Appendix \ref{app:derivation_example1} with $\rho=0$. Hence the result for the thresholding case is $g_1^t$ evaluated at $\rho=0$ and for the minimum case is $g_1^m$ also evaluated at $\rho=0$.

For general $K$, note that for the minimum case 
\begin{eqnarray*}
G^m(p;K) &=& \Pr(\max(T_1,T_2,\dots,T_K) \ge z_0(p))\\
&=& 1 - \Pr(T_1 \le z_0(\alpha),T_2 \le z_0(\alpha),\dots,T_K \le z_0(\alpha)) \\
&=& 1 - (\Phi(z_h(p))^K.
\end{eqnarray*}

Setting $p=\alpha$ gives the size after $p$-hacking for a nominal value of $\alpha$. Differentiating with respect to $p$ generates the $p$-curve 
\begin{eqnarray*}
g_4^m(p;K) &=& \frac{d}{d p}\left( 1 - (\Phi(z_h(p))^K\right) \\
&=& -K \Phi(z_h(p))^{K-1} \frac{d}{d p} \Phi(z_h(p))  \\
&=& K \Phi(z_h(p))^{K-1} \frac{ \phi(z_h(p))}{\phi(z_0(p))}.
\end{eqnarray*}
The expression in the text follows directly from integrating over $h$.

\subsection{Variance Bandwidth Selection}
\label{app:derivation_example4}

Note that $T_1\sim \mathcal{N}(h,1)$ and $\hat\rho$ are independent.\footnote{The independence follows from the fact that $\hat\rho$ is a function of $V:=(U_2-\bar{U}, \dots, U_N-\bar{U})'$, $T_1 = h+\sqrt{N}\bar{U}$ and that $V$ and $\bar{U}$ are independent.} Also note that $T_2\ge T_1$ happens in the following cases: (i) $T_1 \ge 0$ and $0<\omega^2(\hat\rho)\le 1$, equivalent to $-(2\kappa)^{-1}<\hat{\rho}\le 0$; (ii) $T_1 < 0$ and $\hat{\rho}>0$. The researchers  report the $p$-value corresponding to $T_1$ if the result is significant at level $\alpha$ or if $\hat\omega^2<0$, otherwise they report the $p$-value associated with the largest $t$-statistic. Fixing $h$, we have 
\begin{eqnarray*}
G^t_h(p) &=& \Pr(P_r\le p)\\
&=& \Pr(T_1\ge z_0(p), T_1\ge z_0(\alpha)) \\
&& +\Pr(T_1\ge z_0(p), T_1< z_0(\alpha), -\infty<\hat\rho\le -(2\kappa)^{-1}) \\
&&+ \Pr(T_1 \ge z_0(p), T_1\ge 0, T_1< z_0(\alpha), \hat\rho > 0)\\
&&+ \Pr(T_2 \ge z_0(p), T_1\ge 0, T_1< z_0(\alpha), -(2\kappa)^{-1}<\hat\rho \le 0)\\
&&+ \Pr(T_2 \ge z_0(p), T_1\le 0, T_1 < z_0(\alpha), \hat\rho > 0)\\
&&+ \Pr(T_1 \ge z_0(p), T_1\le 0, T_1 < z_0(\alpha), -(2\kappa)^{-1}<\hat\rho \le 0).
\end{eqnarray*}

We can rewrite these expressions using the independence of $T_1$ and $\hat\rho$. For $p \le \alpha\le 1/2$, this is 
\begin{eqnarray*}
G^t_h(p) &=& 1- \Phi(z_h(p))+ \int_{-(2\kappa)^{-1}}^{l(p)} (\Phi(z_h(\alpha)) - \Phi(z_0(p)\omega(r)-h)) \eta_N(r)dr.  \\
\end{eqnarray*}
The last term follows since $\Pr(T_2 \ge z_0(p) ,0 \le T_1\le z_0(\alpha), -(2\kappa)^{-1}<\hat{\rho} \le 0)$ can be written as $\Pr(z_0(p)\omega(\hat\rho) \le T_1\le z_0(\alpha), -(2\kappa)^{-1}<\hat{\rho} \le l(p))$ and 
$$ l(p) = \frac{1}{2 \kappa} \left( \left(\frac{z_0(\alpha)}{z_0(p)} \right)^2-1\right), $$ which is the largest value for $ \hat{\rho}$ at each $p\le \alpha$ for which the interval for $T_1$ is nonempty.

For $\alpha < p \le 1/2$, we have
\begin{eqnarray*}
G^t_h(p) &=& 1-\Phi(z_h(p))(1-H_N(0)+H_N(-(2\kappa)^{-1}))-\int_{-(2\kappa)^{-1}}^{0}\Phi(z_0 (p)\omega(r)-h) \eta_N(r)dr . 
\end{eqnarray*}
For $\alpha<p$ and $p > 1/2$, we obtain 
\begin{eqnarray*}
G^t_h(p) &=& 1-\Phi(z_h(p))H_N(0)- \int_{0}^{\infty} \Phi(z_0(p)\omega(r)-h) \eta_N(r)dr  .
\end{eqnarray*}

Differentiating with respect to $p$ and integrating over the distribution of $h$ gives the density 
\begin{equation*}
    g^t_3(p) = \int_{\mathcal{H}}\exp\left(h z_0(p) - \frac{h^2}{2}\right) \Upsilon^t_4(p; \alpha, h, \kappa)d\Pi(h),
\end{equation*}
where  
\begin{equation*}
  \Upsilon^t_3= \begin{cases}
    1+\frac{1}{\phi(z_h(p))}\int_{-(2\kappa)^{-1}}^{l(p)}\omega(r)\phi(z_0(p)\omega(r)-h)\eta_N(r)dr, & \text{if $0<p\le\alpha$},\\ 
       1-H_N(0)+H_N(-(2\kappa)^{-1})+\frac{1}{\phi(z_h(p))}\int_{-(2\kappa)^{-1}}^{0}\omega(r)\phi(z_0(p)\omega(r)-h)\eta_N(r)dr, & \text{if $\alpha<p\le 1/2$},\\
    H_N(0)+\frac{1}{\phi(z_h(p))}\int_{0}^{\infty}\omega(r)\phi(z_0(p)\omega(r)-h)\eta_N(r)dr, & \text{if $1/2<p<1$}.
    \end{cases}
\end{equation*}

The derivations for the minimum $p$-value approach are analogous and presented below. Note that
\begin{eqnarray*}
G^m_h(p) &=& \Pr(P_r\le p)\\
&=&\Pr(T_1\ge z_0(p), -\infty<\hat\rho\le -(2\kappa)^{-1}) \\
&&+ \Pr(T_1 \ge z_0(p), T_1\ge 0, \hat\rho > 0)\\
&&+ \Pr(T_2 \ge z_0(p), T_1\ge 0, -(2\kappa)^{-1}<\hat\rho \le 0)\\
&&+ \Pr(T_2 \ge z_0(p), T_1\le 0, \hat\rho > 0)\\
&&+ \Pr(T_1 \ge z_0(p), T_1\le 0, -(2\kappa)^{-1}<\hat\rho \le 0)
\end{eqnarray*}
For $p\le 1/2$, we have
\begin{eqnarray*}
G^m_h(p) &=& H_N(0) - H_N(-(2\kappa)^{-1})+(1-\Phi(z_h(p)))(1-H_N(0)+H_N(-(2\kappa)^{-1}))\\
&&- \int_{-(2\kappa)^{-1}}^{0} \Phi(z_0(p)\omega(r)-h) \eta_N(r)dr  
\end{eqnarray*}
and, for $p>1/2$, we have
\begin{eqnarray*}
G^m_h(p) &=& 1-H_N(0) +(1-\Phi(z_h(p)))H_N(0)- \int_{0}^{\infty} \Phi(z_0(p)\omega(r)-h) \eta_N(r)dr. 
\end{eqnarray*}
Differentiating with respect to $p$ and integrating over the distribution of $h$ gives the density 
\begin{equation*}
    g_3^m(p) = \int_{\mathcal{H}}\exp\left(h z_0(p) - \frac{h^2}{2}\right) \Upsilon^m_3(p; \alpha, h)d\Pi(h),
\end{equation*}
where  
\begin{equation*}
  \Upsilon^m_3= \begin{cases}
       1-H_N(0)+H_N(-(2\kappa)^{-1})+\frac{1}{\phi(z_h(p))}\int_{-(2\kappa)^{-1}}^{0}\omega(r)\phi(z_0(p)\omega(r)-h)\eta_N(r)dr, & \text{if $0<p\le 1/2$},\\
    H_N(0)+\frac{1}{\phi(z_h(p))}\int_{0}^{\infty}\omega(r)\phi(z_0(p)\omega(r)-h)\eta_N(r)dr, & \text{if $1/2<p<1$}.
    \end{cases}
\end{equation*}

\section{Numerical $p$-Curves for Two-sided Tests} 
\label{app:p_curves_two_sided}

In this section, we show numerically computed $p$-curves based on two-sided tests corresponding to theoretical results based on one-sided tests in Section \ref{sec:theory} and Appendix \ref{app:selecting_across_datasets}.

\begin{figure}[H]
\begin{center}
\caption{$p$-Curves from covariate selection with two-sided tests} 
\label{fig:p_curves_example1_2s}
        \includegraphics[width=0.325\textwidth]{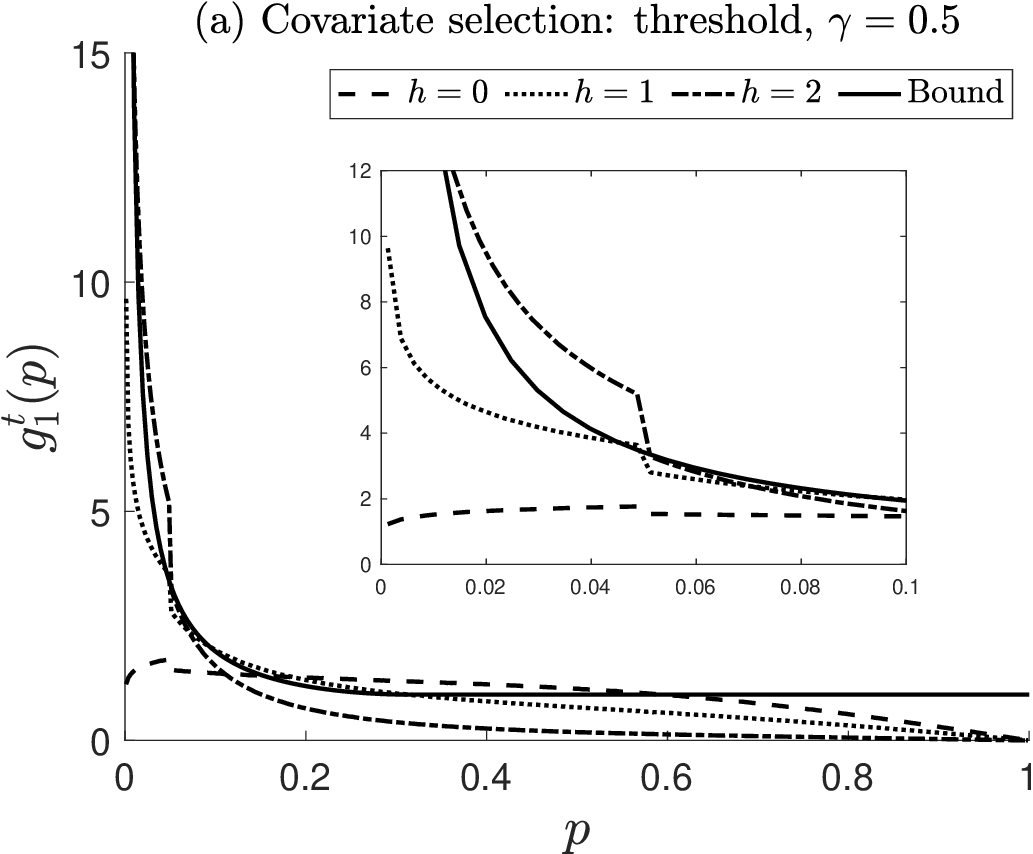}
\includegraphics[width=0.325\textwidth]{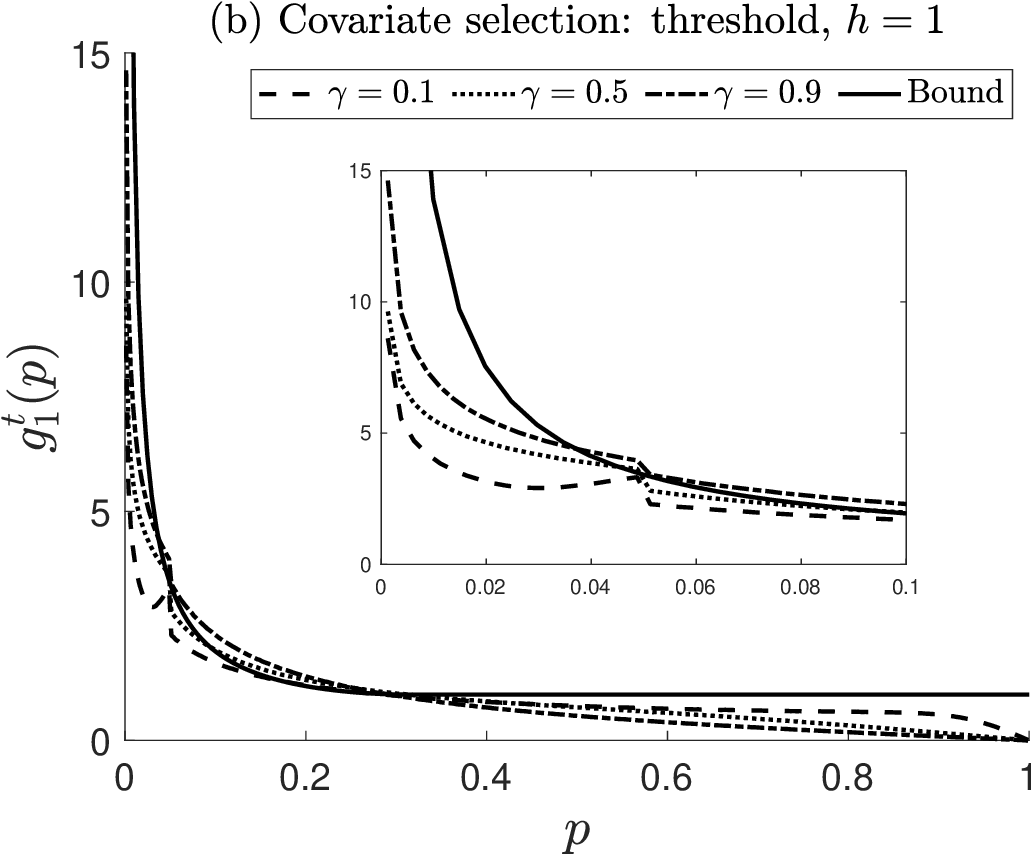}
 \includegraphics[width=0.325\textwidth]{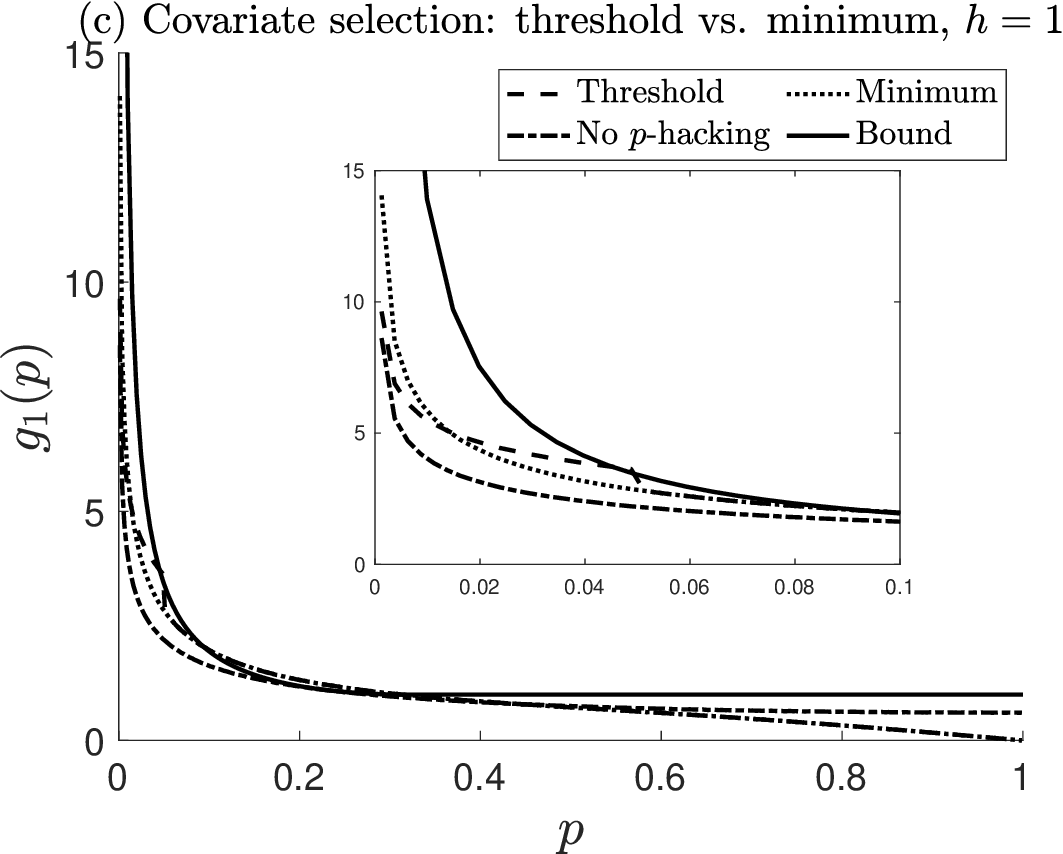}
\end{center}
\vspace{-2mm}
\footnotesize{\textit{Notes:} Figures show $p$-curves from covariate selection with two-sided tests. In Panel (a), we set $\gamma=0.5$ and vary $h$. In Panel (b), we set $h=1$ and vary $\gamma$. In Panel (c), we compare the threshold and minimum approach.}
\end{figure}

\begin{figure}[H]
\begin{center}
              \caption{$p$-Curves from IV selection with two-sided tests} 
              \label{fig:p_curves_example2_2s}
        \includegraphics[width=0.325\textwidth]{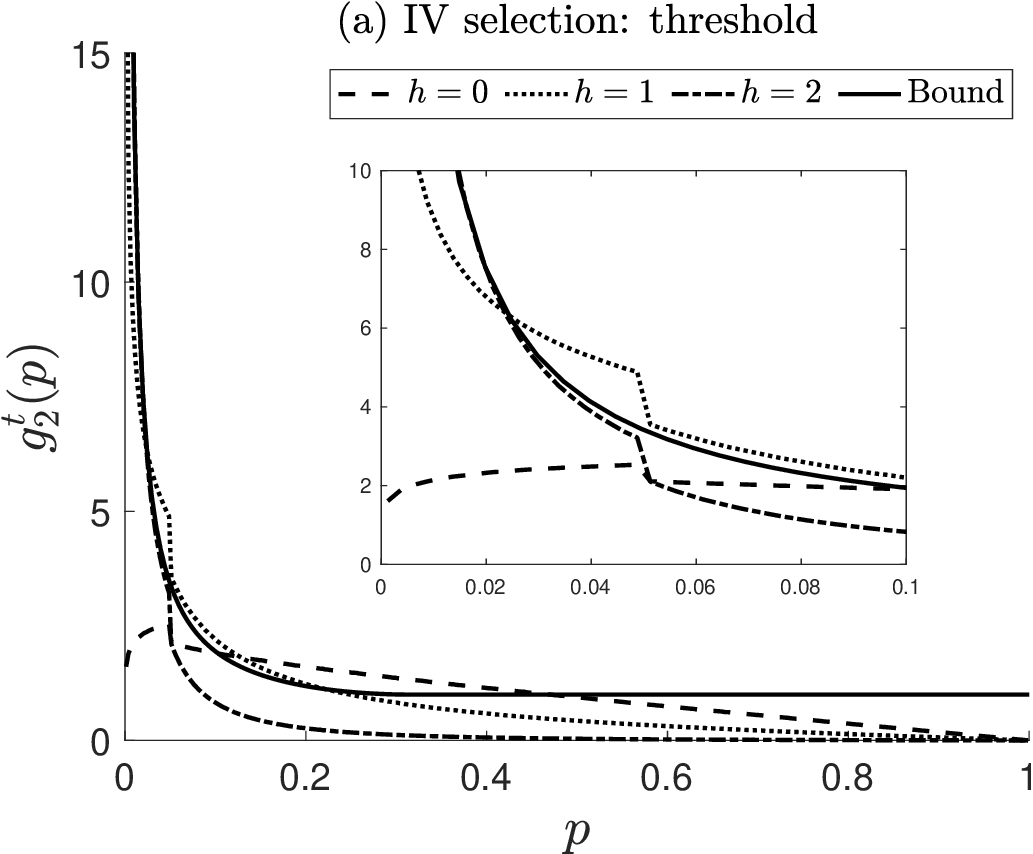}
\includegraphics[width=0.325\textwidth]{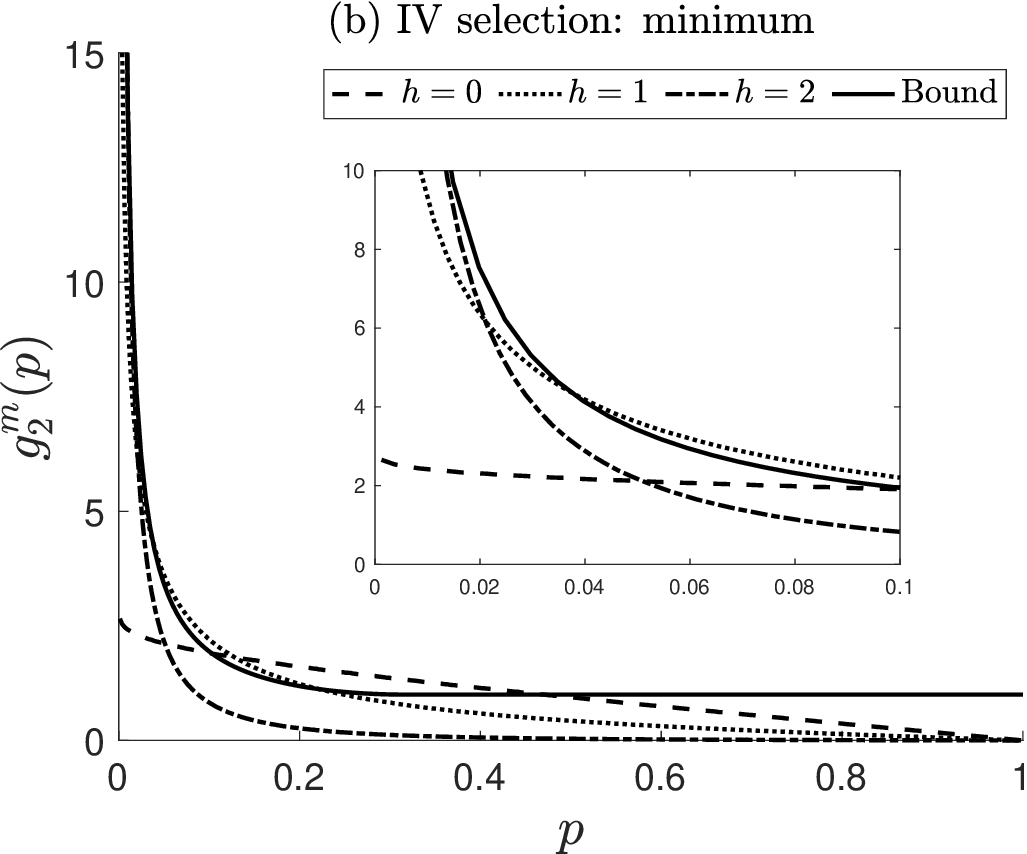}
         \includegraphics[width=0.325\textwidth]{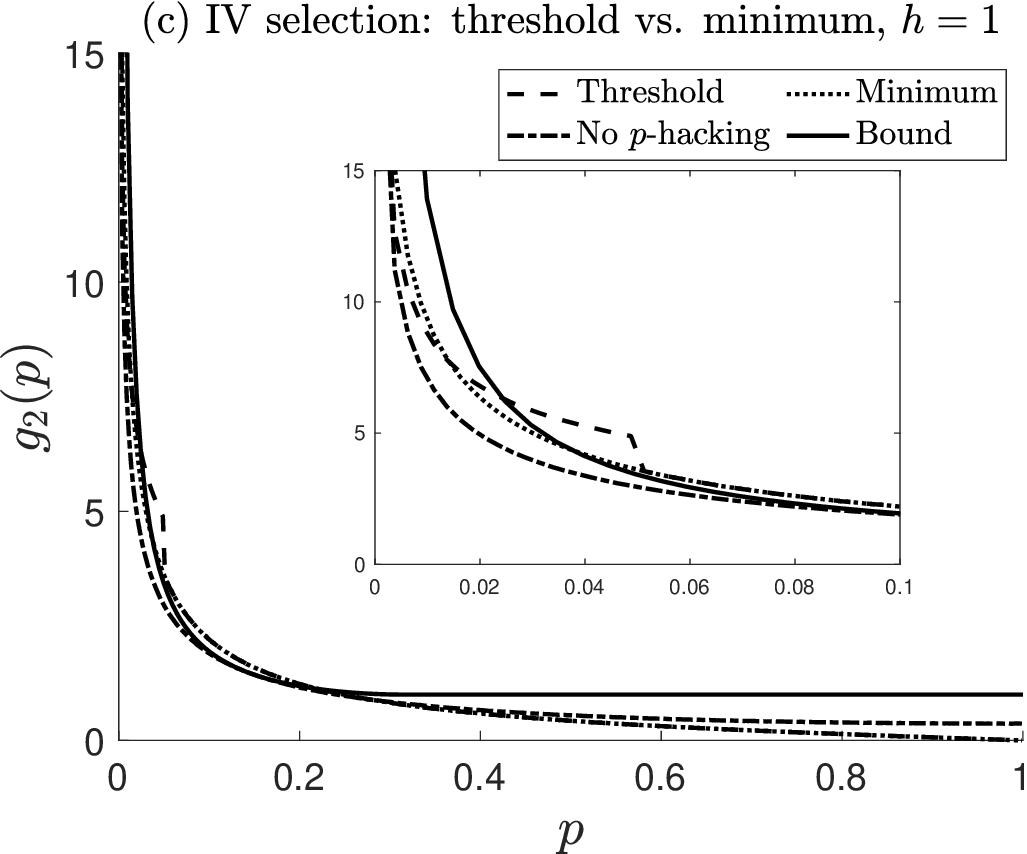}
\end{center}
\vspace{-2mm}
\footnotesize{\textit{Notes:} Figure shows the $p$-curves from $p$-hacking based on IV selection with two-sided tests. Panels (a) and (b) show the results for different values of $h$. Panel (c) compares the threshold and minimum approach for $h=1$.}

\end{figure}

\begin{figure}[H]

\begin{center}
\caption{$p$-Curves from dataset selection based on the threshold approach with two-sided tests}
        \label{fig:p_curves_example3_2s}

         \includegraphics[width=0.43\textwidth]{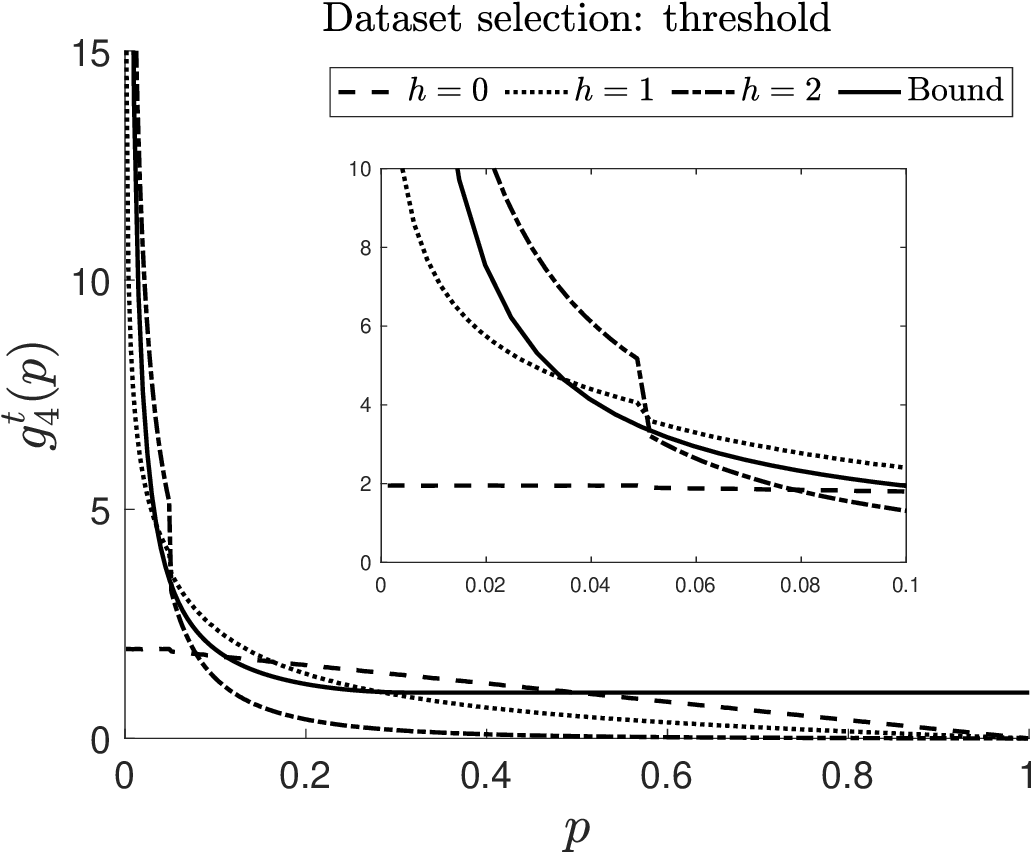}

\end{center}

\vspace{-2mm}

\footnotesize{\textit{Notes:} Figure shows the $p$-curves from dataset selection with two-sided tests based on the threshold approach with $\gamma=0.5$.}

\end{figure}

\begin{figure}[H]
 \begin{center}
\caption{$p$-Curves from dataset selection based on the minimum approach with two-sided tests} 
\label{fig:p_curves_example3bnd_2s}
 \includegraphics[width=0.43\textwidth]{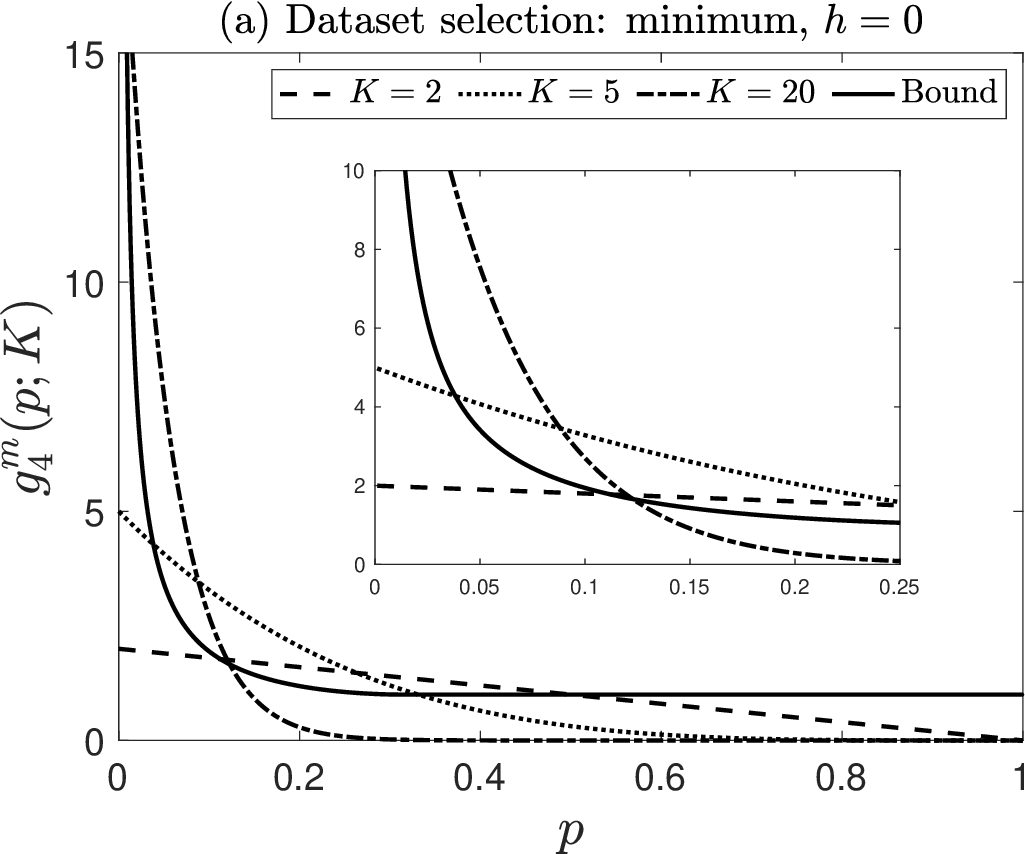}
\includegraphics[width=0.43\textwidth]{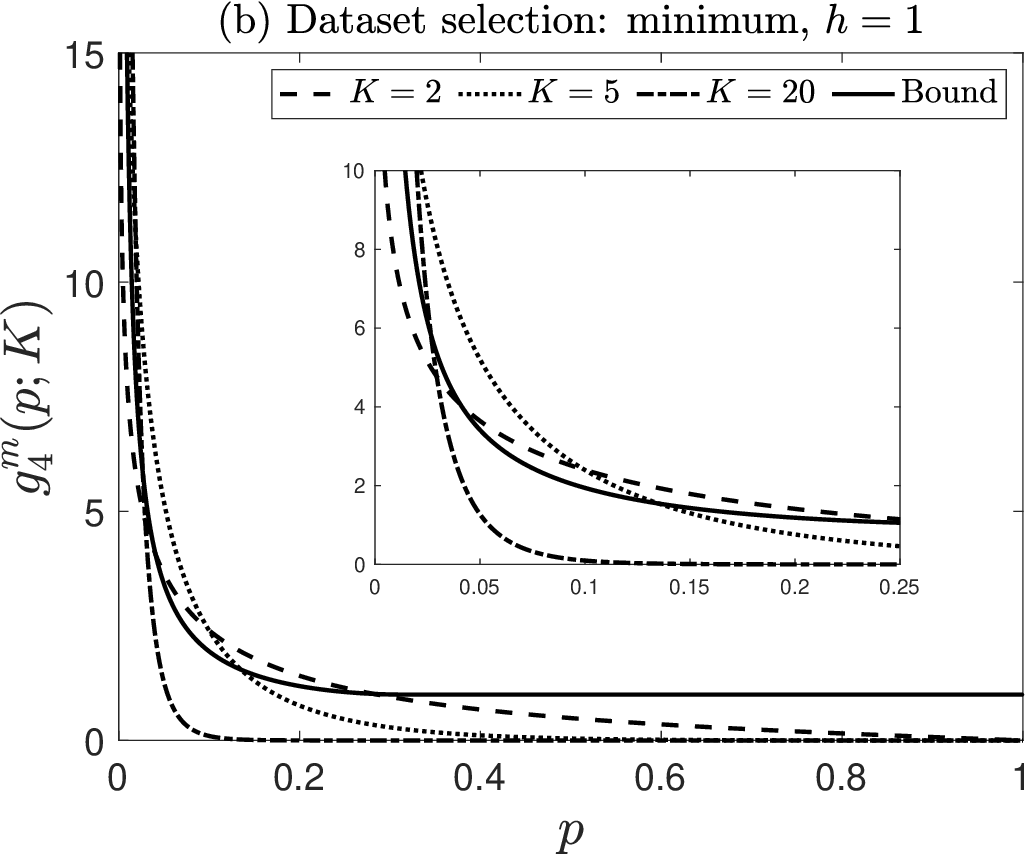}

\end{center}
\vspace{-2mm}

\footnotesize{\textit{Notes:} Figures show the $p$-curves from dataset selection with two-sided tests based on the minimum approach. Panel (a): $h = 0$. Panel (b): $h = 1$.}
             
\end{figure}

\begin{figure}[H]
    \begin{center}

\caption{$p$-Curves from lag length selection with two-sided tests} 
\label{fig:pcurves_lag_2s}
     
 \includegraphics[width=0.43\textwidth]{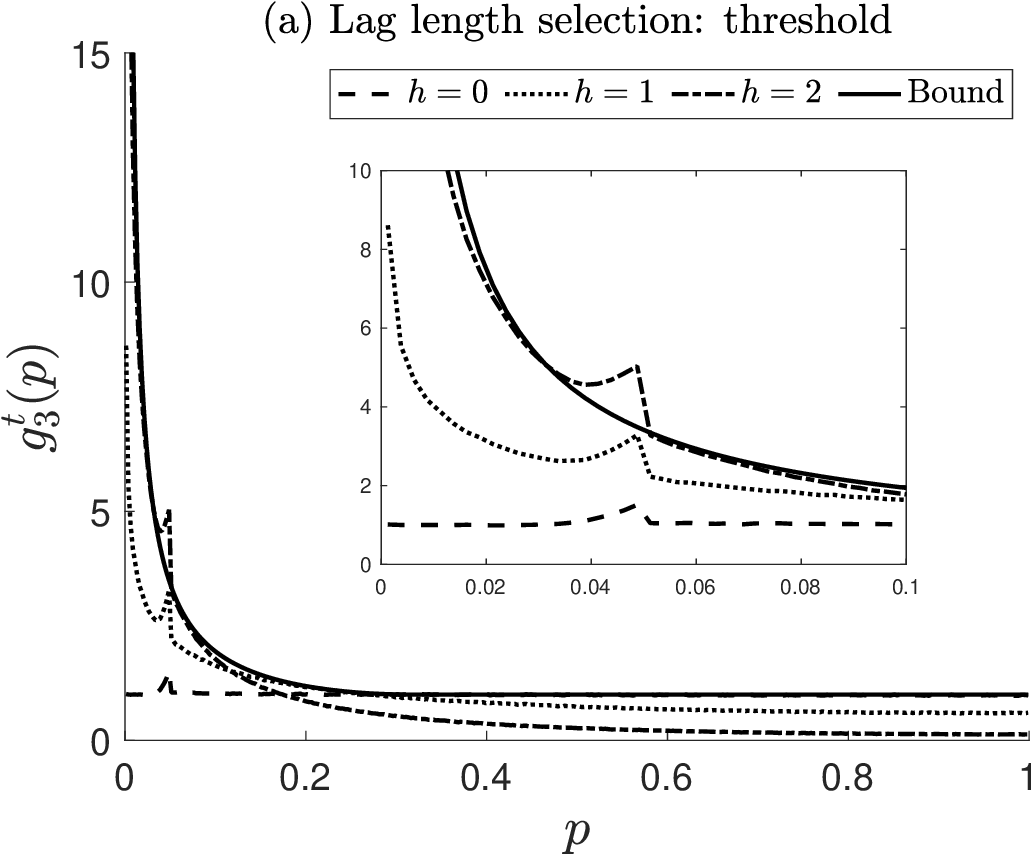}
\includegraphics[width=0.43\textwidth]{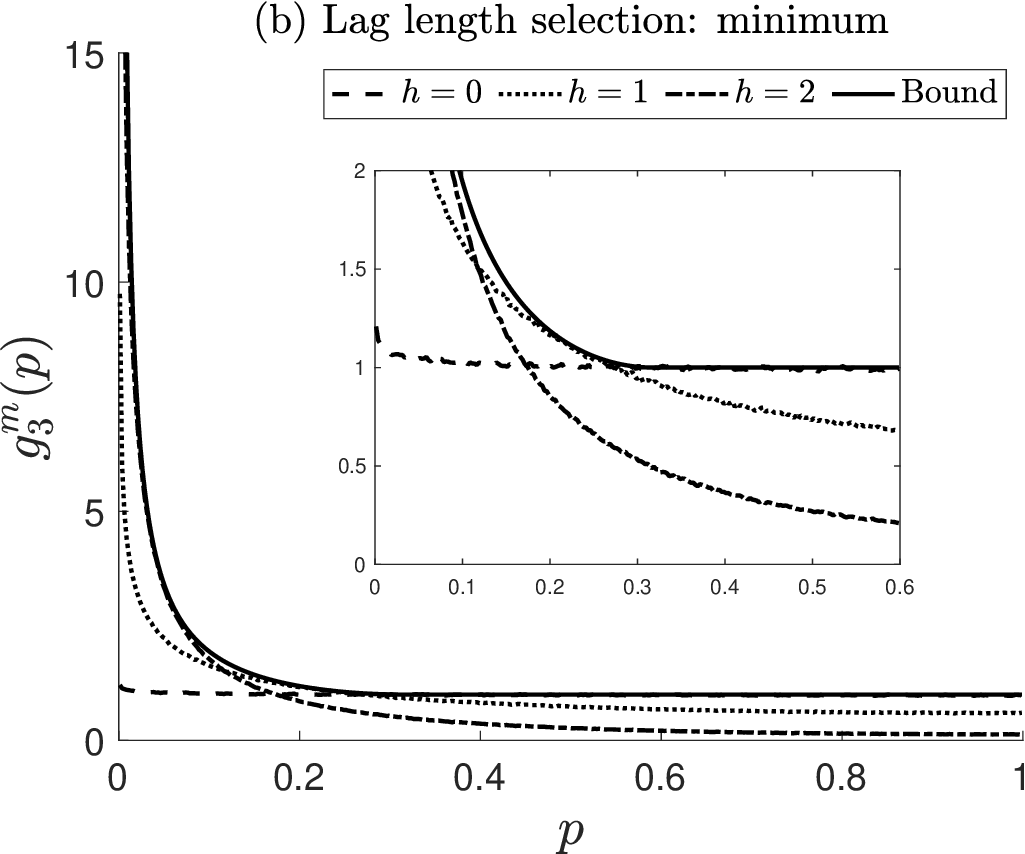}
\end{center}
\vspace{-2mm}

\footnotesize{\textit{Notes:} Figures show the $p$-curves from lag length selection with two-sided tests, $N=200$, and $\kappa=1/2$. Panel (a) shows the $p$-curves based on the threshold approach. Panel (b) shows the $p$-curves based on the minimum approach.}

\end{figure}

%%%%%%%%%%%%%%%%%%%%%%%%%%%%%%%%%%%%%%%%%%%

\section{Null and Alternative Distributions MC Study}
\label{app: histograms}
\enlargethispage{1\baselineskip}

\begin{figure}[H]

\caption{Covariate selection: null and p-hacked distributions}
				\label{fig:hists_covar_selection}

\begin{center}

\begin{center}
\vspace{-12mm}
$$K=3$$

\vspace{-5mm}
{\small{Two-sided, general-to-specific}}

\includegraphics[width=0.24\textwidth]{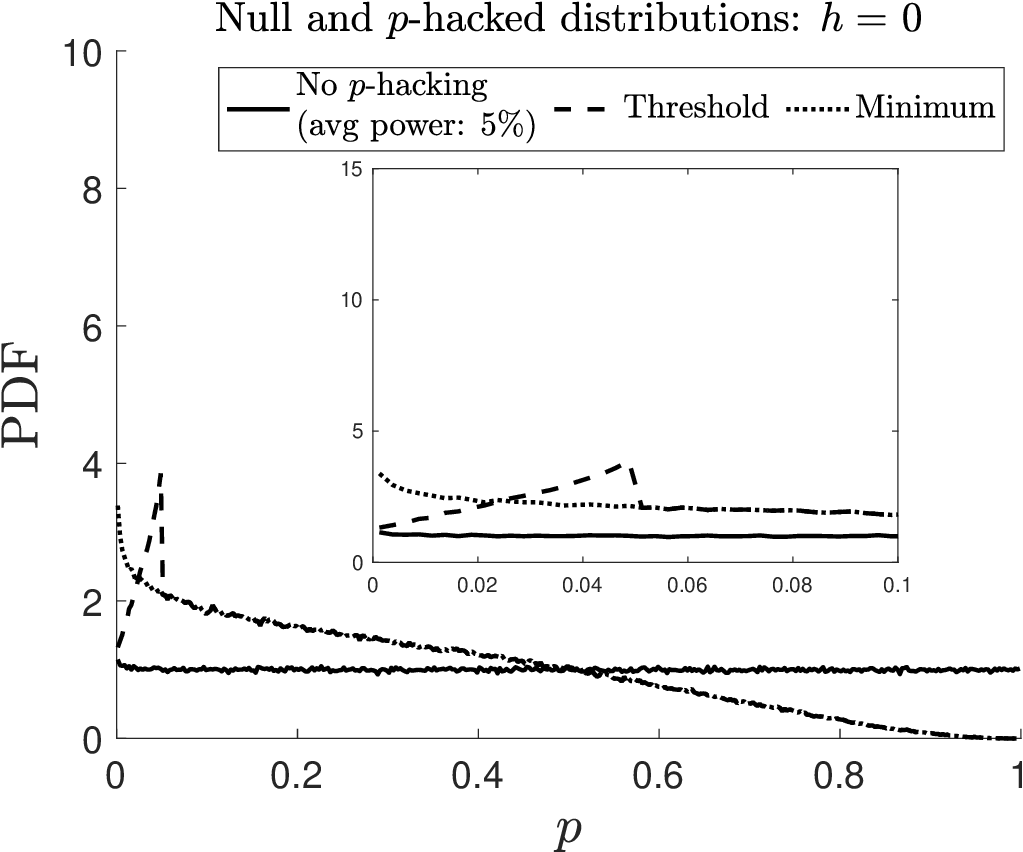}
\includegraphics[width=0.24\textwidth]{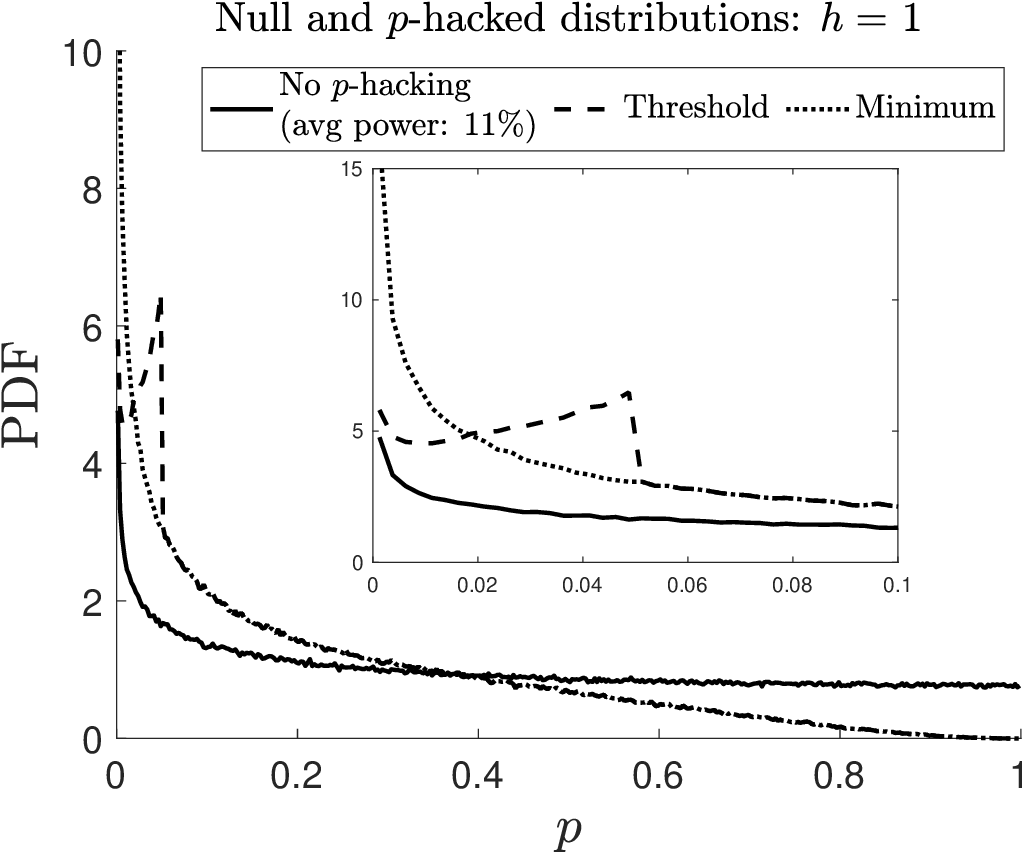}
\includegraphics[width=0.24\textwidth]{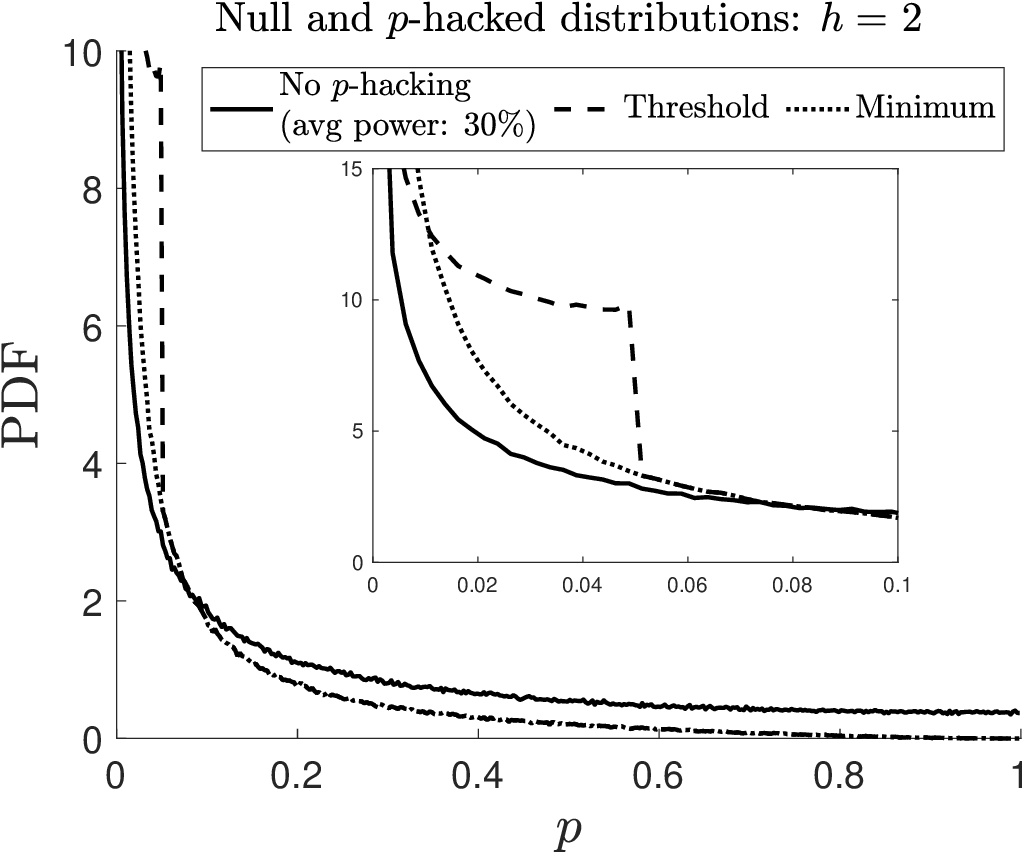}
\includegraphics[width=0.24\textwidth]{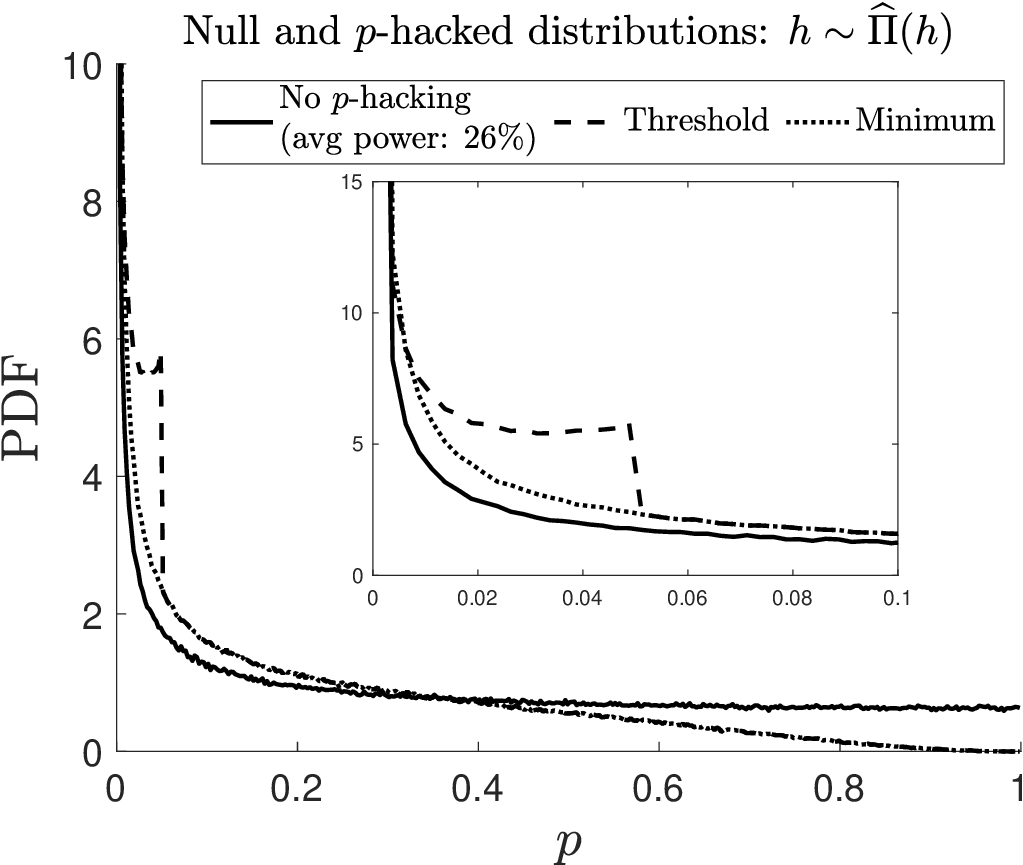}

\vspace{-2.5mm}
{\small{Two-sided, specific-to-general}}

\includegraphics[width=0.24\textwidth]{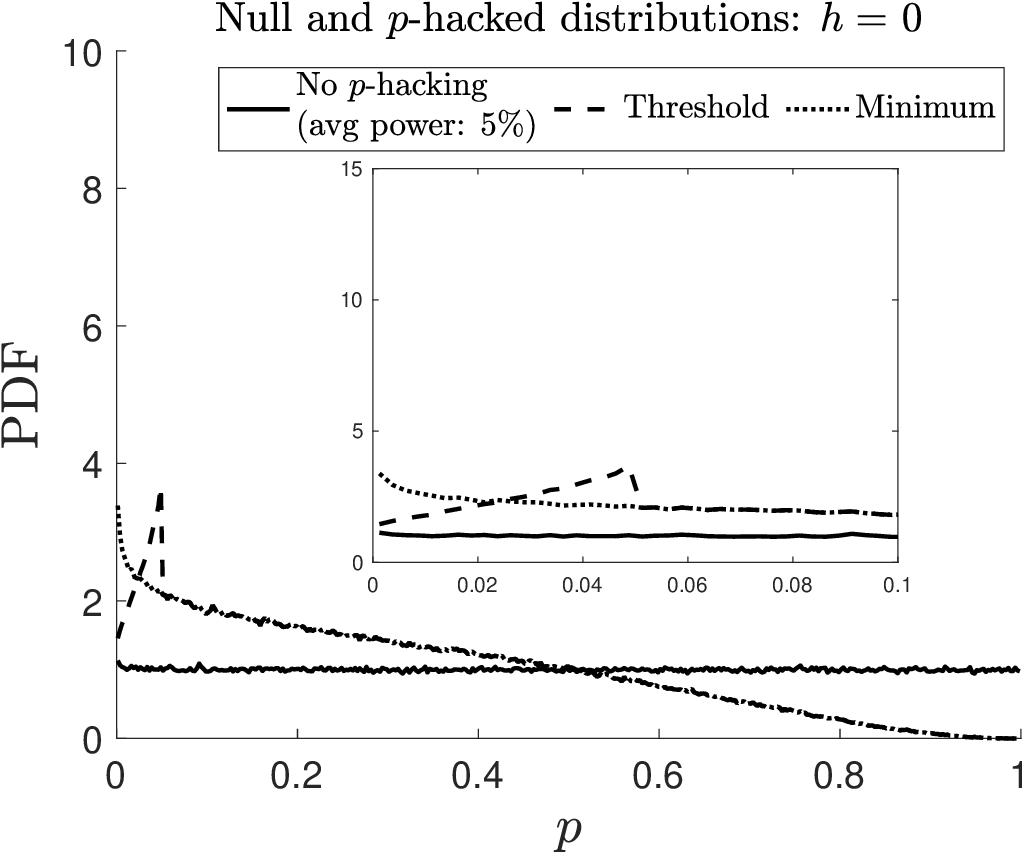}
\includegraphics[width=0.24\textwidth]{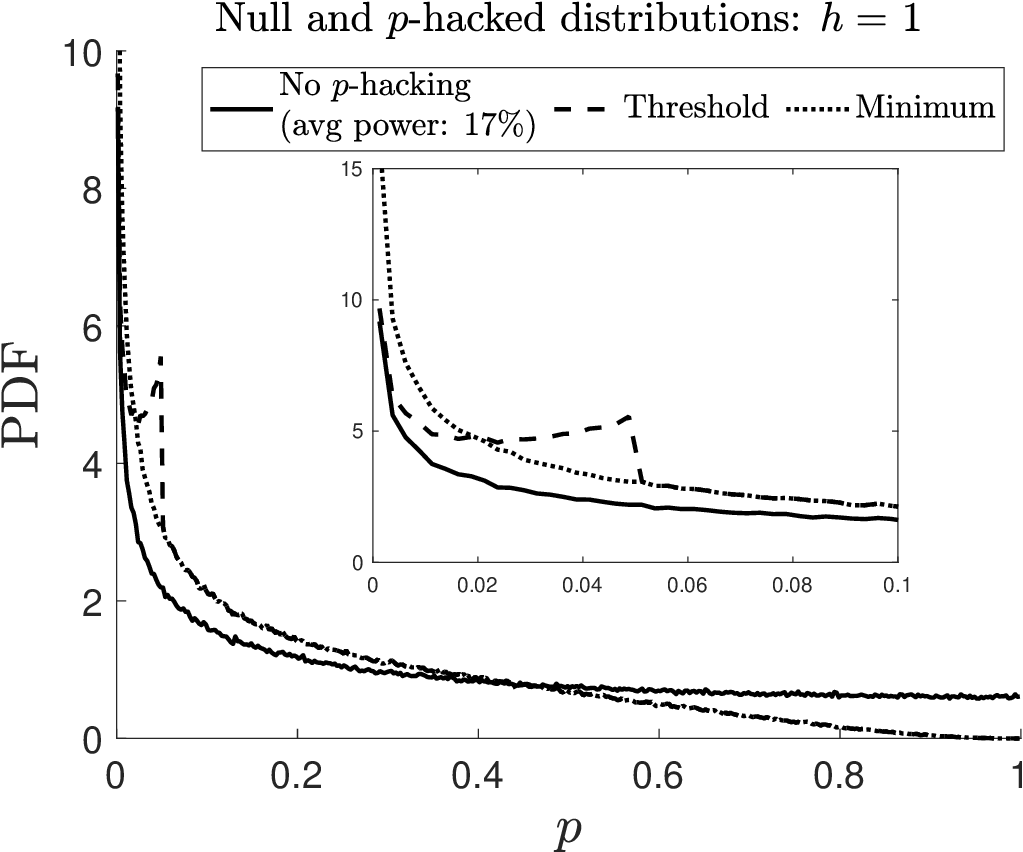}
\includegraphics[width=0.24\textwidth]{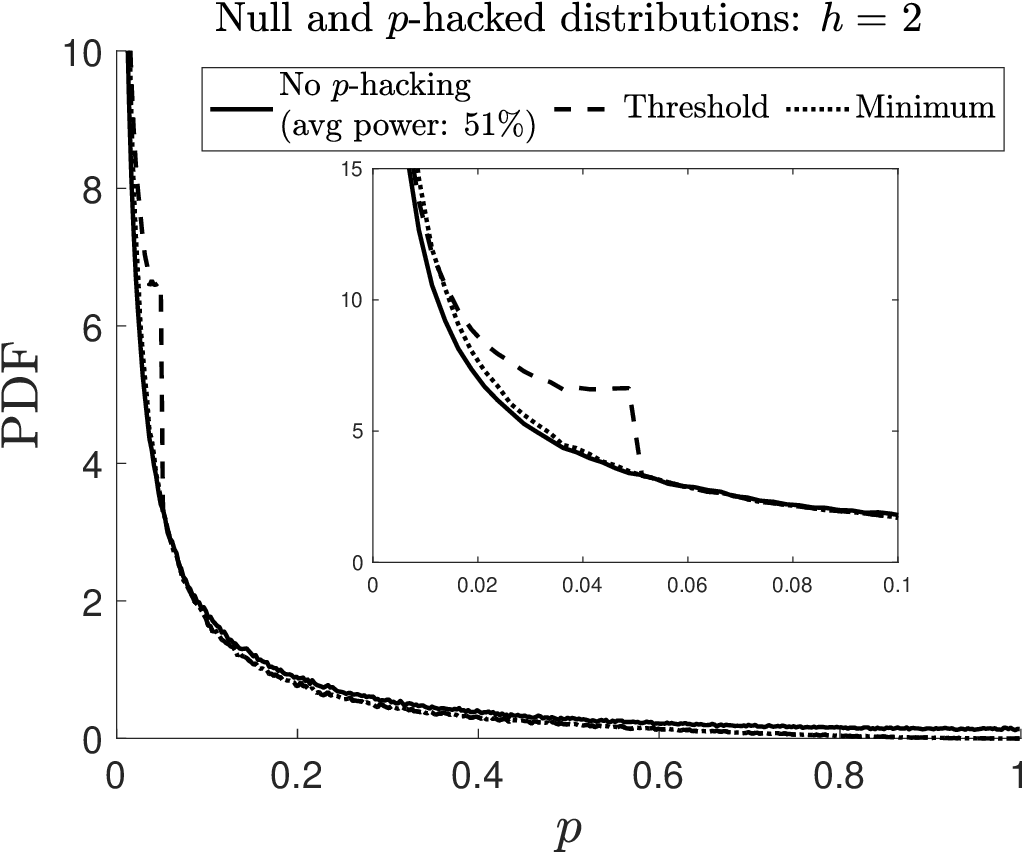}
\includegraphics[width=0.24\textwidth]{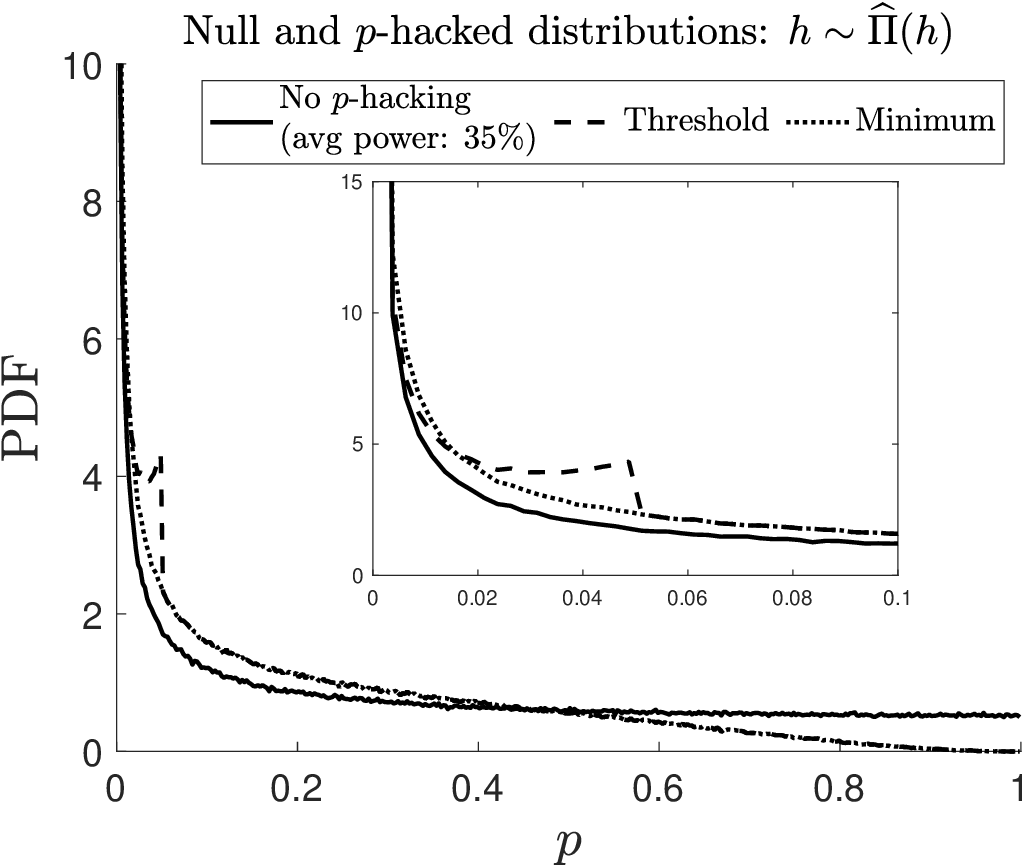}

\vspace{-2.5mm}
{\small{One-sided, general-to-specific}}

\includegraphics[width=0.24\textwidth]{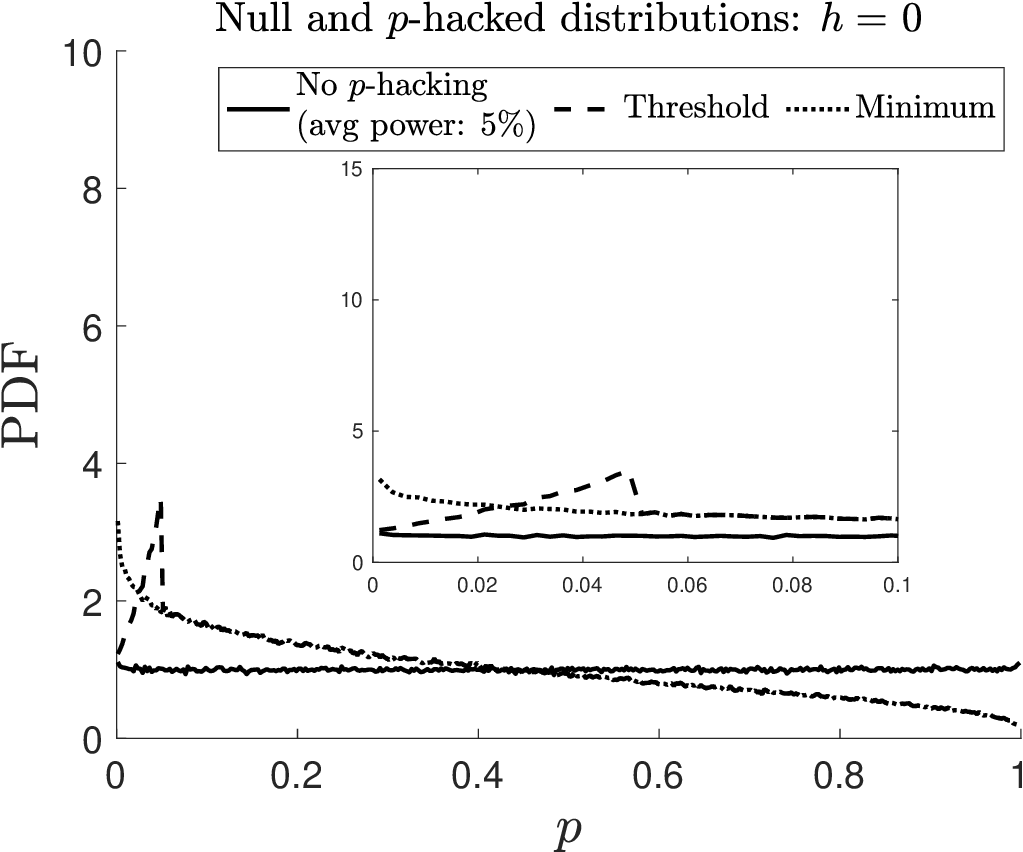}
\includegraphics[width=0.24\textwidth]{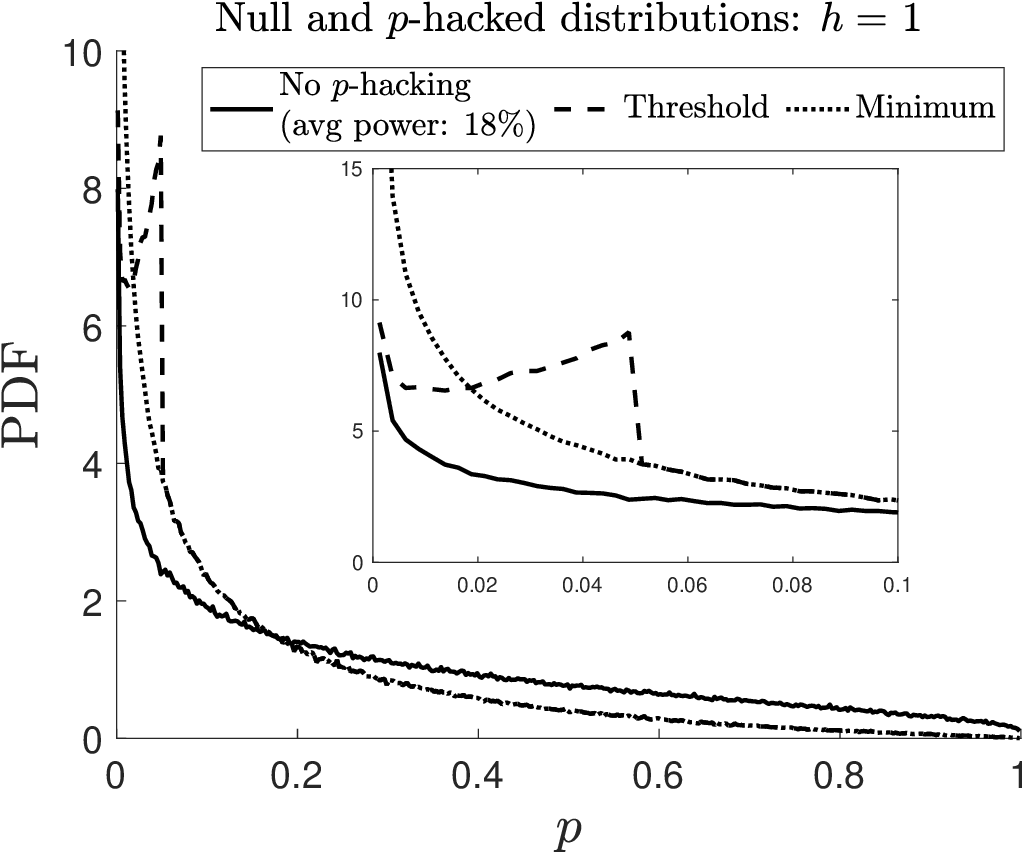}
\includegraphics[width=0.24\textwidth]{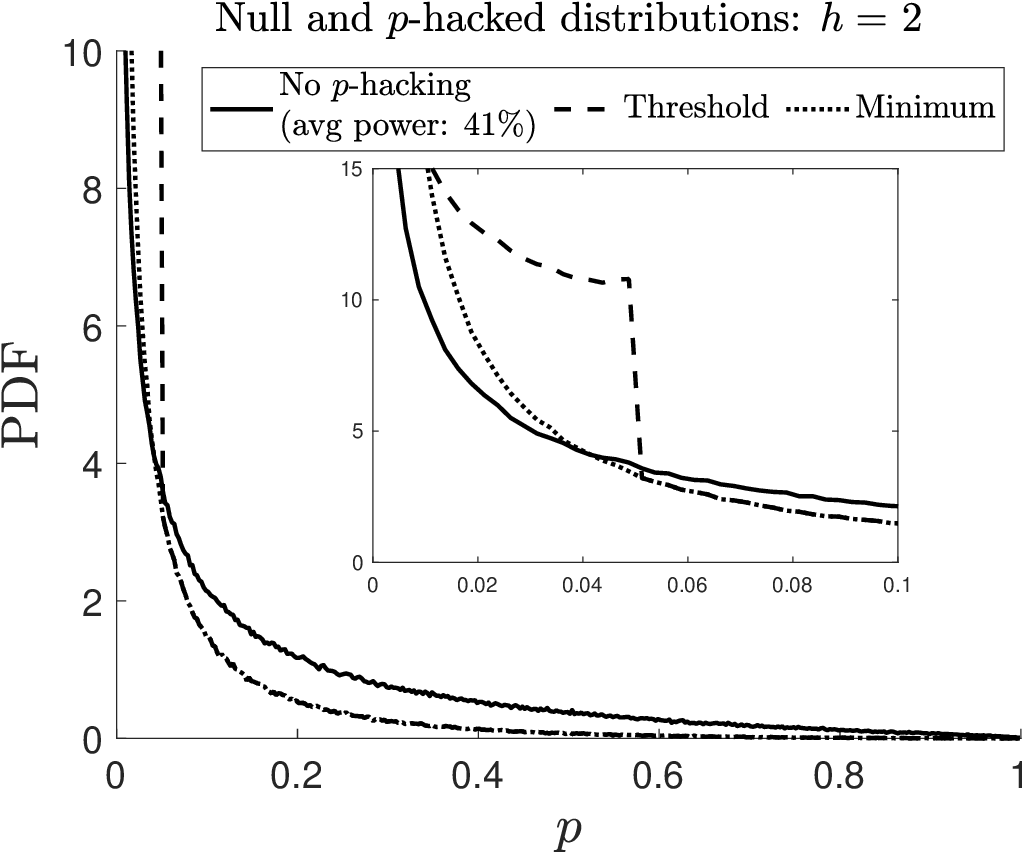}
\includegraphics[width=0.24\textwidth]{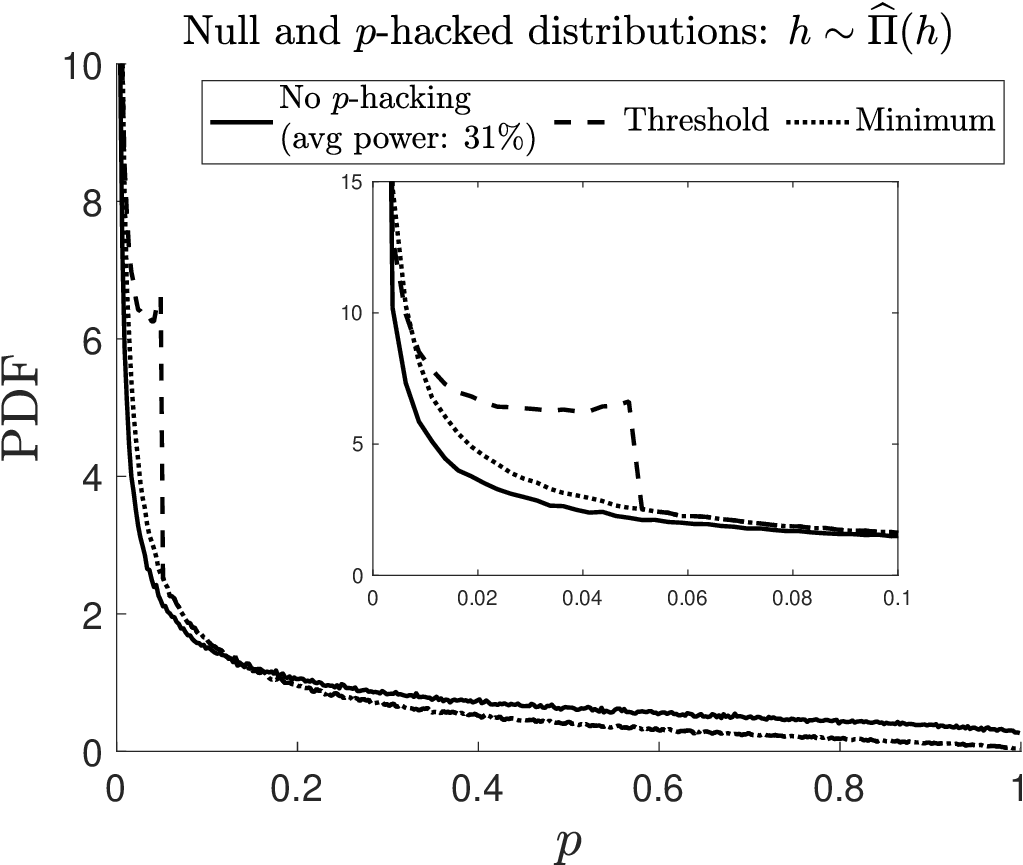}
\vspace{-7.5mm}
$$K=5$$

\vspace{-5mm}
{\small{Two-sided, general-to-specific}}

\includegraphics[width=0.24\textwidth]{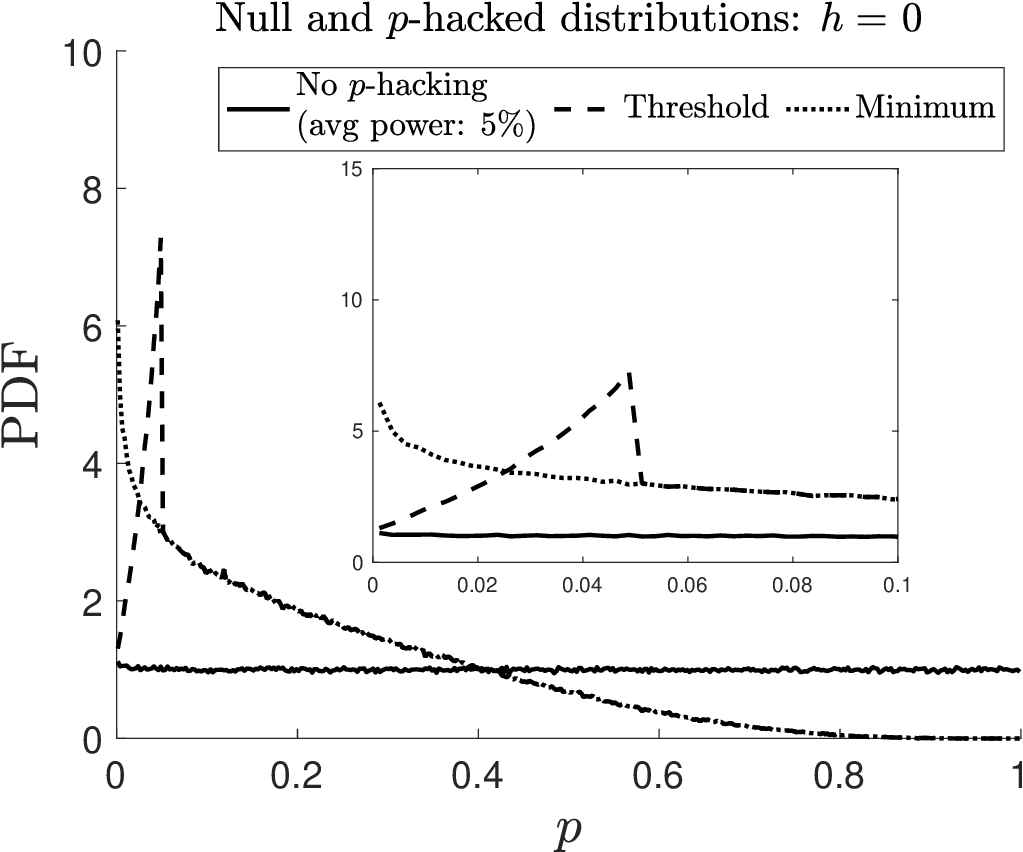}
\includegraphics[width=0.24\textwidth]{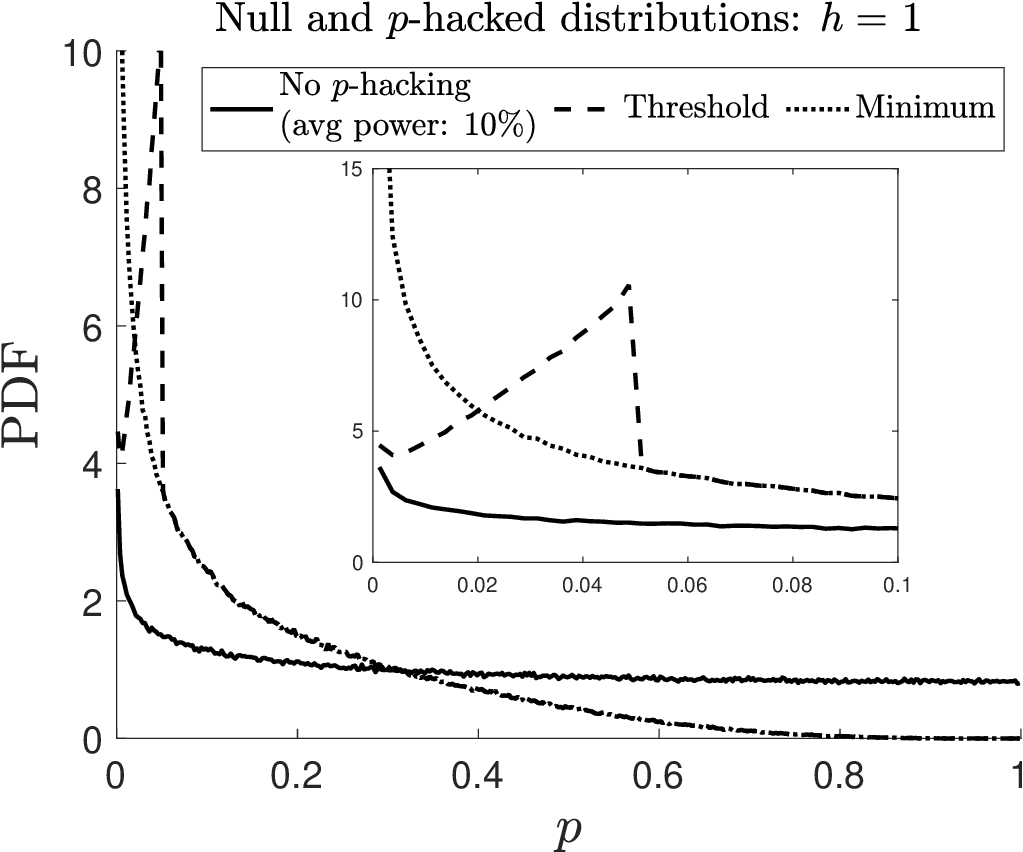}
\includegraphics[width=0.24\textwidth]{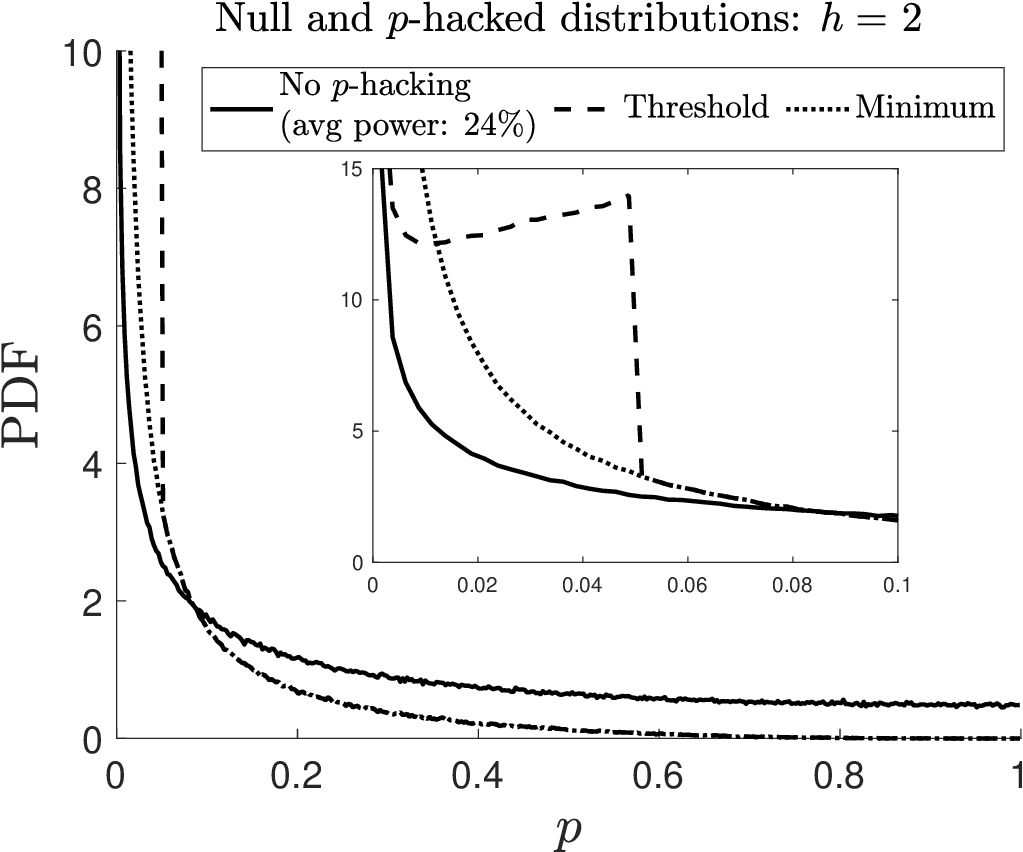}
\includegraphics[width=0.24\textwidth]{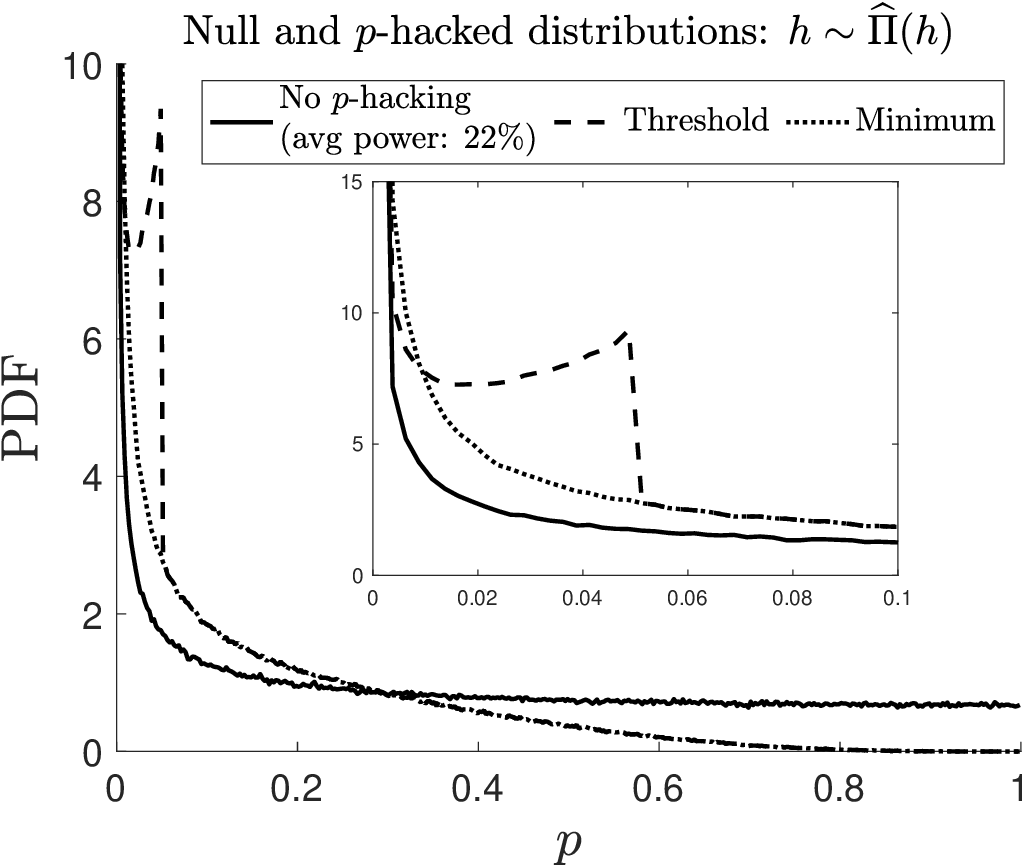}
\vspace{-7.5mm}
$$K=7$$

\vspace{-5mm}
{\small{Two-sided, general-to-specific}}

\includegraphics[width=0.24\textwidth]{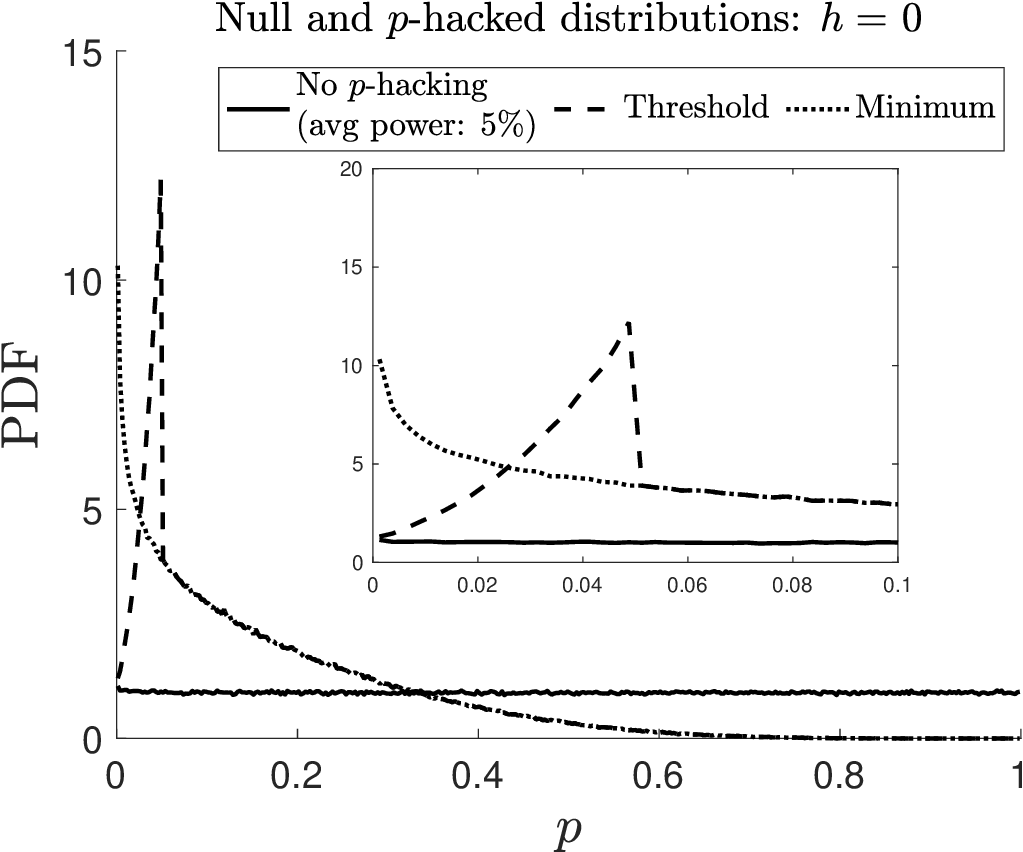}
\includegraphics[width=0.24\textwidth]{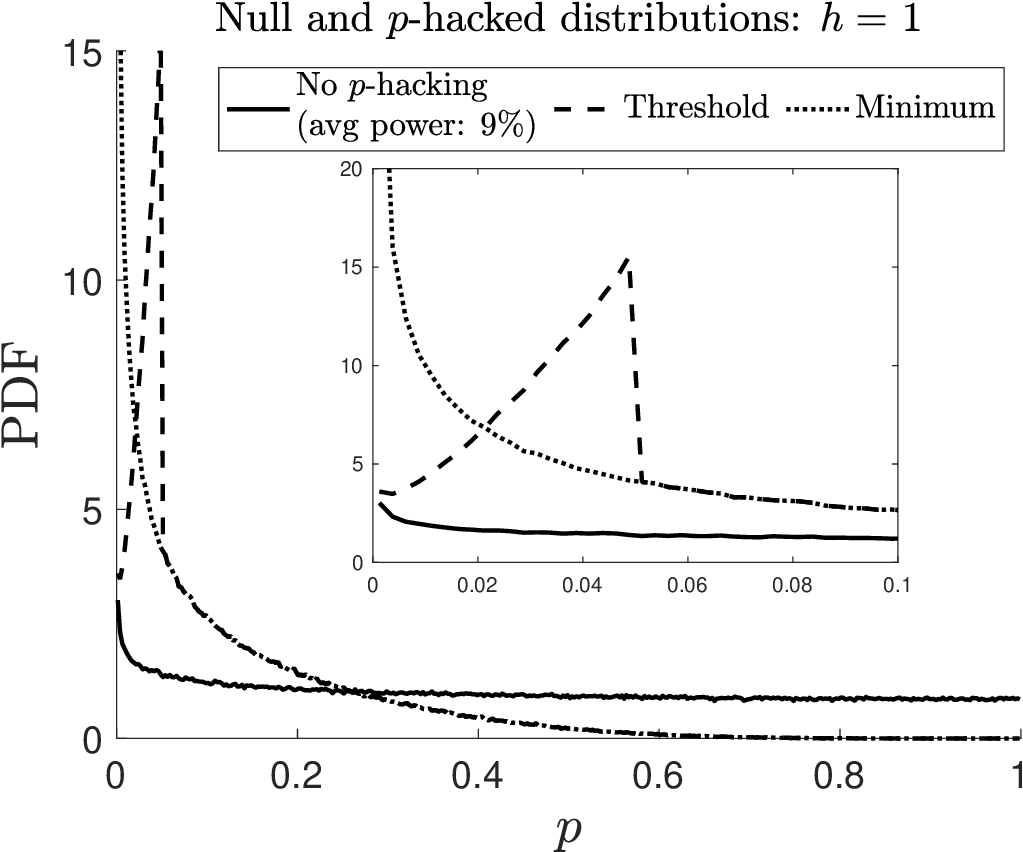}
\includegraphics[width=0.24\textwidth]{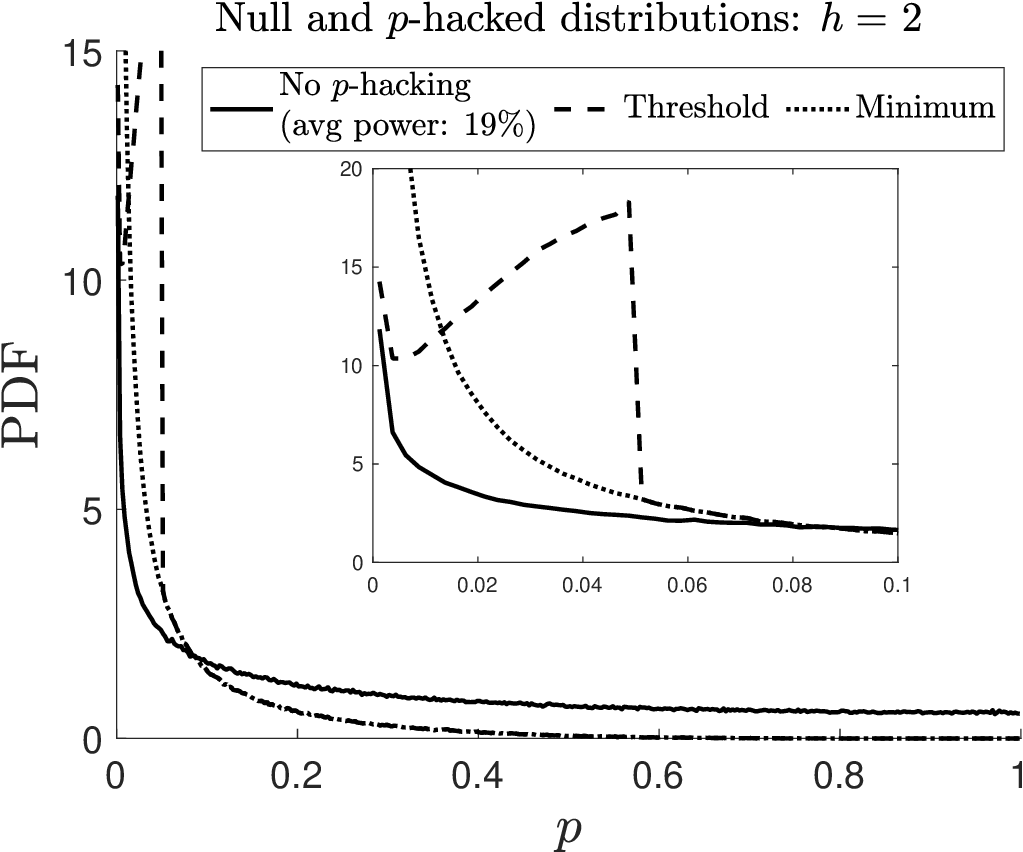}
\includegraphics[width=0.24\textwidth]{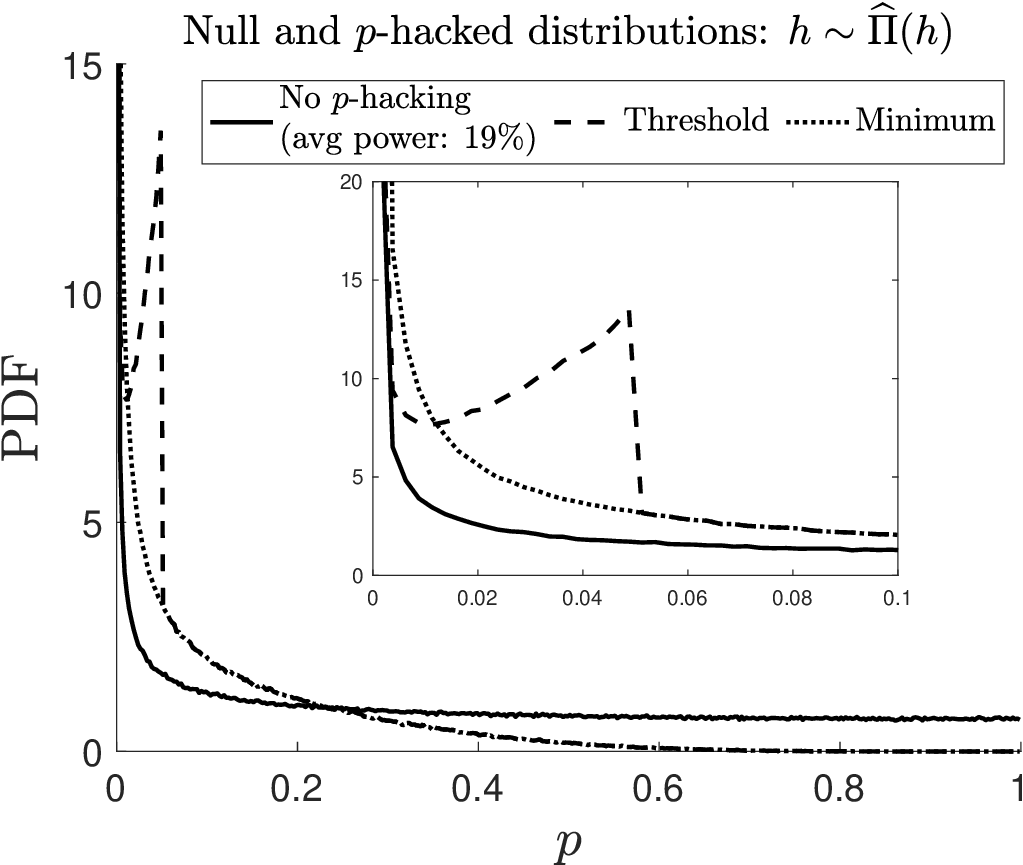}
\end{center}

\end{center}

\vspace{-2mm}

\footnotesize{\textit{Notes}: Figures show the null and alternative ($p$-hacked) distributions for covariate selection. They also report the average power of the underlying studies, which is equal to the probability mass below 0.05 under no $p$-hacking.}
                
\end{figure}

\begin{figure}[H]

\caption{IV selection with $K=3$: null and p-hacked distributions}
				\label{fig:hists_IV3}

\begin{center}
\includegraphics[width=0.24\textwidth]{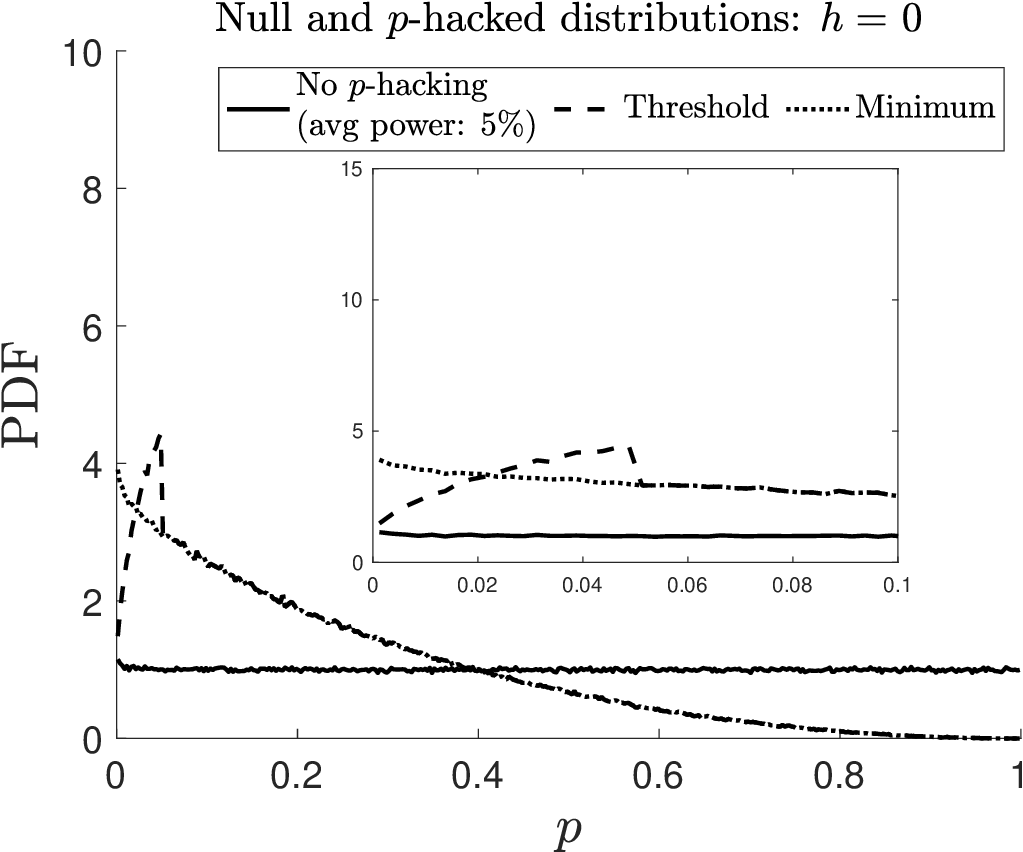}
\includegraphics[width=0.24\textwidth]{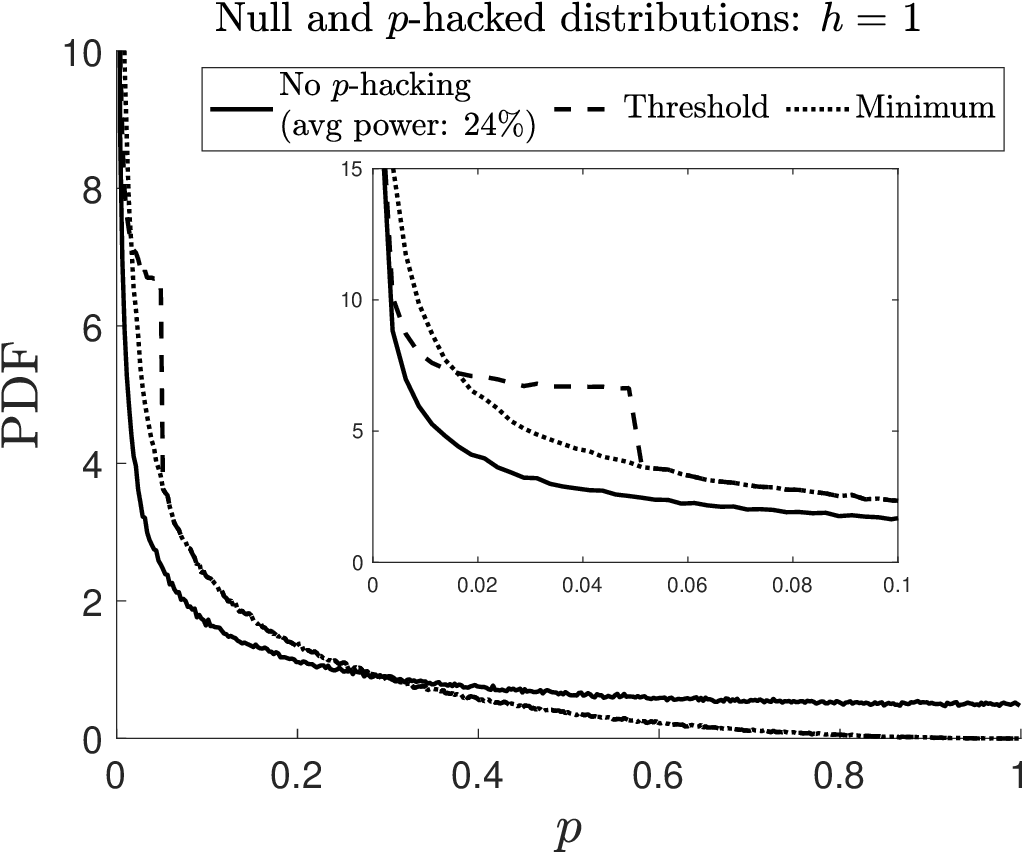}
\includegraphics[width=0.24\textwidth]{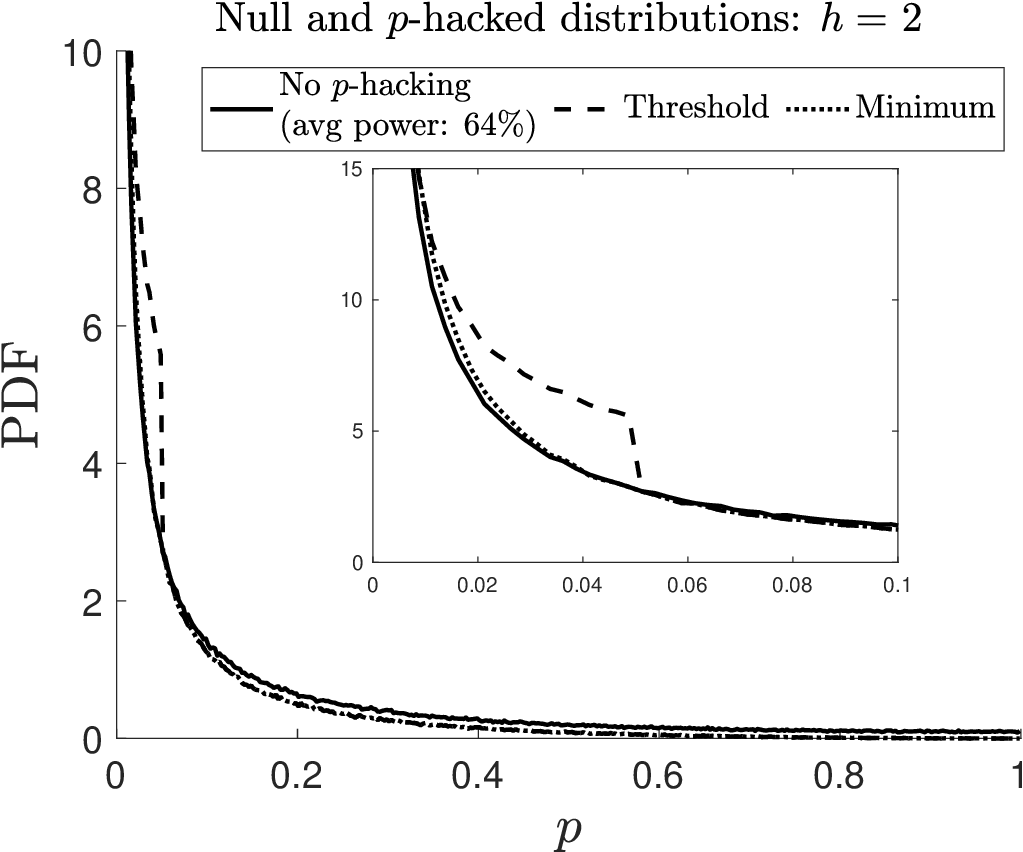}
\includegraphics[width=0.24\textwidth]{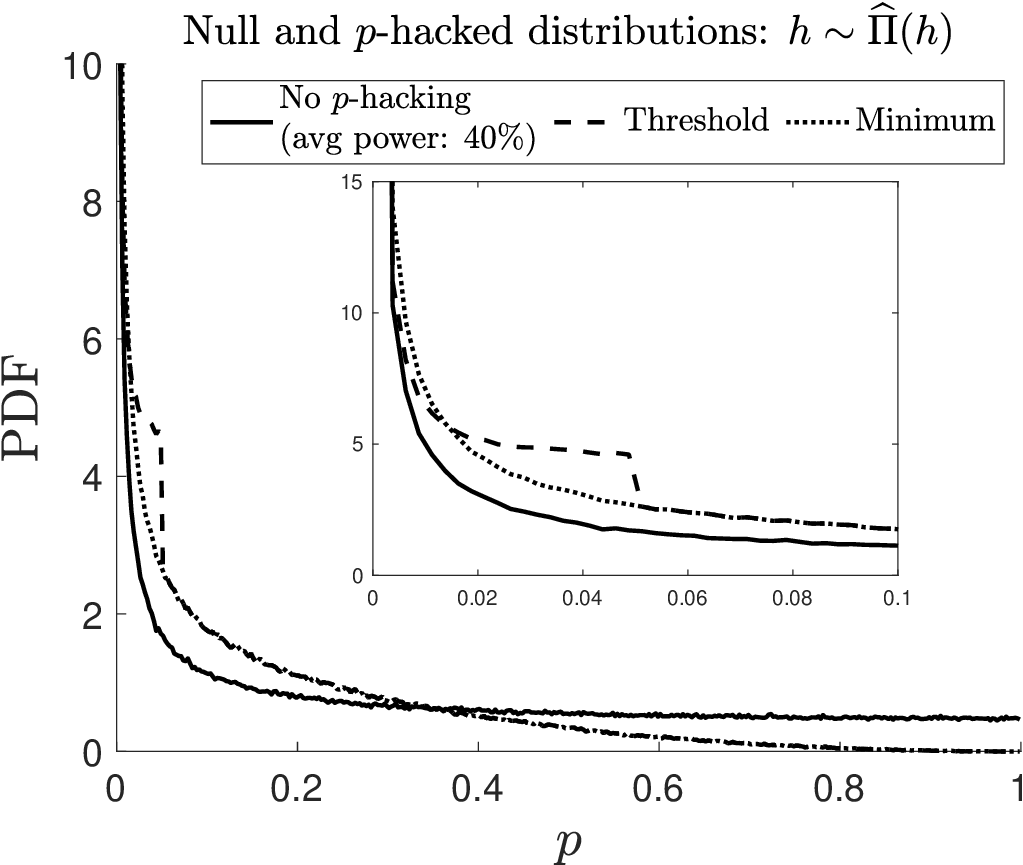}

\includegraphics[width=0.24\textwidth]{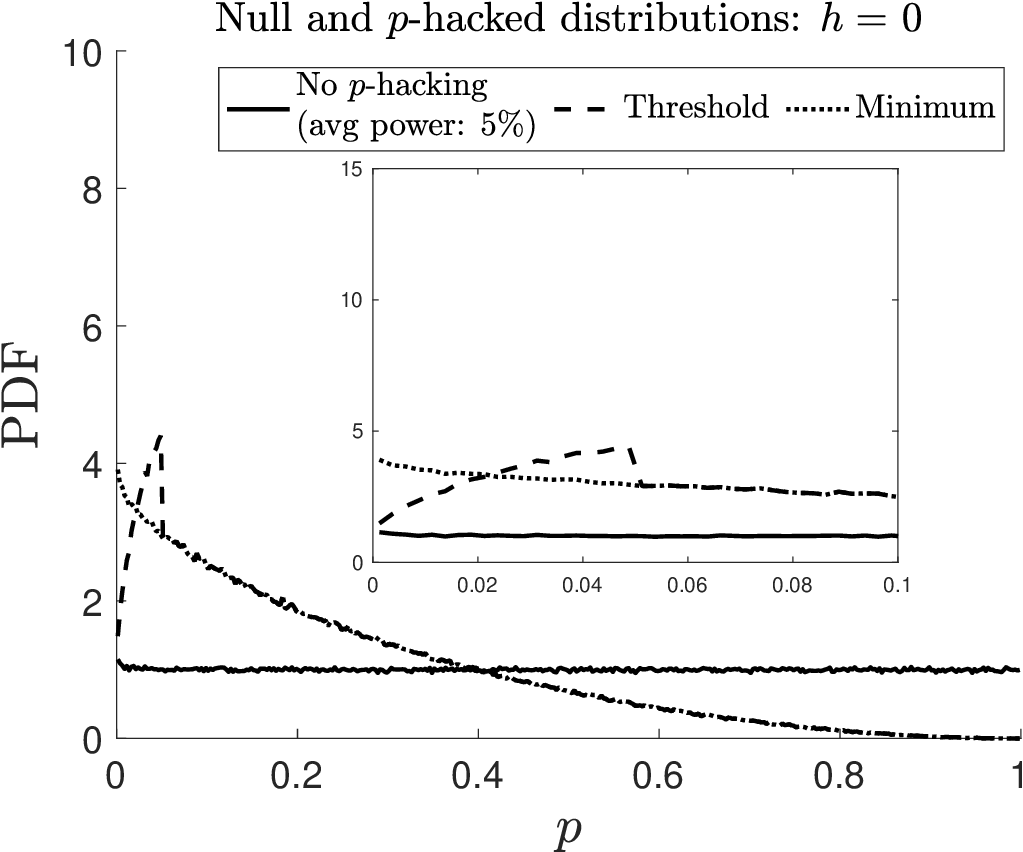}
\includegraphics[width=0.24\textwidth]{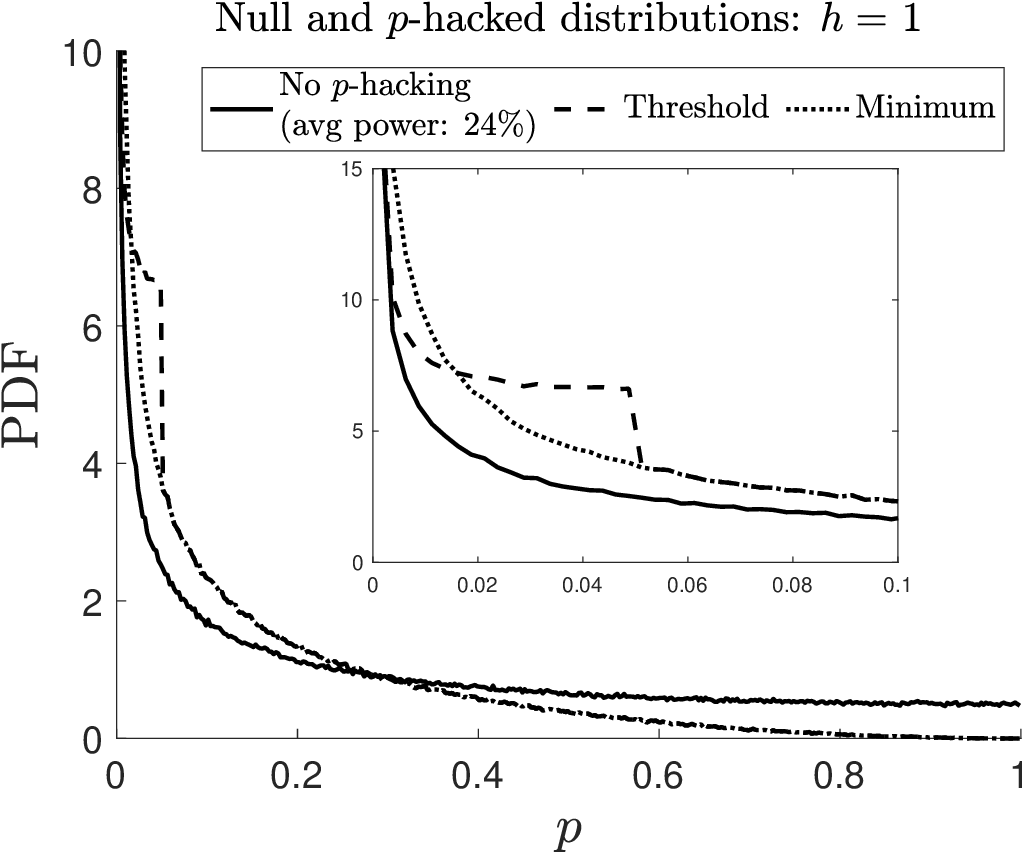}
\includegraphics[width=0.24\textwidth]{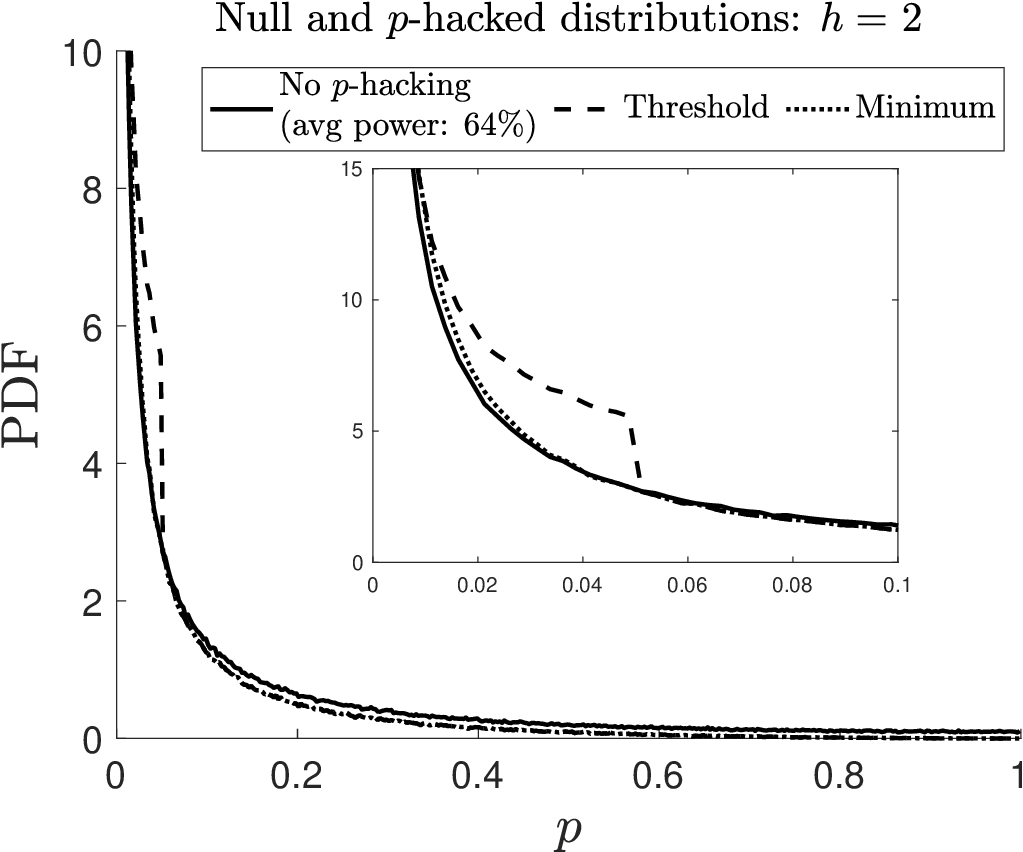}
\includegraphics[width=0.24\textwidth]{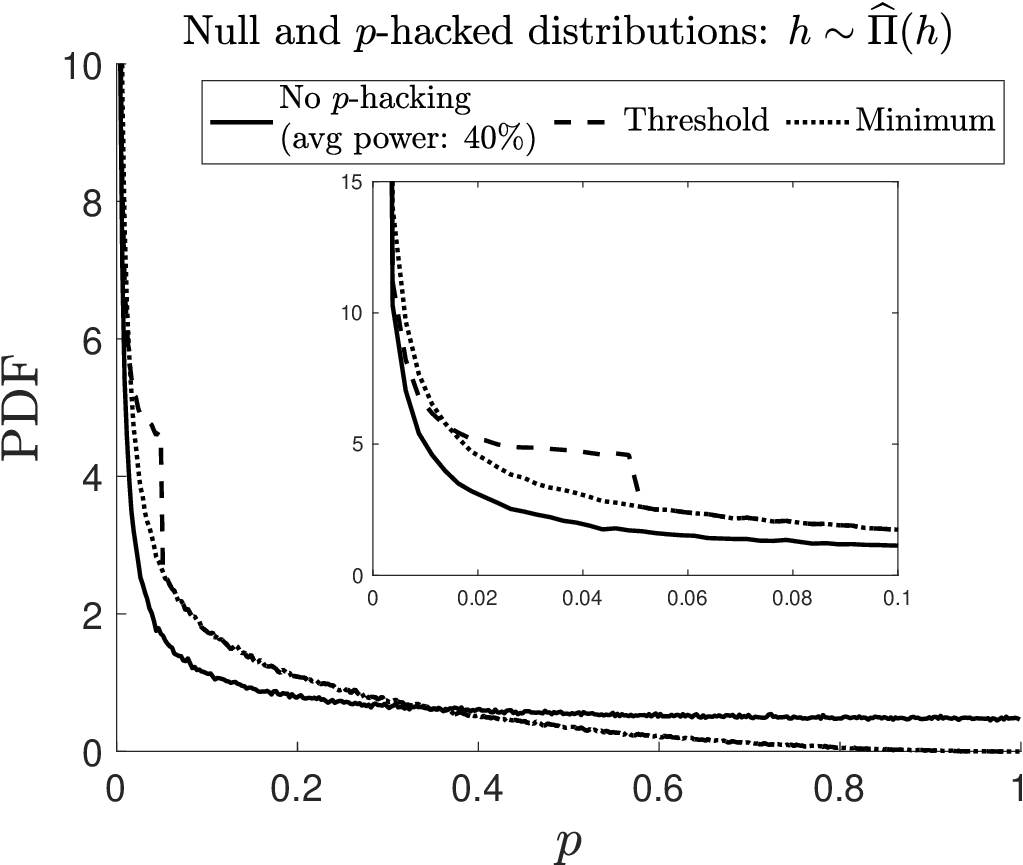}
\end{center}

\vspace{-2mm}

\footnotesize{\textit{Notes:} Figures show the null and alternative ($p$-hacked) distributions for IV selection with $K=3$. They also report the average power of the underlying studies, which is equal to the probability mass below 0.05 under no $p$-hacking. Row 1: two-sided tests, using all specifications; row 2: two-sided tests, using only specifications with $F>10$.}

\end{figure}

\begin{figure}[H]
\caption{IV selection with $K=5$: null and $p$-hacked distributions}
				\label{fig:hists_IV5}
\begin{center}
\includegraphics[width=0.24\textwidth]{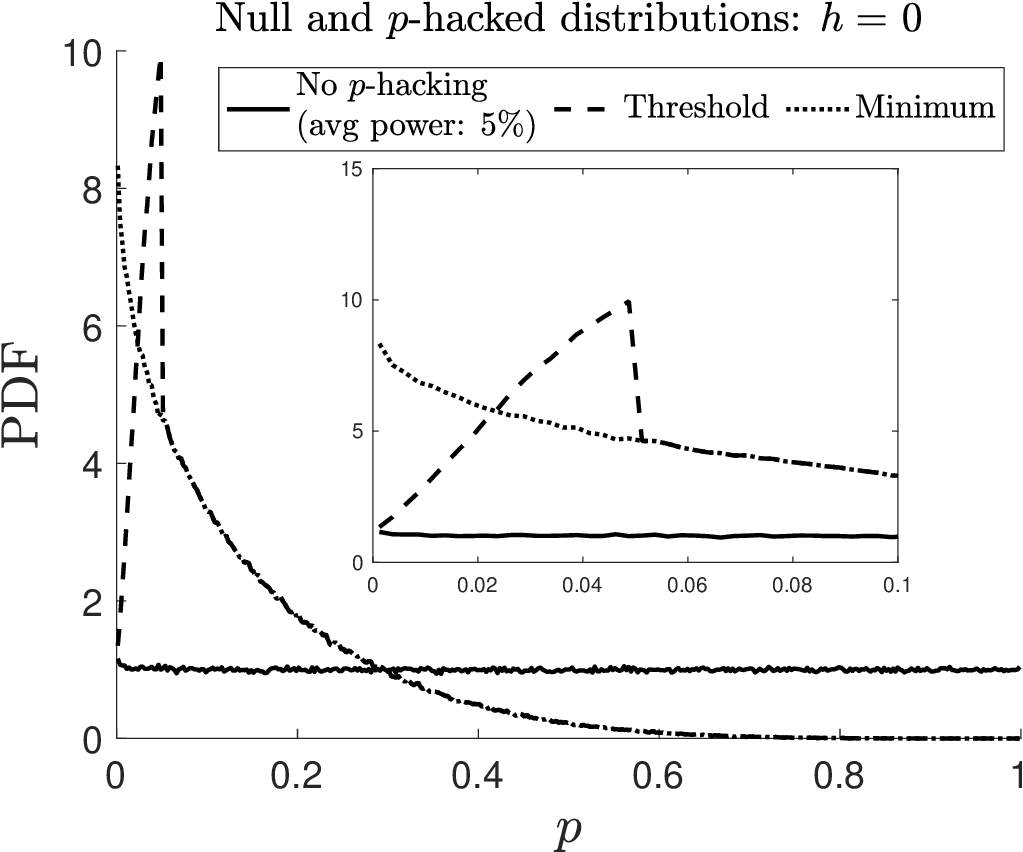}
\includegraphics[width=0.24\textwidth]{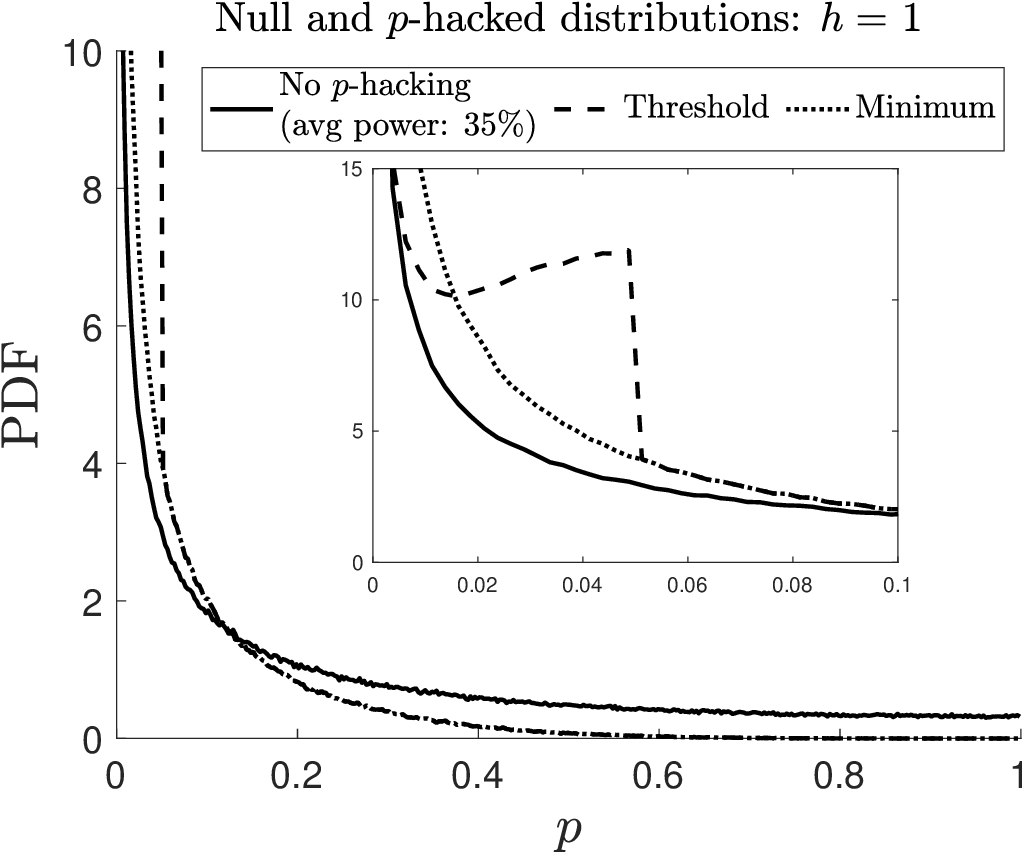}
\includegraphics[width=0.24\textwidth]{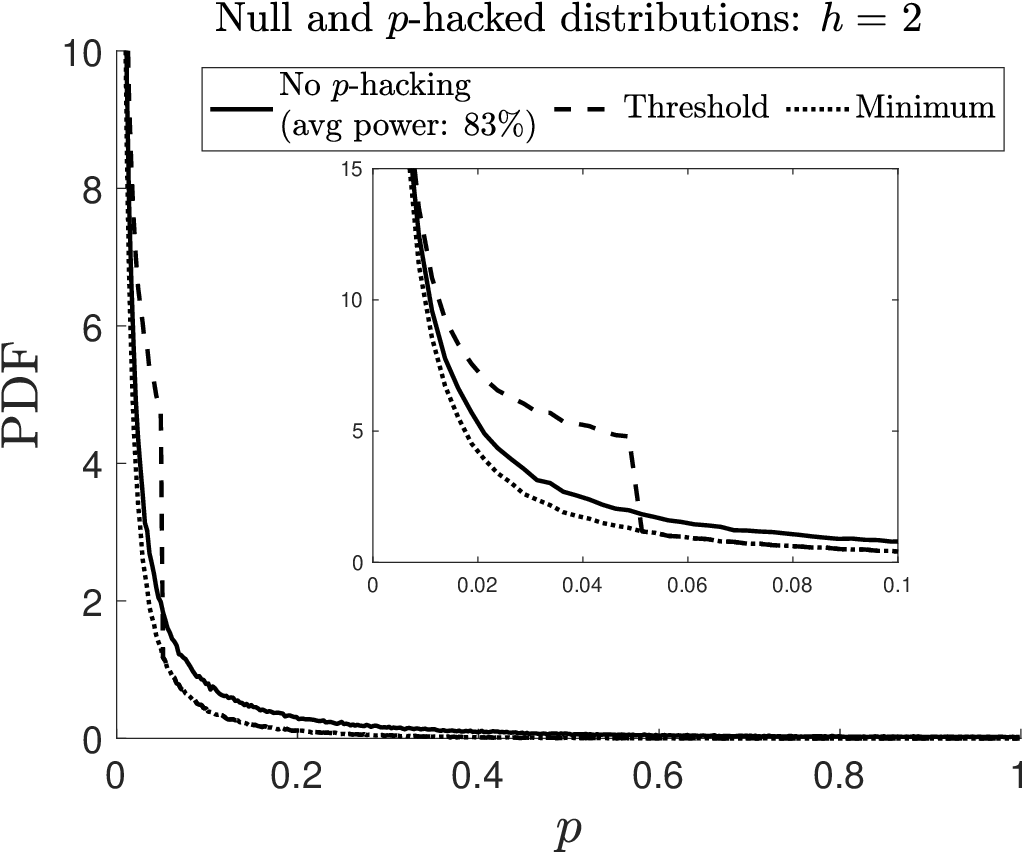}
\includegraphics[width=0.24\textwidth]{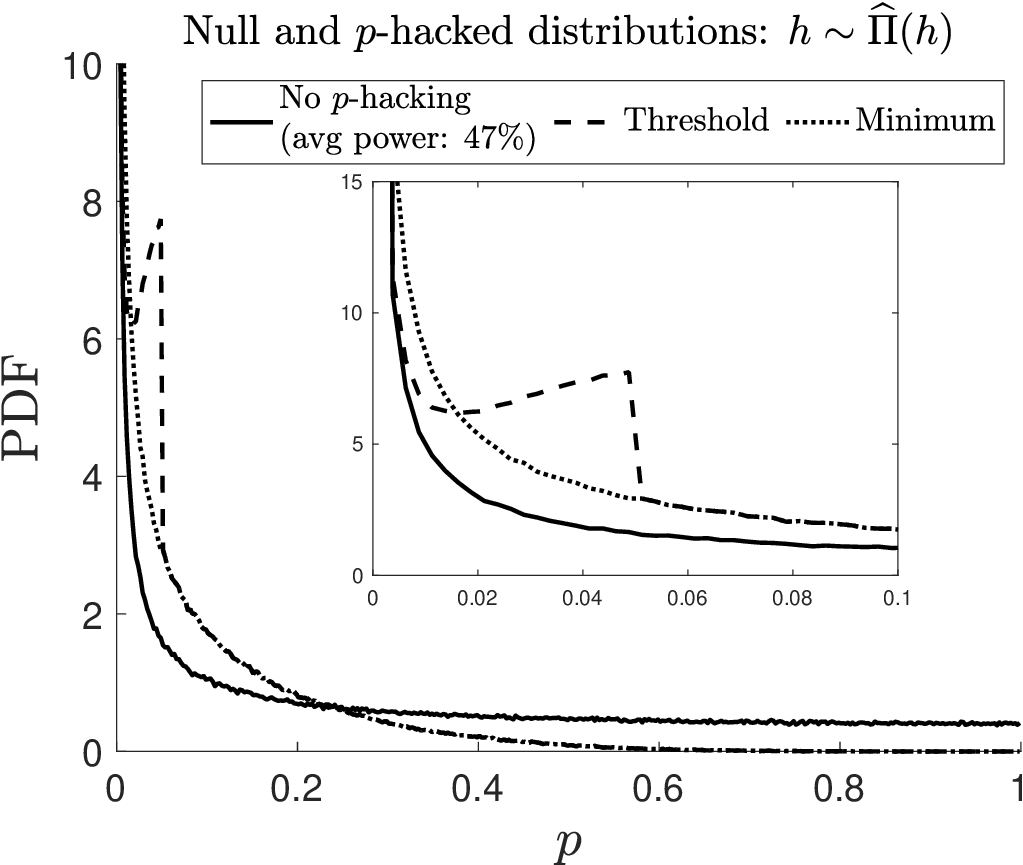}

\includegraphics[width=0.24\textwidth]{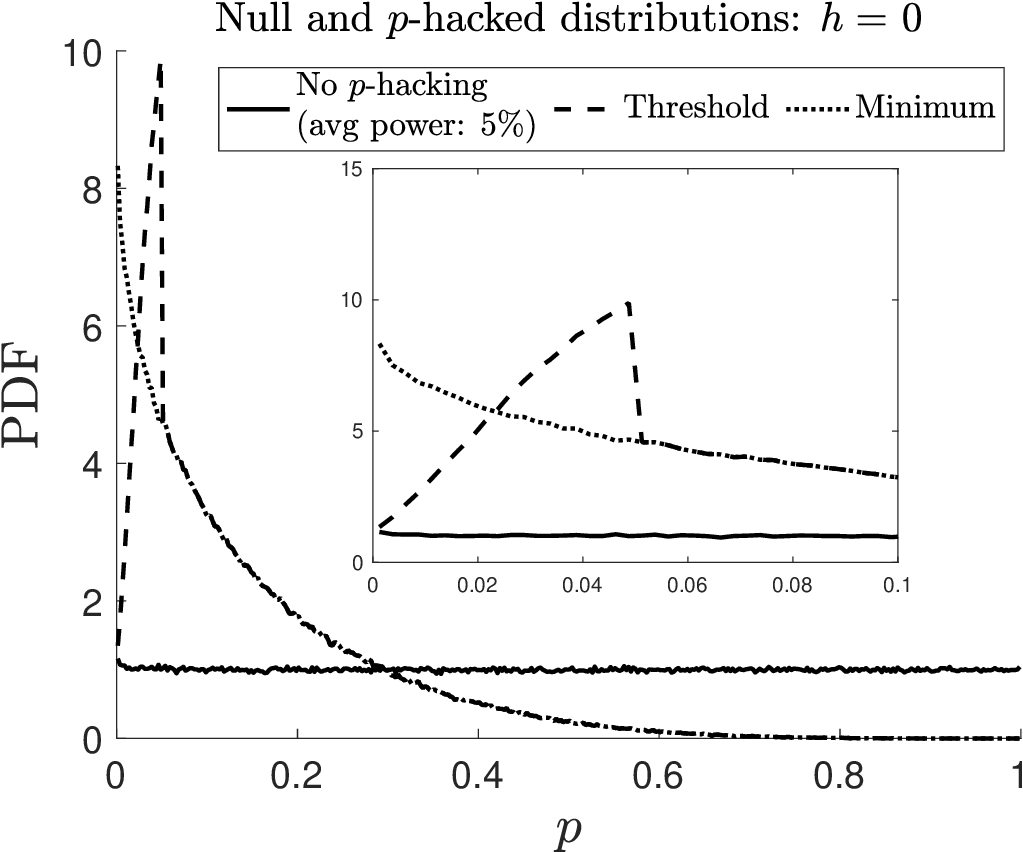}
\includegraphics[width=0.24\textwidth]{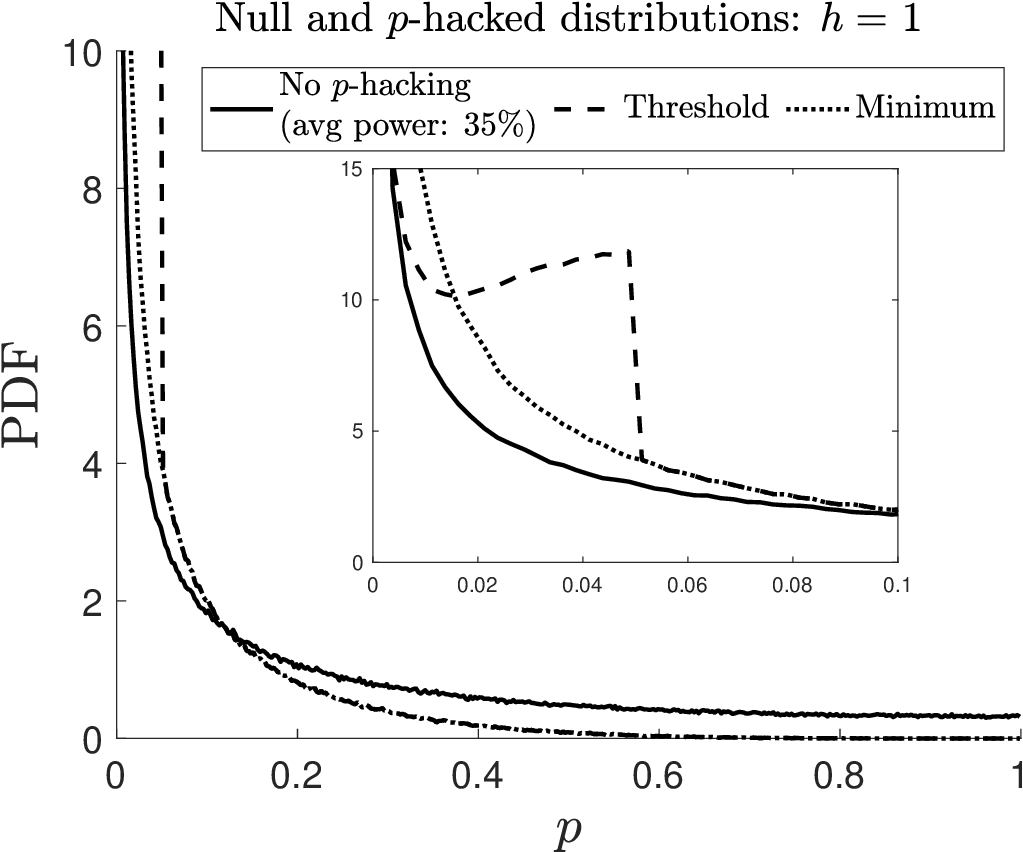}
\includegraphics[width=0.24\textwidth]{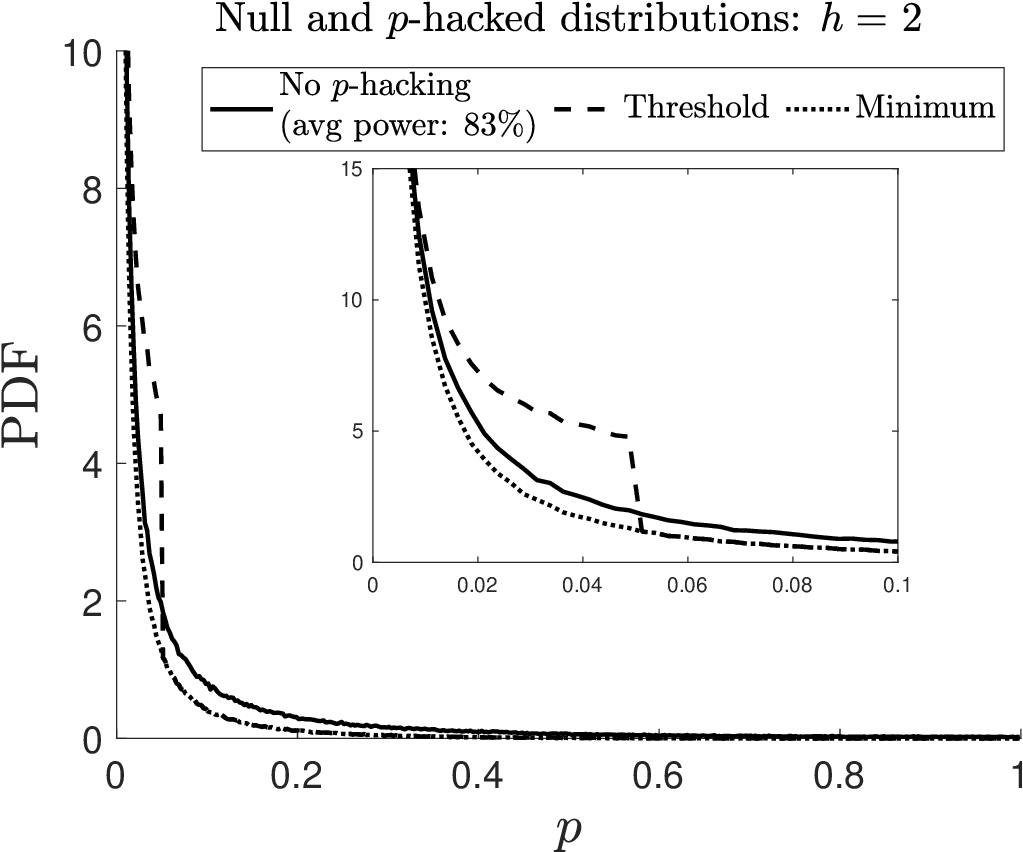}
\includegraphics[width=0.24\textwidth]{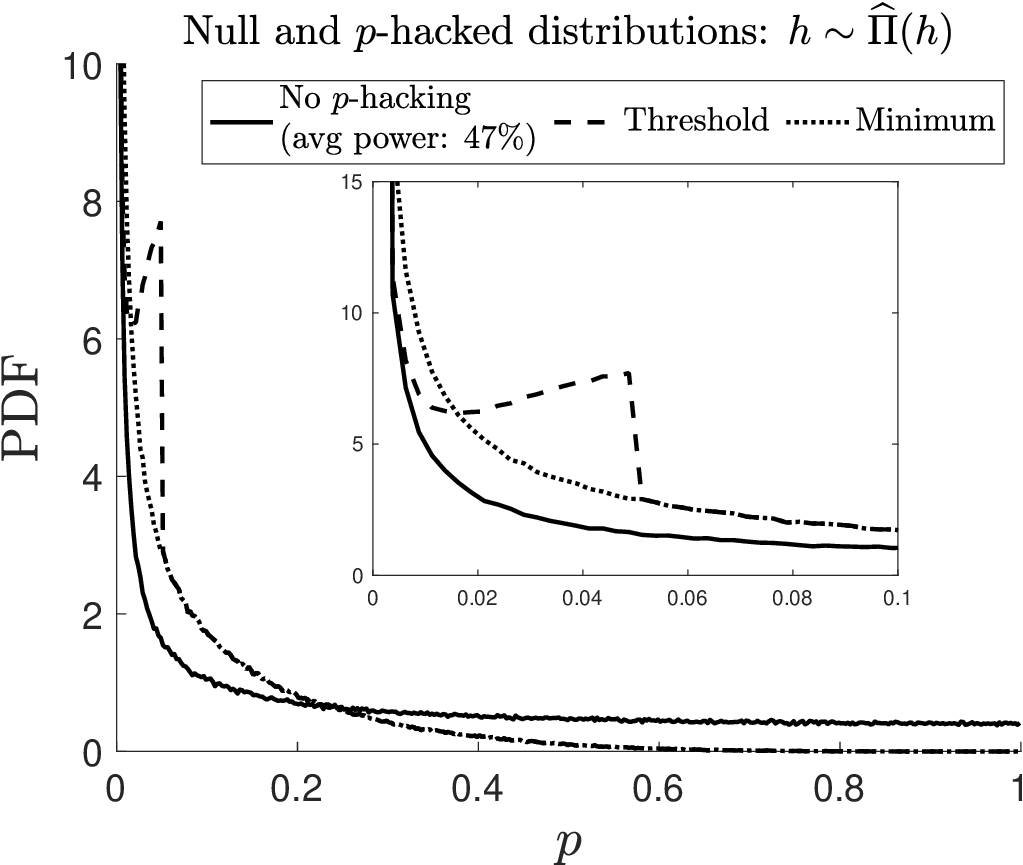}
\end{center}

\vspace{-2mm}

\footnotesize{\textit{Notes:} Figures show the null and alternative ($p$-hacked) distributions for IV selection with $K=5$. They also report the average power of the underlying studies, which is equal to the probability mass below 0.05 under no $p$-hacking. Row 1: two-sided tests, using all specifications; row 2: two-sided tests, using only specifications with $F>10$.}
\end{figure}

\begin{figure}[H]
\caption{Lag length selection: null and $p$-hacked distributions}
\label{fig:hists_var}
\begin{center}
\includegraphics[width=0.24\textwidth]{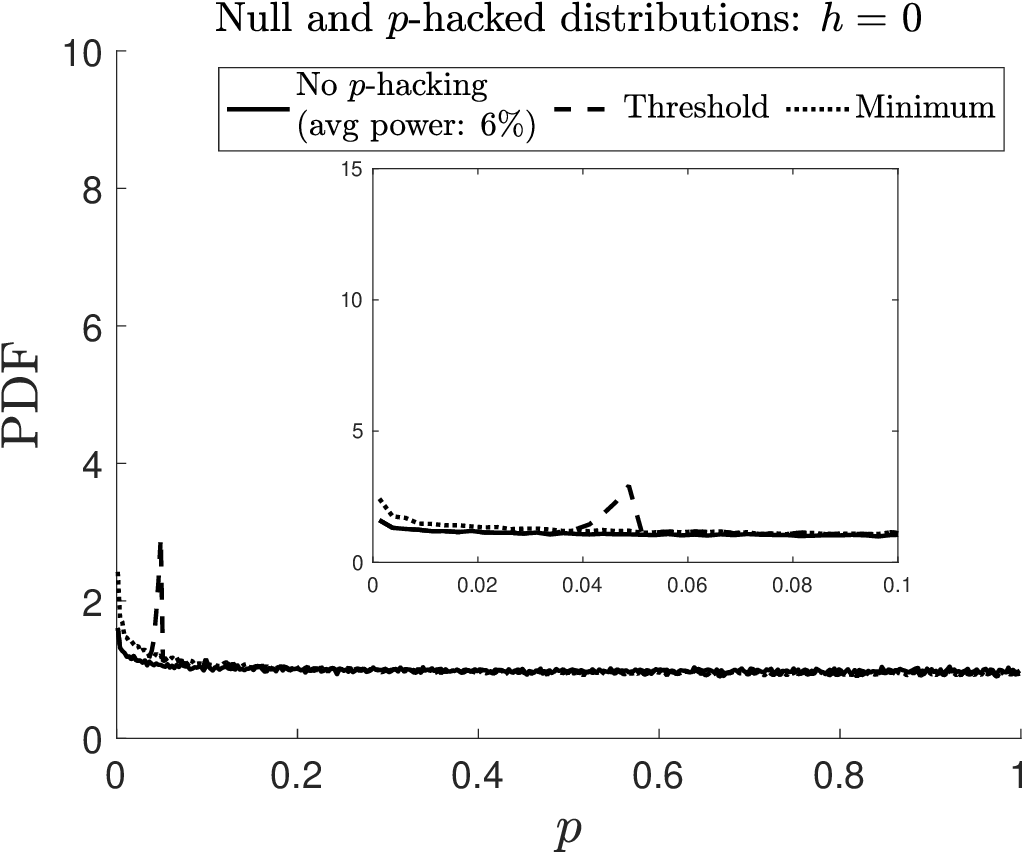}
\includegraphics[width=0.24\textwidth]{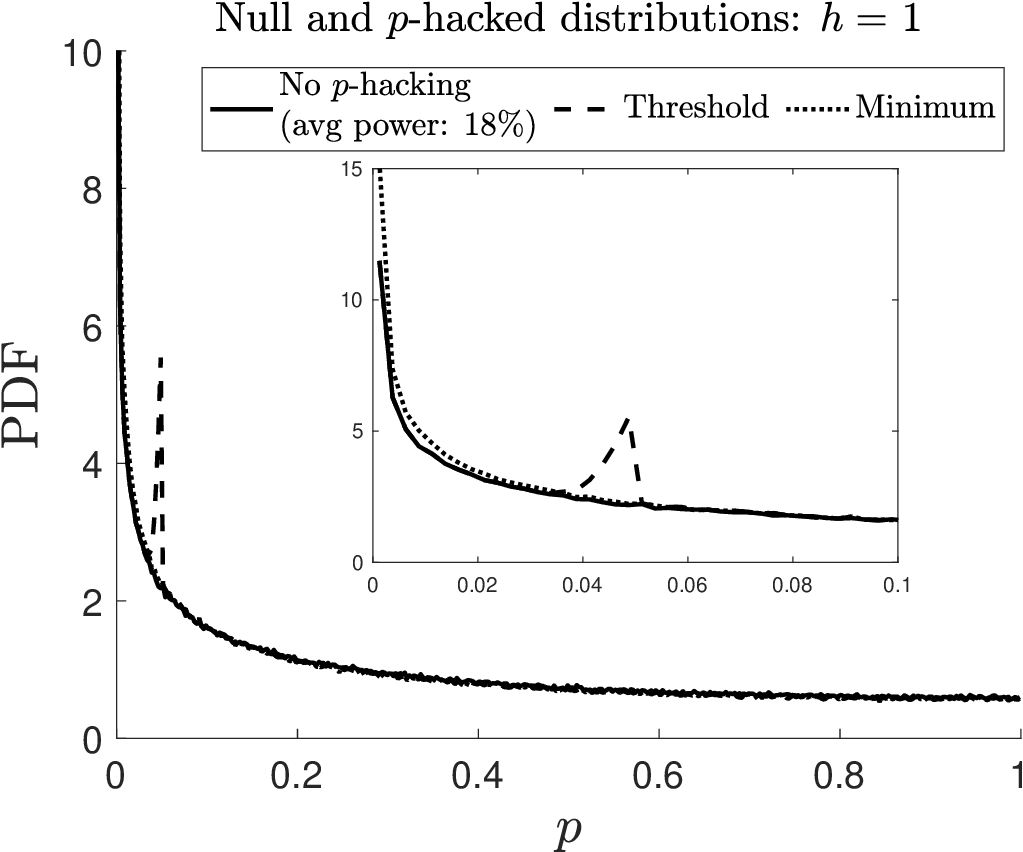}
\includegraphics[width=0.24\textwidth]{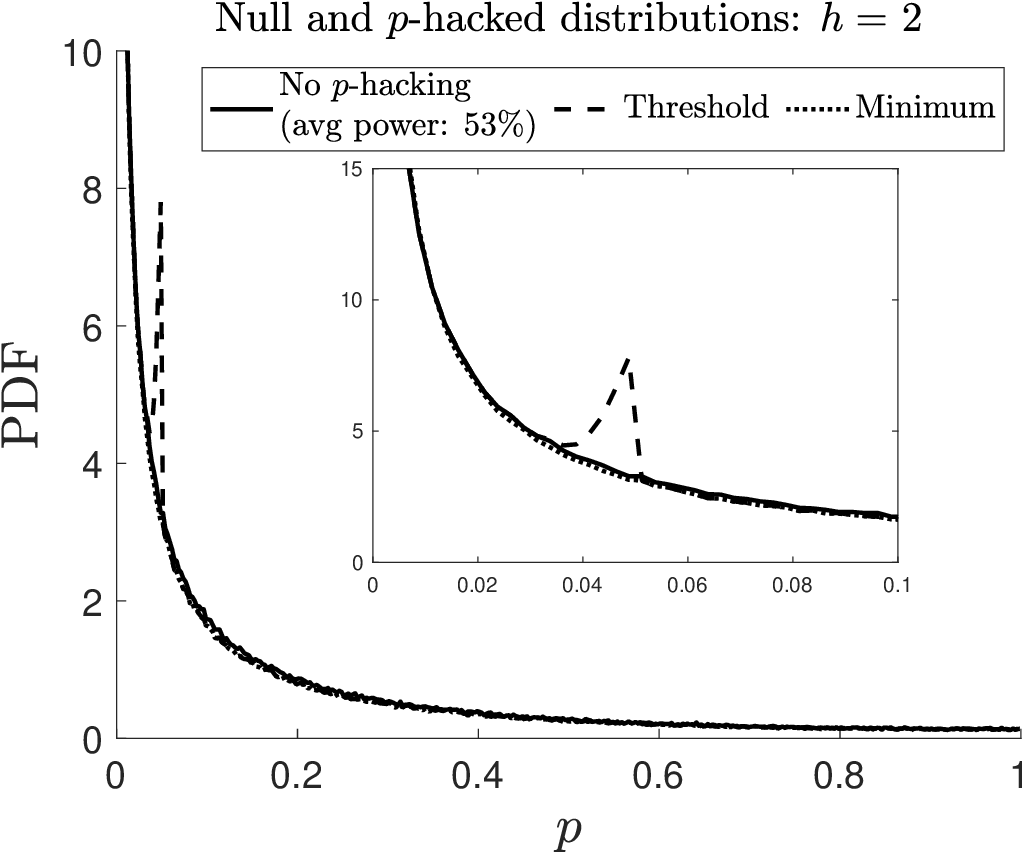}
\includegraphics[width=0.24\textwidth]{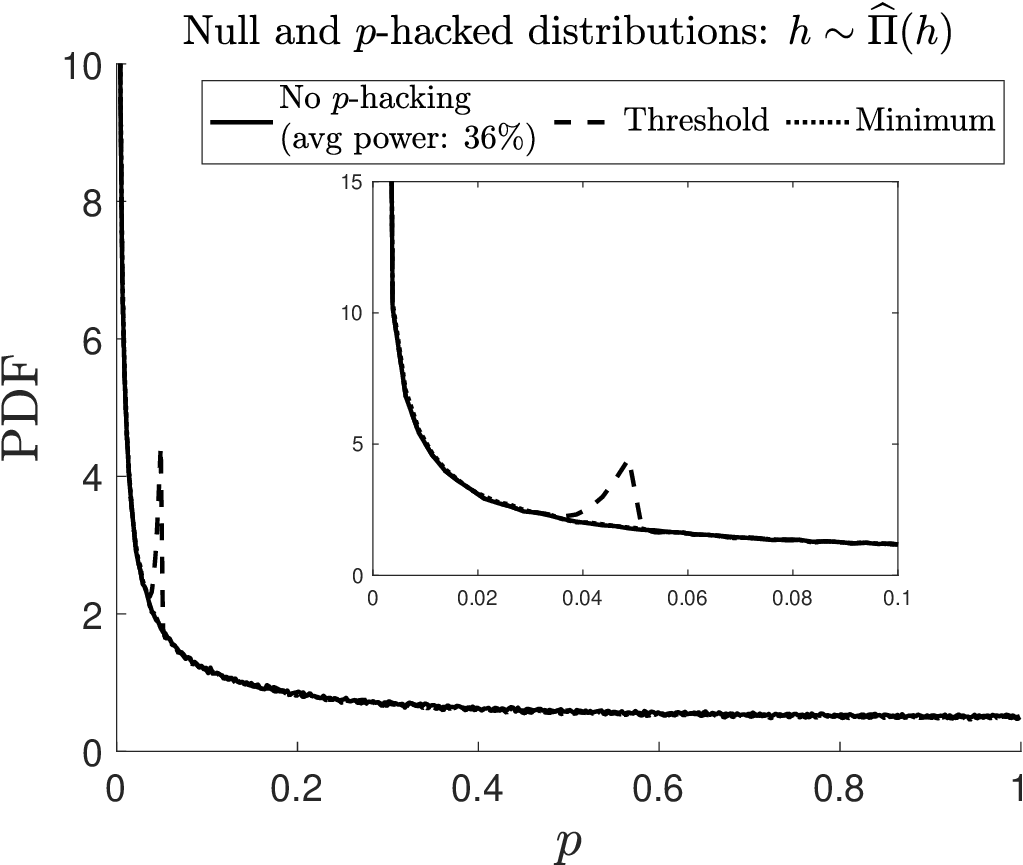}

\end{center}

\vspace{-2mm}

\footnotesize{\textit{Notes:} Figures show the null and alternative ($p$-hacked) distributions for lag length selection. They also report the average power of the underlying studies, which is equal to the probability mass below 0.05 under no $p$-hacking.}

\end{figure}

\begin{figure}[H]
\caption{Cluster level selection: null and $p$-hacked distributions}
\label{fig:hists_clust}
\begin{center}
\includegraphics[width=0.24\textwidth]{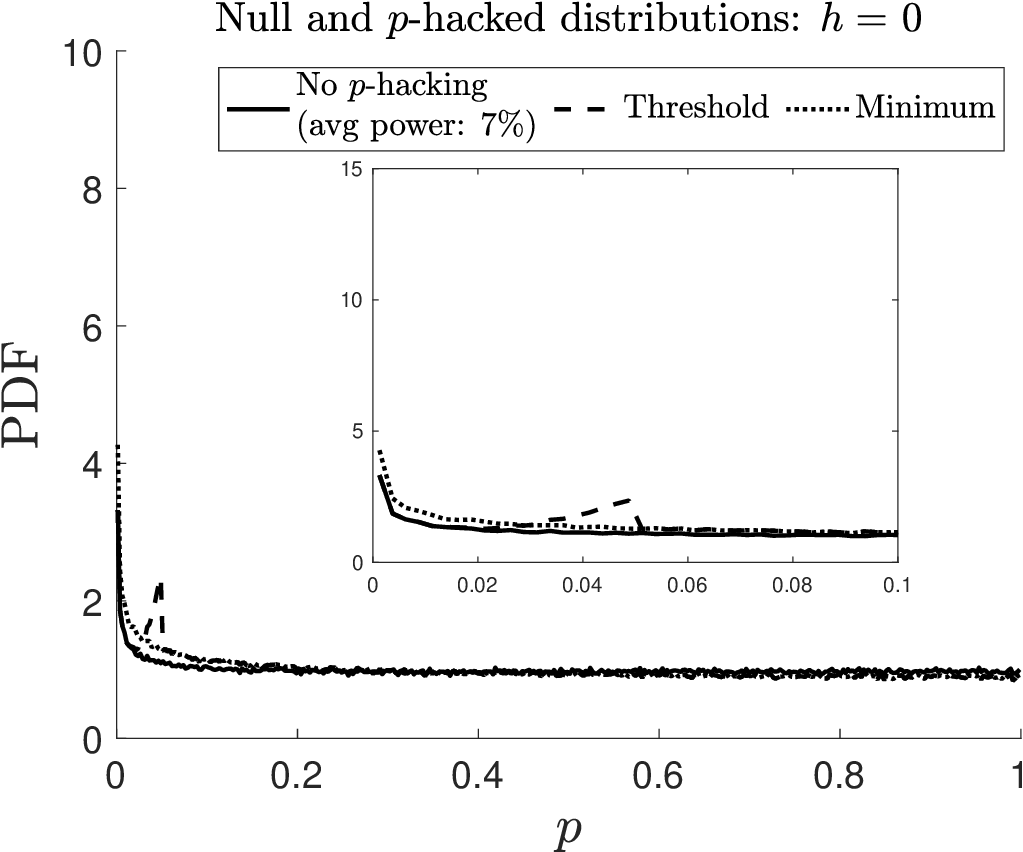}
\includegraphics[width=0.24\textwidth]{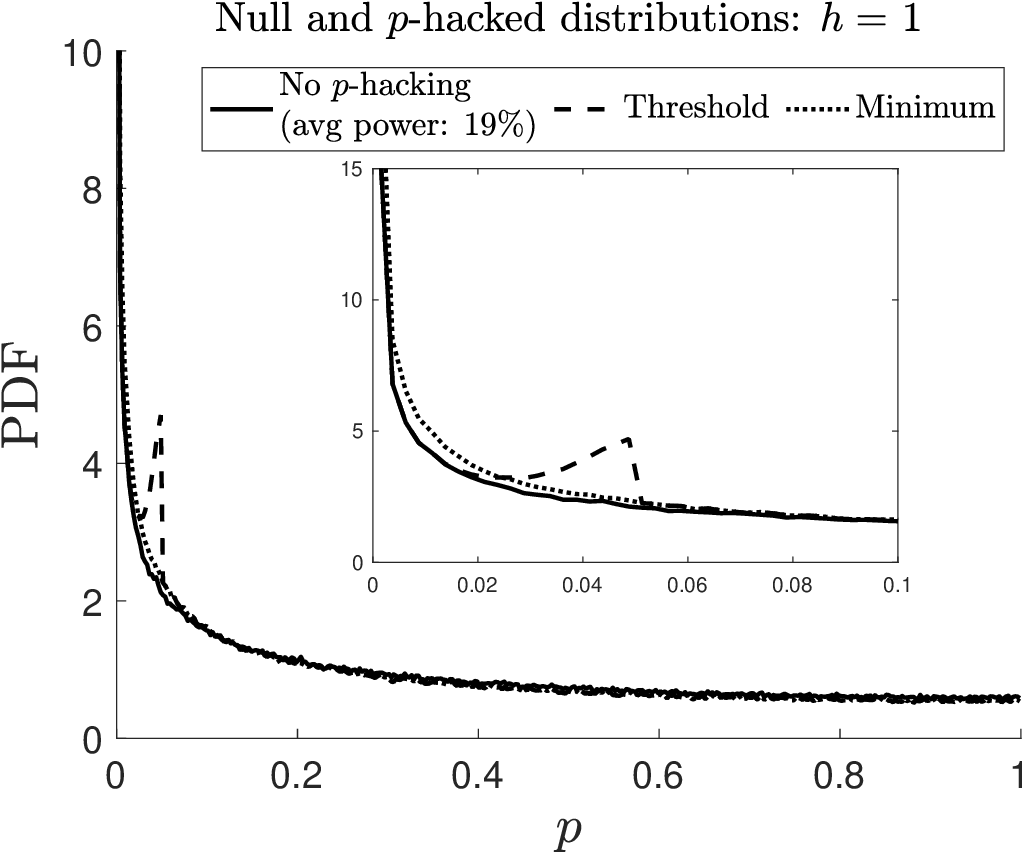}
\includegraphics[width=0.24\textwidth]{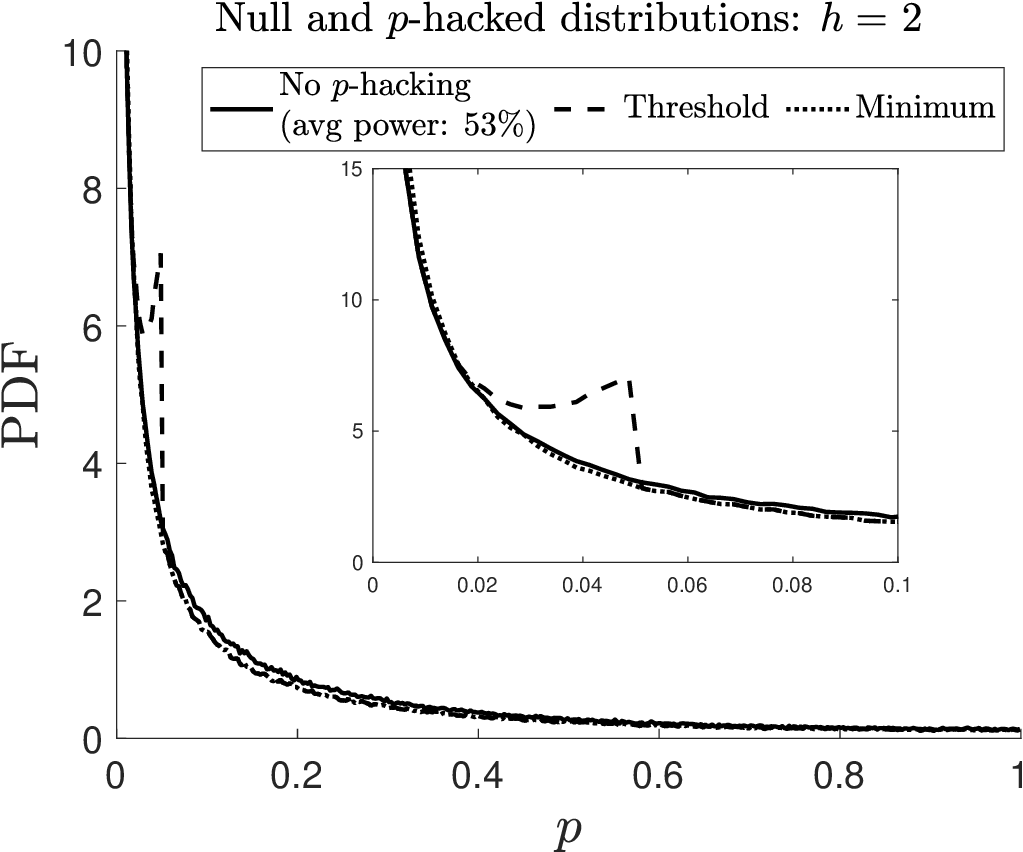}
\includegraphics[width=0.24\textwidth]{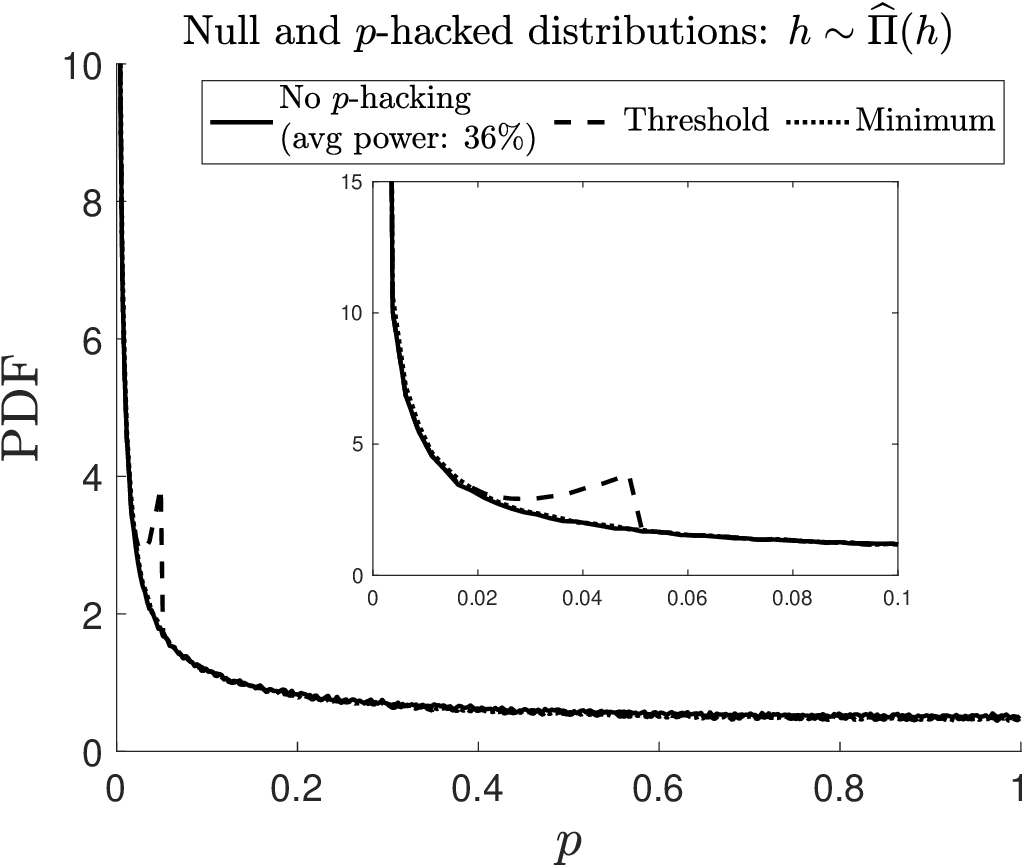}

\end{center}
\vspace{-2mm}

\footnotesize{\textit{Notes:} Figures show the null and alternative ($p$-hacked) distributions for cluster selection. They also report the average power of the underlying studies, which is equal to the probability mass below 0.05 under no $p$-hacking.}
\end{figure}

\section{The Impact of the Degree of Dependence between Tests on Power}
\label{app:impact_rho}

In this appendix, we study the impact of the degree of dependence between the test statistics used for $p$-hacking and the power for detecting $p$-hacking. Figure \ref{fig:power_rho} shows how power changes as a function of $\rho$ in the analytical example of Section \ref{sec:specification_search_regression}. The impact of changing $\rho$ is quite nuanced, especially under the threshold approach: it depends on the particular tests and the underlying testable implications, and it can even be non-monotonic. This is because $\rho$ has a complicated nonlinear impact on the $p$-curve, as can be seen from the analytical formulas for the $p$-curves under $p$-hacking in Section \ref{sec:specification_search_regression}.

\begin{figure}[H]

 \caption{Power as a function of $\rho$}\label{fig:power_rho}

 \vspace{-5mm}
 
 \begin{center}

 \includegraphics[width=0.24\textwidth]{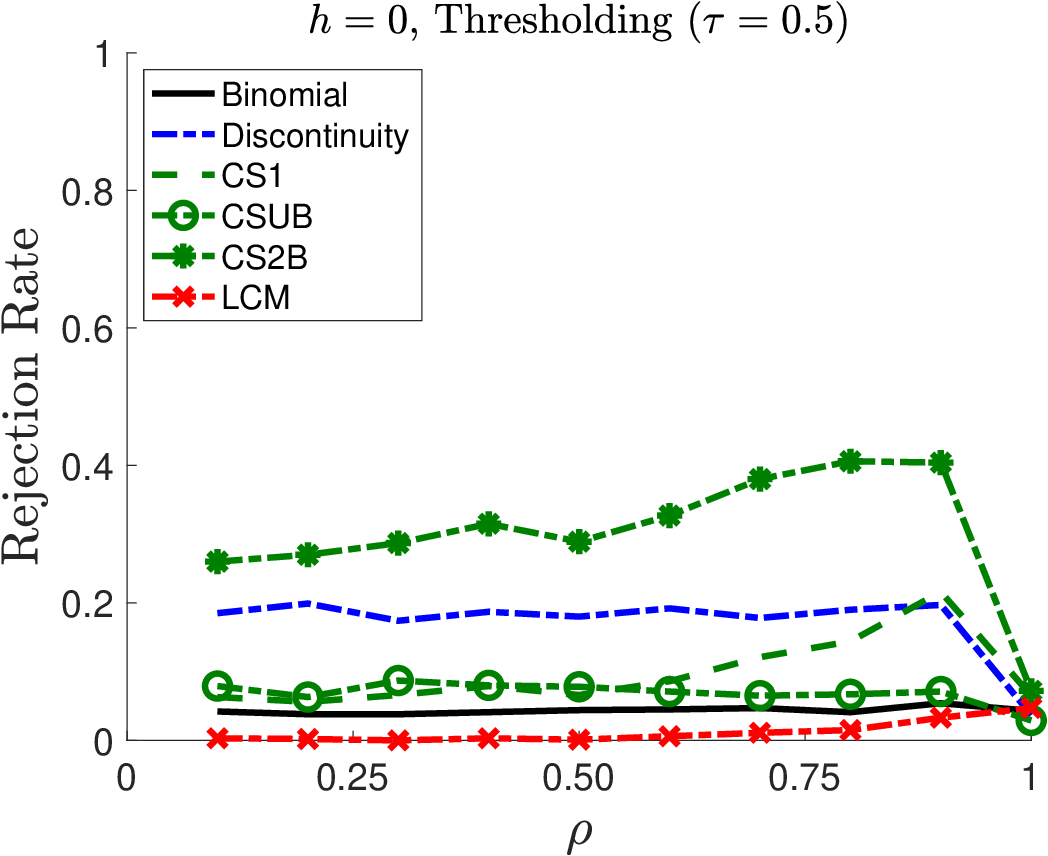}
 \includegraphics[width=0.24\textwidth]{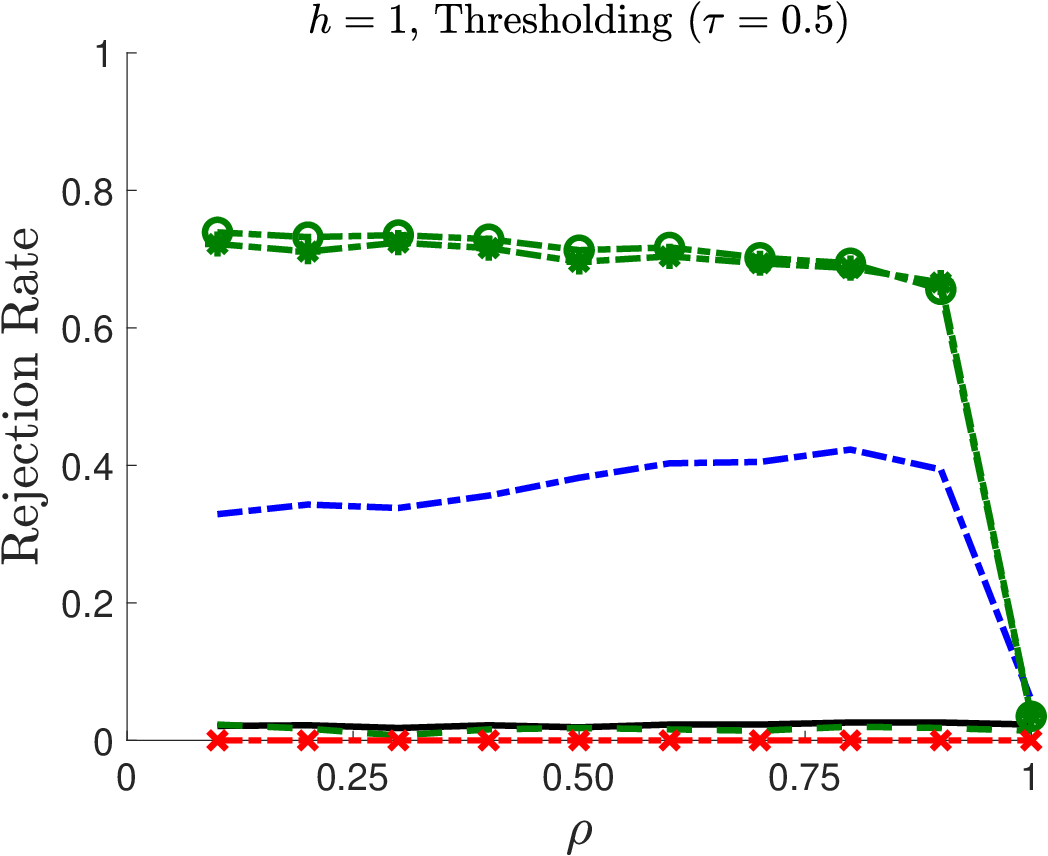}
 \includegraphics[width=0.24\textwidth]{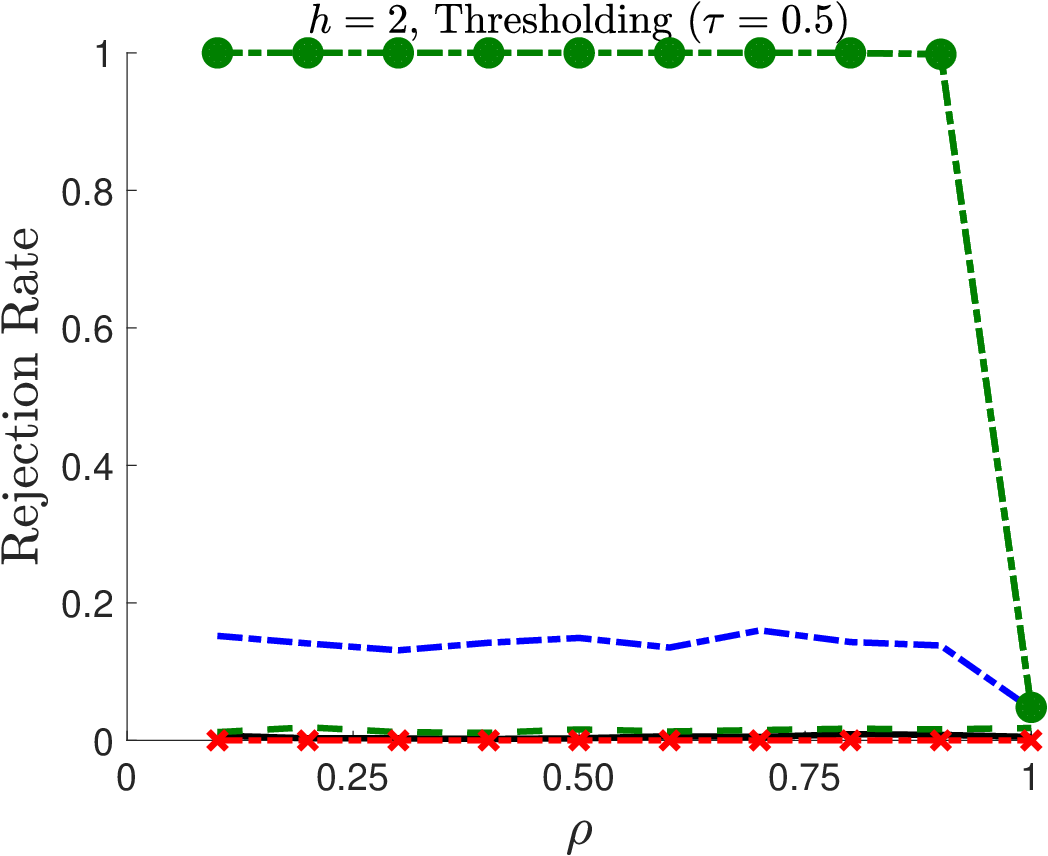}
 \includegraphics[width=0.24\textwidth]{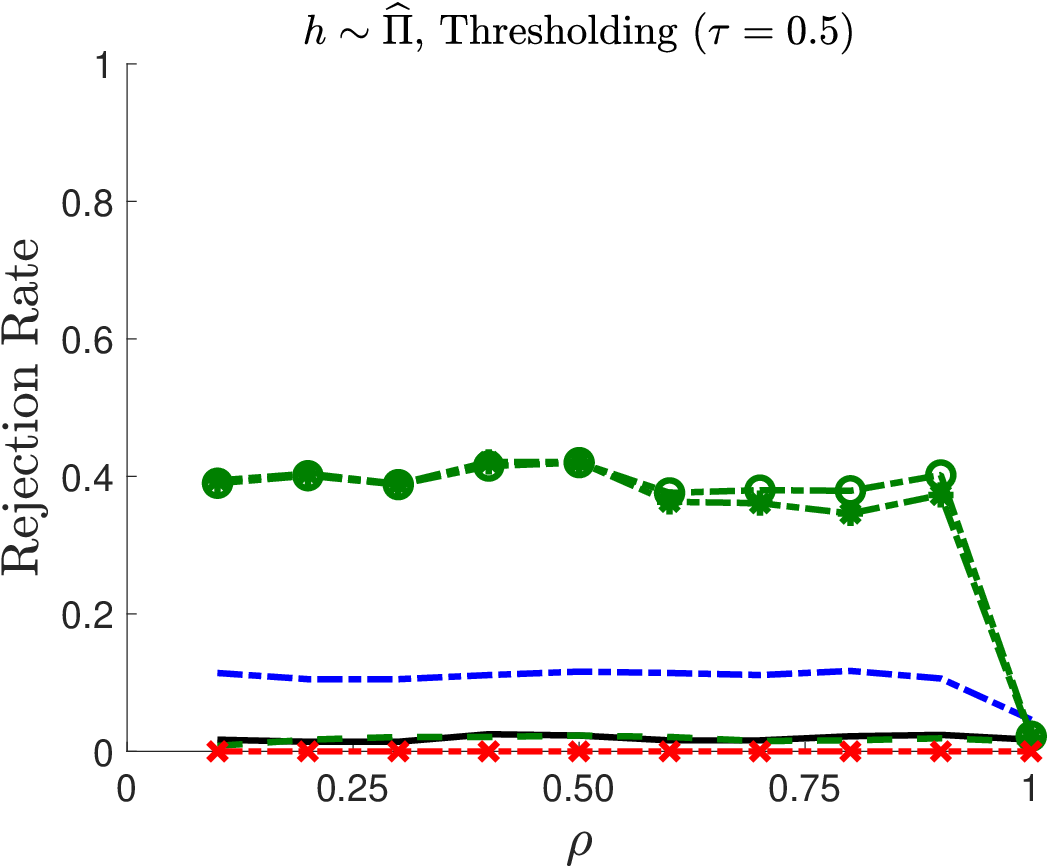}

 \includegraphics[width=0.24\textwidth]{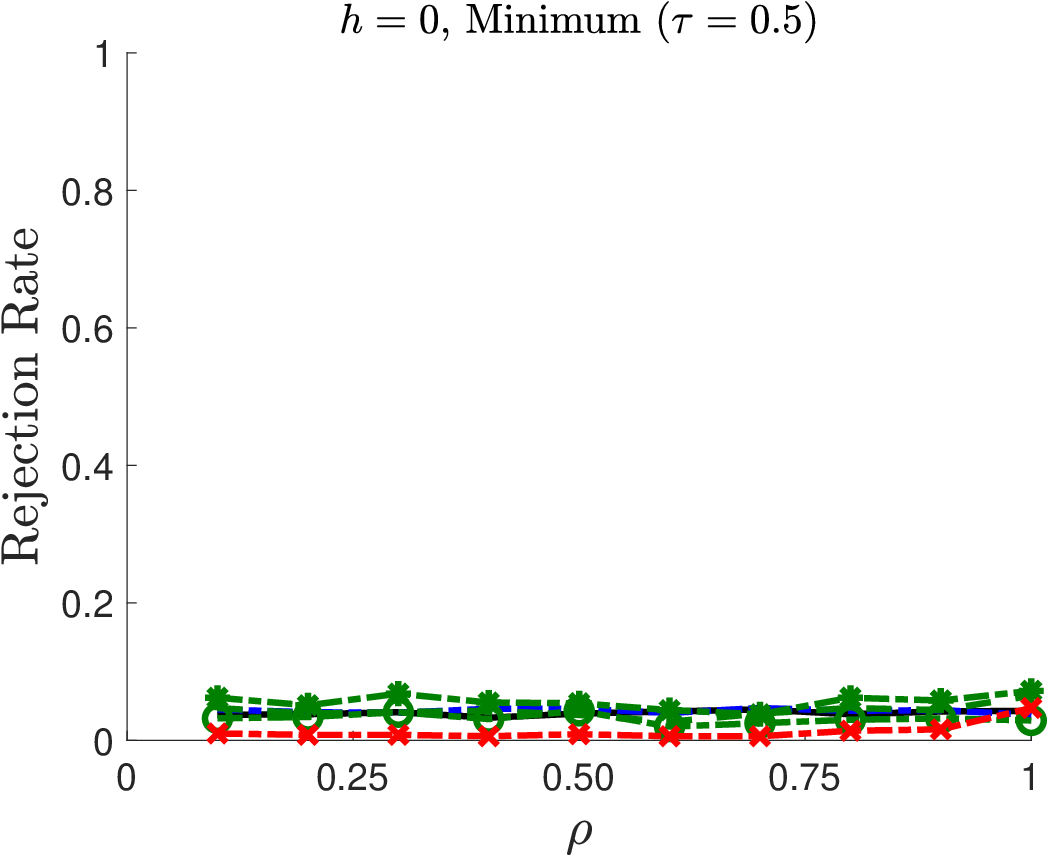}
 \includegraphics[width=0.24\textwidth]{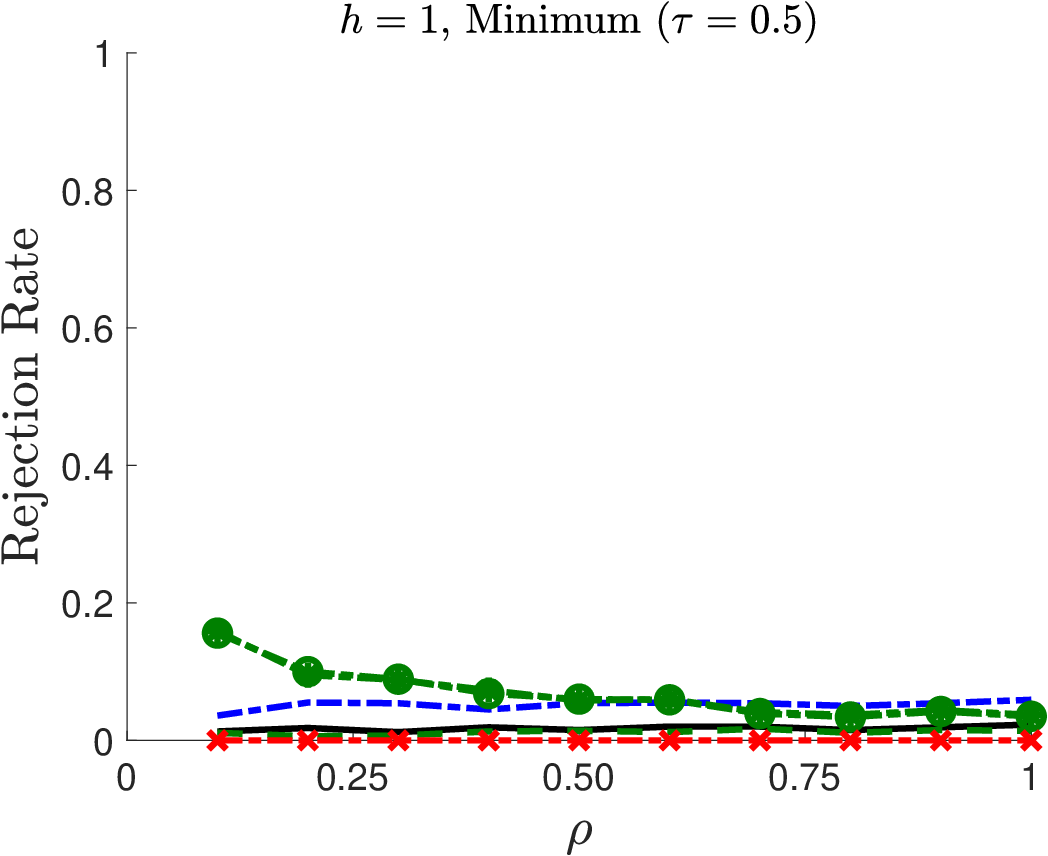}
 \includegraphics[width=0.24\textwidth]{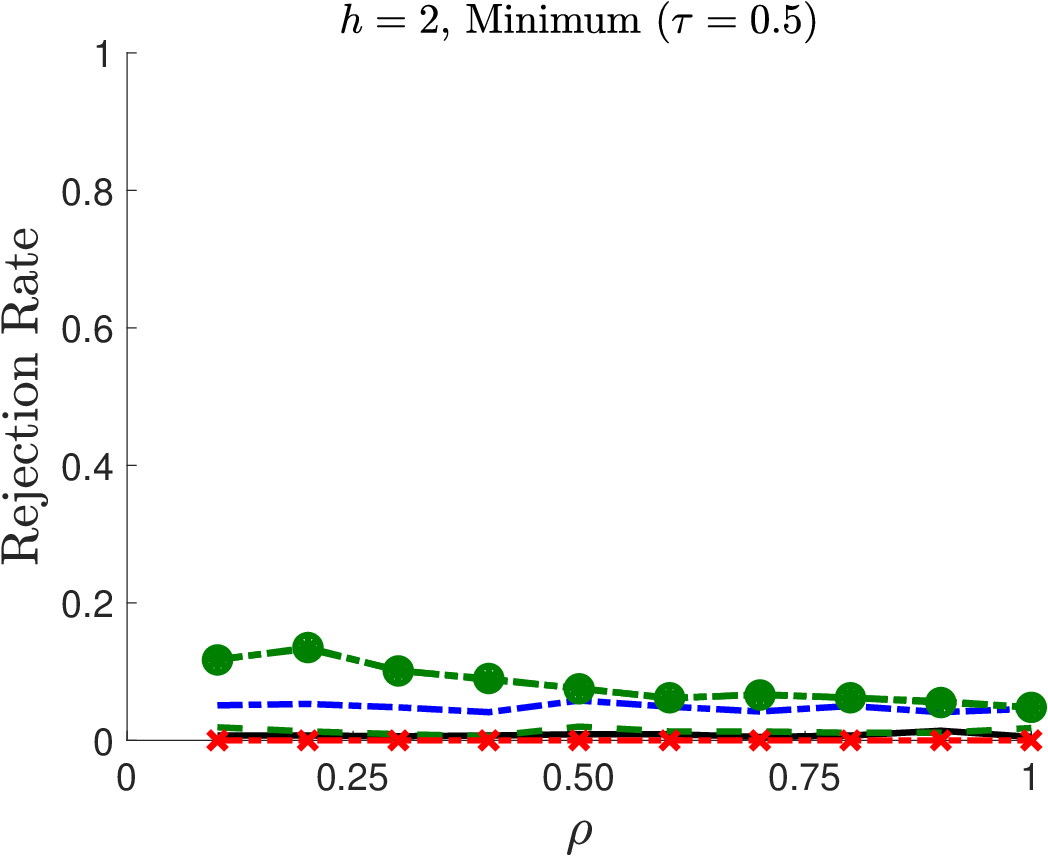}
 \includegraphics[width=0.24\textwidth]{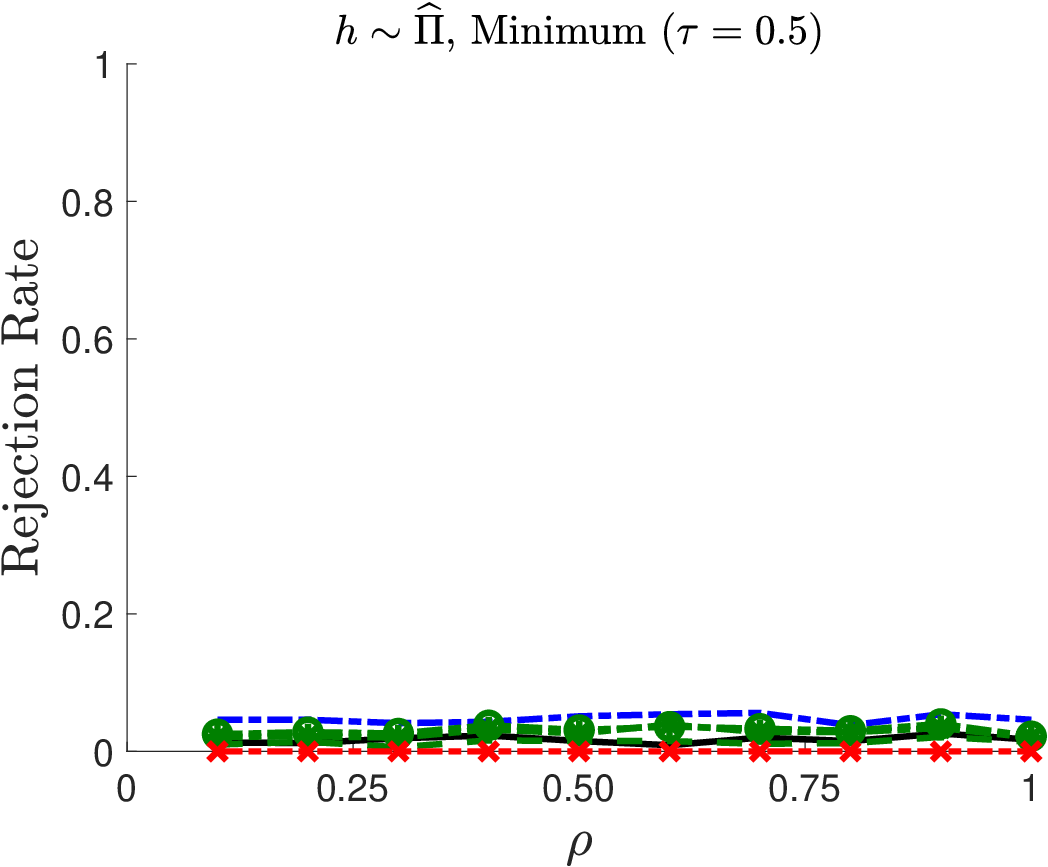}
 \end{center}
 \vspace*{-3mm}
 \footnotesize{\textit{Notes:} Figures show how power changes as a function of $\rho$ for covariate selection, as described in the analytical example in Section \ref{sec:specification_search_regression}, with one-sided tests. The results are based on 1,000 simulation repetitions.}
 \end{figure}

\section{The Impact of Selecting Incorrect $p$-Values on Power}
\label{app:sample_selection}

In this appendix, we analyze the importance of including $p$-values corresponding to the ``correct'' tests into the meta-analytic dataset, where, by ``correct'' test, we mean the main tests of interest that the researcher targets when $p$-hacking. Specifically, we augment the analytical example in Section \ref{sec:specification_search_regression} by letting the meta-analyst mistakenly record --- instead of the $p$-value for the coefficient of interest --- the $p$-value for the coefficient on the control variable $Z_j$ in the specification $j\in\{1,2\}$ reported by the researchers post $p$-hacking. The mistake occurs with probability $p_{\text{mis}}\in[0,1]$. 

Figure \ref{fig:power_cov_p_mis} plots the power of the tests as a function of $p_{\text{mis}}$. It shows that selection mistakes can substantially reduce power, even for the best tests. When $p_{\text{mis}}$ is large enough, no tests has any power for detecting $p$-hacking. The findings in this appendix underscore the importance of data quality when testing for $p$-hacking.

 \begin{figure}[H]

\caption{Power as a function of $p_{\text{mis}}$}\label{fig:power_cov_p_mis}

\vspace{-5mm}
 
\begin{center}

 \includegraphics[width=0.24\textwidth]{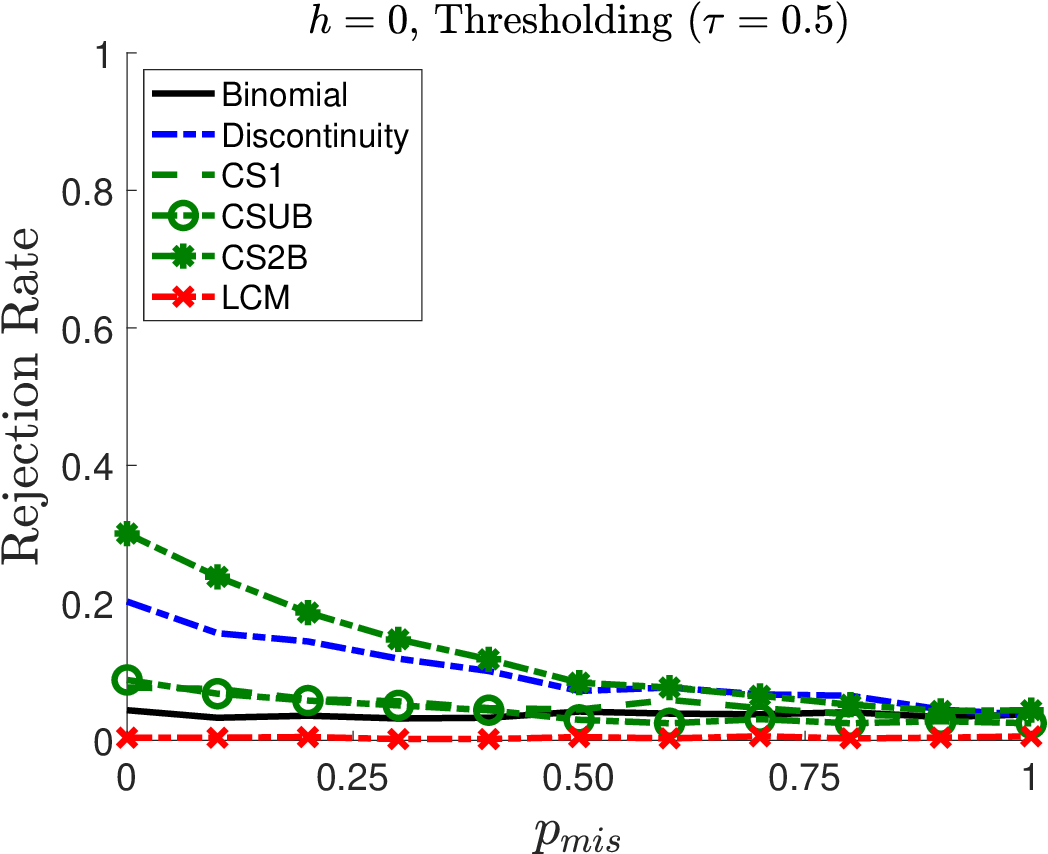}
 \includegraphics[width=0.24\textwidth]{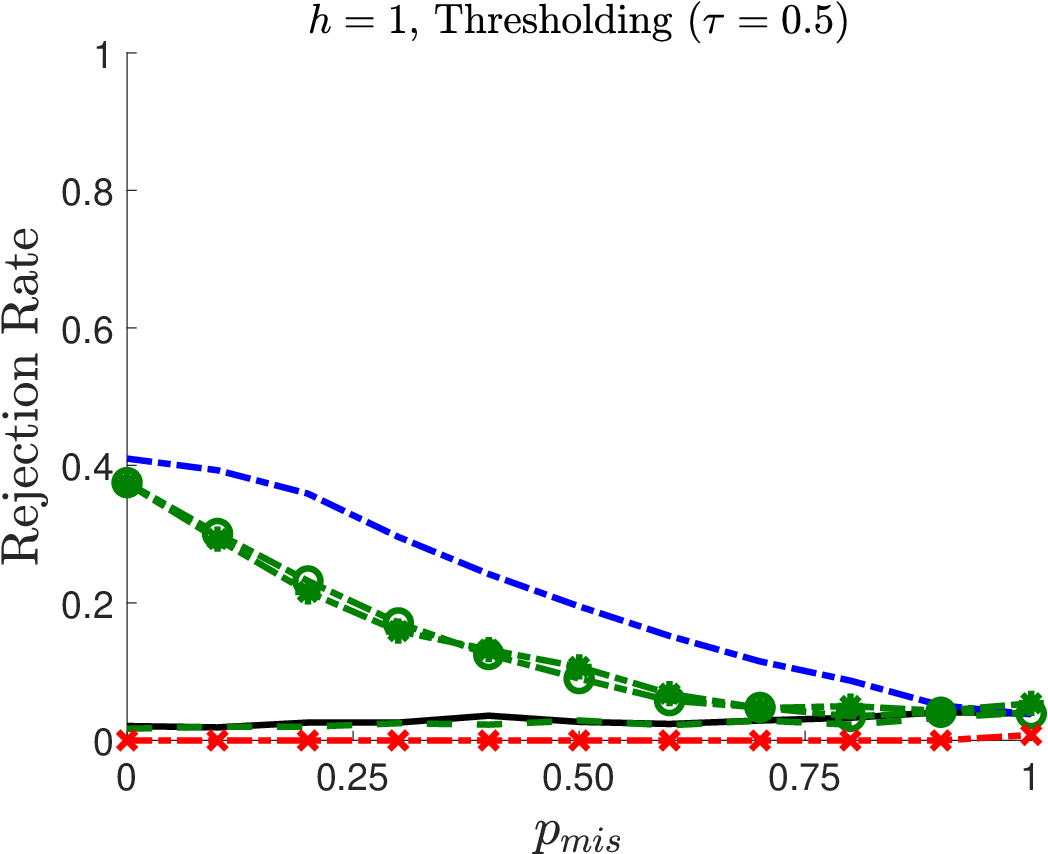}
 \includegraphics[width=0.24\textwidth]{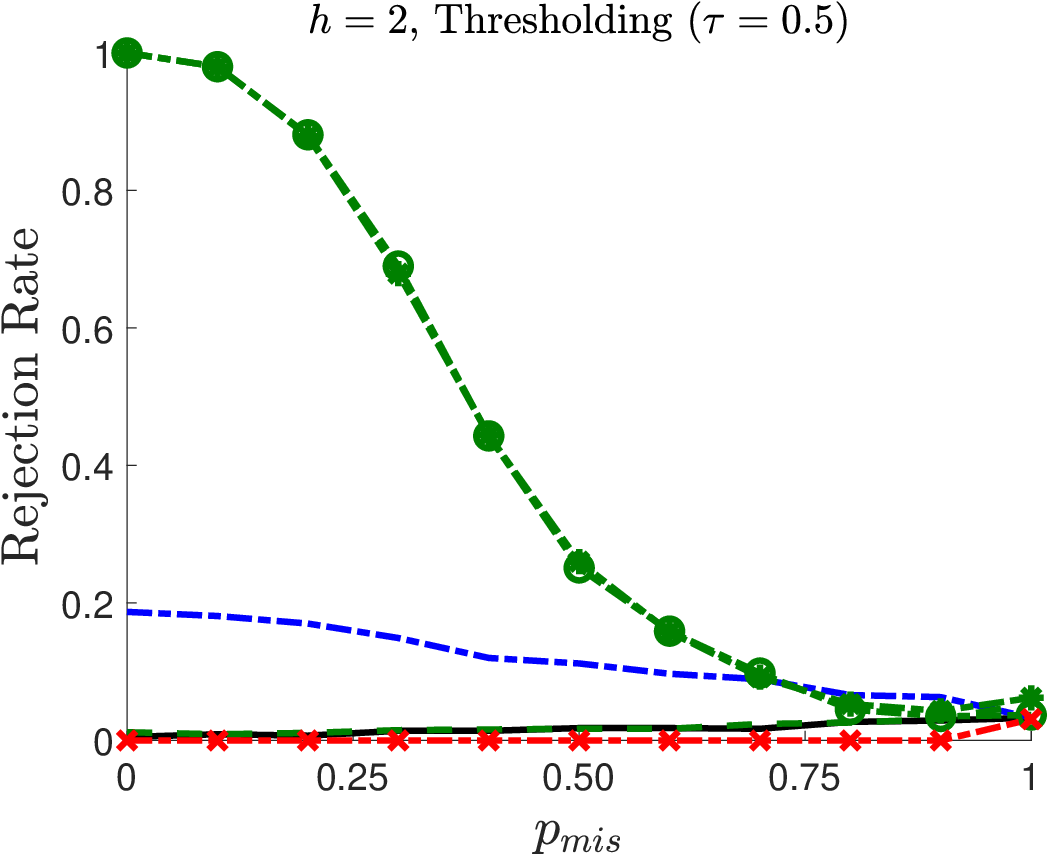}
 \includegraphics[width=0.24\textwidth]{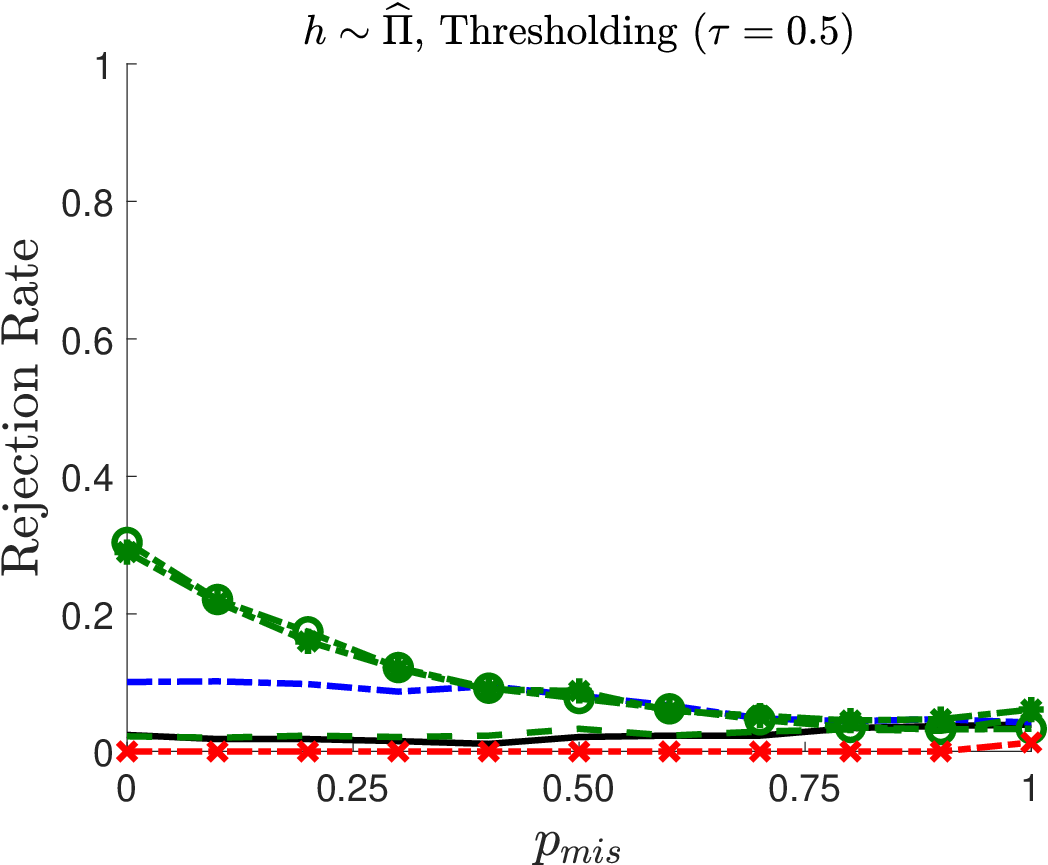}

 \includegraphics[width=0.24\textwidth]{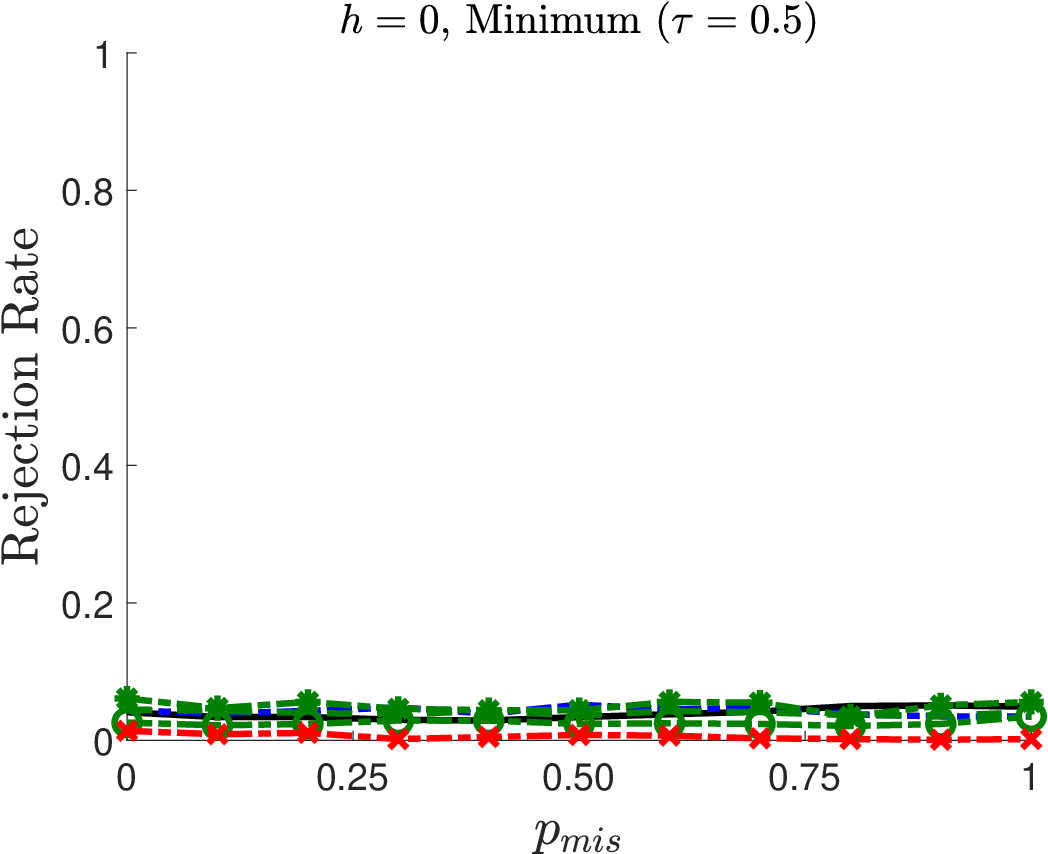}
 \includegraphics[width=0.24\textwidth]{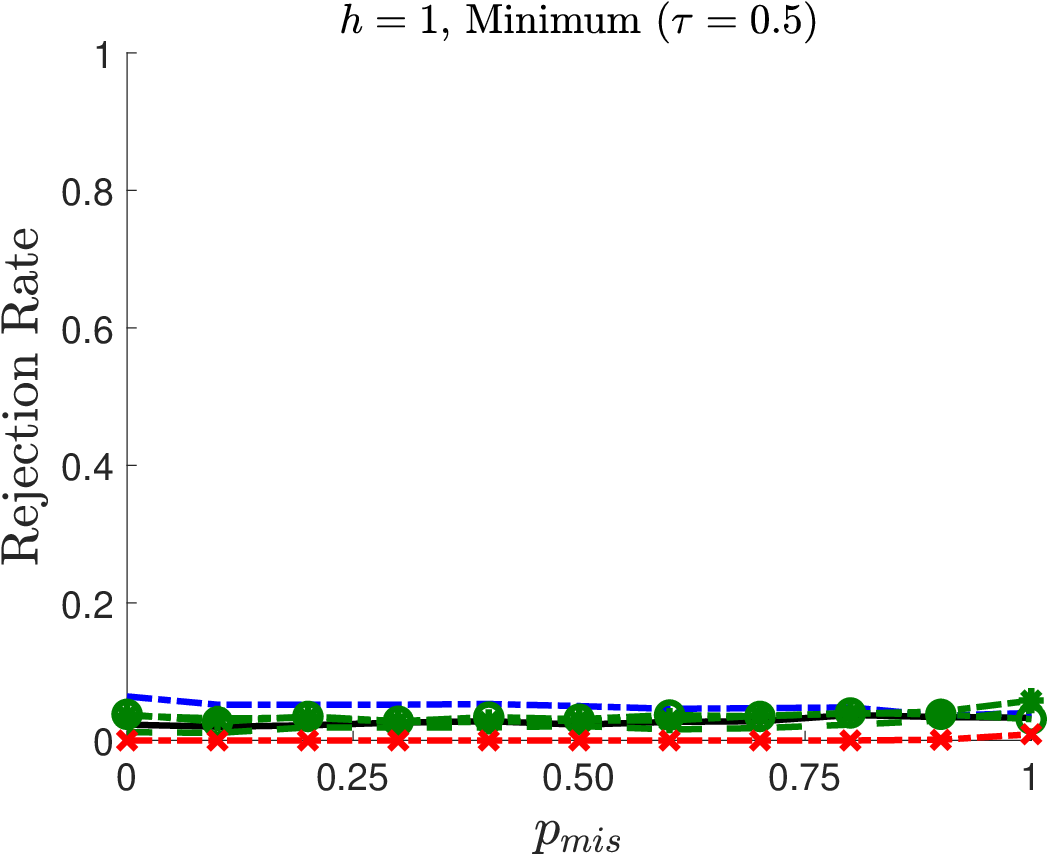}
 \includegraphics[width=0.24\textwidth]{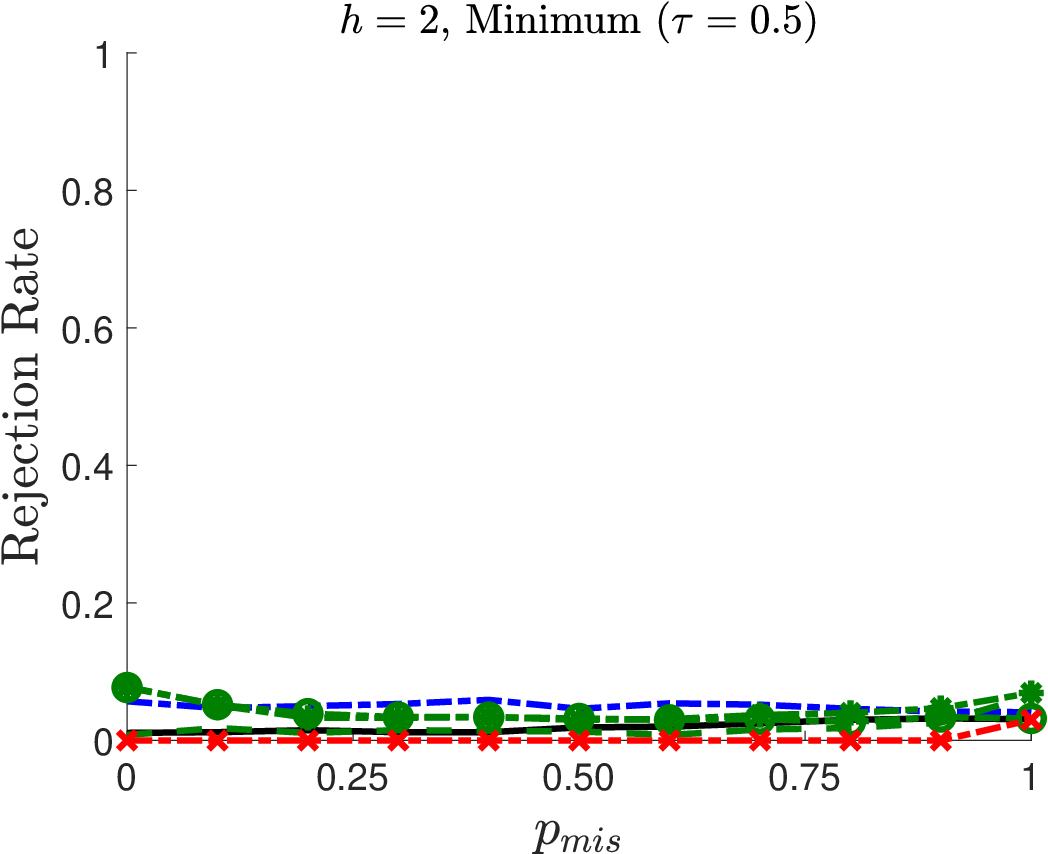}
 \includegraphics[width=0.24\textwidth]{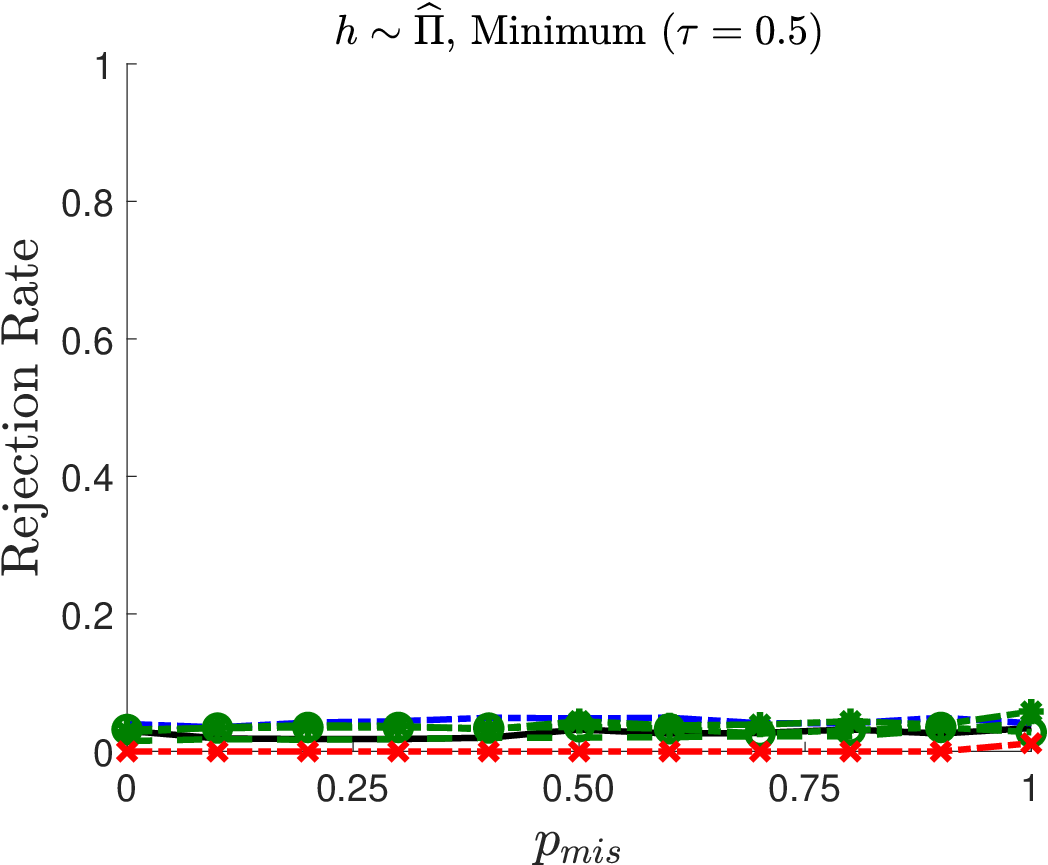}
 
\end{center}
\vspace*{-3mm}
\footnotesize{\textit{Notes:} Figures show how power changes as a function of the probability of selecting incorrect $p$-values, $p_{\text{mis}}$, for covariate selection, as described in the analytical example in Section \ref{sec:specification_search_regression}, with two-sided tests. The results are based on 1,000 simulation repetitions.}
\end{figure}

\section{The Impact of the Sample Size on Power}
\label{app:sample_size}

In this appendix, we analyze the relationship between the sample size of the meta-analytic dataset and the power of the tests for detecting $p$-hacking. Specifically, we redo the analysis corresponding to Figure \ref{fig:power_cov_combined_a} for different sample sizes ranging between 400 and 5,000 observed $p$-values. For illustration, we focus on the case where 50\% of the observed results are $p$-hacked ($\tau=0.5$). 

Figure \ref{fig:power_cov_sample_size} plots the power of the tests as a function of the sample size. It shows that the power can be very low even for detecting $p$-hacking based on the threshold approach when the sample size is insufficiently large. As expected, the power increases monotonically with the sample size, except for the cases where the power is trivial.

 \begin{figure}[H]

\caption{Power as a function of the sample size}\label{fig:power_cov_sample_size}

\vspace{-5mm}
 
\begin{center}

 \includegraphics[width=0.24\textwidth]{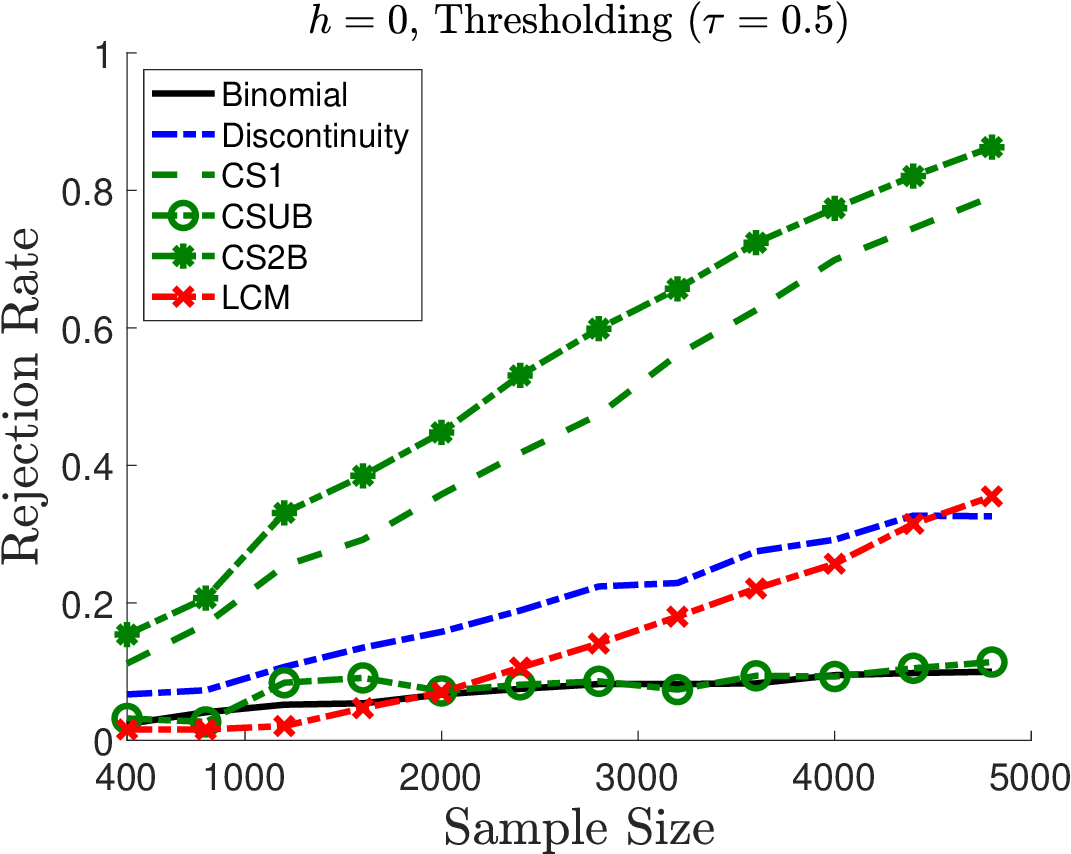}
 \includegraphics[width=0.24\textwidth]{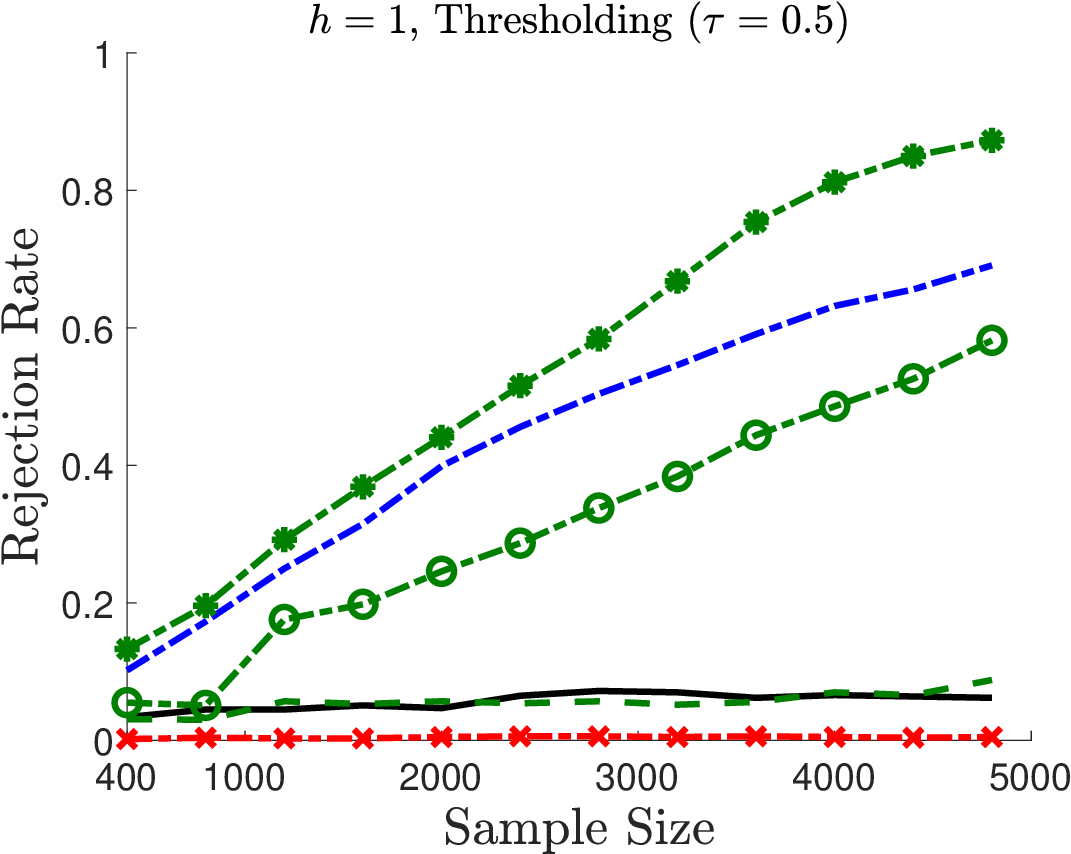}
 \includegraphics[width=0.24\textwidth]{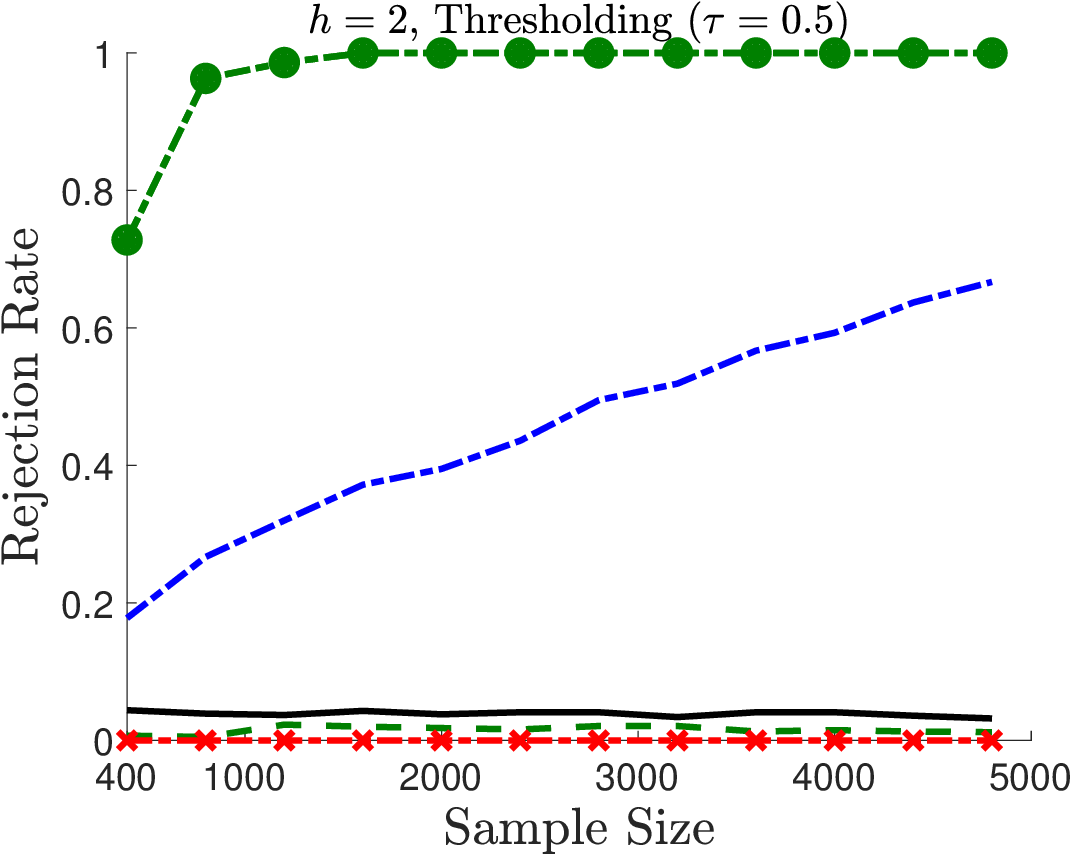}
 \includegraphics[width=0.24\textwidth]{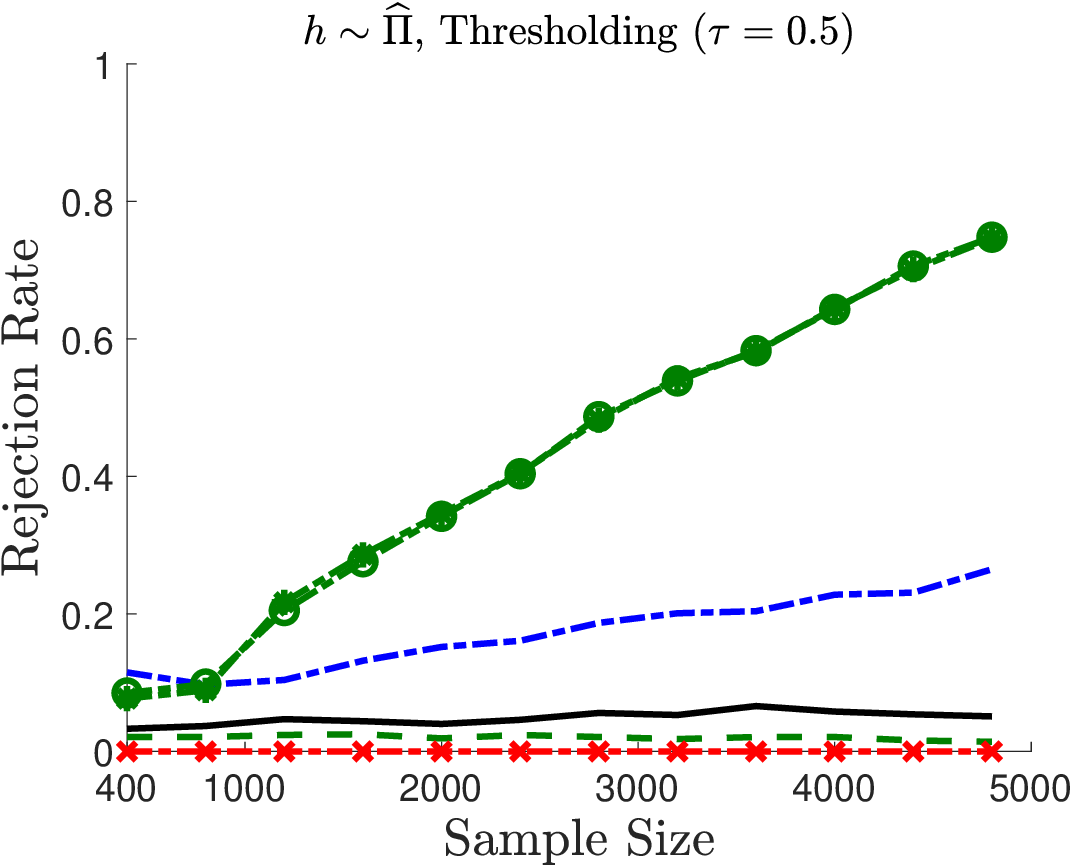}

 \includegraphics[width=0.24\textwidth]{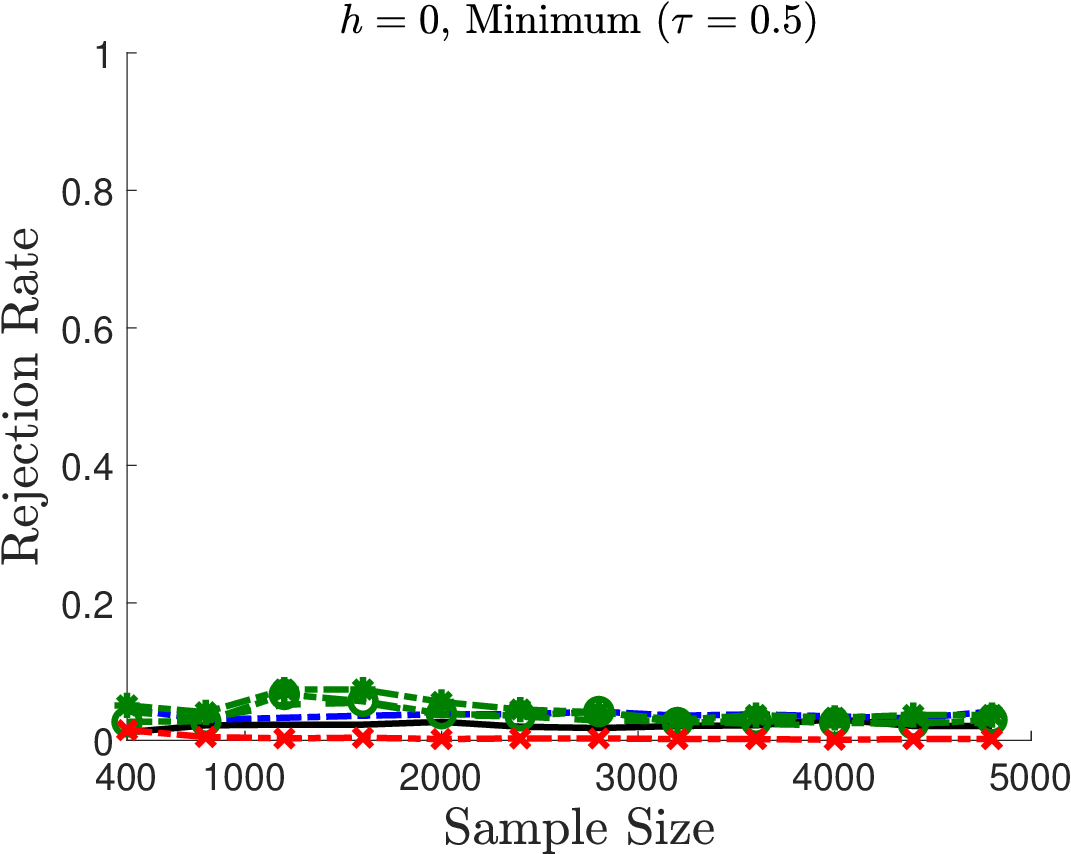}
 \includegraphics[width=0.24\textwidth]{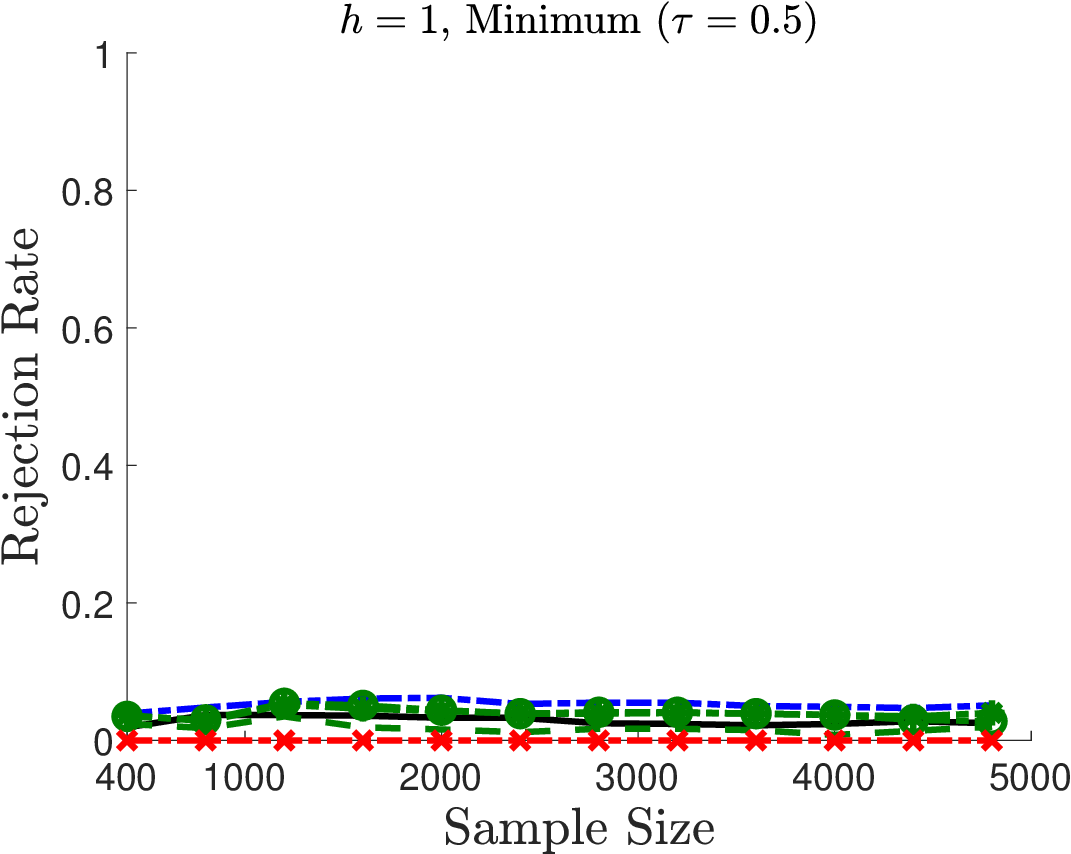}
 \includegraphics[width=0.24\textwidth]{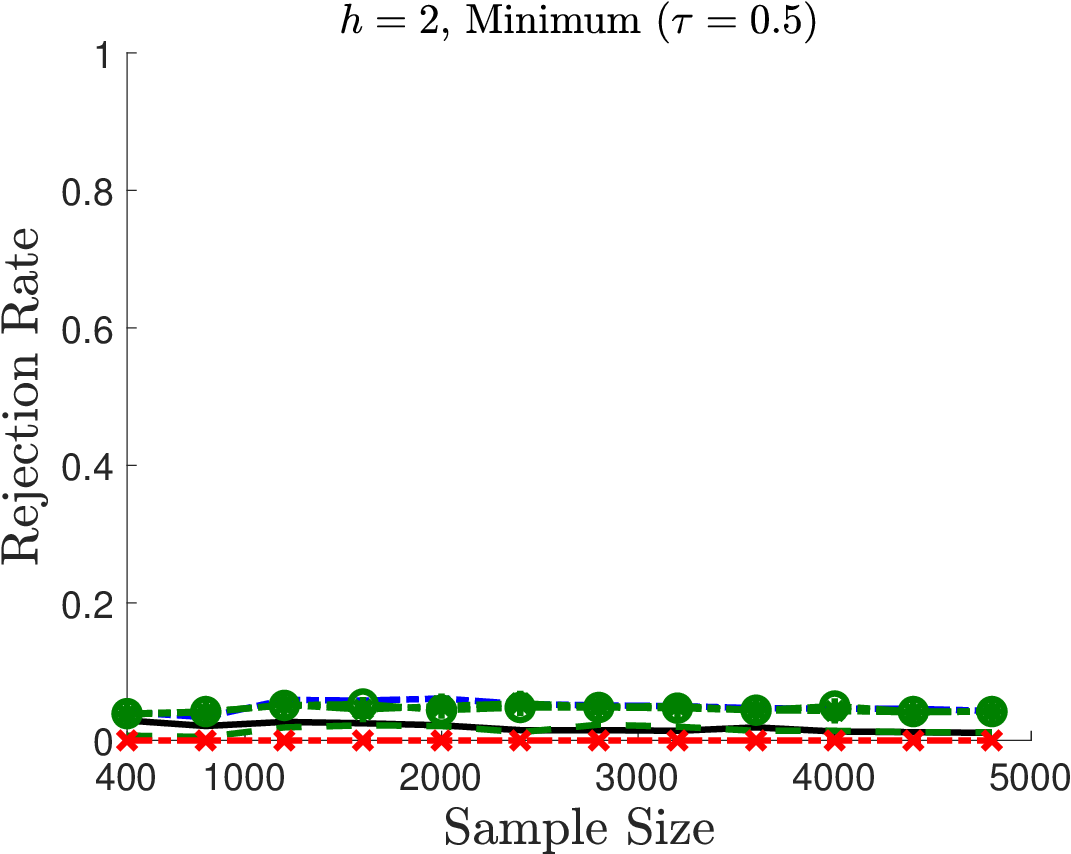}
 \includegraphics[width=0.24\textwidth]{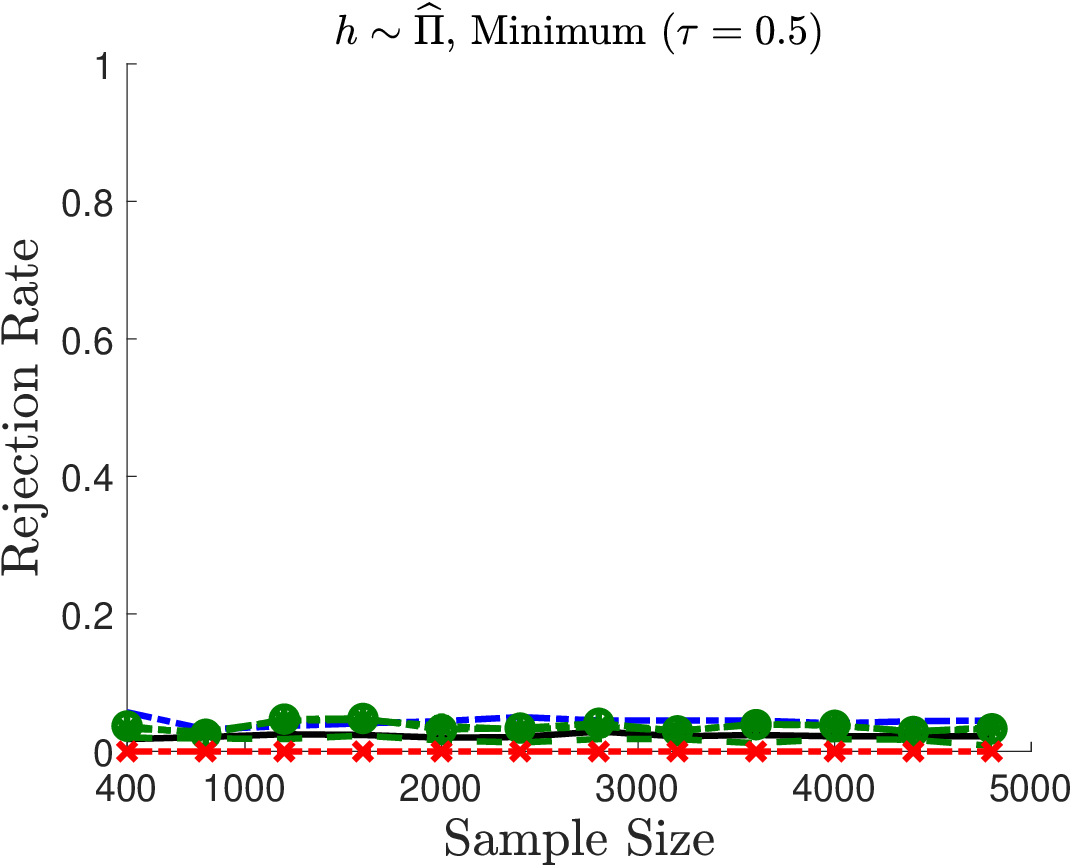}
 
\end{center}
\vspace*{-3mm}
\footnotesize{\textit{Notes:} Figures show how power changes as a function of the sample size for covariate selection, as described in Sections \ref{sec:covariate_selection_MC} (two-sided tests, general-to-specific, $K=3$) and \ref{sec:simulations_setup}. To ensure size control, we use 6 evenly spaced bins (0, 0.025], (0.025, 0.05],$\dots$, (0.125, 0.15] when the sample size is below 1,000 instead of 15 bins when the sample size is at least 1,000. The results are based on 1,000 simulation repetitions.}
\end{figure}

\section{The Relationship between Power and Bias}
\label{app:power_bias}

As discussed in Appendix \ref{app:theory_bias_size_distortions}, $p$-hacking has two types of costs: size distortions when $h=0$ and biased estimates. Here we relate the power of the tests to the costs of $p$-hacking, measured in terms of the relative bias of the estimates (i.e., the bias rescaled by the true effect). We do not show results relating the degree of size distortion to power because the degree of size distortion is linear in $\tau$, so that the power curves for $h=0$ already capture this relationship.

Figure \ref{fig:power_bias} plots the relationship between the relative bias and the power of the tests for covariate selection with $K=3$. We focus on values of $\tau$ for which the power curves of the best tests in Figure \ref{fig:power_cov_combined} are increasing to avoid trivial results. Specifically, we show results for $\tau=0.25$ for the threshold approach and $\tau=0.9$ for the minimum approach, reflecting the much lower power of even the best tests for detecting $p$-hacking based on the minimum approach. We restrict attention to the relative bias implied by $h\in \{0.25,\dots,4.5\}$. We exclude values of $h$ closer to zero (which would lead to a higher relative bias) because the Monte Carlo error makes it increasingly difficult to estimate the relative bias accurately as $h\rightarrow 0$.

The left panel of Figure \ref{fig:power_bias} shows the results for the threshold approach. For the two most powerful tests (CS2B and CSUB), the relationship between the relative bias and power is hump-shaped, with the maximum power being achieved when the relative bias is between 10\% and 20\%. The power of the discontinuity test is increasing below 30\% relative bias and then decreasing. For CS1, the power is initially low and then increases sharply for values of the relative bias above 30\%; the power of the Binomial test is slightly increasing, but the test has no or very low power for all values of the relative bias; the LCM test has no power for the values of the relative bias considered.

The right panel of Figure \ref{fig:power_bias} shows the results for the minimum approach. Only the CSUB and CS2B have power exceeding size for some values of the relative bias. The relationship between relative bias and power is hump-shaped and quite similar for both of these tests. The maximum power is lower and achieved at higher values of the relative bias than under thresholding. 

\begin{figure}[H]

\caption{Power vs. relative bias covariate selection  with $K=3$}\label{fig:power_bias}

\vspace{-5mm}

\begin{center}

\includegraphics[width=0.49\textwidth]{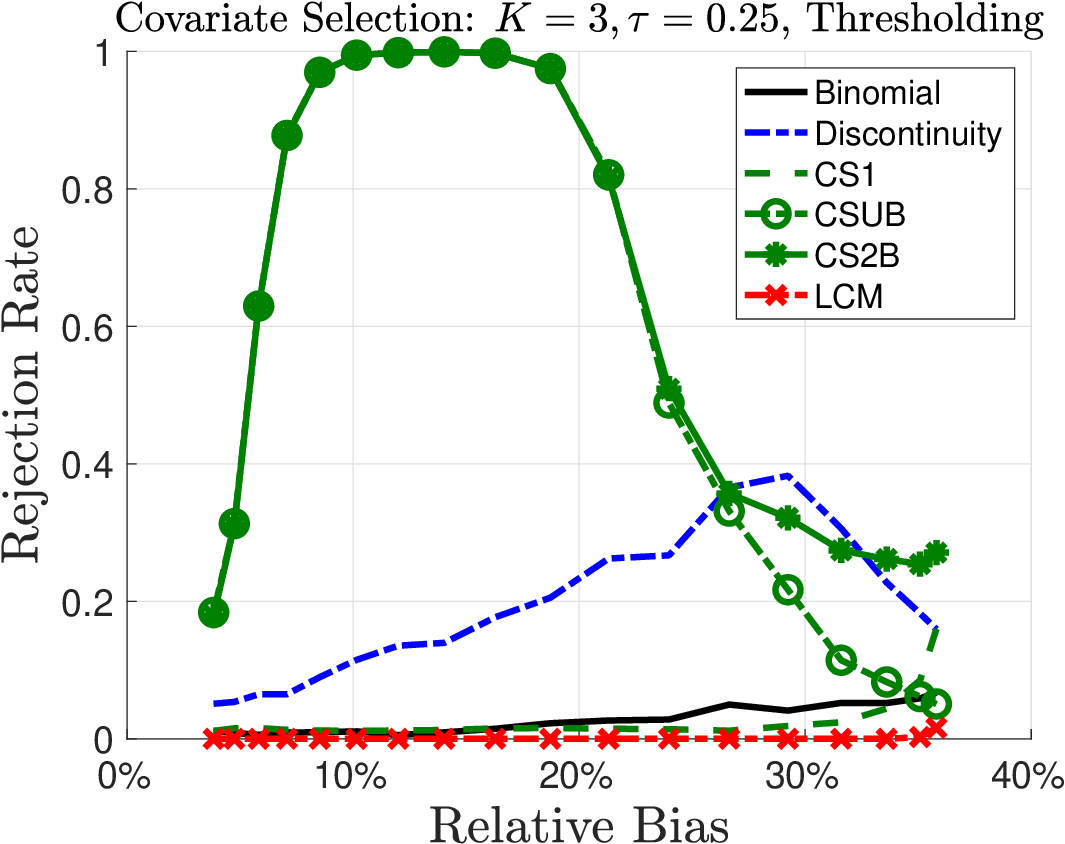}
\includegraphics[width=0.49\textwidth]{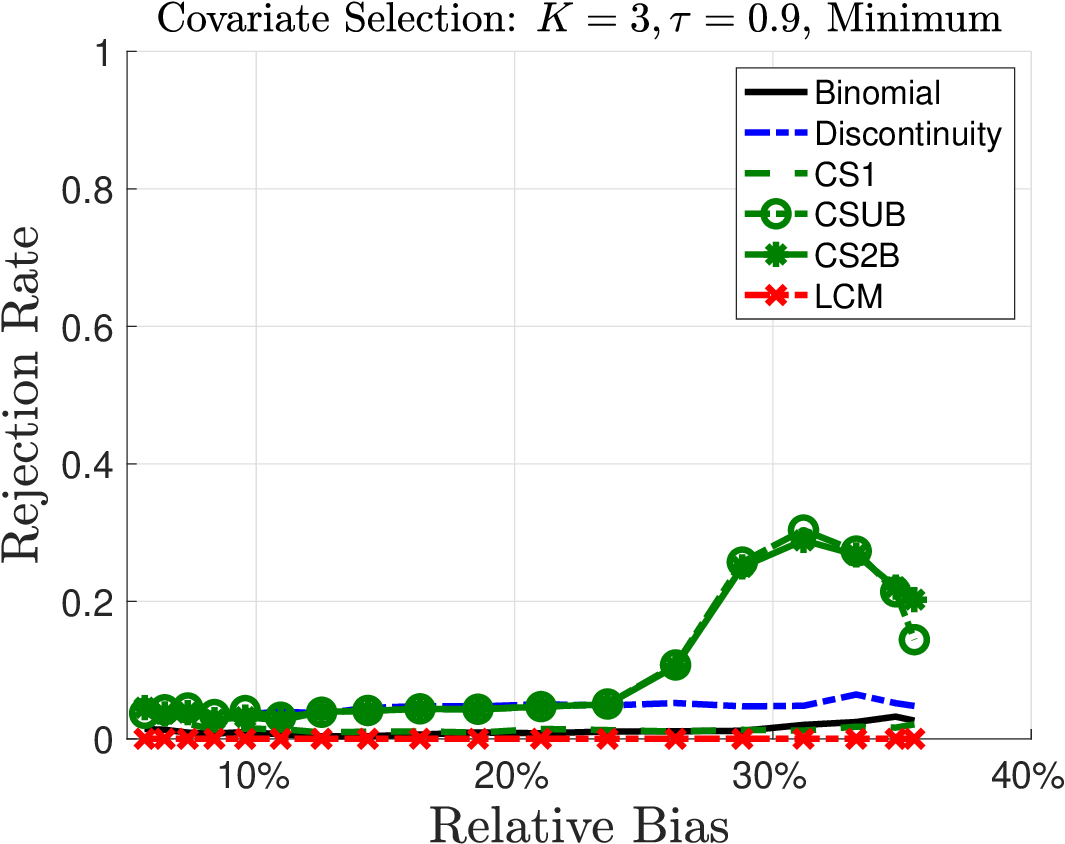}

\end{center}
\vspace*{-3mm}
\footnotesize{\textit{Notes:} Figures show rejection rates of the tests in Table \ref{tab:tests} as a function of the relative bias implied by $h\in\{0.25,0.50,\dots,4.5\}$ and a $p$-hacking approach (thresholding or minimum). The simulation design is described in Sections \ref{sec:covariate_selection_MC} (two-sided tests, general-to-specific, $K=3$) and \ref{sec:simulations_setup}. The results are based on 5,000 simulation repetitions.}
\end{figure}

\section{Additional Simulation Results}
\label{app:additional_simulation_results}

 \begin{figure}[H]

\caption{Power curves covariate selection with $K=3$ (one-sided tests, general-to-specific)}\label{fig:power_cov_K3}

\vspace{-5mm}
 
\begin{center}
\textbf{Thresholding}

%\smallskip

\includegraphics[width=0.24\textwidth]{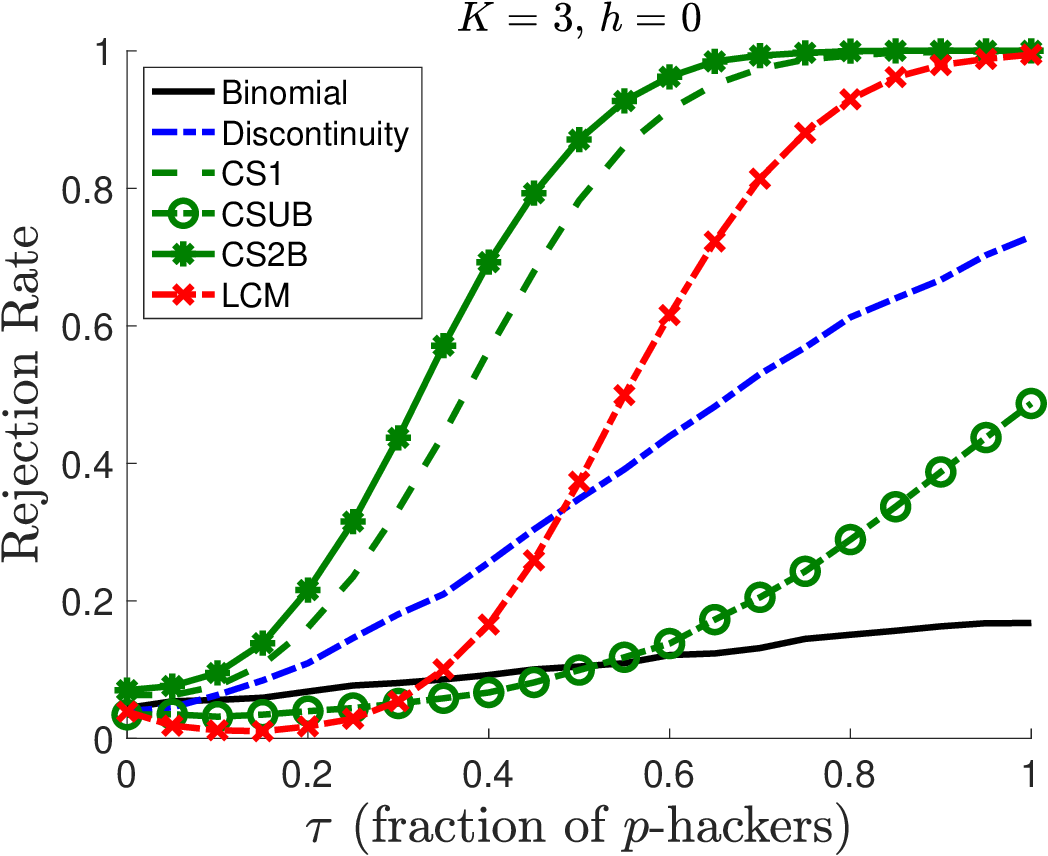}
\includegraphics[width=0.24\textwidth]{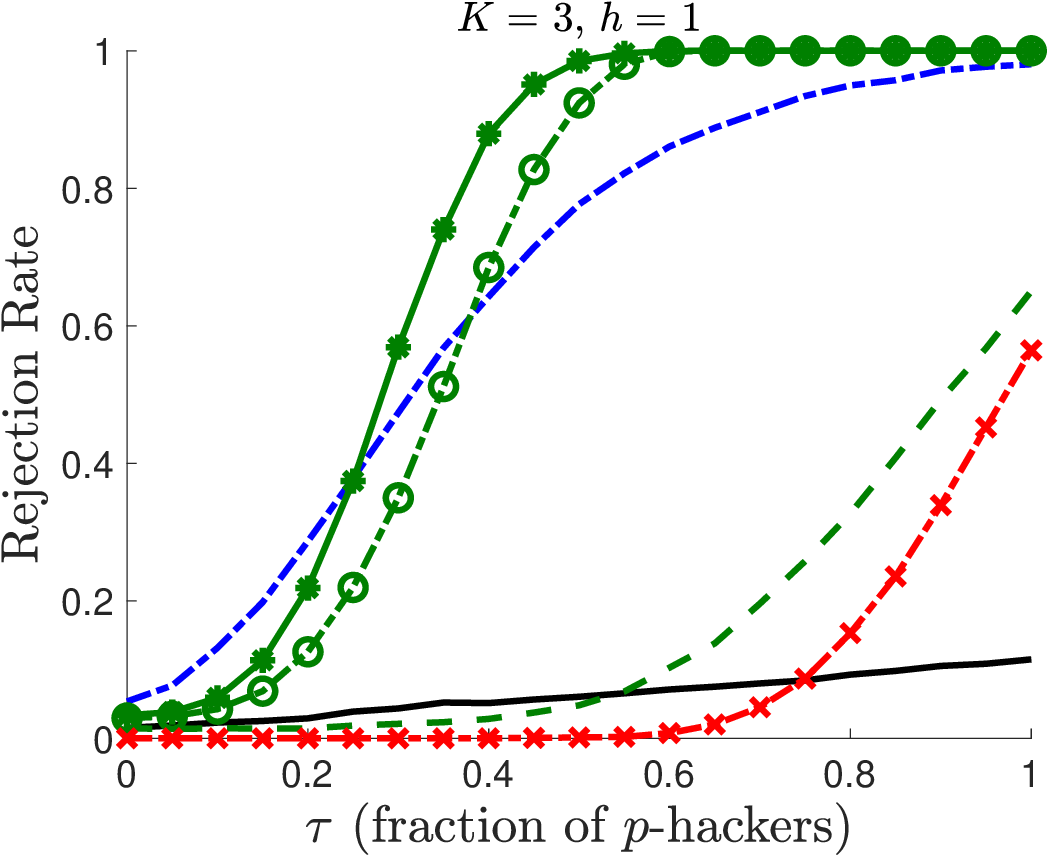}
\includegraphics[width=0.24\textwidth]{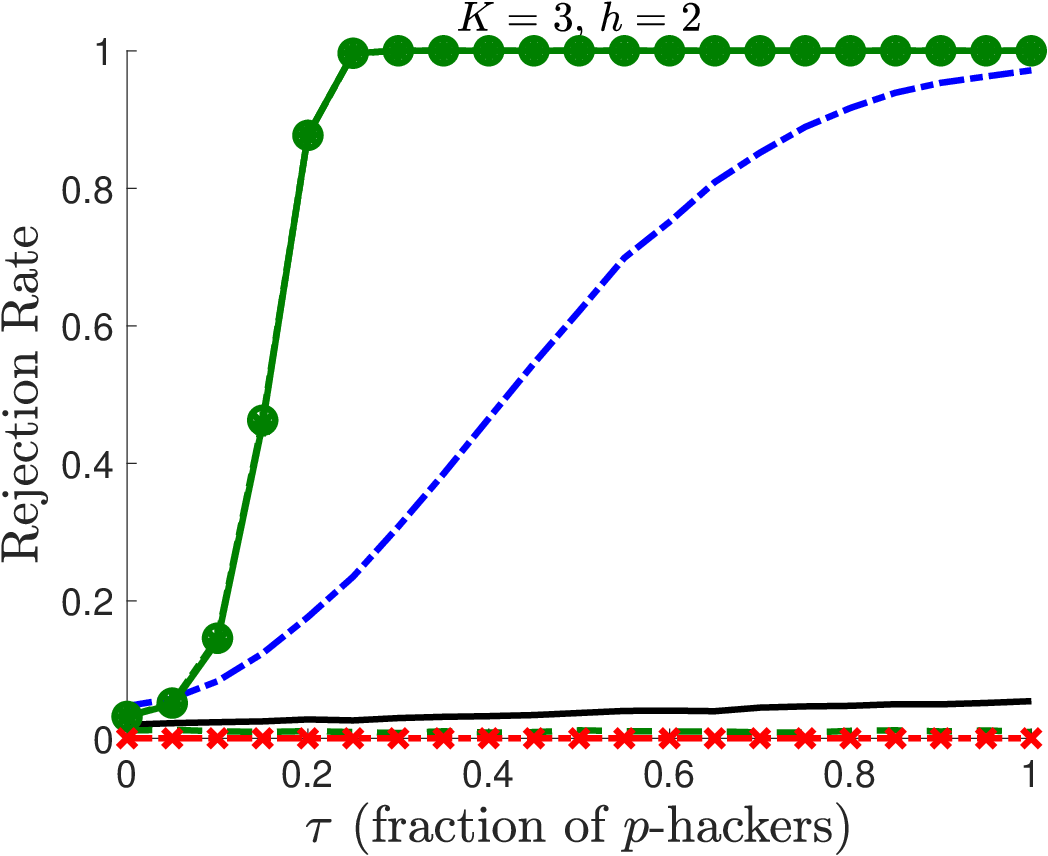}
\includegraphics[width=0.24\textwidth]{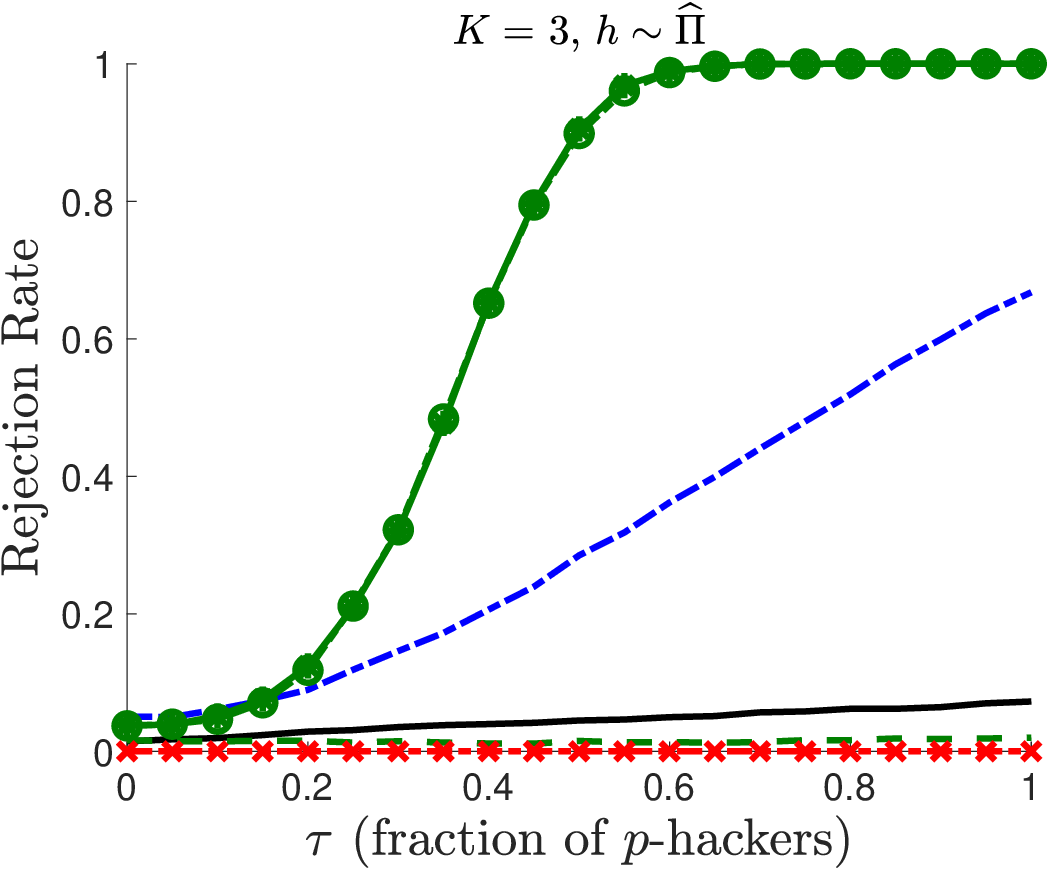}

\textbf{Minimum}
\smallskip
%\vspace{0.15em}
%\medskip

\includegraphics[width=0.24\textwidth]{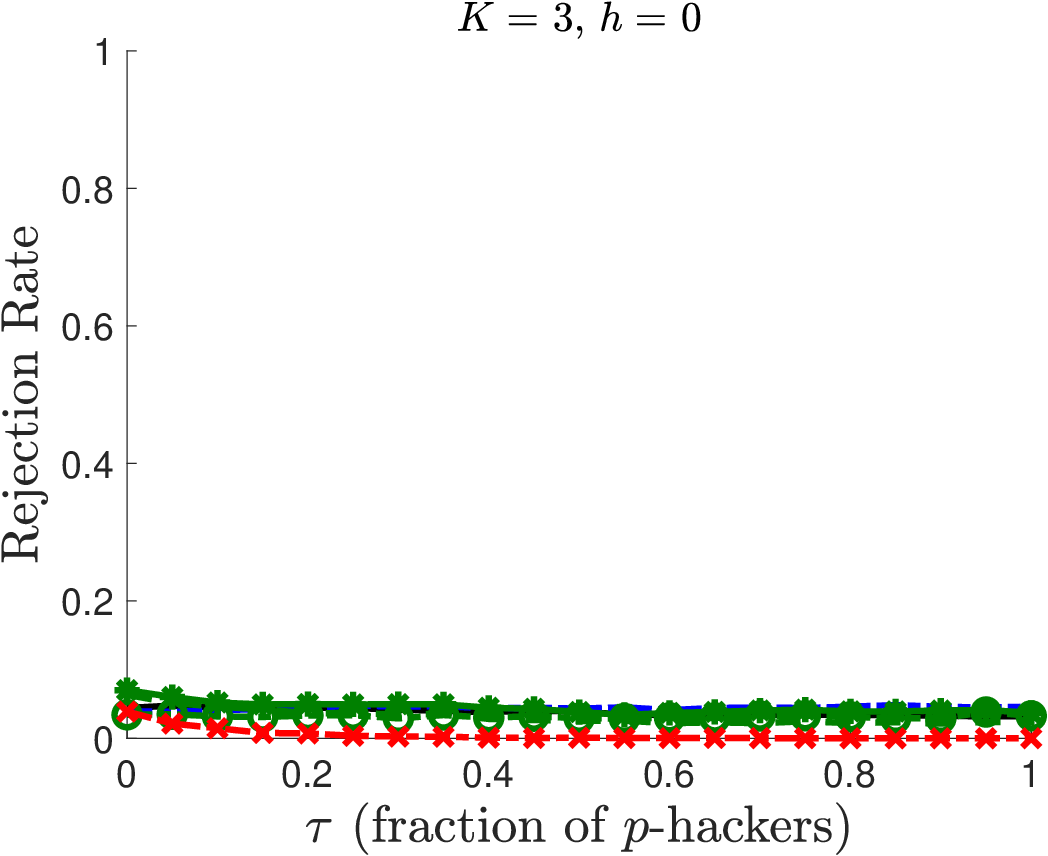}
\includegraphics[width=0.24\textwidth]{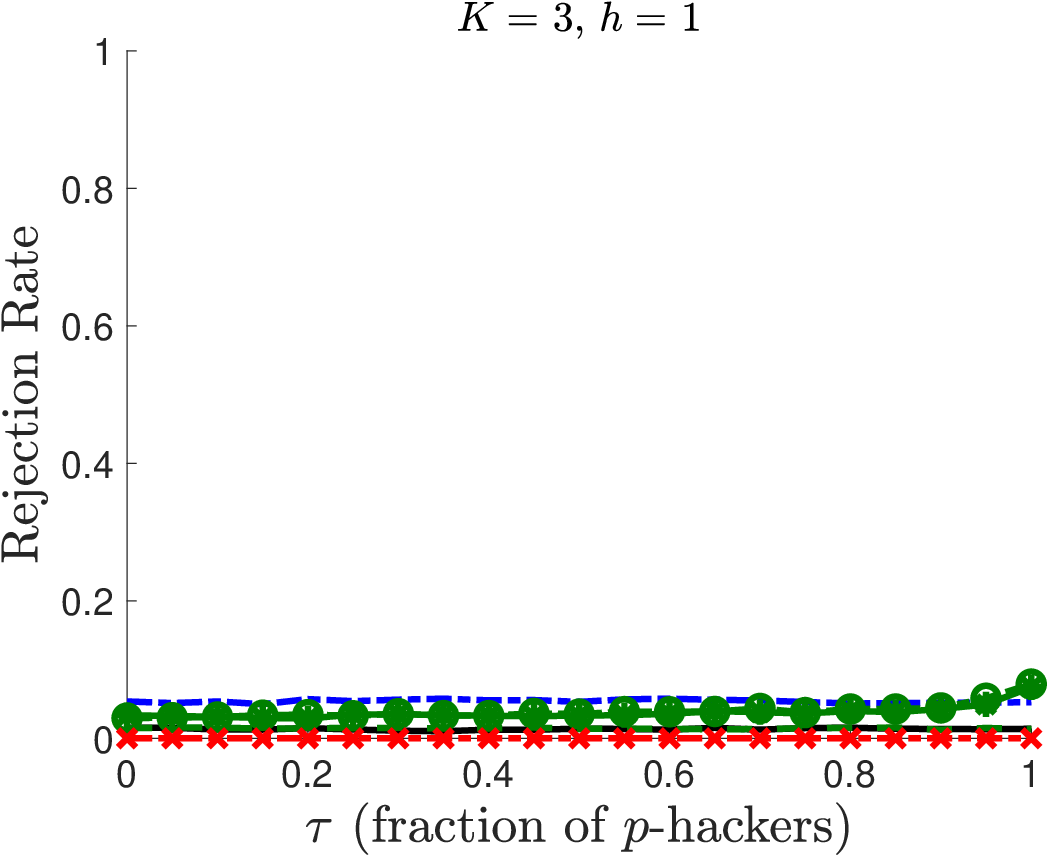}
\includegraphics[width=0.24\textwidth]{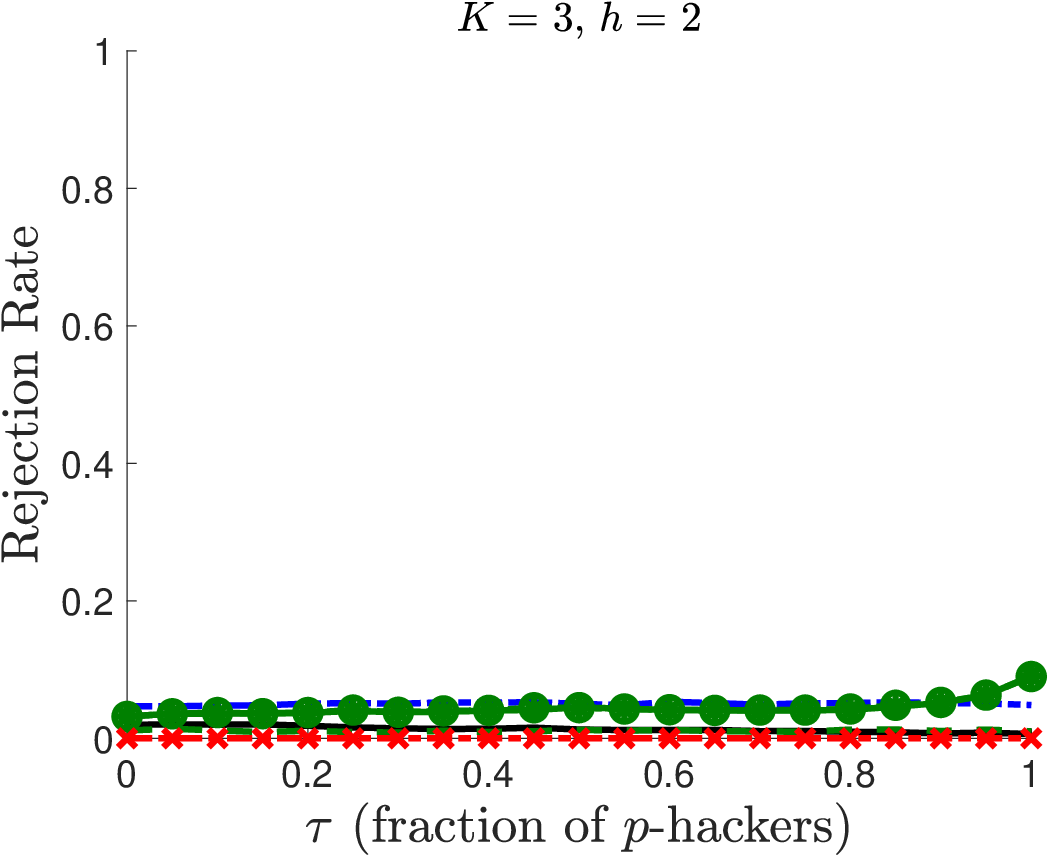}
\includegraphics[width=0.24\textwidth]{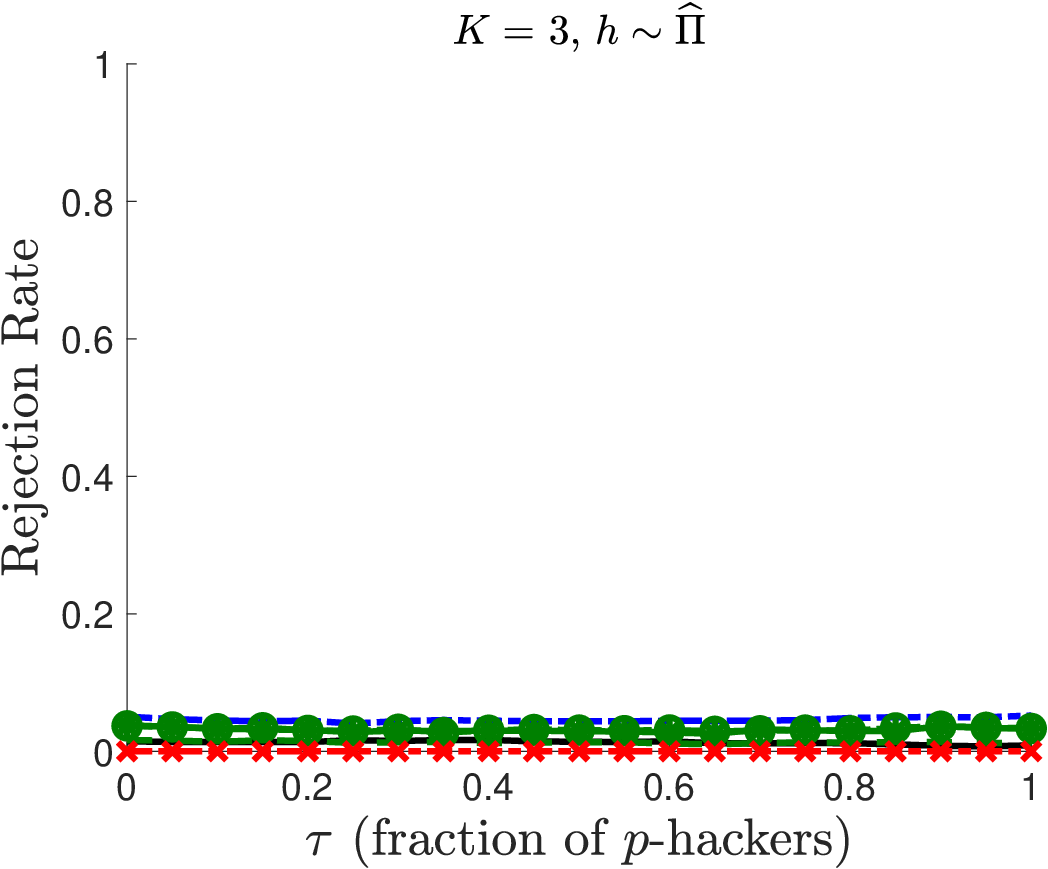}
\end{center}
\vspace*{-3mm}

\footnotesize{\textit{Notes:} Figures show rejection rates of the tests in Table \ref{tab:tests} as a function of $\tau$ for the threshold and minimum approach to covariate selection (general-to-specific) with $K=3$ and one-sided tests. The simulation design is described in Sections \ref{sec:covariate_selection_MC} and \ref{sec:simulations_setup}. The results are based on 5,000 simulation repetitions.}

\end{figure}

 \begin{figure}[H]
\caption{Power curves covariate selection with $K=5$ (two-sided tests, general-to-specific)}\label{fig:power_cov_K5_2sided}

\vspace{-5mm}
 
\begin{center}
\textbf{Thresholding}

\smallskip

\includegraphics[width=0.24\textwidth]{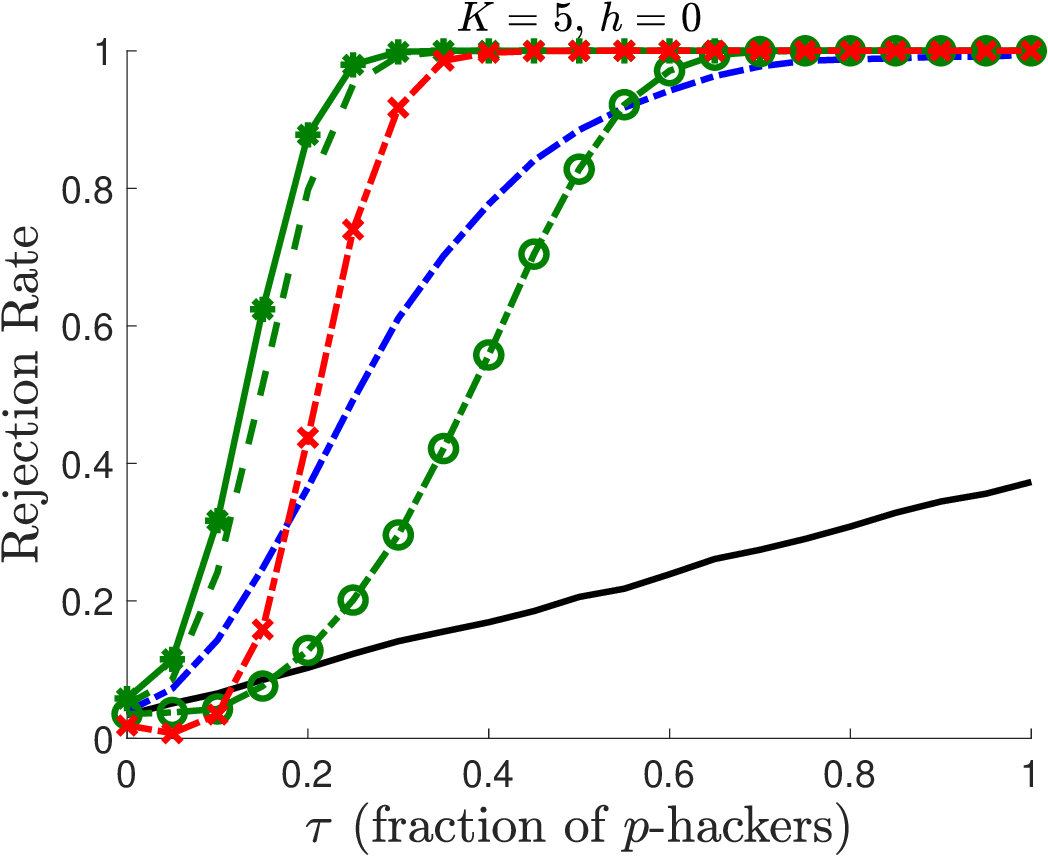}
\includegraphics[width=0.24\textwidth]{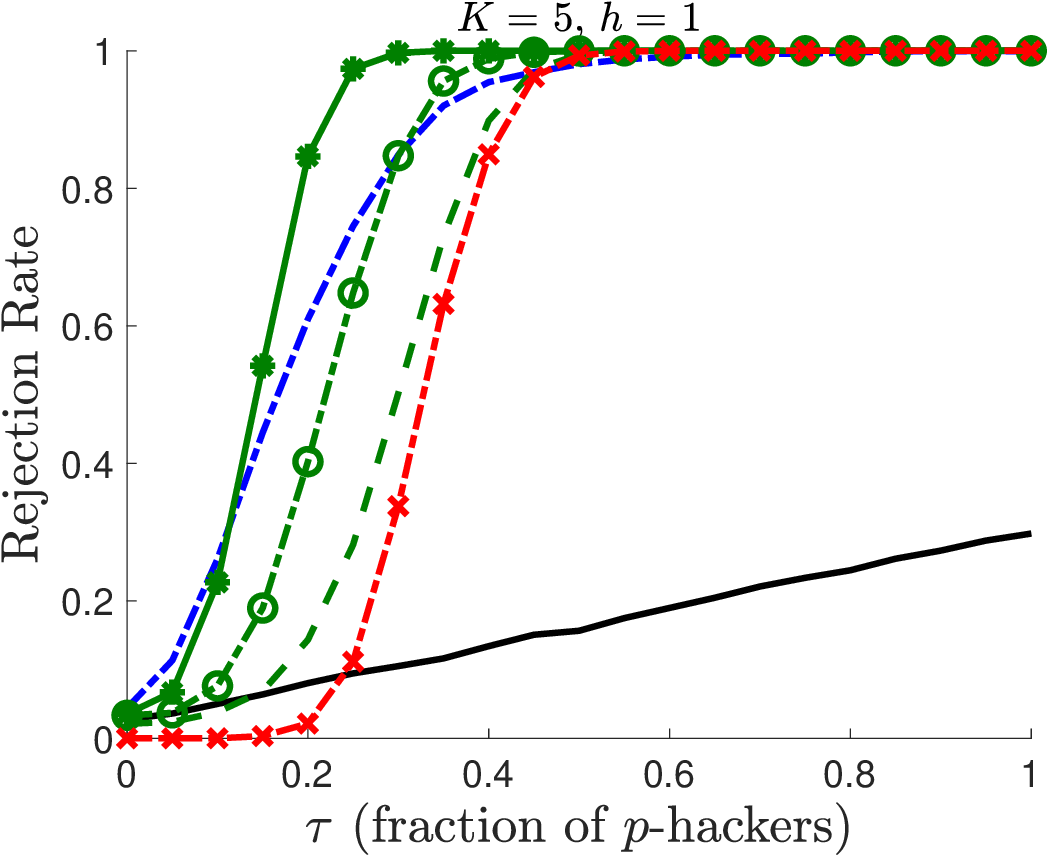}
\includegraphics[width=0.24\textwidth]{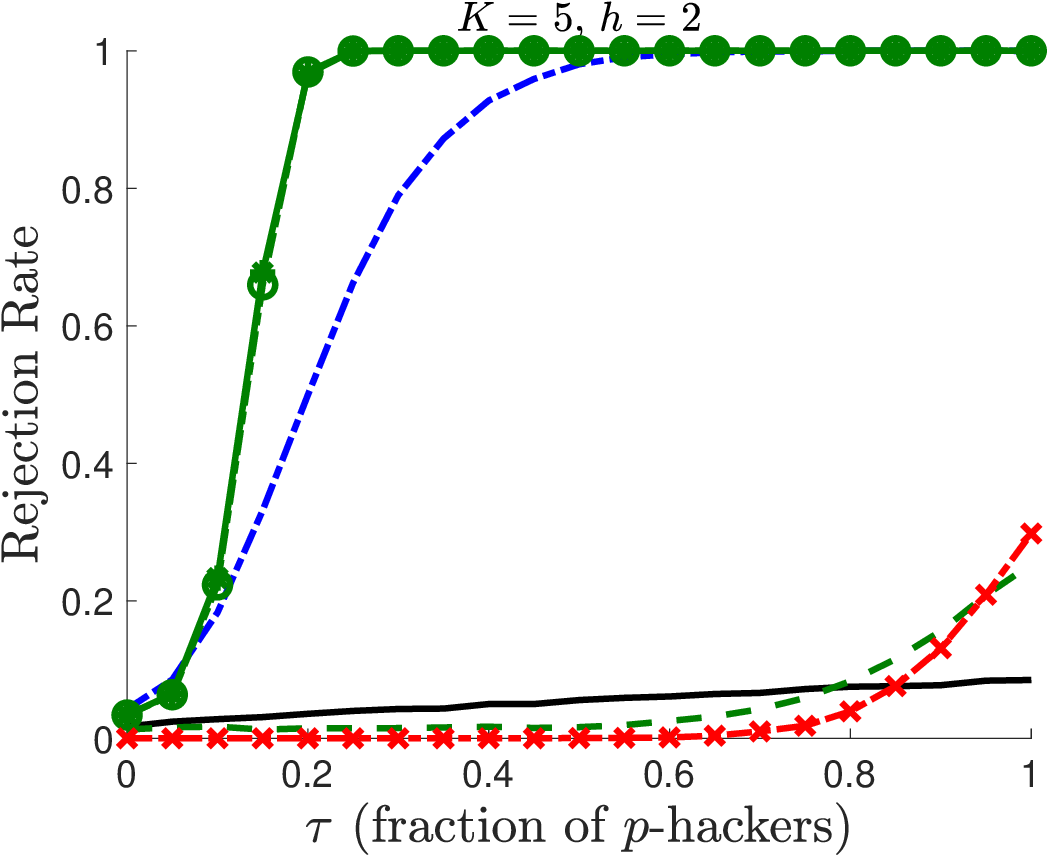}
\includegraphics[width=0.24\textwidth]{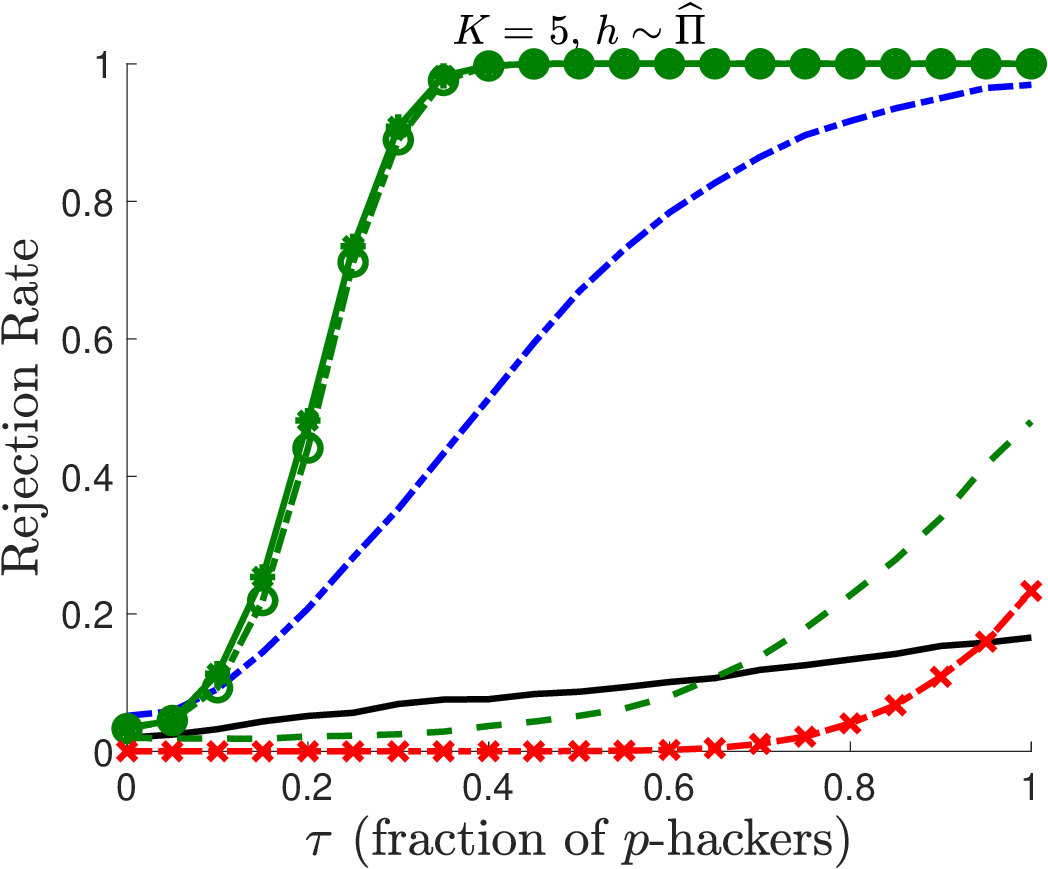}

\textbf{Minimum}
\smallskip
%\vspace{0.15em}
%\medskip

\includegraphics[width=0.24\textwidth]{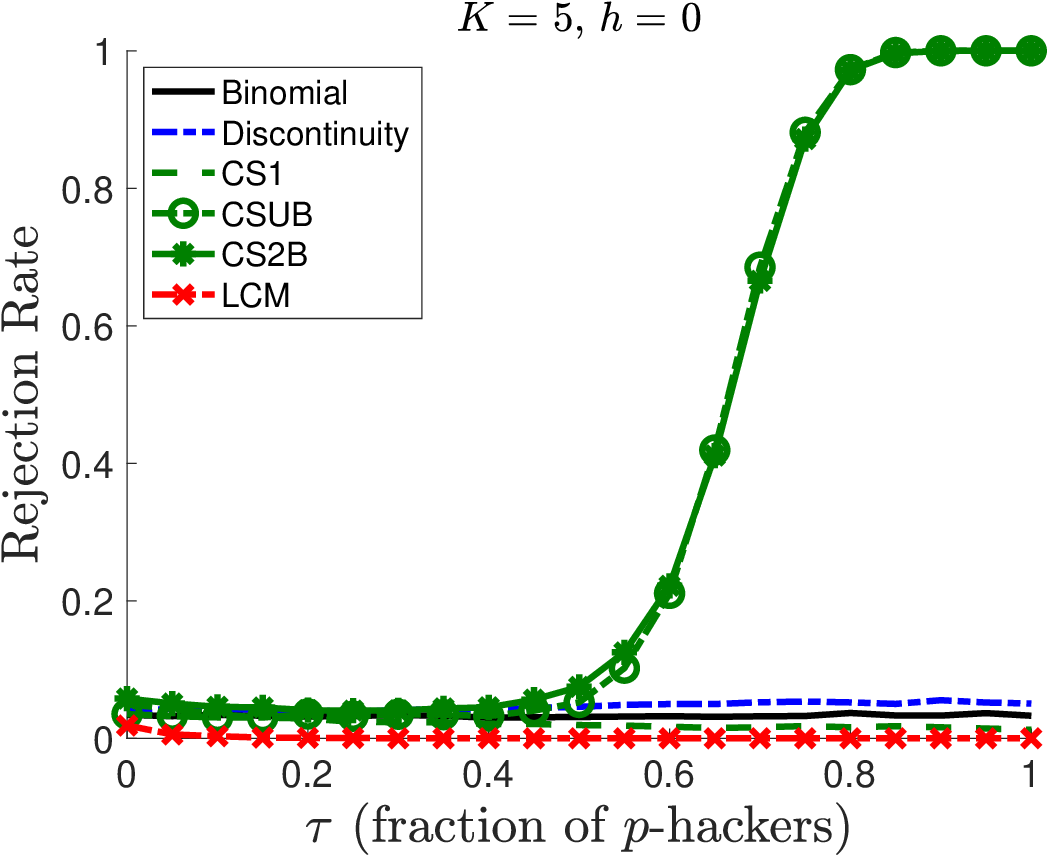}
\includegraphics[width=0.24\textwidth]{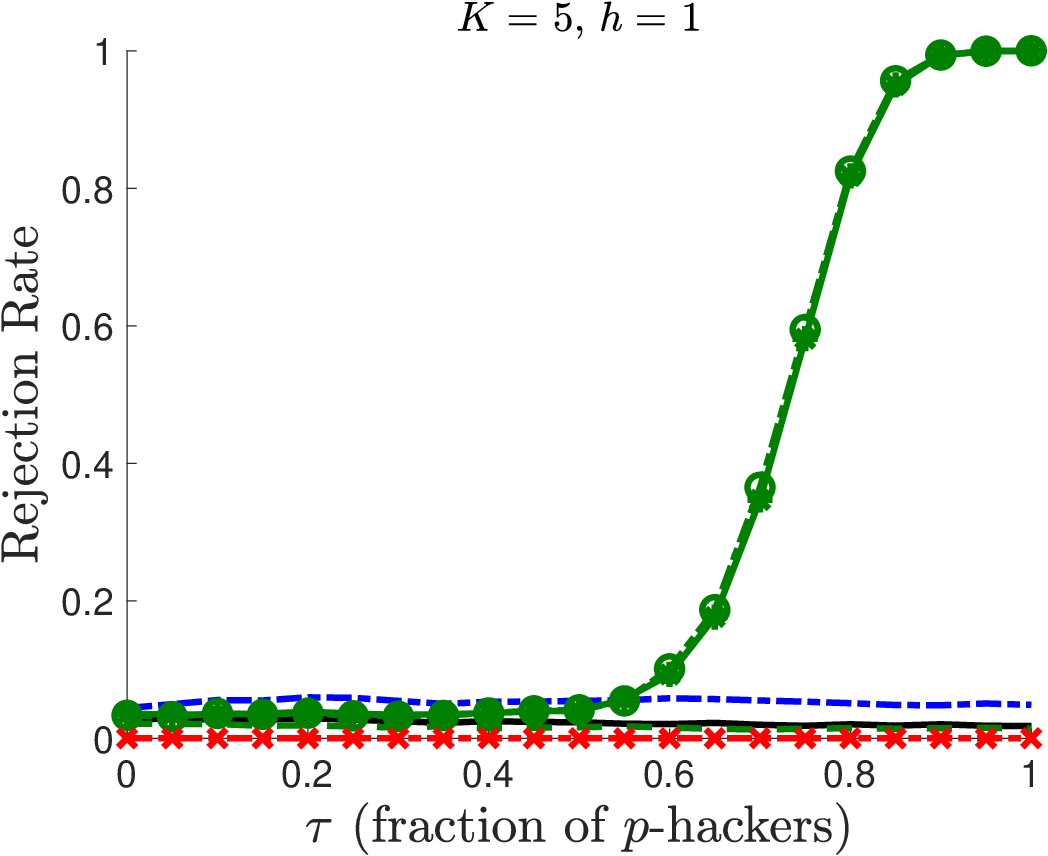}
\includegraphics[width=0.24\textwidth]{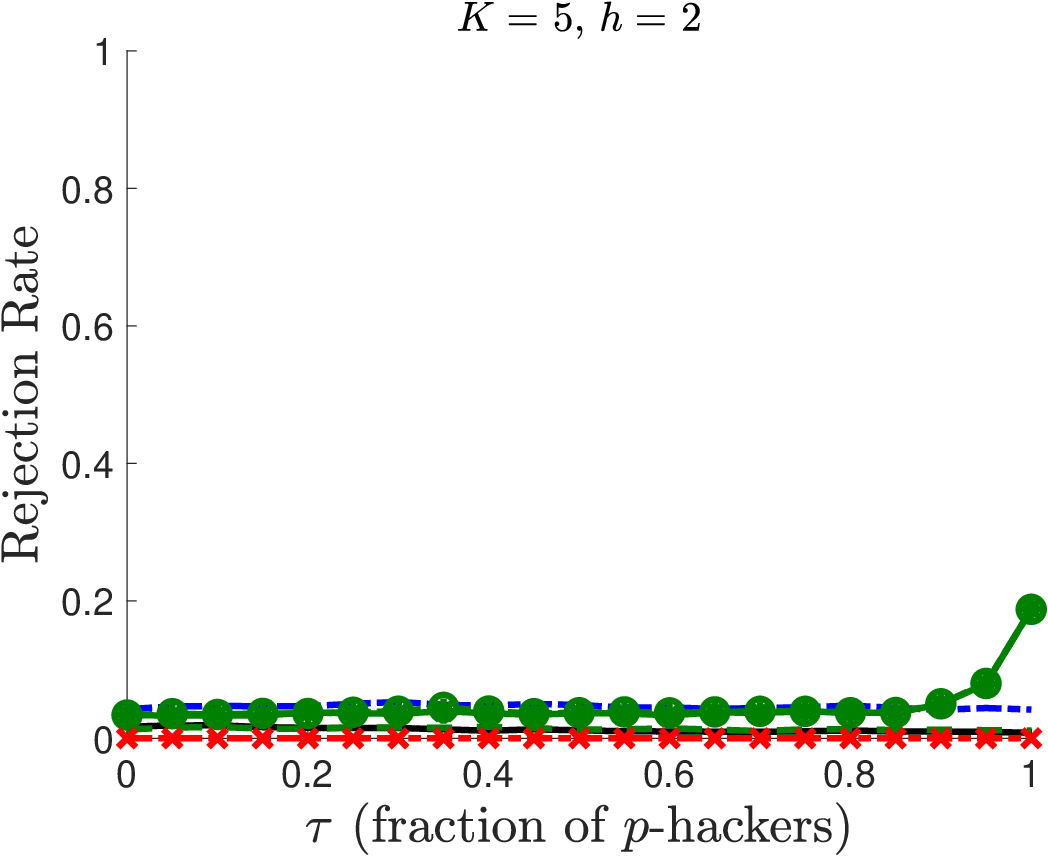}
\includegraphics[width=0.24\textwidth]{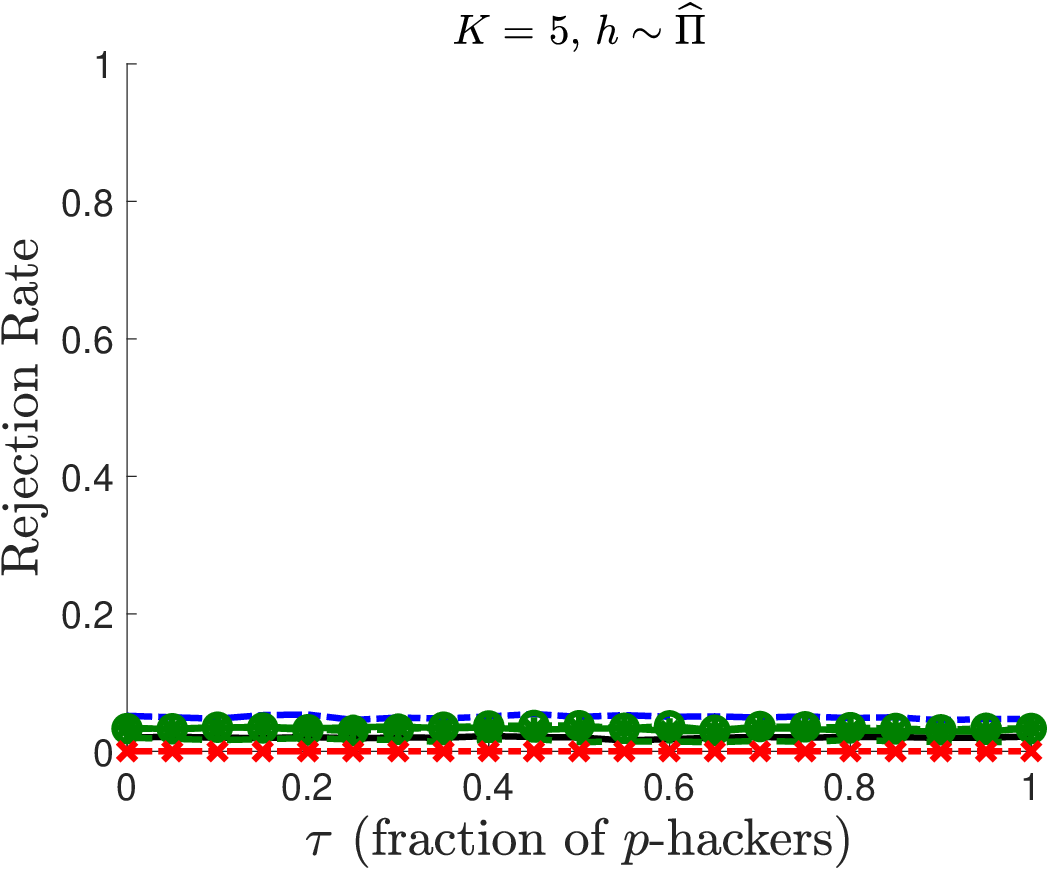}
\end{center}
\vspace*{-3mm}

\footnotesize{\textit{Notes:} Figures show rejection rates of the tests in Table \ref{tab:tests} as a function of $\tau$ for the threshold and minimum approach to covariate selection (general-to-specific) with $K=5$ and two-sided tests. The simulation design is described in Sections \ref{sec:covariate_selection_MC} and \ref{sec:simulations_setup}. The results are based on 5,000 simulation repetitions.}

\end{figure}

 \begin{figure}[H]
\begin{center}

\caption{Power curves covariate selection with $K=7$ (two-sided tests, general-to-specific)}\label{fig:power_cov_K7_2sided}

%\vspace{-3mm}

\textbf{Thresholding}

\smallskip

\includegraphics[width=0.24\textwidth]{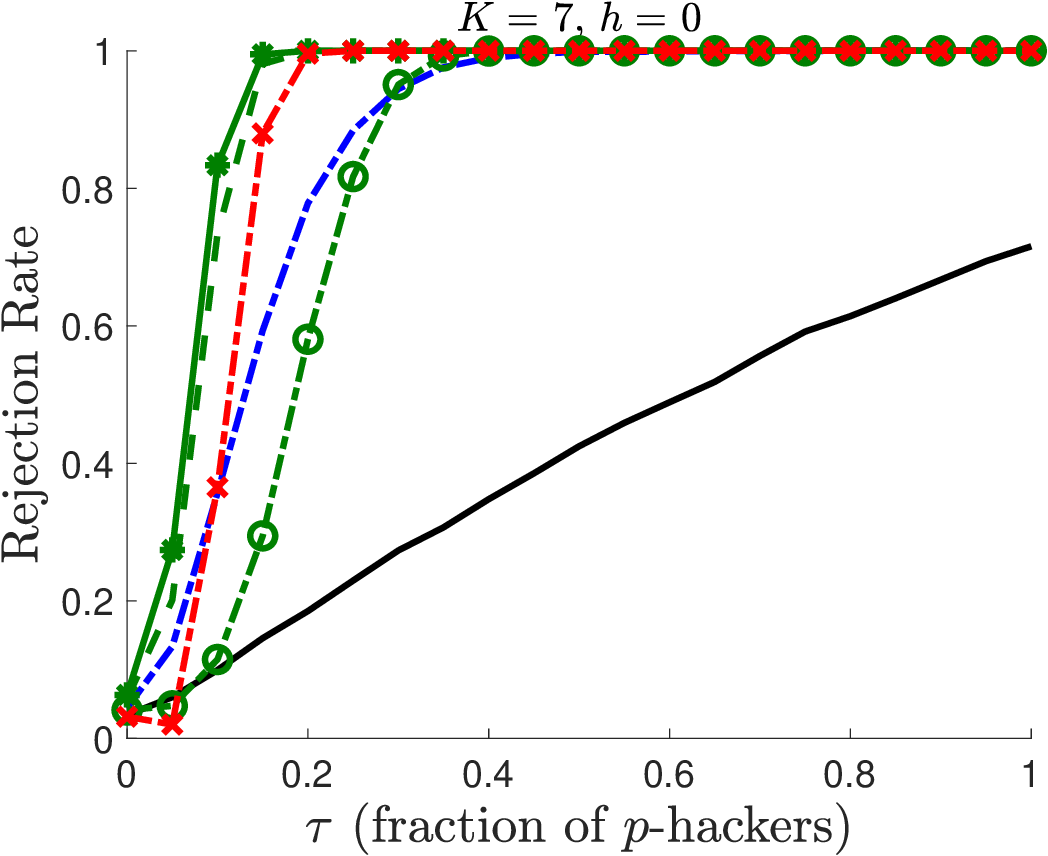}
\includegraphics[width=0.24\textwidth]{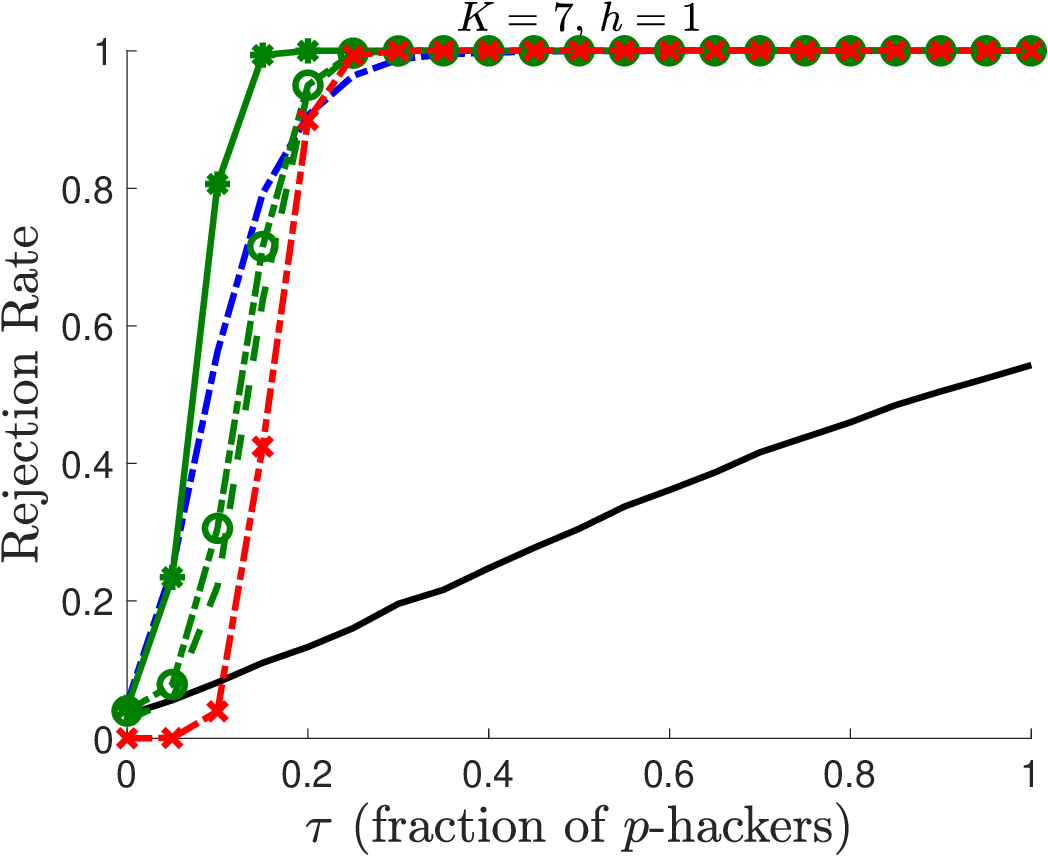}
\includegraphics[width=0.24\textwidth]{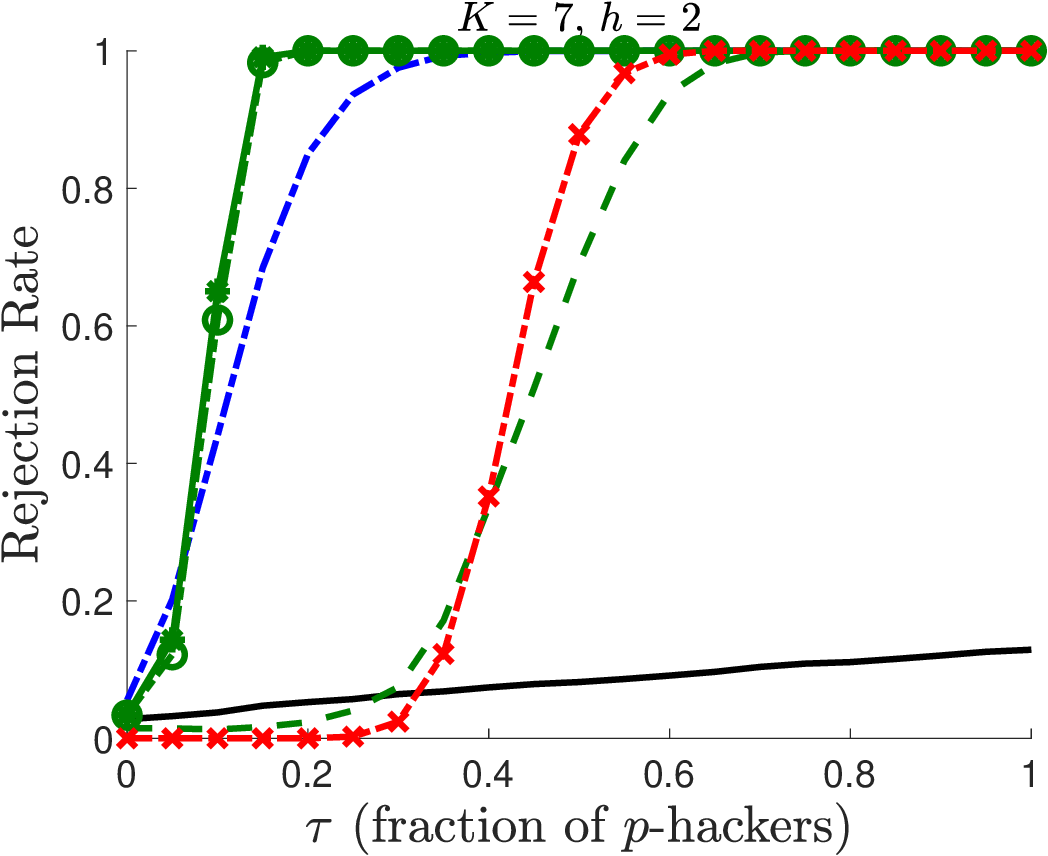}
\includegraphics[width=0.24\textwidth]{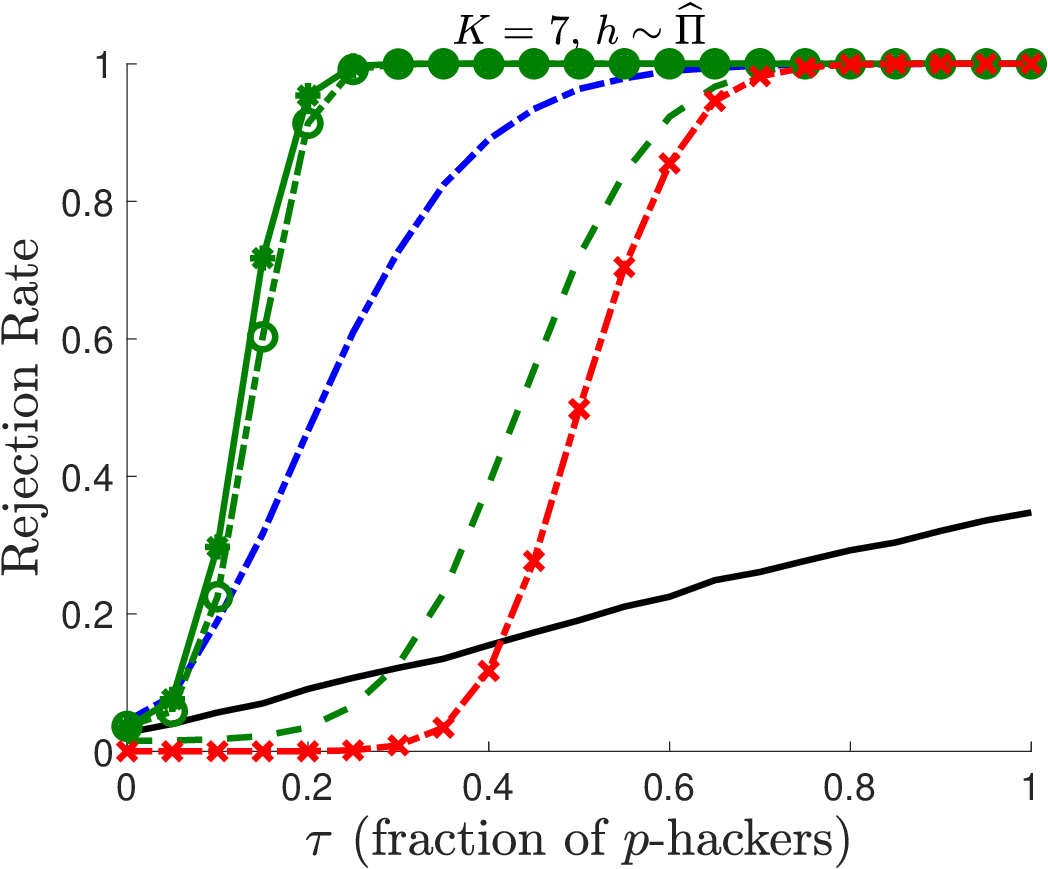}

\textbf{Minimum}
\smallskip
%\vspace{0.15em}
%\medskip

\includegraphics[width=0.24\textwidth]{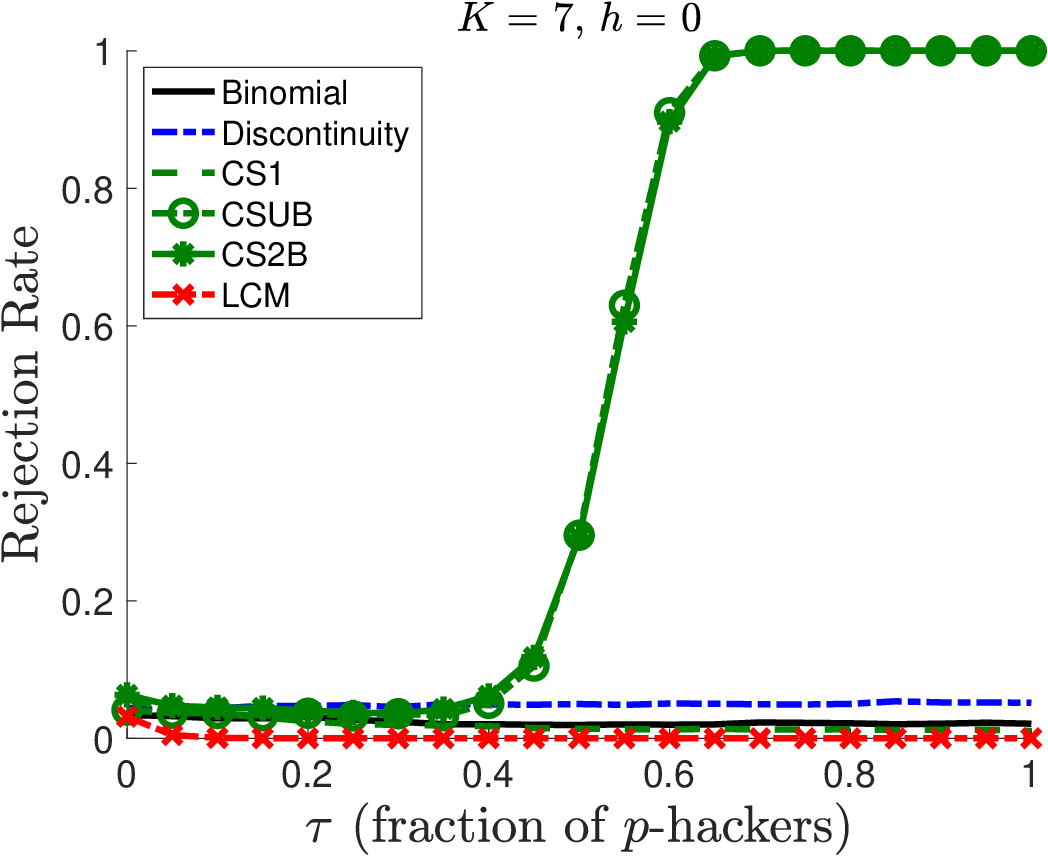}
\includegraphics[width=0.24\textwidth]{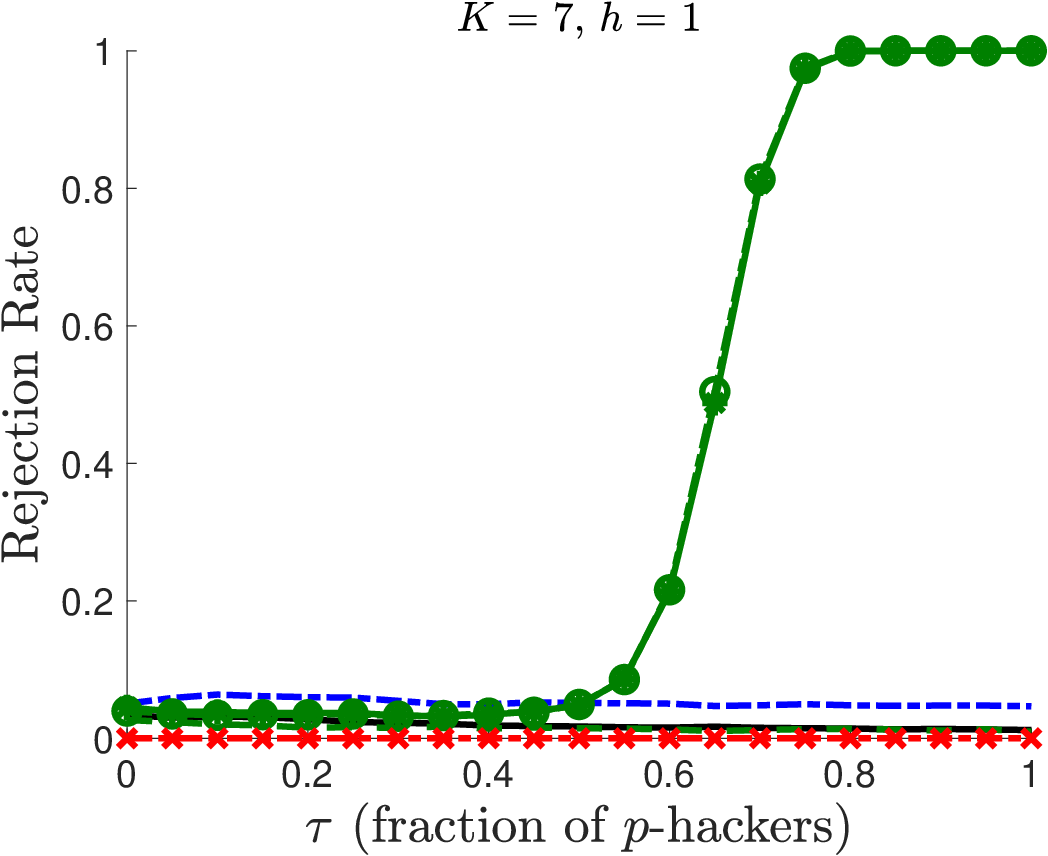}
\includegraphics[width=0.24\textwidth]{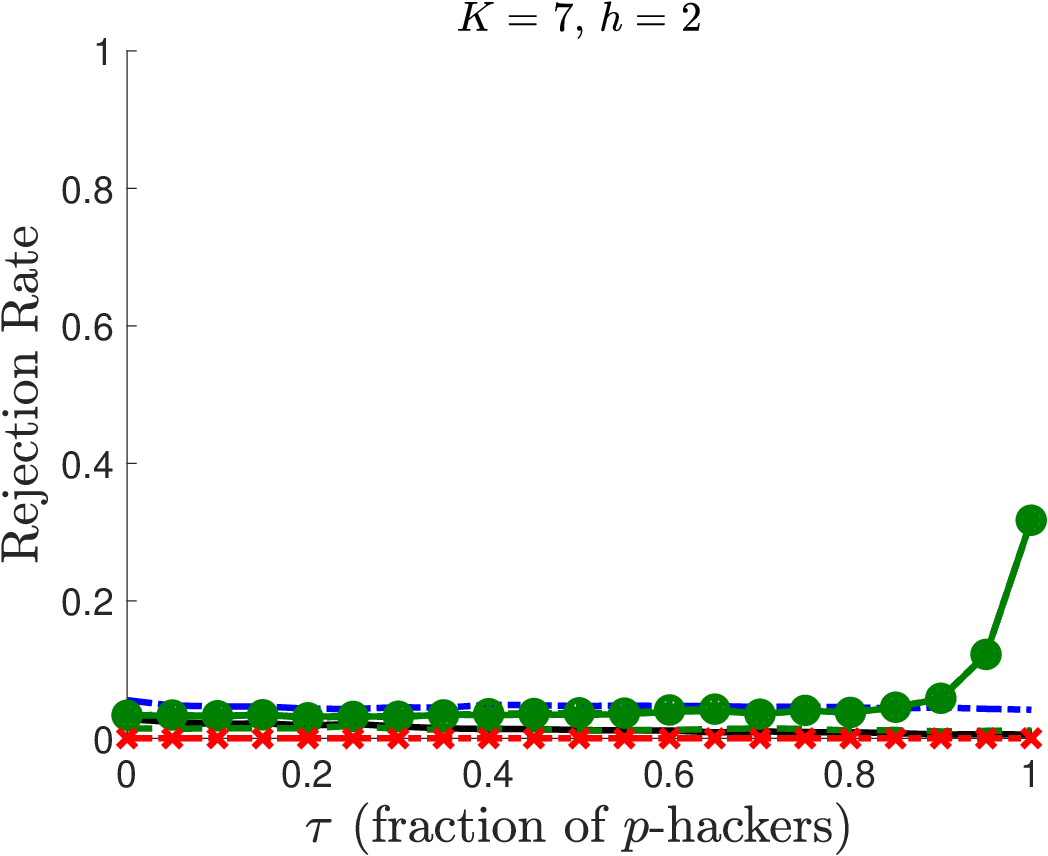}
\includegraphics[width=0.24\textwidth]{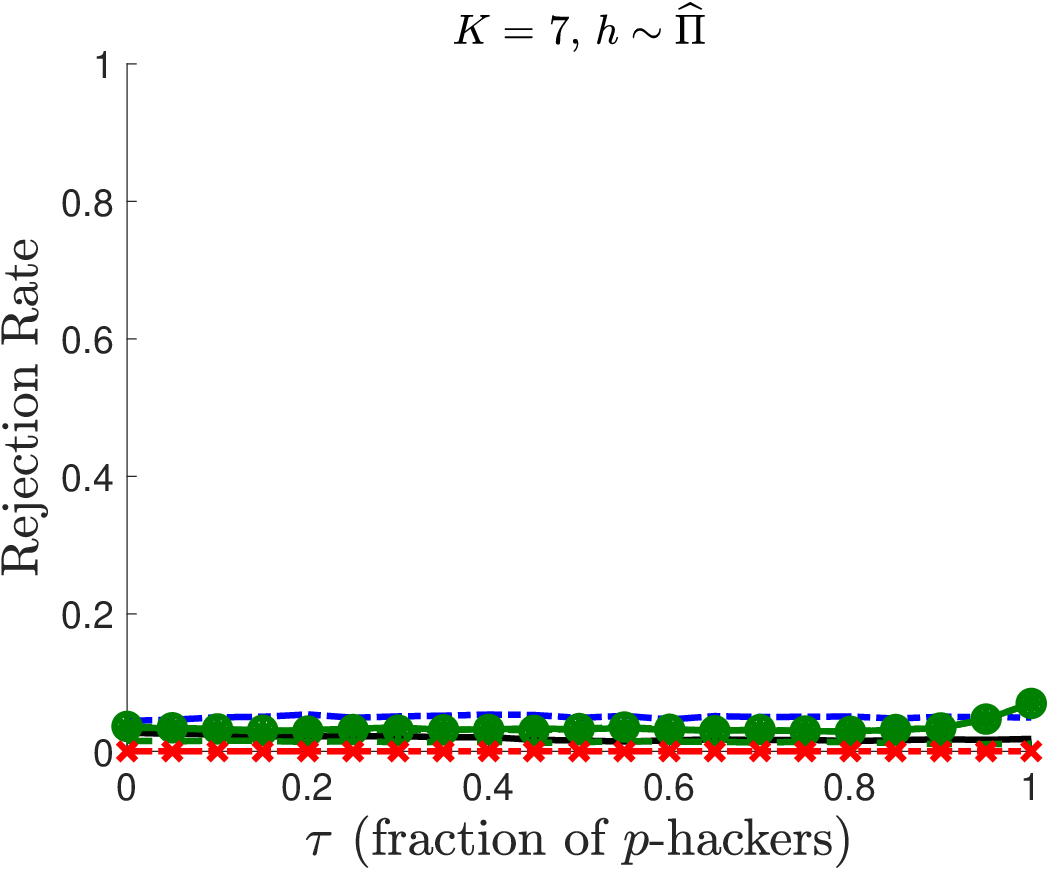}
\end{center}
\vspace*{-3mm}

\footnotesize{\textit{Notes:} Figures show rejection rates of the tests in Table \ref{tab:tests} as a function of $\tau$ for the threshold and minimum approach to covariate selection (general-to-specific) with $K=7$ and two-sided tests. The simulation design is described in Sections \ref{sec:covariate_selection_MC} and \ref{sec:simulations_setup}. The results are based on 5,000 simulation repetitions.}

\end{figure}

\begin{figure}[H]
\caption{Power curves IV selection with $K=5$ (all specifications)}\label{fig:power_iv_K5}

\vspace{-5mm}

\begin{center}
\textbf{Thresholding}

%\smallskip

\includegraphics[width=0.24\textwidth]{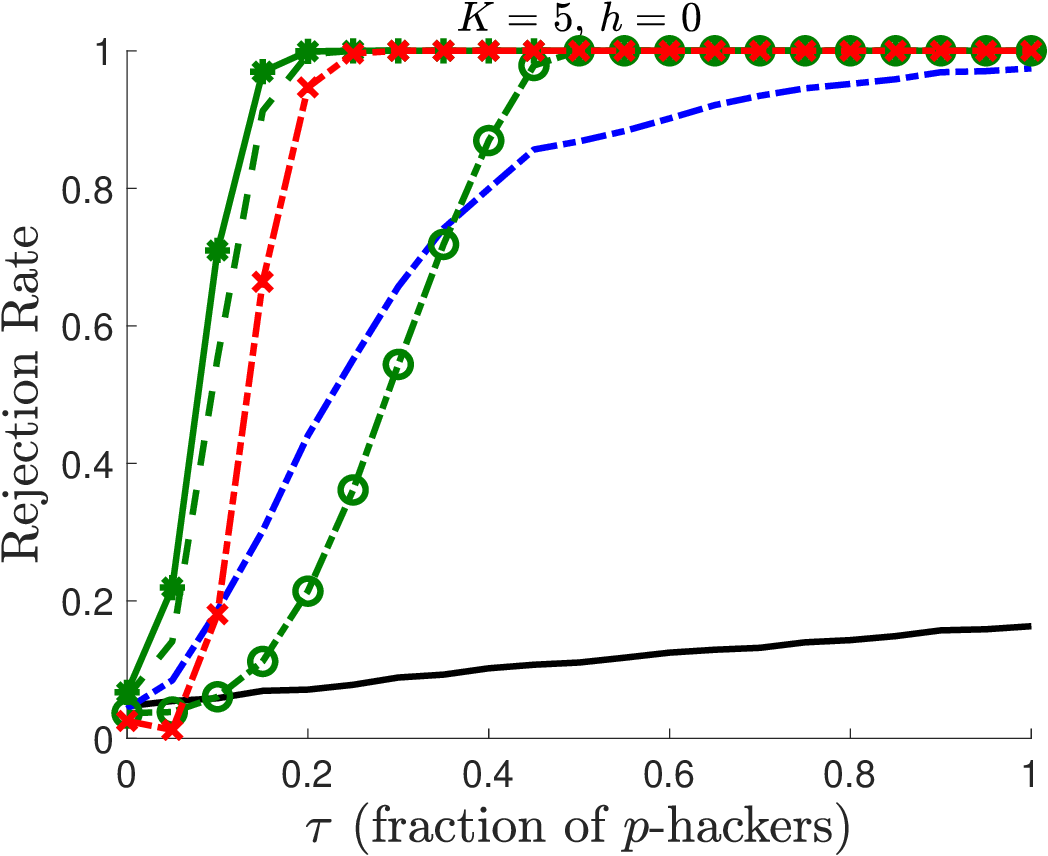}
\includegraphics[width=0.24\textwidth]{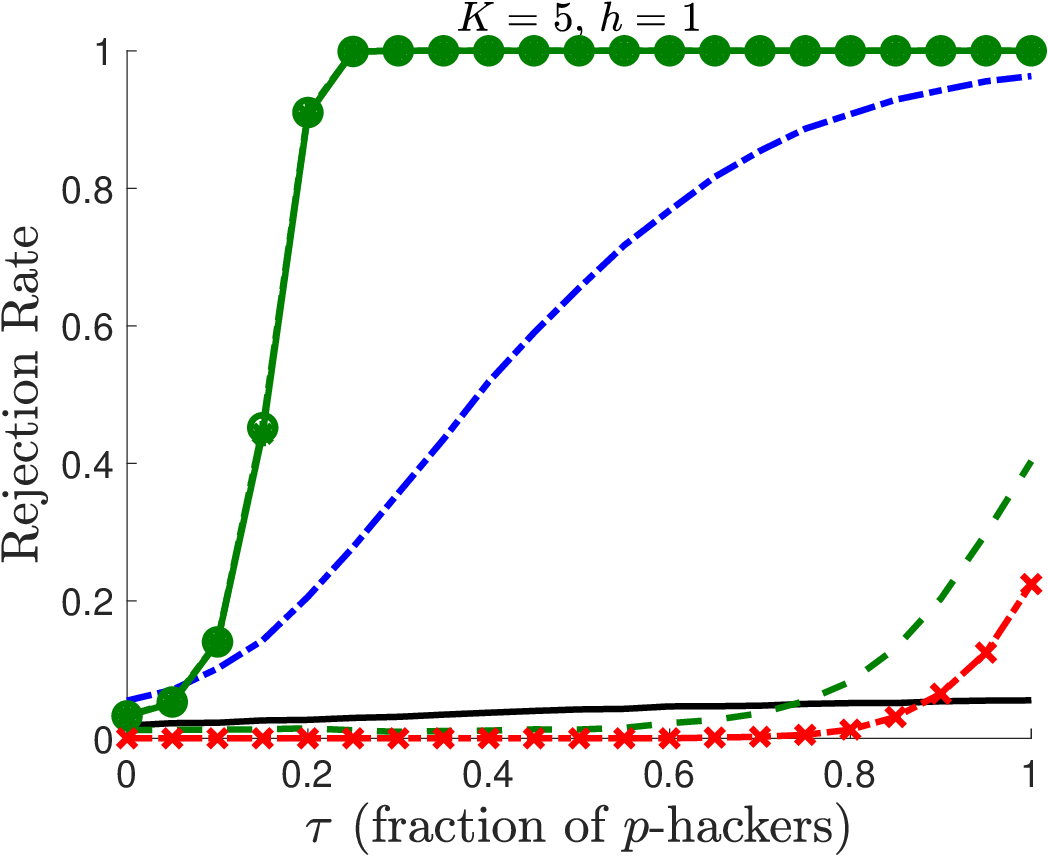}
\includegraphics[width=0.24\textwidth]{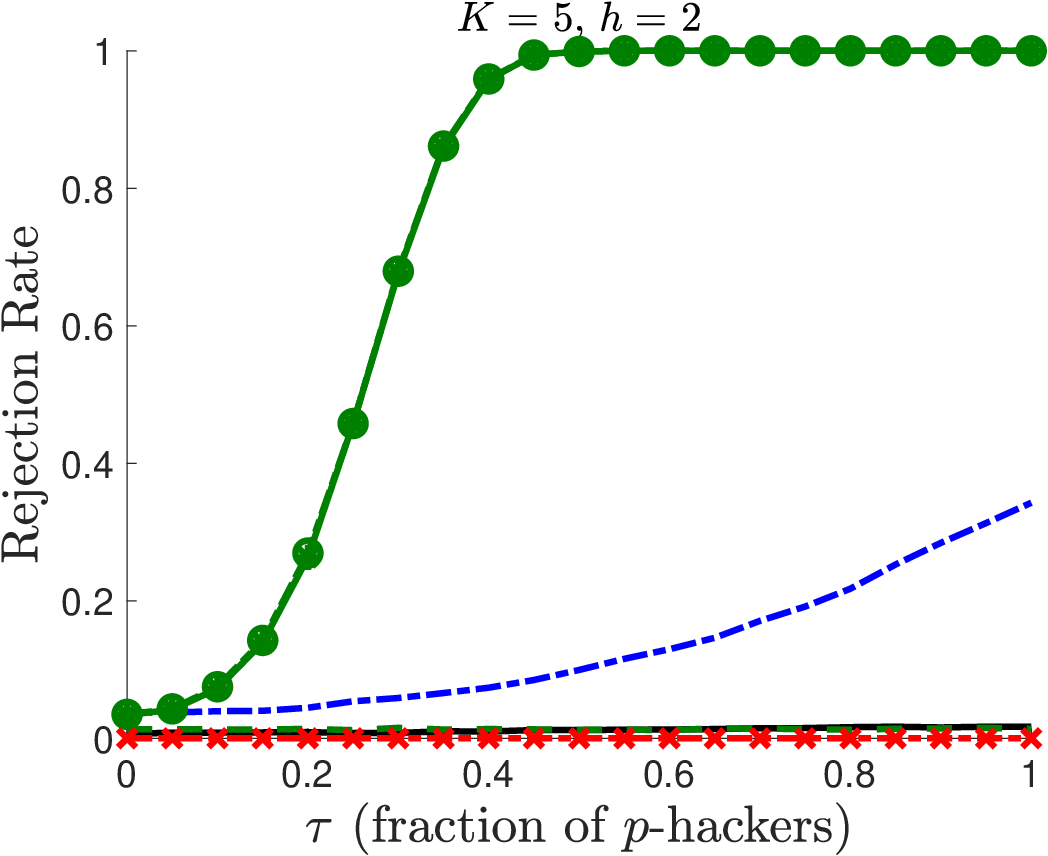}
\includegraphics[width=0.24\textwidth]{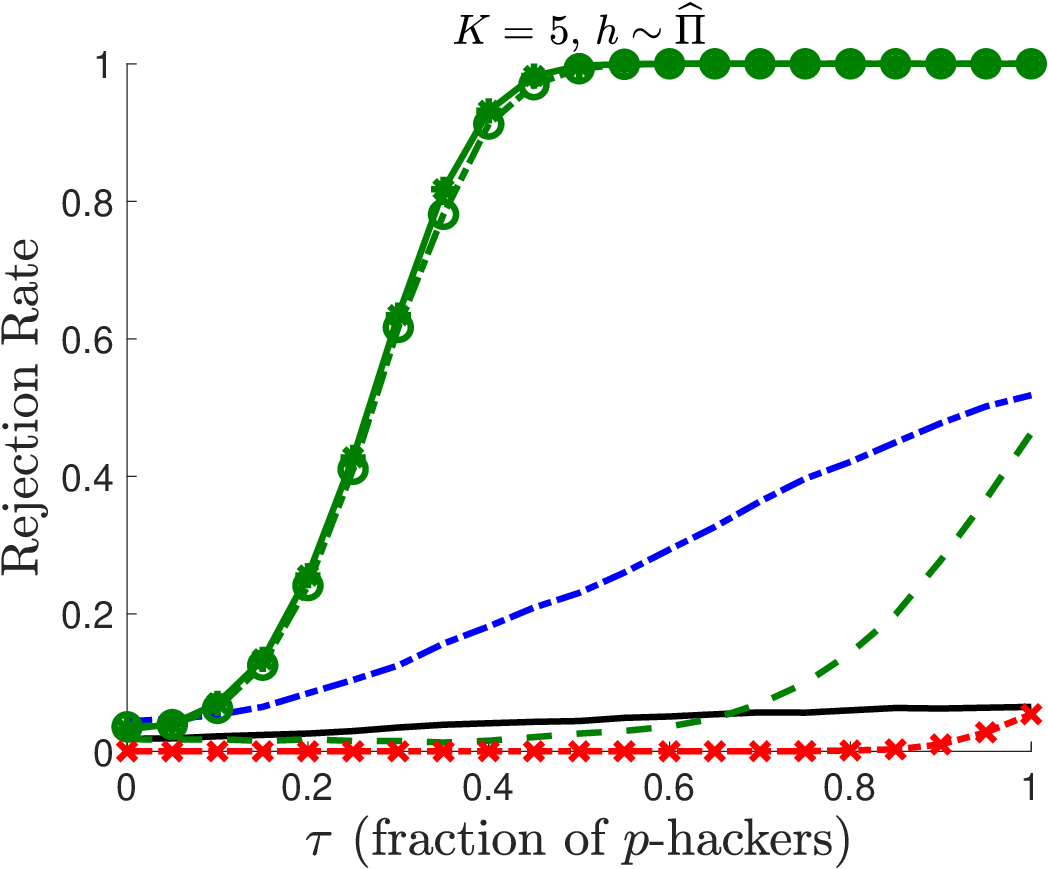}

\textbf{Minimum}

\vspace{0.15em}
%\medskip

\includegraphics[width=0.24\textwidth]{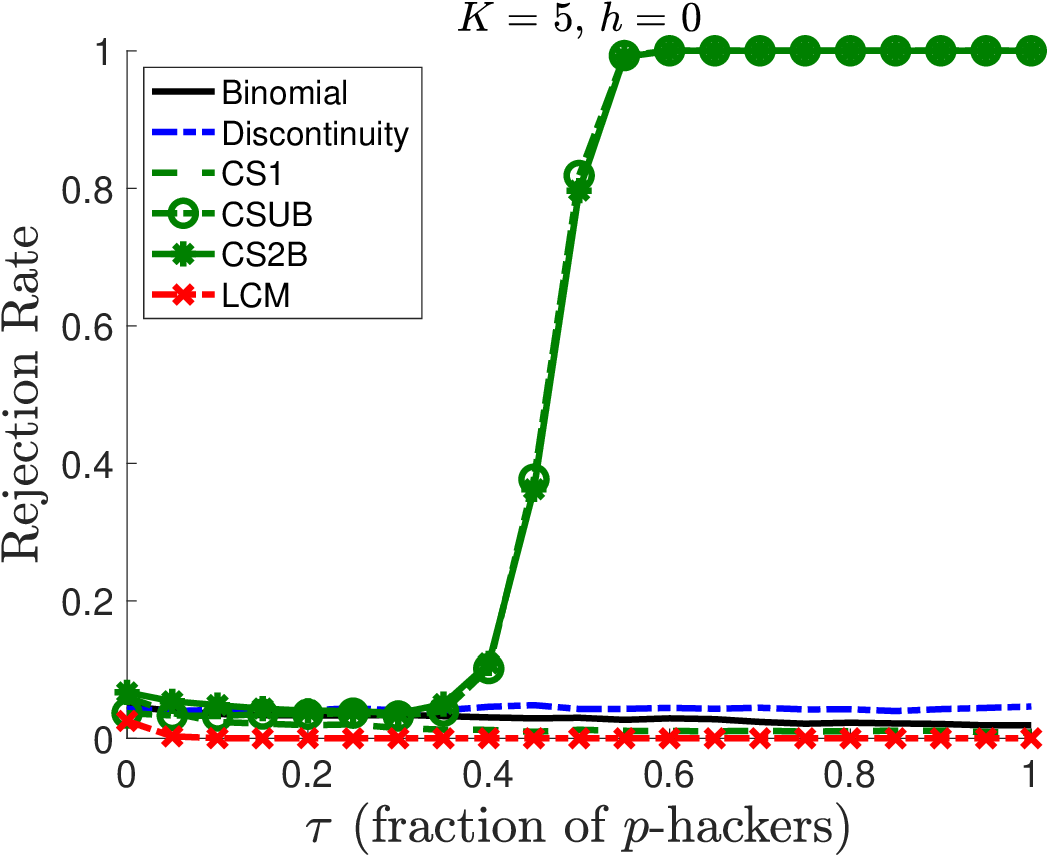}
\includegraphics[width=0.24\textwidth]{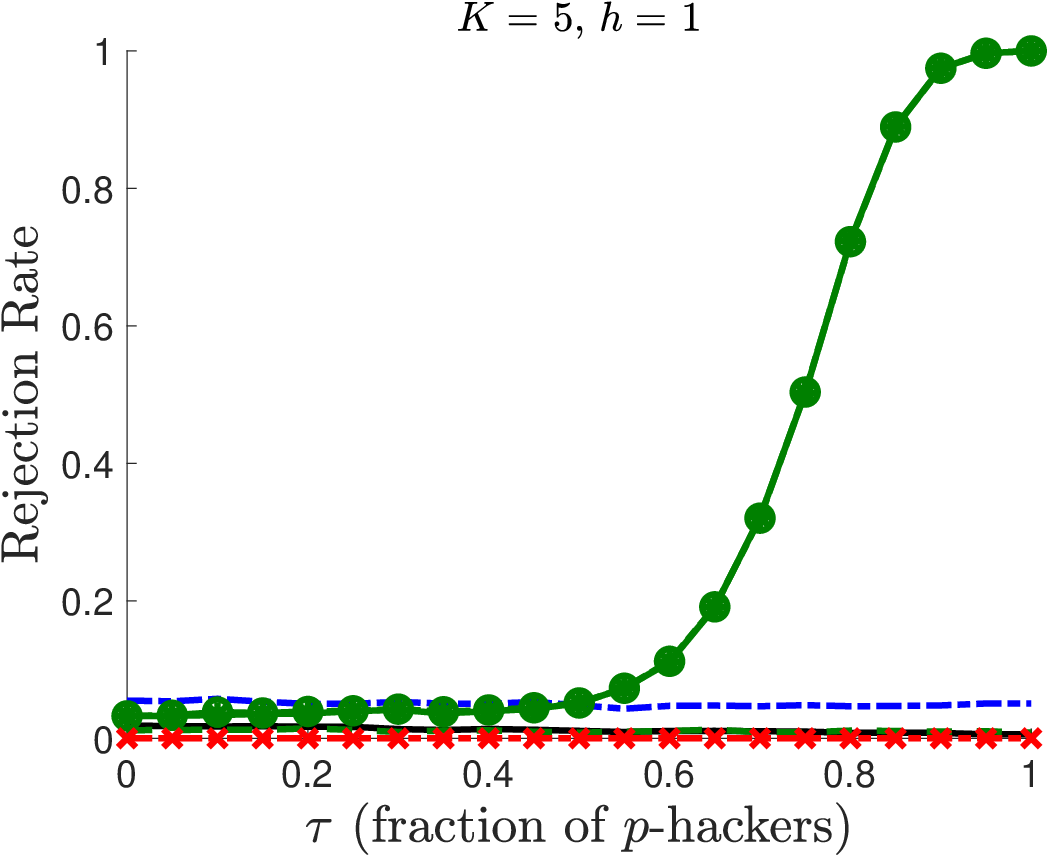}
\includegraphics[width=0.24\textwidth]{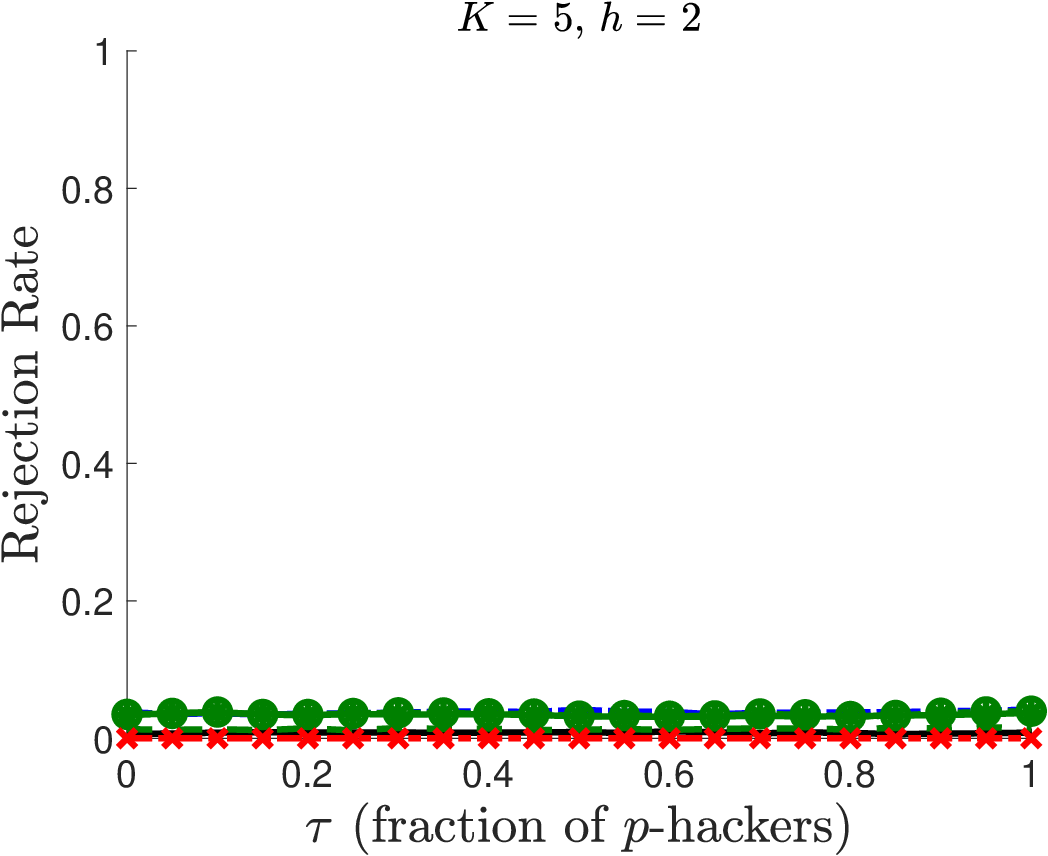}
\includegraphics[width=0.24\textwidth]{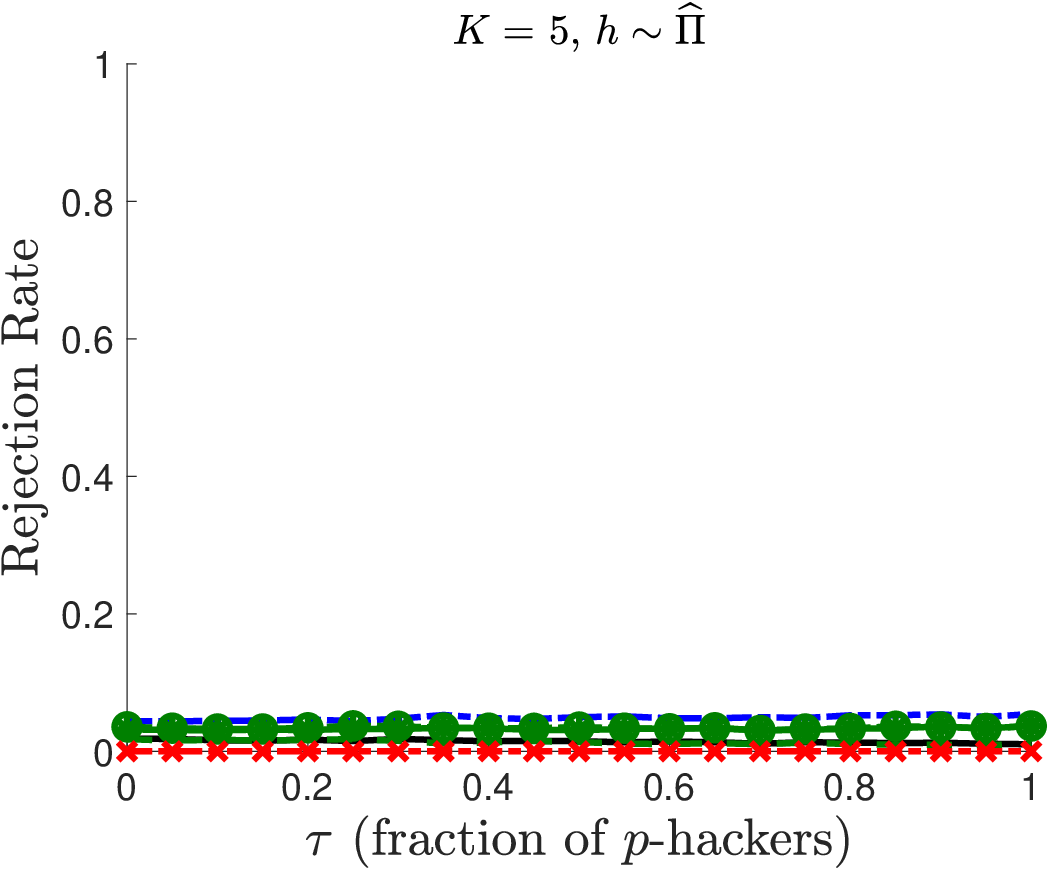}
\end{center}
\vspace*{-3mm}

\footnotesize{\textit{Notes:} Figures show rejection rates of the tests in Table \ref{tab:tests} as a function of $\tau$ for the threshold and minimum approach to IV selection with $K=5$ (using all specifications). The simulation design is described in Sections \ref{sec:iv_selection_MC} and \ref{sec:simulations_setup}. The results are based on 5,000 simulation repetitions.}

\end{figure}

\begin{figure}[H]

\caption{Power curves IV selection with $K=5$ (specifications with $F>10$).}\label{fig:power_iv_F_K5}

\vspace{-5mm}

\begin{center}
\textbf{Thresholding}

%\smallskip

\includegraphics[width=0.24\textwidth]{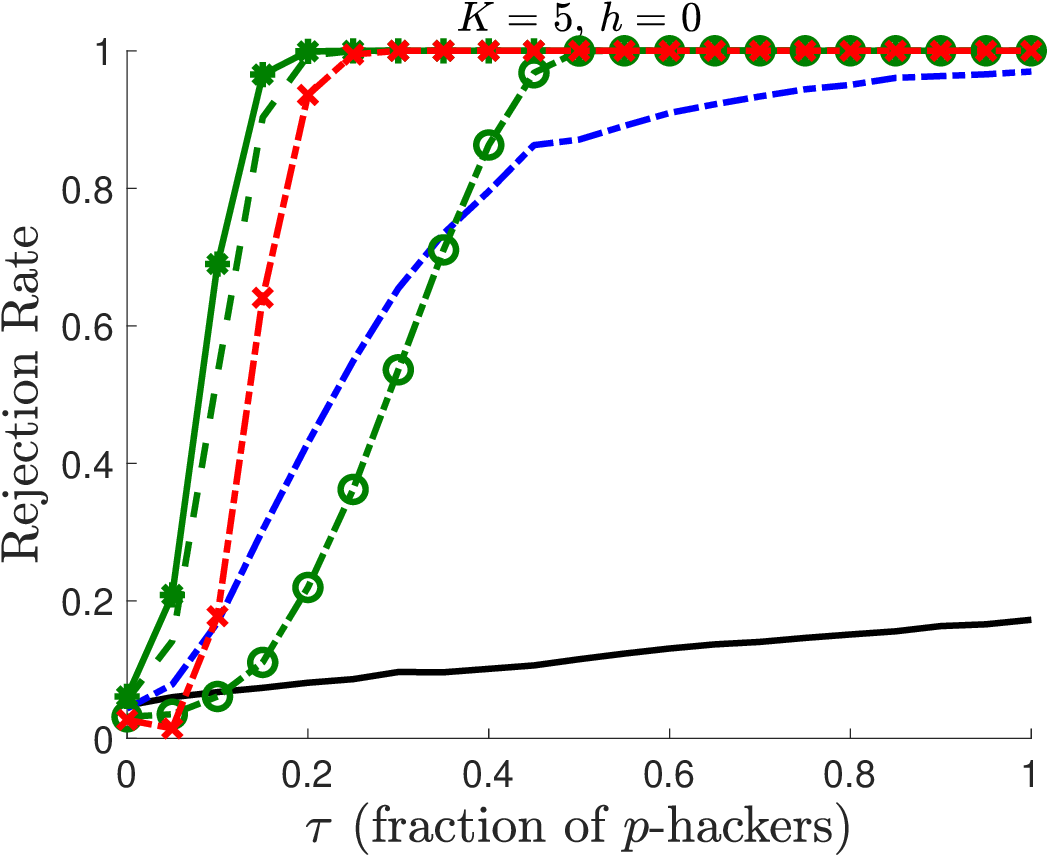}
\includegraphics[width=0.24\textwidth]{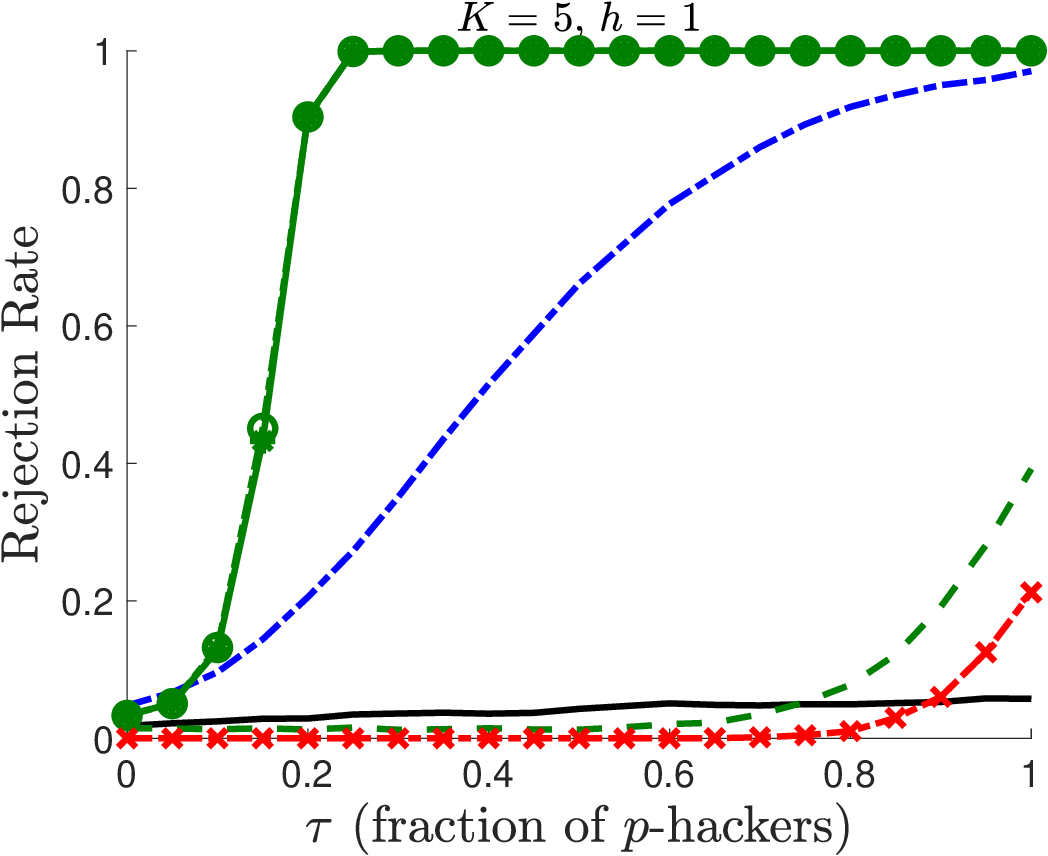}
\includegraphics[width=0.24\textwidth]{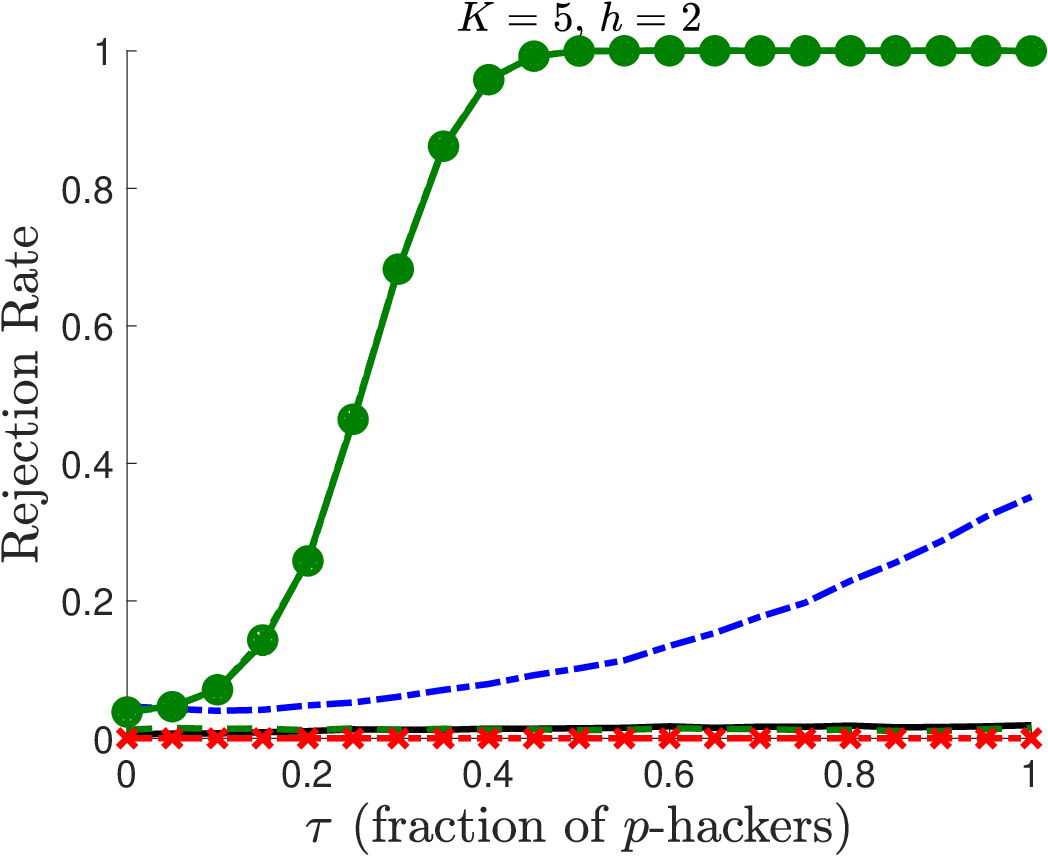}
\includegraphics[width=0.24\textwidth]{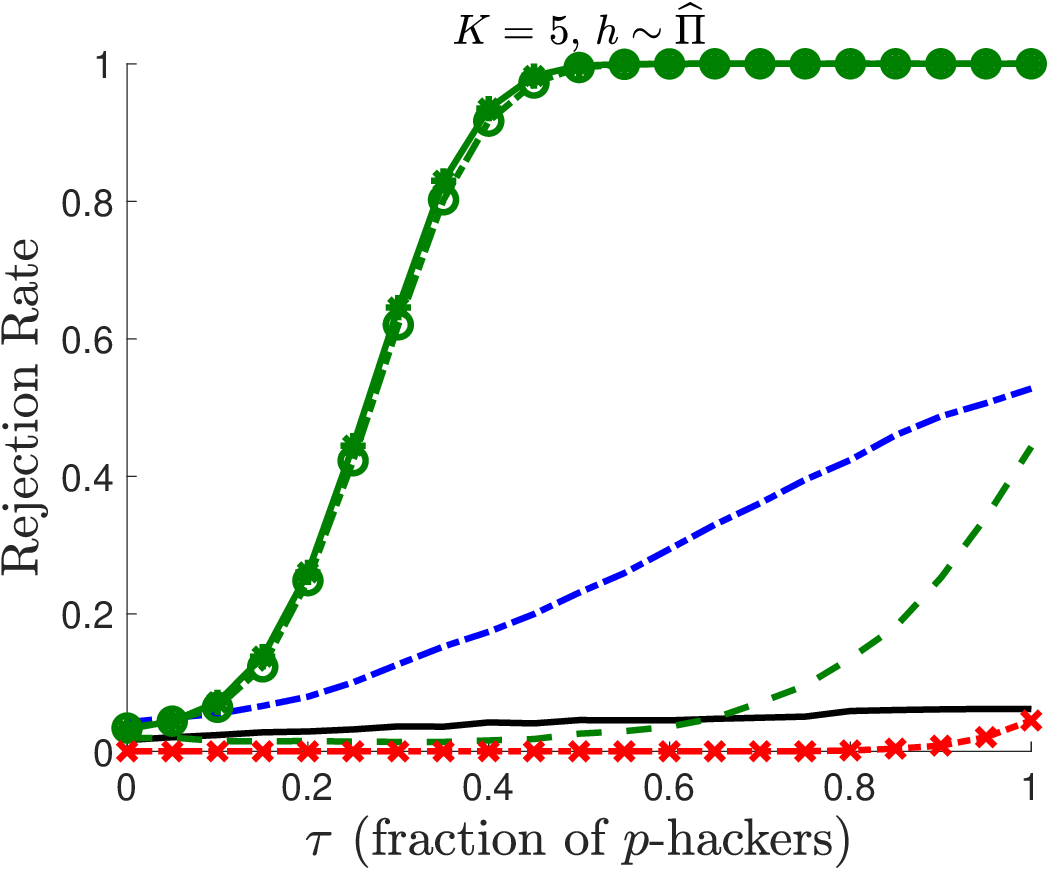}

\textbf{Minimum}

\vspace{0.15em}
%\medskip

\includegraphics[width=0.24\textwidth]{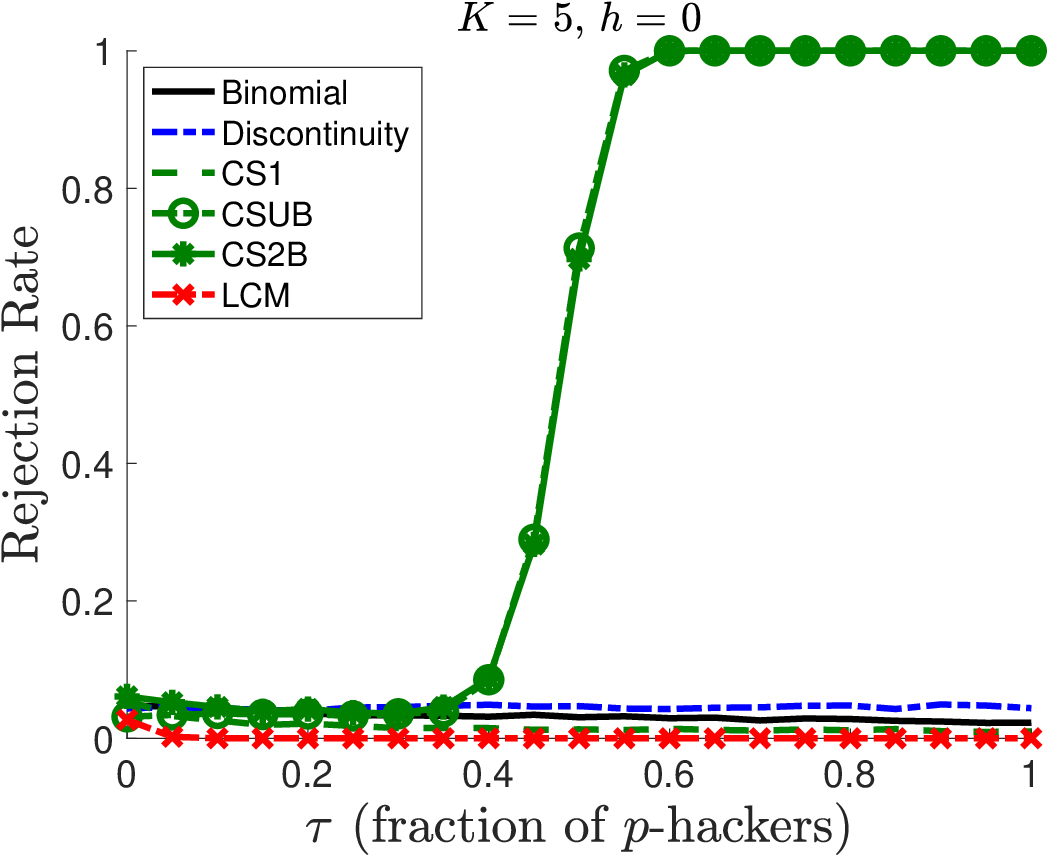}
\includegraphics[width=0.24\textwidth]{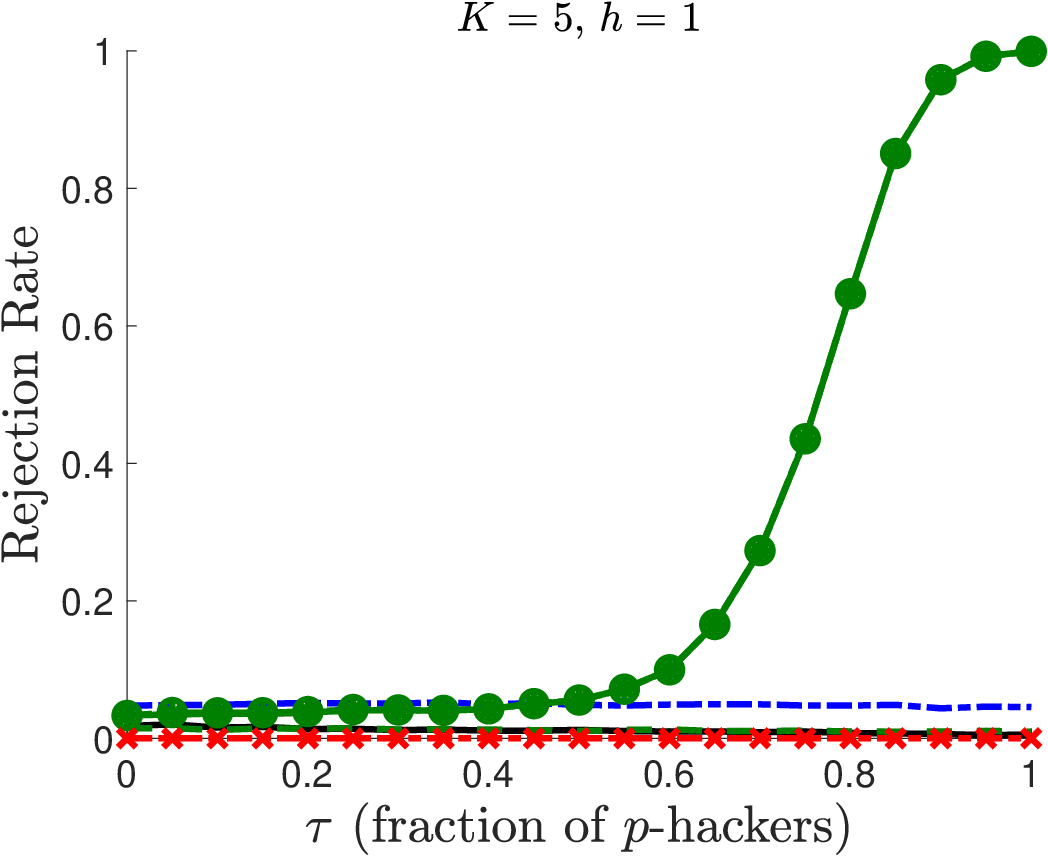}
\includegraphics[width=0.24\textwidth]{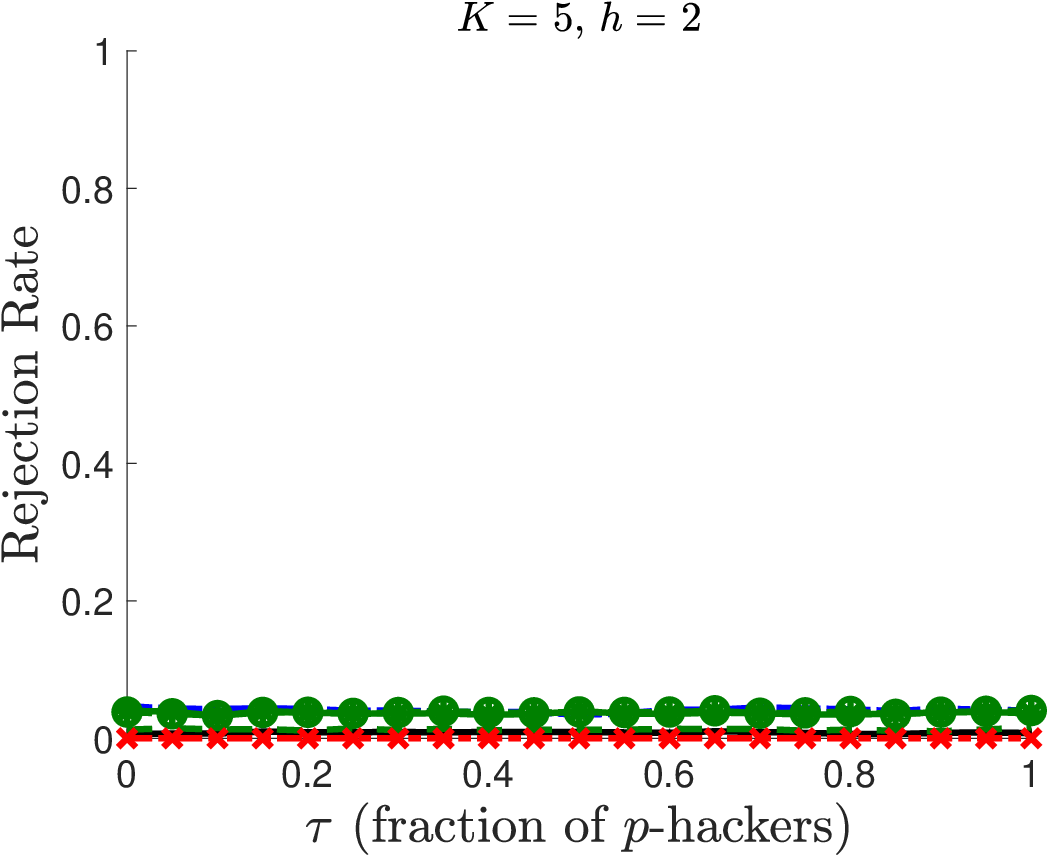}
\includegraphics[width=0.24\textwidth]{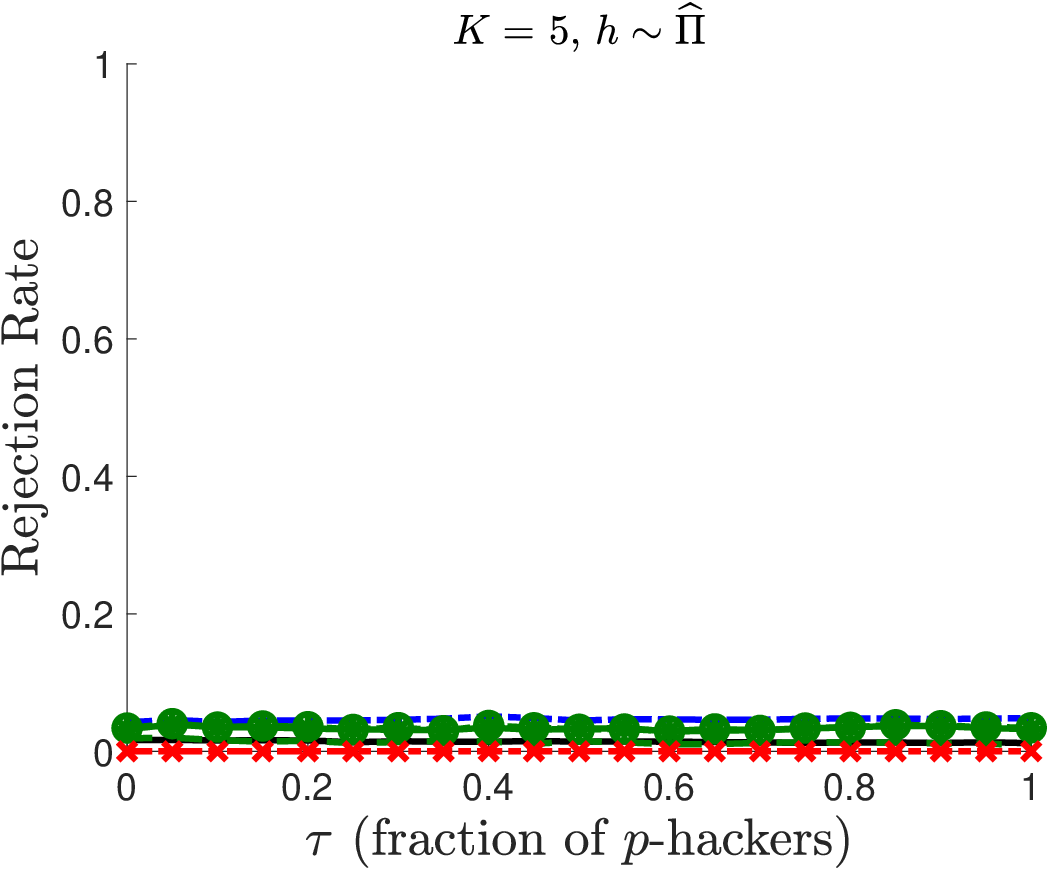}
\end{center}
\vspace*{-3mm}
\footnotesize{\textit{Notes:} Figures show rejection rates of the tests in Table \ref{tab:tests} as a function of $\tau$ for the threshold and minimum approach to IV selection with $K=5$ (using only specifications with $F>10$). The simulation design is described in Sections \ref{sec:iv_selection_MC} and \ref{sec:simulations_setup}. The results are based on 5,000 simulation repetitions.}
\end{figure}

\begin{table}[H]
\begin{center}
\caption{The effect of publication bias for $h=1$, $h=2$, and $h\sim \widehat{\Pi}$}
\label{tab:publication_bias_appendix}
\footnotesize
\begin{tabular}{lccccccccccccccccccccccc}
\toprule
 & \multicolumn{18}{c}{Test} \\ \cline{2-19}
 & \multicolumn{3}{c}{Binomial} & \multicolumn{3}{c}{Discontinuity} & \multicolumn{3}{c}{CS1} & \multicolumn{3}{c}{CSUB} & \multicolumn{3}{c}{CS2B} & \multicolumn{3}{c}{LCM} \\
Frac. of $p$-hackers & 0 & 0.5 & 1 & 0 & 0.5 & 1 & 0 & 0.5 & 1 & 0 & 0.5 & 1 & 0 & 0.5 & 1 & 0 & 0.5 & 1 \\ \hline
\multicolumn{1}{c}{} & \multicolumn{18}{c}{$h=1$} \\
\textit{} & \multicolumn{18}{c}{Thresholding} \\ \cline{2-19}
No Pub Bias & 0.03 & 0.07 & 0.12 & 0.05 & 0.72 & 0.95 & 0.02 & 0.08 & 0.63 & 0.03 & 0.59 & 1.00 & 0.04 & 0.89 & 1.00 & 0.00 & 0.00 & 0.54 \\
Sharp Pub Bias & 0.03 & 0.07 & 0.12 & 0.94 & 1.00 & 1.00 & 0.06 & 0.10 & 0.61 & 1.00 & 1.00 & 1.00 & 1.00 & 1.00 & 1.00 & 0.00 & 0.09 & 0.89 \\
Smooth Pub Bias & 0.02 & 0.04 & 0.06 & 0.05 & 0.42 & 0.77 & 0.01 & 0.01 & 0.03 & 1.00 & 1.00 & 1.00 & 1.00 & 1.00 & 1.00 & 0.00 & 0.00 & 0.00 \\
\textit{} & \multicolumn{18}{c}{Minimum} \\ \cline{2-19}
No Pub Bias & 0.03 & 0.02 & 0.02 & 0.05 & 0.05 & 0.06 & 0.02 & 0.01 & 0.01 & 0.03 & 0.03 & 0.60 & 0.04 & 0.04 & 0.59 & 0.00 & 0.00 & 0.00 \\
Sharp Pub Bias & 0.03 & 0.02 & 0.02 & 0.94 & 0.93 & 0.93 & 0.06 & 0.05 & 0.04 & 1.00 & 1.00 & 1.00 & 1.00 & 1.00 & 1.00 & 0.00 & 0.00 & 0.00 \\
Smooth Pub Bias & 0.02 & 0.02 & 0.01 & 0.05 & 0.05 & 0.05 & 0.01 & 0.01 & 0.01 & 1.00 & 1.00 & 1.00 & 1.00 & 1.00 & 1.00 & 0.00 & 0.00 & 0.00 \\
\hline
\multicolumn{1}{c}{} & \multicolumn{18}{c}{$h=2$} \\
\textit{} & \multicolumn{18}{c}{Thresholding} \\ \cline{2-19}
No Pub Bias & 0.02 & 0.04 & 0.06 & 0.05 & 0.66 & 0.97 & 0.02 & 0.01 & 0.01 & 0.04 & 1.00 & 1.00 & 0.04 & 1.00 & 1.00 & 0.00 & 0.00 & 0.00 \\
Sharp Pub Bias & 0.02 & 0.04 & 0.06 & 0.93 & 1.00 & 1.00 & 0.03 & 0.03 & 0.03 & 1.00 & 1.00 & 1.00 & 1.00 & 1.00 & 1.00 & 0.00 & 0.00 & 0.00 \\
Smooth Pub Bias & 0.01 & 0.02 & 0.03 & 0.05 & 0.38 & 0.83 & 0.01 & 0.01 & 0.01 & 1.00 & 1.00 & 1.00 & 1.00 & 1.00 & 1.00 & 0.00 & 0.00 & 0.00 \\
\textit{} & \multicolumn{18}{c}{Minimum} \\ \cline{2-19}
No Pub Bias & 0.02 & 0.01 & 0.01 & 0.05 & 0.05 & 0.05 & 0.02 & 0.01 & 0.01 & 0.04 & 0.04 & 0.08 & 0.04 & 0.04 & 0.08 & 0.00 & 0.00 & 0.00 \\
Sharp Pub Bias & 0.02 & 0.01 & 0.01 & 0.93 & 0.92 & 0.92 & 0.03 & 0.03 & 0.03 & 1.00 & 1.00 & 1.00 & 1.00 & 1.00 & 1.00 & 0.00 & 0.00 & 0.00 \\
Smooth Pub Bias & 0.01 & 0.01 & 0.00 & 0.05 & 0.05 & 0.05 & 0.01 & 0.01 & 0.01 & 1.00 & 1.00 & 1.00 & 1.00 & 1.00 & 1.00 & 0.00 & 0.00 & 0.00 \\
\hline
\multicolumn{1}{c}{} & \multicolumn{18}{c}{$h\sim\widehat{\Pi}$} \\
\textit{} & \multicolumn{18}{c}{Thresholding} \\ \cline{2-19}
No Pub Bias & 0.02 & 0.05 & 0.07 & 0.05 & 0.28 & 0.63 & 0.01 & 0.02 & 0.02 & 0.03 & 0.78 & 1.00 & 0.03 & 0.79 & 1.00 & 0.00 & 0.00 & 0.00 \\
Sharp Pub Bias & 0.02 & 0.05 & 0.07 & 0.91 & 1.00 & 1.00 & 0.06 & 0.05 & 0.06 & 0.99 & 1.00 & 1.00 & 0.99 & 1.00 & 1.00 & 0.00 & 0.00 & 0.00 \\
Smooth Pub Bias & 0.01 & 0.03 & 0.03 & 0.04 & 0.13 & 0.33 & 0.01 & 0.01 & 0.01 & 0.13 & 1.00 & 1.00 & 0.13 & 1.00 & 1.00 & 0.00 & 0.00 & 0.00 \\
\textit{} & \multicolumn{18}{c}{Minimum} \\ \cline{2-19}
No Pub Bias & 0.02 & 0.02 & 0.02 & 0.05 & 0.05 & 0.05 & 0.01 & 0.02 & 0.01 & 0.03 & 0.04 & 0.03 & 0.03 & 0.03 & 0.03 & 0.00 & 0.00 & 0.00 \\
Sharp Pub Bias & 0.02 & 0.02 & 0.02 & 0.91 & 0.92 & 0.93 & 0.06 & 0.05 & 0.05 & 0.99 & 1.00 & 1.00 & 0.99 & 1.00 & 1.00 & 0.00 & 0.00 & 0.00 \\
Smooth Pub Bias & 0.01 & 0.01 & 0.01 & 0.04 & 0.04 & 0.04 & 0.01 & 0.01 & 0.01 & 0.13 & 0.09 & 0.07 & 0.13 & 0.08 & 0.07 & 0.00 & 0.00 & 0.00 \\
\bottomrule
\end{tabular}
\end{center}
\footnotesize{\textit{Notes:} Table shows the impact of publication bias on the power of the tests when $p$-hacking is based on covariate selection (general-to-specific, two-sided tests) with $K=3$ and $h=1$, $h=2$, and $h\sim \widehat{\Pi}$. The results are based on 5,000 simulation repetitions.}
\end{table}

\end{document}